\newtheorem{thm}{Theorem}[section]
\newtheorem{cor}[thm]{Corollary}
\newtheorem{prop}[thm]{Proposition}
\newtheorem{lem}[thm]{Lemma}
\newtheorem{defn}[thm]{Definition}
\newtheorem{ex}[thm]{Example}
\newtheorem{rmk}[thm]{Remark}
\let\a=\alpha \let\b=\beta \let\g=\gamma \let\de=\delta 
\let\e=\epsilon
\let\z=\zeta \let\h=\eta \let\q=\theta  
\let\l=\lambda 
\let\s=\sigma \let\t=\tau    
\let\w=\omega      \let\G=\Gamma \let\D=\Delta  
\let\L=\Lambda
\let\X=\Xi  \let\S=\Sigma  
\let\C=\Chi \let\W=\Omega
\newcommand{\be}{\begin{equation}}
\newcommand{\ee}{\end{equation}}
\let\p=\partial \let\ov=\overline
\let\un=\underline
\newcommand{\<}{\langle}
\renewcommand{\>}{\rangle}
\newcommand{\id}{\mathrm{id}}
\newcommand{\aand}{\qquad \& \qquad}
\def\R{{{\mathbb R}}}
\def\C{{{\mathbb C}}}
\def\Z{{{\mathbb Z}}}
\def\triv{{{\mathrm{triv}}}}
\def\tra{{{\mathrm{tra}}}}
\def\Gr{{{\mathrm{Gr}}}}
\def\GTor{{{G\text{-}\mathrm{Tor}}}}
\DeclareMathAlphabet{\mathcalligra}{T1}{calligra}{m}{n}
\DeclareFontShape{T1}{calligra}{m}{n}{<->s*[2.2]callig15}{}
\newcommand{\scripty}[1]{\ensuremath{\mathcalligra{#1}}}
\title{Gauge invariant surface holonomy and monopoles}
\author[A.~Parzygnat]{Arthur J. Parzygnat}
\address{Physics Department, City College 
of the CUNY \\ 
New York, NY 10031, USA 
and
Physics Department, 
The Graduate Center of the CUNY \\
New York, NY 10016, USA\\
{{\em Email:} {\tt\href{mailto:aparzygnat@gradcenter.cuny.edu}
{aparzygnat@gradcenter.cuny.edu}}}}
\keywords{Surface holonomy, gauge theory, 2-groups,
crossed modules, higher-dimensional algebra,
monopoles, gauge-invariance, non-abelian 2-bundles,
iterated integrals}
\begin{document}
\maketitle

\begin{abstract}
There are few known computable examples of non-abelian surface 
holonomy. In this paper, we give 
several examples whose structure 2-groups are covering 2-groups 
and show that the surface 
holonomies can be computed via a simple formula in terms of paths 
of 1-dimensional holonomies 
inspired by earlier work of Chan Hong-Mo and Tsou Sheung Tsun on 
magnetic monopoles. 
As a consequence of our work and that of Schreiber and 
Waldorf, this formula 
gives a rigorous meaning to non-abelian magnetic flux for magnetic 
monopoles. In the process, we 
discuss gauge covariance of surface holonomies for spheres for any 
2-group, therefore generalizing 
the notion of the reduced group introduced by Schreiber and 
Waldorf. Using these 
ideas, we also prove that magnetic monopoles form an abelian 
group.
\end{abstract}

\tableofcontents

\section{Introduction}

\subsection{Background, motivation, and overview}

Ordinary holonomy along paths for principal group bundles has been 
studied for over 40 years in 
the context of gauge theories in physics and in the context of 
fiber bundles in mathematics. 
Recently, with ideas from higher category theory, it has been 
possible to extend these ideas to 
holonomy along surfaces. Although higher holonomy, and more 
generally higher gauge theory, has 
been studied in the context of \emph{abelian} gauge theory for
higher-dimensional manifolds, it was 
thought for some time that non-abelian generalizations were not
possible \cite{Te}. Today, we 
understand this as being due to the fact that a group object in
the category of groups is an abelian 
group. By ``categorifying'' well-known concepts, and considering
group objects in the category of 
categories, one can avoid this restriction. The language of higher
categories allows us to give a 
resolution to this problem. 

The data needed for defining surface holonomy for abelian structure
groups has been known for 
quite some time under the name
\emph{abelian gerbes with connection} with a formal presentation 
offered by Gawedzki \cite{Ga} in 1988 in the context of the WZW
model, with further work in 2002 
with Reis \cite{GaRe}. Further development under the name of
\emph{non-abelian gerbes}, 
\emph{higher bundles}, and so on were carried out in the following
years starting with the 
foundational work of Breen and Messing \cite{BM} in 2001, where the
data for connections on non-abelian 
gerbes first appeared. In \cite{BS}, Baez and Schreiber
gave a definition of non-abelian 
gerbes with connection in terms of parallel transport using the
notion of a 2-group. The most up-to-date 
theoretical framework in terms of category theory, which
provides a language easily adaptable 
for non-abelian generalizations, was established by Schreiber and
Waldorf in \cite{SW4}. In this 
categorical setting, higher principal bundles with connections are
described by transport functors. 

The motivation for transport functors comes from observations
originally made by Barrett in \cite{Ba} 
and expanded on by Caetano and Picken in \cite{CP} by describing
a bundle with connection in 
terms of its holonomies. In \cite{SW1}, Schreiber and Waldorf use
a categorical perspective to prove 
that a principal group bundle with connection over a smooth
manifold determines, and is determined 
by, a transport functor defined on the thin path groupoid of that
manifold with values in a fattened 
version of the structure group viewed as a one-object category.
The upshot of this equivalence is 
that it is conceptually simple to go from categories and functors
to 2-categories and 2-functors. In 
\cite{SW2}, \cite{SW3}, and \cite{SW4}, Schreiber and Waldorf take
advantage of this equivalence 
and abstract the definition so that it can be used to \emph{define}
principal 2-group 2-bundles with 
connection allowing a conceptually simple formulation of surface
holonomy. 

In the present article, we review the theory of transport functors
formalized by Schreiber and 
Waldorf in \cite{SW1}, \cite{SW2}, \cite{SW3}, and \cite{SW4} with
an emphasis on examples and 
explicit computations. Besides this, we accomplish several new
results. First, we provide a definition 
of holonomy along spheres modulo thin homotopy \emph{without}
representing a sphere as a bigon 
(Definition \ref{defn:gaugeinv2hol}). The target of this holonomy
is an analogue of conjugacy 
classes, which is used for ordinary holonomy along loops, called
$\a$-conjugacy classes. To prove 
this, we introduce a procedure that turns an arbitrary transport
functor into a group-valued transport 
functor. In \cite{SW4}, the authors forced their surface holonomy
to land in a rather restrictive 
quotient of the structure 2-group to prove gauge invariance of
holonomy. Our perspective is to take 
the smallest quotient possible, and we show our quotient surjects
onto the one of \cite{SW4}. 

We then focus on transport functors with a particular class of
2-groups, termed \emph{covering 2-groups}, 
given by a Lie group $G$ and a covering space of $G.$   
We provide a \emph{simple} formula, motivated by constructions in
\cite{HS}, for holonomy along 
surfaces in a local trivialization and show that this formula
agrees with the surface-ordered integral 
in \cite{SW2}. This gives an interesting relationship between (i)
well-known formulas in the physics 
literature for computing the magnetic flux in terms of a loop of
holonomies and (ii) non-abelian 
surface-ordered integrals in terms of 1- and 2-forms of \cite{SW2}.
Physically, we argue that the 
latter is the correct analogue to computing the magnetic flux as a
surface integral and our formula 
tells us that this agrees with the usual definition given in the
physics literature. This is all done 
without the introduction of a Higgs field, completing the ideas
in \cite{GNO}. 

Then we consider an entire collection of examples of transport
2-functors constructed from an 
ordinary principal $G$-bundle with connection along with a choice
of a subgroup $N$ of 
$\pi_{1} (G),$ the fundamental group of $G$ (such a choice of
subgroup determines a covering 2-group). 
We show that when the subgroup $N$ is chosen to be $\pi_{1} (G)$ 
itself, our example reduces to 
the curvature 2-functor defined by Schreiber and Waldorf in
\cite{SW4}. We instead focus on the 
other extreme, namely when the subgroup $N$ is chosen to be the
trivial group $\{1\},$ to calculate 
four examples of surface holonomies associated to both abelian and
non-abelian magnetic 
monopoles. But just as ordinary holonomy is not exactly
group-valued on the space of all loops (due 
to conjugation issues), surface holonomy isn't in general either.
Using our results on gauge 
invariance of sphere holonomy for arbitrary 2-groups, we prove that
the surface holonomies for 
magnetic monopoles are not only gauge invariant but also form an
abelian group.

\subsection{Outline of paper along with main results}
In Section \ref{sec:review1}, we review the main definitions of
transport functors along with an 
equivalence between local descent data and global transport
functors. We follow the recent work of 
Schreiber and Waldorf \cite{SW1} who describe it precisely and
categorically in a framework that is 
suitable for generalizations to surfaces. We briefly discuss the
relationship to principal $G$-bundles 
with connection, where $G$ is a Lie group, in their usual
formulation by introducing the category of 
$G$-torsors (manifolds with free and transitive right $G$-actions).
The equivalence between the two 
descriptions was proved in \cite{SW1}. We also review the
relationship between local descent data 
and differential cocycle data for principal group bundles,
recalling the well-known formula for parallel 
transport in terms of a path-ordered integral. To obtain
group-valued holonomies, we introduce a 
procedure (\ref{eq:groupvaluedholonomyfunctor}) described as a
functor that takes an arbitrary 
transport functor and produces a group-valued transport functor
in Section \ref{sec:1-holonomy}. 
The presentation differs a bit from that of \cite{SW1} so we
describe it in some detail. 

In Section \ref{sec:review2}, we review how to `categorify' the
definitions and statements of Section 
\ref{sec:review1} in order to define transport 2-functors. The
main references for this section include 
\cite{SW2}, \cite{SW3}, and \cite{SW4}. We only briefly review
the technical points but spend more 
time on a computational understanding of surface holonomy and
also supply an iterated integral 
expression for surface holonomy including a picture (Figure
\ref{fig:doublepathorderedintegral}) that 
we think will be useful for lattice gauge theory. As in the case
of holonomy along loops, we introduce 
a procedure (\ref{eq:groupvalued2holonomyfunctor}) to obtain
group-valued surface holonomy. This 
lets us discuss gauge covariance and gauge invariance simply
and in full detail without referring to 
the \emph{reduced group} of \cite{SW4}. 
However, we restrict ourselves
to holonomy along spheres as 
opposed to surfaces of arbitrary genus. We show, in Theorem
\ref{thm:invariancesurfaceholonomy}, 
that our holonomy along spheres lands in a set that surjects onto
the reduced group and give a 
simple example, in Lemma \ref{lem:alphaconjtoreducedgroup}, where
this surjection has nontrivial 
kernel. 

In Section \ref{sec:path-curvature}, we consider transport
2-functors with structure 2-group given by 
a covering 2-group. We give a new and simple formula valid for
\emph{all} such transport 2-functors 
in Corollary \ref{cor:main} for surface holonomy in a local
trivialization in terms of homotopy classes 
of paths of holonomies along loops. This construction was inspired
by work of physicists for 
computing magnetic charge as a topological number \cite{HS}. In
Definition 
\ref{defn:pathcurvature2functor}, we give our main construction
of a transport 2-functor, called the 
\emph{path-curvature 2-functor}, associated to every principal
$G$-bundle with connection and to 
any subgroup of $\pi_{1} (G).$ We prove that this assignment is
functorial. Furthermore, the path-curvature 
2-functor is shown to reduce to the example of Schreiber
and Waldorf known as the 
\emph{curvature 2-functor} in \cite{SW4} when the subgroup of
$\pi_{1} (G)$ is chosen to be 
$\pi_{1} (G)$ itself. We describe this construction on four levels:
(i) global transport functors (ii) 
functors with smooth trivialization data chosen (iii) descent data
(iv) differential cocycle data. This 
allows one to work with either construction at whatever level he
or she pleases. We then summarize 
our result as a list of commutative diagrams of functors in
(\ref{eq:commutingpathcurvature1}), 
(\ref{eq:thediagram}), and (\ref{eq:thediagramP}).  

In Section \ref{sec:examples}, we consider special cases of
covering 2-groups and give several 
examples all of which are known as magnetic monopoles \cite{HS}.
The first example is obtained 
from any principal $U(1)$-bundle with connection over the
two-sphere $S^2.$ It is shown that the 
surface holonomy along this sphere coming from the path-curvature
2-functor defined in Section 
\ref{sec:path-curvature} is precisely the integral of the curvature
form of the principal $U(1)$-bundle 
along this sphere, which in this case is the integral of the first
Chern class over the sphere. This 
example is precisely the Dirac monopole \cite{Di}  and the surface
holonomy gives the magnetic 
charge as the integral of a magnetic flux. We then discuss
non-abelian examples starting with a 
principal $SO(3)$-bundle with connection over the sphere and
compute the surface holonomy 
explicitly using both our simple formula and the formula in terms
of path-ordered integrals using 
differential forms. In the case of a non-trivial bundle, the
surface holonomy along the sphere is given 
by the element
$\left(\begin{smallmatrix} -1&0\\0&-1\end{smallmatrix}\right)$ in
$SU(2),$ the 
universal cover of $SO(3),$ which is the nontrivial element in the
kernel of the covering map 
$\t : SU(2) \to SO(3).$ We do this same computation in other
examples including 
$SU(n) \to SU(n) / Z(n),$ where $Z(n)$ is the center of $SU(n),$
and also for the case 
$SU(n) \times \R \to U(n).$ This gives a rigorous meaning to the
notion of 
\emph{non-abelian magnetic flux} as a surface holonomy along a
sphere (see Definition 
\ref{defn:magflux}). Furthermore, it is shown that magnetic flux
is a gauge-invariant quantity in 
Corollary \ref{cor:gaugeinvariantflux}.

Finally, the Appendix includes an overview of diffeological
spaces which are used to describe 
several of the constructions involving infinite-dimensional
manifolds and smooth maps between 
them. 

In short, this article contains the following results.
\begin{itemize}
\item
Theorem \ref{thm:gaugeinvarianthol} allows one to define
gauge-invariant holonomy along loops in 
the language of transport functors via Definition
\ref{defn:gaugeinvhol}. The image lands in 
conjugacy classes instead of the abelianization. 

\item
Theorem \ref{thm:invariancesurfaceholonomy} accomplishes the
analogous result for surface 
holonomy \emph{along spheres} in Definition
\ref{defn:gaugeinv2hol}. The image lands in 
$\a$-conjugacy classes (Definition \ref{defn:alphaconj}) 
instead of the
reduced group of \cite{SW4}. The 
set of $\a$-conjugacy classes surjects to the reduced group but is
not in general injective as shown 
in Lemma \ref{lem:alphaconjtoreducedgroup}. We also prove that the
fixed points of this $\a$ action 
form a central subgroup of the group of surface holonomies in
Lemma \ref{lem:fixedpoints}. 

\item
The rest of the paper focuses on transport
2-functors whose structure 2-groups 
are covering 2-groups (Definition \ref{defn:covering2groups}).
They are called \emph{path-curvature 
2-functors} (Definition
\ref{defn:pathcurvature2functor}). These transport 2-functors are 
defined \emph{without} using surface integrals, and we show, in
Theorem \ref{thm:main} and 
Corollary \ref{cor:main}, that locally, \emph{any} transport
2-functor (defined as in \cite{SW2} using 
surface integrals) with structure 2-group a covering 2-group,
coincides with ours, thus enabling a 
simple formula for calculating surface holonomy. 

\item
Section \ref{sec:examples} includes several examples and explicit
computations of surface 
holonomy. Due to the previously mentioned theorem, these examples
can rightfully be called 
magnetic fluxes of magnetic monopoles from physics. We include
several examples of non-abelian 
surface holonomy. We conclude with Corollary
\ref{cor:gaugeinvariantflux} that shows that the 
magnetic flux is a fixed point under the $\a$ action and therefore
lands in the central subgroup 
mentioned earlier. In particular, this implies that the magnetic
charge is an abelian group-valued 
quantity known as a \emph{topological number}.

\end{itemize}

\subsection{Acknowledgments}
Firstly, we thank Scott O. Wilson who helped greatly during the
entire process of this work, 
providing ideas and proofreading drafts. Secondly, we thank V.
Parameswaran Nair who made 
suggestions related to this work and informed us of references
including \cite{GNO}. We have  
benefited from conversations with Gregory Ginot, Jouko Mickelsson,
Urs Schreiber, Stefan Stolz, 
Rafal Suszek, Steven Vayl, Konrad Waldorf, and Christoph Wockel. 
We are also grateful to the referee of TAC for making several
useful suggestions and corrections to 
our first draft. 
We thank Aaron Lauda for his tutorial on xypic, which we relied on
to make many diagrams in this 
paper. All other figures were done in Gimp. 
This material is based on work supported by the National Science
Foundation Graduate Research 
Fellowship under Grant No. 40017-06 05 and 40017-06 06.

\subsection{Notations and conventions}

We assume the reader is familiar with some basic concepts of
2-categories (the Appendix of 
\cite{SW3} explains most details needed for this paper) but our
notation differs from the norm so we 
set it now. 

Compositions of 1-morphisms is usually written from right to left
as in 
\be
\xymatrix{
z && y 
	\ar[ll]_{\a} 	&& x 
	\ar[ll]_{\b} 
	}
\qquad
\mapsto
\qquad
\xymatrix{
z  && x 
		\ar[ll]_{\a \circ \b} 
	}
	.
\ee
Vertical composition is written from top to bottom as 
\be
\xymatrix{
y  && x
\ar@/_2.5pc/[ll]_{\b}="1"
\ar[ll]|-{\de}="2"
\ar@/^2.5pc/[ll]^{\z}="3"
\ar@{}"1";"2"|(.2){\,}="4"
\ar@{}"1";"2"|(.8){\,}="5"
\ar@{=>}"4";"5"^{\S}
\ar@{}"2";"3"|(.2){\,}="6"
\ar@{}"2";"3"|(.8){\,}="7"
\ar@{=>}"6";"7"^{\W}
}
\qquad
\mapsto
\qquad
\xymatrix{
y  && x
\ar@/_2pc/[ll]_{\b}="1"
\ar@/^2pc/[ll]^{\z}="2"
\ar@{}"1";"2"|(.2){\,}="3"
\ar@{}"1";"2"|(.8){\,}="4"
\ar@{=>}"3";"4"^{\begin{smallmatrix}
\S \\
\overset{\circ}{\W} 
\end{smallmatrix}}
}.
\ee
Horizontal composition is written as 
\be
\xymatrix{
z &&
\ar@/_1.5pc/[ll]_{\a}="8"
\ar@/^1.5pc/[ll]^{\g}="9"
\ar@{}"8";"9"|(.2){\,}="10"
\ar@{}"8";"9"|(.8){\,}="11"
\ar@{=>}"10";"11"^{\S}
y  && x
\ar@/_1.5pc/[ll]_{\b}="1"
\ar@/^1.5pc/[ll]^{\de}="2"
\ar@{}"1";"2"|(.2){\,}="3"
\ar@{}"1";"2"|(.8){\,}="4"
\ar@{=>}"3";"4"^{\W}
}
\qquad
\mapsto
\qquad 
\xymatrix{
z  && x
\ar@/_1.5pc/[ll]_{\a \circ \b}="1"
\ar@/^1.5pc/[ll]^{\g \circ \de}="2"
\ar@{}"1";"2"|(.2){\,}="3"
\ar@{}"1";"2"|(.8){\,}="4"
\ar@{=>}"3";"4"^{\S \circ \W}
}.
\ee

Sources, targets, and identity-assigning functions are denoted by
$s,t,$ and $i,$ respectively. We 
will always write the identity $i(x)$ at an object $x$ as
$\id_{x},$ $\id_{\a}$ for the vertical identity at 
a 1-morphisms $\a,$ and $\id_{\id_x}$ for the horizontal identity
at an object $x.$ Given a 2-category 
$\mathcal{C},$ the set of objects is typically denoted
by $\mathcal{C}_{0},$ 1-morphisms 
by $\mathcal{C}_{1}$ and 2-morphisms by $\mathcal{C}_{2}.$ In
general, an overline such as 
$\ov f$ will denote weak inverses, vertical inverses, and reversing
paths/bigons. It will be clear from 
context which is which. The first form of 2-categories appeared
under the name \emph{bi-categories} and 
were introduced by B\'enabou \cite{Be}.

\section{Principal bundles with connection are transport functors}
\label{sec:review1}

In this section, we review the notion of transport functors mainly 
following \cite{SW1}. We split up 
the discussion into several parts. We first discuss a \v Cech 
description of principal $G$-bundles 
(without connection), where $G$ is a Lie group, in terms of 
smooth 
functors. Then we attempt a 
guess for describing principal $G$-bundles with connections in 
terms 
of smooth functors. This 
attempt fails as it only gives trivialized bundles, motivating the 
need 
to use transport functors. We 
then proceed to describing local trivialization data, descent 
data, 
and 
finally transport functors. The 
key feature of descent data is that it enables us to encode 
smoothness while still allowing the 
`bundle' to have nontrivial topology. We then discuss a 
reconstruction 
functor that takes us from the 
category of descent data to the category of transport functors 
with 
chosen trivializations. It is here 
that we discuss a version of the \v Cech groupoid incorporating 
paths 
and `jumps' that are 
necessary for transition functions. Then we move in the other 
direction and go from smooth descent 
data to locally defined differential forms, or more generally 
differential 
cocycle data. We also 
describe how to go from differential cocycle data back to smooth 
descent data. We then summarize 
the four different levels describing transport functors and their 
relationship to one another. Finally, 
we use these results to formulate a procedure that sends an 
arbitrary 
transport functor to a transport 
functor with group-valued parallel transport and discuss its gauge 
covariance and invariance 
stressing the use of conjugacy classes. 

\subsection{A \v Cech description of principal $G$-bundles}
\label{sec:1Cech}

Let $G$ be a Lie group. Principal $G$-bundles over a smooth 
manifold $M$ can be described 
simply in terms of functors. Furthermore, an isomorphism of 
such 
bundles corresponds to a natural 
transformation of the corresponding functors. This is done as follows 
(this is an expansion of 
Remark II.13. in \cite{Wo}). 

\begin{defn}
\label{defn:Cechgroupoid}
Given an open cover $\{ U_{i} \}_{i \in I}$ of $M,$ the 
\emph{\uline{\v Cech groupoid}} $\mathfrak{U}$ is 
the category whose set of objects is given by 
\be
\mathfrak{U}_{0} := \coprod_{i \in I} U_{i} 
\ee
and whose morphisms, called `jumps,' are given by 
\be
\mathfrak{U}_{1} := \coprod_{i, j \in I } U_{ij},
\ee
where $U_{ij} := U_{i} \cap U_{j}$ and the order of the index is kept 
track of in the disjoint union. 
Explicitly, elements of $\mathfrak{U}_{0}$ are written as $(x,i)$ and 
elements of $\mathfrak{U}_{1}$ 
are written as $(x,i,j).$ 
The source and target maps are given by $s ( (x,i,j ) ) := (x,i)$ and 
$t ( (x,i,j) ) := (x,j)$ for 
$(x,i,j) \in \mathfrak{U}_{1}.$ The identity-assigning map is given by%
\footnote{Our apologies for this double usage of the letter $i$ to 
mean both the identity-inclusion 
map and the index letter. We hope that it is not too confusing. Later, 
we will also use the letter $i$ 
for several other purposes.}
$i( (x,i) ) := (x,i,i).$ Let $(x,i,j)$ and $(x',i',j')$ be two
morphisms with 
$t ( (x,i,j) ) = s ( (x',i',j') ),$ i.e. 
$(x,j) = (x',i').$ Renaming the index $j'$ to $k,$ the composition is 
defined to be 
\be
\label{eq:ijk}
(x,j,k) \circ (x,i,j) := (x,i,k). 
\ee
\end{defn}

\begin{defn}
\label{defn:BG}
For every Lie group $G,$ there is a one-object groupoid 
$\mathcal{B} G$ defined as follows. Denote 
the one object by $\bullet.$ Let the set of morphisms from $\bullet$ to
itself be given by the set $G.$ 
Composition is given by group multiplication. 
\end{defn}

The previous two groupoids have a smooth structure, formalized in 
the following definition. 

\begin{defn}
\label{defn:2-space}
A \emph{\uline{Lie groupoid}} is a (small) category, typically denoted by 
$\Gr,$ whose objects, 
morphisms, and sets of composable morphisms all form smooth 
manifolds. Furthermore, the 
source, target, identity-assigning, and composition maps are all 
smooth. In addition, every 
morphism has an inverse and the map that sends a morphism to its 
inverse is smooth. 
\end{defn}

\begin{ex}
The  \v Cech groupoid of Definition \ref{defn:Cechgroupoid} and 
$\mathcal{B} G$ of Definition 
\ref{defn:BG} 
are Lie groupoids with the appropriate (obvious) smooth structures.
\end{ex}

\begin{defn}
A \emph{\uline{smooth functor}} from one Lie groupoid to another is an 
ordinary functor that is smooth 
on objects and morphisms. Likewise, a 
\emph{\uline{smooth natural transformation}} is a natural 
transformation whose function from objects to morphisms is smooth. 
\end{defn}

Any smooth functor $\mathfrak{U} \to \mathcal{B} G$ gives the
\v Cech cocycle data of a principal 
$G$-bundle over $M$ subordinate to the cover $\{ U_{i} \}_{i \in I}.$ 
To see this, simply recall what a 
functor does. To each object $(x,i)$ in $\mathfrak{U},$ it assigns the 
single object $\bullet$ in 
$\mathcal{B} G.$ To each jump $(x,i,j),$ it assigns an element 
denoted by $g_{ij} (x) \in G$ in such 
a way that we get a smooth 1-cochain $g_{ij} : U_{ij} \to G$ 
\be
\xy 0;/r.25pc/:
(-10,3)*{j};
(10,3)*{i};
(-10,0)*\xycircle(15,15){.};
(10,0)*\xycircle(15,15){.};
(-10,0)*{\bullet}="j";
(10,0)*{\bullet}="i";
"i";"j"**\dir{-} ?(0.55)*\dir{>}+(1,3)*{ij};
\endxy
\qquad
\mapsto
\qquad  
\xy 0;/r.25pc/:
(-10,0)*{\bullet}="j";
(10,0)*{\bullet}="i";
"i";"j"**\dir{-} ?(0.55)*\dir{>}+(1,3)*{g_{ij}};
\endxy
\quad . 
\ee
This picture should be interpreted as follows. To each $x \in U_{ij},$ 
we draw the jump $(x,i,j)$ as 
the figure on the left. Its image under 
$\mathfrak{U} \to \mathcal{B} G$ is $g_{ij} (x)$ drawn on the 
right (without explicitly writing $x$).  To each triple intersection 
$U_{ijk},$ which corresponds to the 
composition of $(x,i,j)$ in $U_{ij}$ with $(x,j,k)$ in $U_{jk}$ as in 
(\ref{eq:ijk}), functoriality gives a 
cocycle condition
\be
\xy 0;/r.25pc/:
(-13,-10)*{k};
(0,12)*{j};
(13,-10)*{i};
(-10,-9)*\xycircle(15,15){.};
(10,-9)*\xycircle(15,15){.};
(0,8.32)*\xycircle(15,15){.};
(-10,-9)*{\bullet}="k";
(0,8.32)*{\bullet}="j";
(10,-9)*{\bullet}="i";
"i";"j"**\dir{-} ?(0.55)*\dir{>}+(3,0)*{ij};
"j";"k"**\dir{-} ?(0.55)*\dir{>}+(-3,2)*{jk};
"i";"k"**\dir{-} ?(0.55)*\dir{>}+(1,-3)*{ik};
\endxy
\qquad
\mapsto
\qquad  
\xy 0;/r.30pc/:
(-10,-9)*{\bullet}="k";
(0,8.32)*{\bullet}="j";
(10,-9)*{\bullet}="i";
"i";"j"**\dir{-} ?(0.55)*\dir{>}+(4,0)*{g_{ij}};
"j";"k"**\dir{-} ?(0.55)*\dir{>}+(-3,2)*{g_{jk}};
"i";"k"**\dir{-} ?(0.55)*\dir{>}+(1,-3)*{g_{ik}};
\endxy
\quad , 
\ee
which says 
\be
g_{jk} g_{ij} = g_{ik}.
\ee
This convention was chosen to match that of \cite{SW1} and 
\cite{SW4} so that the reader who is 
interested in further details can consult without too much trouble. 

We now discuss refinements and morphisms between two such 
functors. Let $\{ U_{i'} \}_{i' \in I'}$ 
be another cover of $M$ with associated \v Cech groupoid 
$\mathfrak{U}'.$  Let 
$P: \mathfrak{U} \to \mathcal{B} G$ and 
$P' : \mathfrak{U}' \to \mathcal{B} G$ be two smooth 
functors. A morphism from $P$ to $P'$ consists of a common 
refinement $\{ V_{\a} \}_{\a \in A},$ 
with associated \v Cech groupoid $\mathfrak{V},$ of both 
$\{ U_{i} \}_{i \in I}$ and 
$\{ U_{i'} \}_{i' \in I'}$ along with a smooth natural transformation 
\be
\xy 0;/r.20pc/:
(0,-15)*+{\mathfrak{U}'}="3";
(15,0)*+{\mathcal{B} G}="4";
(,15)*+{ \mathfrak{U}}="2";
(-15,0)*+{ \mathfrak{V}}="1";
{\ar^{\a} "1";"2" };
{\ar^{P} "2";"4" };
{\ar_{\a'} "1";"3"};
{\ar_{P} "3";"4"};
{\ar@{=>}^{h} (0,10);(0,-10)};
\endxy
.
\ee
The refinement condition means that there are associated functions 
$\a : A \to I$ and $\a' : A \to I'$ 
so that $V_{a} \subset U_{\a ( a) }$ and $V_{a} \subset U'_{\a (a) }$ 
for all $a \in A.$ These 
functions determine the functors $\a : \mathfrak{V} \to \mathfrak{U}$ 
and 
$\a' : \mathfrak{V} \to \mathfrak{U}'$ drawn above.  We denote the 
restrictions of $g_{\a (a) \a(b)}$ 
and $g'_{\a' (a) \a' (b)}$  to $V_{ab}$ by $g_{ab}$ and  $g'_{ab},$ 
respectively. Any such smooth 
natural transformation gives an equivalence of \v Cech cocycle data 
of principle $G$-bundles. To 
see this, simply recall what a natural transformation does.  To each 
object $(x,a)$ in $\mathfrak{V}$ 
it assigns a group element $h_{a} (x) \in G$ in a smooth way. In other 
words, it gives a smooth 
function $h_{a} : V_{\a} \to G.$ To each jump $(x,a,b)$ in 
$\mathfrak{V},$ the naturality condition 
\be
\xy 0;/r.15pc/:
(15,15)*+{\bullet}="1";
(-15,15)*+{\bullet}="2";
(15,-15)*+{\bullet}="3";
(-15,-15 )*+{\bullet}="4";
{\ar_{g_{a b}} "1";"2" };
{\ar_{h_{b}} "2";"4" };
{\ar^{h_{a}} "1";"3"};
{\ar^{g'_{a b}} "3";"4"};
(-10,10)*+{}="a";
(10,-10)*+{}="b";
{\ar@{=}"a";"b"};
\endxy
\ee
says that 
\be
h_{b} g_{ab} = g'_{ab} h_{a}
\ee
on $V_{ab}.$ 
This is precisely the condition that says the principal $G$-bundles 
$P$ and $P'$ are isomorphic 
\cite{St}.

\subsection{A naive guess for transport functors}

Our goal in this section is to guess what a connection on a principal 
$G$-bundle over $M$ should 
be in terms of functors. We will fail at this attempt, but will learn
an 
important lesson which will 
motivate the modern definition in terms of transport functors. First, 
recall that in a principal $G$-bundle 
$P \to M,$ every fiber is a right $G$-torsor.

\begin{defn}
\label{defn:GTor}
Let $G$ be a Lie group. Let $\GTor$ be the category whose objects 
are right 
\emph{\uline{$G$-torsors}}, i.e. smooth manifolds equipped with a free 
and transitive right $G$-action, 
and whose morphisms are right $G$-equivariant maps. 
\end{defn}

Furthermore, a connection on a principal $G$-bundle over $M$ gives 
an assignment from paths in 
$M$ to isomorphisms of fibers between the endpoints. This 
assignment is independent of the 
parametrization of the path, but it is even independent of the thin 
homotopy class of a path as 
discussed in \cite{CP}. To define this, we use the theory of smooth 
spaces, reviewed in Appendix 
\ref{sec:smooth}, which give natural definitions for smooth structures 
on subsets, mapping spaces, 
and quotient spaces.

\begin{defn}
\label{defn:pwsi}
Let $X$ be a smooth manifold. A \emph{\uline{path with sitting instants}} 
is a smooth map $\g : [0,1] \to X$ such that there exists 
an $\e$ with $\frac{1}{2} > \e > 0$ and $\g ( t )$ is constant for
all $t \in [0, \e] \cup [ 1 - \e, 1 ].$ For such paths $\g$ with
$\g (0) = x$ 
and $\g(1) = y,$ we write 
\be
\xy 0;/r.25pc/:
(-10,0)*+{y}="y";
(10,0)*+{x}="x";
{\ar"x";"y"_{\g}};
\endxy
.
\ee
The set of paths with sitting instants in $X$ will be denoted by $PX.$ 
\end{defn}

Paths with sitting instants were first described in \cite{CP}. We 
reserve the notation $X^{[0,1]}$ for the set of (ordinary) smooth 
paths in $X.$ Thus, $PX \subset X^{[0,1]}.$ 

\begin{defn}
\label{defn:thinhomotopy}
Two paths in $X$ with sitting instants $\g$ and $\g'$ with the same 
endpoints, i.e. $\g (0) = \g' (0) = x$ and $\g (1) = \g'(1) = y,$ are
said 
to be \emph{\uline{thinly homotopic}} if there exists a smooth map 
$\G : [ 0 , 1 ] \times [ 0, 1 ] \to X$ with the following two
properties. 
\begin{enumerate}[(a)]
\item
First, there exists an  $\e$ with $\frac{1}{2} > \e > 0$ such that 
\be
\label{eq:bigon}
\G ( t, s ) = 
	\begin{cases}
	x & \mbox{ for all } (t,s) \in [ 0 , \e ] \times [ 0 , 1 ] \\
	y & \mbox{ for all } (t,s) \in [ 1- \e, 1 ] \times [ 0 , 1 ] \\
	\g ( t) & \mbox{ for all } (t,s) \in [ 0 , 1 ] \times [ 0 , \e ]
\\
	\g' ( t) & \mbox{ for all } (t,s) \in [ 0 , 1 ] \times
[ 1 - \e, 1 ] \\
	\end{cases}  
\ee
A map $\G: [0,1] \times [0,1] \to X$ satisfying just (\ref{eq:bigon}) is
called a \emph{\uline{bigon}} in $X$ and is typically denoted by 
\be
\xymatrix{
y & & \ar@/_1.5pc/[ll]_{\g}="1" 
\ar@/^1.5pc/[ll]^{\g'}="2" \ar@{}"1";"2"|(.15){\,}="3" 
\ar@{}"1";"2"|(.85){\,}="4" \ar@{=>}"3";"4"^{\G}  x
}
\ee
The set of bigons in $X$ is denoted by $BX.$ 
\item
Second, the rank of $\G$ is strictly less than 2, i.e. the differential 
$D_{(t,s)} \G : T_{(t,s)} ([0,1]\times [0,1]) \to T_{\G(t,s)} X,$ where 
$T_{y} Y$ denotes the tangent space to $Y$ at the point $y \in Y,$ 
has kernel of dimension at least one for all
$(t,s) \in [0,1]\times [0,1].$  
\end{enumerate}
Thin homotopy is an equivalence relation and the equivalence 
classes are called \emph{\uline{thin paths}}.  
Denote the set of thin paths in $X$ by $P^{1} X.$
\end{defn}

$P^{1} X$ is naturally a smooth space since it is a quotient of $PX,$ 
which is itself a subset of $X^{[0,1]},$ which has a natural smooth 
space structure as a mapping space. With these preliminaries, the 
definition of the thin path-groupoid of a smooth manifold $X$ can be 
given (we refer the reader to \cite{CP} and \cite{SW1} for more 
details).  

\begin{defn}
\label{defn:thinpathgroupoid}
Let $X$ be a smooth manifold. Let $\mathcal{P}_{1} (X)$ be the 
category whose objects are the points of the smooth manifold $X$ 
and whose morphisms are the thin paths of $X.$ The source and 
target of 
a thin path are defined by choosing a representative and taking the 
source and target, respectively. The identity at each point $x \in X$ is
the thin path associated to the constant path at $x.$ 
The composition of two thin paths is defined by choosing 
representatives and concatenating with double-speed 
parametrization. Namely, given two thin paths
\be
\xy 0;/r.25pc/:
(-20,0)*+{z}="z";
(0,0)*+{y}="y";
(20,0)*+{x}="x";
{\ar"x";"y"_{\g}};
{\ar"y";"z"_{\g'}};
\endxy
,
\ee
the composition is given by the thin homotopy class associated to 
\be
(\g' \circ \g) (t) := 
	\begin{cases}
	\g(2t) & \mbox{ for } 0 \le t \le \frac{1}{2} \\
	\g' (2t-1) & \mbox{ for } \frac{1}{2} \le t \le 1
	\end{cases} 
	.
\ee
\end{defn}

Under the sitting instants assumption and the thin homotopy 
equivalence relation, the composition is well-defined, smooth, 
associative, has left and right units given by constant paths, and right
and left inverses by reversing paths. By replacing the word ``smooth 
manifold'' with ``smooth space'' in Definition \ref{defn:2-space}, 
$\mathcal{P}_{1} (X)$ is therefore a Lie groupoid. 

With this definition of the thin path-groupoid of $M,$ one might guess 
that a transport functor should be a smooth functor 
$\mathcal{P}_{1} (M) \to \GTor.$ However, since $\GTor$ is not a Lie 
groupoid, there is no obvious way of demanding such a functor to be 
smooth. One might therefore try to use $\mathcal{B} G$ instead of 
$\GTor.$ Indeed, notice that there is a natural functor 
$i : \mathcal{B} G \to \GTor$ defined by 
\be
\label{eq:Lg}
\begin{split}
\bullet &\mapsto G \\
g &\mapsto L_{g} , 
\end{split}
\ee
where $G$ is viewed as a right $G$-torsor and $L_{g}$ is left
multiplication on $G$ by $g.$ One can think of $\GTor$ as a 
`thickening' of $\mathcal{B} G$ because $i$ is an equivalence of 
categories. We can then try to use $\mathcal{B} G$ for our target 
instead of $\GTor$ so that we can ask for smoothness. Then one 
might guess that a transport functor should be a smooth functor 
$\mathcal{P}_{1} (M) \to \mathcal{B} G.$ Unfortunately, now that we 
have smoothness, we've lost non-triviality because such smooth 
functors describe parallel transport on \emph{trivialized} principal 
$G$-bundles (this fact follows from Section \ref{sec:diffcocycledata} 
particularly around equation (\ref{eq:integratedifferentiate})). 

In order to encode local instead of global triviality, we have to 
combine these ideas with those of the previous section in terms 
of the \v Cech groupoid (we will also return to a more suitable 
combination of the path groupoid and the \v Cech groupoid in 
Section \ref{sec:reconstruction}). To avoid a huge collection of 
indices again, we denote our open cover $\{ U_{i} \}_{i \in I}$ of $M$ 
simply by $Y := \coprod_{i \in I } U_{i}$ and we let $\pi : Y \to M$ be 
the inclusion of these open sets into $M.$ Note that $\pi$ is a 
surjective submersion. Then, the next guess might be that we need 
to have a smooth functor $\mathcal{P}_{1} (Y) \to \mathcal{B} G,$ 
but we still need an assignment of fibers 
$\mathcal{P}_{1} (M) \to \GTor.$ These assignments should be 
compatible in terms of the functor $i : \mathcal{B} G \to \GTor$ and 
the submersion $\pi.$ This is exactly what is done in \cite{SW1} and 
we therefore now proceed to discussing local triviality of functors.

\subsection{Local triviality of functors}

Our first goal is to discuss local triviality of functors without making
any assumptions on smoothness, which is left to the next section. 
The fibers of principal $G$-bundles were right $G$-torsors, which led 
us to consider the category $\GTor$ of $G$-torsors. One of the great 
features of Schreiber's and Waldorf's work \cite{SW1} is their 
generality on the different flavors of bundles. If one wants to work 
with vector bundles one simply replaces $\GTor$ with 
$\mathrm{Vect},$ the category of vector spaces (over some 
appropriate field such as $\R$ or $\C$), and if this vector bundle is 
an associated bundle for some representation of $G,$ then this 
representation is precisely encoded by a functor 
$i : \mathcal{B} G \to \mathrm{Vect}.$ Fiber bundles can be defined 
similarly. Therefore, we've made two important observations. The 
first is that fibers of a bundle are objects of some category $T.$ The 
second is that the structure group of the bundle is encoded by a 
functor $i : \mathcal{B} G \to T.$ Schreiber and Waldorf generalize 
this even further by considering any Lie groupoid $\Gr$ instead of the 
special one $\mathcal{B} G.$ They define a $\pi$-local trivialization 
as follows (Definition 2.5. of \cite{SW1}).

\begin{defn}
\label{defn:trivialization1}
Let $\Gr$ be a Lie groupoid, $T$  a category,  $i : \mathrm{Gr} \to T$ 
a functor, and $M$ a smooth manifold. Fix a surjective submersion 
$\pi : Y \to M.$ A \emph{\uline{$\pi$-local $i$-trivialization}} of a
functor 
$F : \mathcal{P}_{1} (M) \to T$ is a pair $(\triv, t)$ of a functor 
$\triv : \mathcal{P}_{1} (Y) \to \Gr$ and a natural isomorphism 
$t : \pi^* F \Rightarrow \triv_{i}$ as in the diagram 
\be
\xy 0;/r.15pc/:
(15,-15)*+{\Gr}="3";
(-15,-15 )*+{T}="4";
(-15,15)*+{\mathcal{P}_{1} (M)}="2";
(15,15)*+{\mathcal{P}_{1} (Y)}="1";
{\ar_(0.45){\pi_{*}} "1";"2" };
{\ar_{F} "2";"4" };
{\ar^{\triv} "1";"3"};
{\ar^{i} "3";"4"};
{\ar@{=>}^{t} "2";"3"};
\endxy
.
\ee
The groupoid $\Gr$ is called the 
\emph{\uline{structure groupoid}} for $F.$ 
\end{defn}

In this definition $\pi_{*}$ is the pushforward defined sending 
points $y \in Y$ to $\pi(y)$ and sending thin paths $\g \in P^1 Y$ to 
the thin homotopy class of $\pi \circ \g$ (by choosing a 
representative). $\pi^*F := F \circ \pi_{*}$ is the pullback 
of $F$ along $\pi$ and 
$\triv_{i} := i \circ \triv.$ 
Functors $F : \mathcal{P}_{1} (M) \to T$ equipped 
with $\pi$-local $i$-trivializations $(\triv, t)$ form the objects,
written 
as triples $(F, \triv, t),$ of a category denoted by 
$\mathrm{Triv}^{1}_{\pi} (i).$ 

\begin{defn}
\label{defn:trivialization1mor}
A \emph{\uline{morphism}} $\a : ( F, \triv, t ) \to ( F' , \triv', t')$ in 
$\mathrm{Triv}^{1}_{\pi} (i)$ \emph{\uline{of $\pi$-local 
$i$-trivializations}} is a natural transformation
$\a : F \Rightarrow F'.$ 
Composition is given by vertical composition of natural 
transformations. 
\end{defn}

\begin{rmk}
One might expect a morphism $( F, \triv, t ) \to ( F' , \triv', t')$ to 
consist of $\a : F \Rightarrow F'$ as well as a natural transformation 
$h : \triv \Rightarrow \triv'$ satisfying some compatibility condition 
with $\a,$ $t,$ and $t'.$ This natural compatibility condition 
completely determines $h$ which is why it is excluded in the 
definition. 
\end{rmk}

In this description, it's not immediately obvious what transition 
functions are. This is part of the motivation for introducing descent 
objects (Definition 2.8. of \cite{SW1}). We 
use the notation $Y^{[n]}$ associated to a surjective submersion 
$\pi : Y \to M$ to mean the $n$-fold fiber product defined by 
\be
\label{eq:Yn}
Y^{[n]} := \left \{ (y_1, \dots, y_n) \in Y \times \cdots \times Y \ 
		| \ \pi(y_1)=\cdots=\pi(y_n) \right \}.
\ee
There are several projection maps 
$\pi_{i_{1} \cdots i_{k}} : Y^{[n]} \to Y^{[n-k]}$ for all $n \ge 2$ and
$0 < k < n$ with $1< i_{1} < \cdots < i_{k} < n$ that are defined by 
\be
Y^{[n]} \ni (y_1, \dots, y_n) \mapsto (y_{i_{1}}, \dots, y_{i_{k}}).
\ee
$Y^{[n]}$ is a smooth manifold for all $n$ and all 
$\pi_{i_{1} \cdots i_{k}}$ are smooth since $\pi$ is a surjective 
submersion. 

\begin{defn}
\label{defn:descent1}
Let $\Gr$ be a Lie groupoid, $T$ a category, and $i : \Gr \to T$ a 
functor.  Fix a surjective submersion $\pi : Y \to M.$ A 
\emph{\uline{descent object}} is a pair $(\triv, g)$ consisting of a
functor 
$\triv : \mathcal{P}_{1} (Y) \to \Gr,$ a natural isomorphism 
\be
\xy 0;/r.15pc/:
(17.5,-15)*+{\mathcal{P}_{1} (Y)}="3";
(-17.5,-15 )*+{T}="4";
(-17.5,15)*+{\mathcal{P}_{1} (Y)}="2";
(17.5,15)*+{\mathcal{P}_{1} (Y^{[2]})}="1";
{\ar_{\pi_{1*}} "1";"2" };
{\ar_{\triv_{i}} "2";"4" };
{\ar^{\pi_{2*}} "1";"3"};
{\ar^{\triv_{i}} "3";"4"};
{\ar@{=>}^{g} "2";"3"};
\endxy
.
\ee
The pair $(\triv, g)$ must satisfy
\be
\begin{smallmatrix}  
	\pi^{*}_{12} g \\ 
	\overset{\circ}{\pi^{*}_{23}} g 
	\end{smallmatrix} 
	=  \pi^{*}_{13} g , 
\ee
where the left-hand-side is vertical composition of natural 
transformations (read from top to bottom), 
and
\be
\id_{\triv_{i}} = \D^* g, 
\ee
where $\D$ is the diagonal $\D : Y \to Y^{[2]}$ sending $y$ to $(y,y).$
\end{defn}

Descent objects form the objects of a category denoted by 
$\mathfrak{Des}^{1}_{\pi} (i).$ 

\begin{defn}
\label{defn:descent1mor}
A \emph{\uline{descent morphism}} $h : (\triv, g) \to (\triv', g')$ is a 
natural transformation $h : \triv_{i} \Rightarrow \triv'_{i}$ satisfying
\be
\begin{matrix}
\pi_{1}^{*} h \\
\overset{\circ}{g'}
\end{matrix}
=
\begin{matrix}
g \\
\overset{\circ}{\pi_{2}^{*} h}
\end{matrix}
.
\ee
\end{defn}

There is a functor 
$\mathrm{Ex}^{1}_{\pi} : \mathrm{Triv}^{1}_{\pi} (i) 
\to \mathfrak{Des}^{1}_{\pi} (i)$ 
that extracts descent data from trivialization data. At the level of 
objects, this functor is defined as follows. Let $(F, \triv, t)$ be an 
object in $\mathrm{Triv}^{1}_{\pi} (i).$ For the pair $(\triv, g),$ take
$\triv$ to be exactly the same. For $g$ take the composition 
$g := \begin{smallmatrix} \pi_{1}^{*} \ov t  \\ 
\overset{\circ}{\pi_{2}^{*} t} \end{smallmatrix}$ 
coming from the composition in the diagram 
\be
\xy 0;/r.15pc/:
(17.5,0)*+{\mathcal{P}_{1} (Y)}="3";
(-17.5,0 )*+{\mathcal{P}_{1} (M)}="4";
(-17.5,30)*+{\mathcal{P}_{1} (Y)}="2";
(17.5,30)*+{\mathcal{P}_{1} (Y^{[2]})}="1";
(-12.5,25)*+{}="a";
(12.5,5)*+{}="b";
{\ar_{\pi_{1*}} "1";"2" };
{\ar_{\pi_{*}} "2";"4" };
{\ar^{\pi_{2*}} "1";"3"};
{\ar^{\pi_{*}} "3";"4"};
{\ar@{=}^{\id} "a";"b"};
(-42.5,10)*+{\Gr}="5";
(-42.5,-20)*+{T}="7";
(-7.5,-20)*+{\Gr}="6";
{\ar^{\triv} "3";"6"};
{\ar^{i} "6";"7"};
{\ar^{F} "4";"7"};
{\ar_{i} "5";"7"};
{\ar_{\triv} "2";"5"};
{\ar@{=>}^(0.4){\ov t} "5";"4"};
{\ar@{=>}^{t} "4";"6"};
\endxy
,
\ee
where $\ov t$ is the (vertical) inverse of $t.$  This defines a descent 
object (Section 2.2 of \cite{SW1}). On a morphism 
$\a : (F, \triv, t) \to (F', \triv', t'),$ the functor
$\mathrm{Ex}_{\pi}^{1}$ 
is defined by setting 
\be
h: = 
\begin{smallmatrix}
\ov t \\
\overset{\circ}{\pi^{*} \a} \\
\overset{\circ}{t'}
\end{smallmatrix}
\ee
coming from the composition in the diagram 
\be
\xy 0;/r.15pc/:
(35,0)*+{\mathcal{P}_{1} (Y) }="1";
(0,0)*+{\mathcal{P}_{1}(M)}="2";
(-40,0)*+{T}="3";
{\ar@/_2.5pc/_{\triv_{i}} "1";"3"};
{\ar_{\pi_{*}} "1";"2"};
{\ar@/_1pc/|-{F} "2";"3"};
{\ar@/^1pc/|-{F'} "2";"3"};
{\ar@/^2.5pc/^{\triv'_{i}} "1";"3"};
{\ar@{=>}^{\ov t} (-5,15);(-5,5)};
{\ar@{=>}^{t'} (-5,-5);(-5,-15)};
{\ar@{=>}^{\a} (-22,4);(-22,-4)};\
\endxy
.
\ee
The functor $\mathrm{Ex}_{\pi}^{1}$ is part of an equivalence of 
categories between $\mathrm{Triv}_{\pi}^{1} (i)$ and 
$\mathfrak{Des}_{\pi}^{1} (i).$ 
This is done by constructing a weak inverse functor 
$\mathrm{Rec}_{\pi}^{1} : \mathfrak{Des}_{\pi}^{1} (i) 
\to \mathrm{Triv}_{\pi}^{1} (i),$ which we will describe in Section 
\ref{sec:reconstruction}. 

\begin{defn}
Let $(F, \triv, t)$ be a $\pi$-local $i$-trivialization of a functor 
$F : \mathcal{P}_{1} (M) \to T,$ i.e. an object of 
$\mathrm{Triv}^{1}_{\pi} (i).$ The \emph{\uline{descent object 
associated to the $\pi$-local $i$-trivialization of $F$}} is 
$\mathrm{Ex}^{1}_{\pi} (F, \triv, t).$ Let 
$\a : (F, \triv, t) \to ( F', \triv', t')$ be a morphism in 
$\mathrm{Triv}^{1}_{\pi} (i).$ The \emph{\uline{descent morphism 
associated to the $\pi$-local $i$-trivialization of $\a$}} is 
$\mathrm{Ex}^{1}_{\pi} (\a).$  
\end{defn}

\subsection{Transport functors}

We now discuss smoothness of descent data and finally give a 
definition of transport functors. 

\begin{defn}
\label{defn:smoothdescent1}
A descent object $(\triv, g)$ as above is said to be 
\emph{\uline{smooth}} if $\triv : \mathcal{P}_{1} (Y) \to \Gr$ is a smooth 
functor and if there exists a smooth natural isomorphism 
$\tilde{g} : \pi_{1}^{*} \triv \Rightarrow \pi_{2}^{*} \triv$ with 
$g = \id_{i} \circ \tilde{g},$ the horizontal composition of natural 
transformations $\id_{i}$ and $\tilde{g}.$ A descent morphism 
$h : (\triv, g) \to (\triv', g')$ as above is said to be
\emph{\uline{smooth}} 
if  there exists a smooth natural isomorphism 
$\tilde{h} :  \triv \Rightarrow \triv'$ with $h = \id_{i} \circ
\tilde{h}.$ 
\end{defn}

Smooth descent objects and morphisms form the objects and 
morphisms of a category denoted by 
$\mathfrak{Des}^{1}_{\pi} (i)^{\infty}$ and form a sub-category of 
$\mathfrak{Des}^{1}_{\pi} (i).$ 

\begin{defn}
\label{defn:smoothtrivialization1}
A $\pi$-local $i$-trivialization $(F, \triv, t)$ is said to be 
\emph{\uline{smooth}} if the associated descent object 
$\mathrm{Ex}^{1}_{\pi} ( F, \triv, t)$ is smooth. A morphism 
$\a : (F, \triv, t) \to (F', \triv', t')$ is said to be
\emph{\uline{smooth}} if the  
associated descent morphism $\mathrm{Ex}^{1}_{\pi} ( \a)$ is 
smooth. 
\end{defn}

Smooth local trivializations and their morphisms form the objects and 
morphisms of a category denoted by 
$\mathrm{Triv}^{1}_{\pi} (i)^{\infty}$ and form a sub-category of 
$\mathrm{Triv}^{1}_{\pi} (i).$ $\mathrm{Ex}^{1}_{\pi}$ restricts to an 
equivalence of categories 
$\mathrm{Triv}^{1}_{\pi} (i)^{\infty} \xrightarrow{\simeq}
\mathfrak{Des}^{1}_{\pi} (i)^{\infty}$ of smooth data. 
Again, we will discuss an inverse functor in Section 
\ref{sec:reconstruction} since it will be necessary in discussing gauge 
invariant holonomy in Section \ref{sec:1-holonomy}. We now come to 
the definition of a transport functor (Definition 3.6 of \cite{SW1}). 

\begin{defn}
\label{defn:transportfunctor1}
Let $\mathrm{Gr}$ be a Lie groupoid, $T$ a category, 
$i : \mathrm{Gr} \to T$ a functor, and $M$ a smooth manifold. A 
\emph{\uline{transport functor on $M$ with values in a category $T$ and 
with $\mathrm{Gr}$-structure}} is a functor 
$\mathrm{tra} : \mathcal{P}_{1} (M) \to T$ such that 
there exists a surjective submersion $\pi : Y \to M$ and a smooth 
$\pi$-local $i$-trivialization $(\mathrm{triv}, t)$ of $\tra.$ 
\end{defn}

Transport functors with values in $T$ with $\mathrm{Gr}$-structure 
form the objects of a category 
$\mathrm{Trans}^{1}_{\mathrm{Gr}} (M,T).$  We also define the 
morphisms of transport functors. 

\begin{defn}
\label{defn:mortran} 
A \emph{\uline{morphism $\h$ of transport functors on $M$ from $\tra$ 
to $\tra'$}} is a natural transformation $\h : \tra \Rightarrow \tra'$
such 
that there exists a surjective submersion $\pi : Y \to M$ and smooth 
$\pi$-local $i$-trivializations $(\triv, t),$ $(\triv', t'),$ and 
$h : (\triv, t) \to (\triv', t')$ of $\tra,$ $\tra',$ and $\eta$
respectively. 
\end{defn}

By using pullbacks, one can define the composition of such 
morphisms. We will not explicitly describe this now since we will 
come back to it later when discussing limit categories over surjective 
submersions in Section \ref{sec:1limits}.

\subsection{The reconstruction functor: local to global}
\label{sec:reconstruction}

In many situations, one works locally and pieces together data to 
construct globally defined quantities. In the case of parallel
transport, 
one obtains group elements. An explicit construction of a (weak) 
inverse 
$\mathrm{Rec}_{\pi}^{1} : \mathfrak{Des}^{1}_{\pi} (i) \to
\mathrm{Triv}^{1}_{\pi} (i)$ to $\mathrm{Ex}_{\pi}^{1}$ will assist in 
this direction. Following Section 2.3 of \cite{SW1}, we introduce a 
category that combines the \v Cech groupoid with the path groupoid
utilizing the surjective submersion $\pi : Y \to M.$ 

\begin{defn}
\label{defn:Cechpath1}
Let $\mathcal{P}_{1}^{\pi} (M)$ be the category, called the 
\emph{\uline{\v Cech path groupoid}}, whose set of objects are the 
elements of $Y.$ The set of morphisms are freely generated by two 
types of morphisms (the generators) which are given as follows
\begin{enumerate}[i)]
\item
thin paths (see Definition \ref{defn:thinhomotopy}) $\g$ in $Y$ with 
sitting instants and
\item
points $\a$ in $Y^{[2]}$ (thought of as morphisms 
$\pi_{1} (\a) \xrightarrow{\a} \pi_{2} (\a)$ and called 
\emph{\uline{jumps}}). 
\end{enumerate}
There are several relations imposed on the set of morphisms. 
\begin{enumerate}[(a)]
\item
For any thin path $\Theta : \a \to \b$ in $Y^{[2]}$ the diagram 
\be
\xy 0;/r.15pc/:
(-15,15)*+{\pi_{1} (\b)}="3";
(-15,-15 )*+{\pi_{2} (\b) }="4";
(15,-15)*+{\pi_{2} (\a) }="2";
(15,15)*+{\pi_{1} (\a) }="1";
{\ar^{\a} "1";"2" };
{\ar^{\pi_{2} (\Theta)} "2";"4" };
{\ar_{\pi_{1} (\Theta)} "1";"3"};
{\ar_{\b} "3";"4"};
\endxy
\ee
commutes (see Figure  \ref{fig:pathjumpcondition} for a visualization 
of this). 

\begin{figure}[h]
\centering
	\begin{picture}(0,0)
	\put(38,43){$\b$}
	\put(90,60){$\a$}
	\put(56,87){$\pi_{1} ( \Theta )$}
	\put(59,15){$\pi_{2} (\Theta)$}
	\end{picture}
    \includegraphics[width=0.30\textwidth]{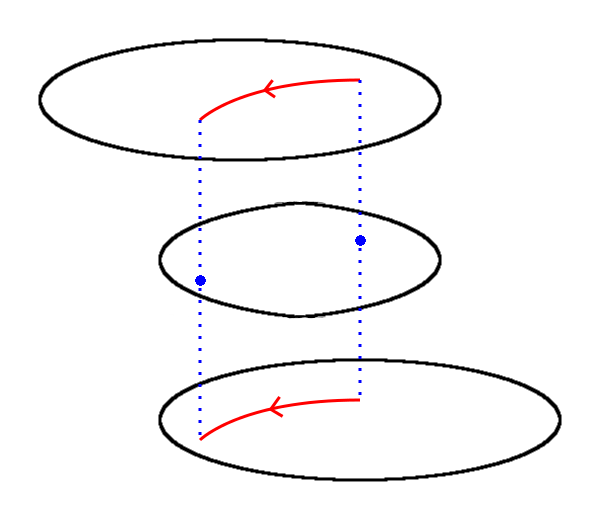}
    \vspace{-3mm}
    \caption{Thinking in terms of an open cover as a submersion, 
    condition i) above says that if a path 
    $\Theta : \a \rightarrow \b$ is in a 
    double intersection, it doesn't matter whether or not the jump is 
    performed first and then the thin path is traversed or vice versa.}
    \label{fig:pathjumpcondition}
\end{figure} 

\item
For any point $\X \in Y^{[3]}$ the diagram 
\be
\xy 0;/r.15pc/:
(-25,-12.5)*+{\pi_{3} (\X)}="3";
(0,12.5)*+{\pi_{2} (\X)}="2";
(25,-12.5)*+{\pi_{1} (\X)}="1";
{\ar_{\pi_{12}(\X)} "1";"2" };
{\ar_{\pi_{23}(\X)} "2";"3" };
{\ar^{\pi_{13}(\X)} "1";"3"};
\endxy
\ee
commutes.

\item
The free composition of two thin free paths is the usual composition 
of thin paths and for every point $y \in Y,$ the thin homotopy class 
representing the constant path at $y$ is equal to $\D (y) \in Y^{[2]}$ 
which is the formal identity for the composition. 
\end{enumerate}

The notation for the free composition will be $*.$
\end{defn}

Item (b) together with item (c) demands that the jumps 
$\a \in Y^{[2]}$ are isomorphisms. A typical morphism in 
$\mathcal{P}_{1}^{\pi} (M)$ is depicted in Figure \ref{fig:pathjump}. 

\begin{figure}[h]
\centering
    \includegraphics[width=0.55\textwidth]{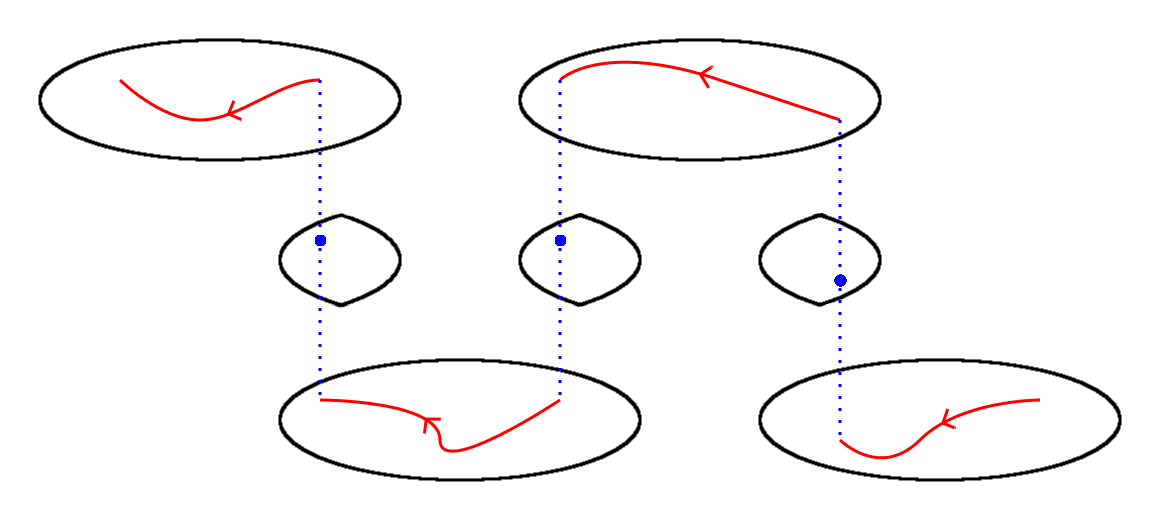}
    \vspace{-3mm}
    \caption{A generic representative of a morphism in 
    $\mathcal{P}_{1}^{\pi} (M)$ is shown above for 
    $Y = \coprod_{i \in I} U_{i},$ the disjoint union over an open
cover. 
    The larger ellipses indicate open sets and the smaller ones in the 
    middle indicate intersections. The curves in the open sets indicate 
    the paths and the dotted vertical lines indicate the jumps.}
    \label{fig:pathjump}
\end{figure} 

Associated to every descent object $(\triv, g)$ in 
$\mathfrak{Des}_{\pi}^{1} (i)$ is a functor 
$R_{(\triv, g)} : \mathcal{P}_{1}^{\pi} (M) \to T$ defined (on objects 
and generators) by  
\be
\begin{split}
Y \ni y &\mapsto \triv_{i} (y) , \\ 
P^{1}Y \ni \g &\mapsto \triv_{i} (\g) , \quad \mbox{ and } \\
Y^{[2]} \ni \a &\mapsto \Big ( g ( \a ) : \triv_{i} (  \pi_{1} (\a) )
\to
\triv_{i} ( \pi_{2} (\a) ) \Big ) .
\end{split}
\ee
This assignment extends to a functor 
$R : \mathfrak{Des}_{\pi}^{1} ( i ) \to 
\mathrm{Funct} ( \mathcal{P}_{1}^{\pi} ( M ) , T )$ (Lemma 2.14. of 
\cite{SW1}). To a descent morphism $h : (\triv, g) \to (\triv ' , g')$
it 
gives a natural transformation 
$R_{h} : R_{(\triv, g)} \Rightarrow R_{(\triv', g')}$ defined by sending
$y \in Y$ to $h(y)$ for all $y \in Y.$ 

The functor 
$\mathrm{Rec}_{\pi}^{1} :  \mathfrak{Des}_{\pi}^{1} ( i ) \to
\mathrm{Triv}_{\pi}^{1} (i)$ will be defined so that it factors through 
$R.$ What will then remain is to define a functor 
$ \mathrm{Funct} ( \mathcal{P}_{1}^{\pi} ( M ) , T ) \to 
\mathrm{Funct} ( \mathcal{P}_{1} ( M ) , T ).$ In order to do this, we 
need to ``lift'' paths. First, notice that there is a canonical 
projection functor $p^{\pi} : \mathcal{P}_{1}^{\pi} ( M )  \to 
\mathcal{P}_{1} ( M )$ which sends objects $y \in Y$ to $\pi (y),$ thin 
paths $\g$ to $\pi (\g),$ and points $\a \in Y^{[2]}$ to the identity.
We 
will construct a weak inverse $s^{\pi} : \mathcal{P}_{1} (M) \to
\mathcal{P}_{1}^{\pi} ( M ).$ 

Since $\pi : Y \to M$ is surjective, for every $x \in M,$ there exists a
$y \in Y$ such that $\pi (y) = x.$ Therefore, define 
$s^{\pi} : \mathcal{P}_{1} (M) \to \mathcal{P}_{1}^{\pi} ( M )$ on 
objects to be this assignment. Because $\pi : Y \to M$ is a surjective 
submersion, there exists an open cover $\{ U_{i} \}_{i \in I}$ of $M$ 
with local sections $s_{i} : U_{i} \to Y$ of $\pi.$ 
Using these local sections, we can define 
$s^{\pi} : \mathcal{P}_{1} (M) \to \mathcal{P}_{1}^{\pi} ( M )$ on 
morphisms as follows. For every thin path $\g : x \to x'$ in $M$ there 
exists a collection of thin paths $\g_{1}, \cdots, \g_{n}$ with 
(representatives of) $\g_{j}$ inside $U_{i_{j}}$ for all
$j = 1, \dots, n$ 
and 
\be
x' \xleftarrow{\g} x 
\quad = \quad 
x' \xleftarrow{\g_n} x_{n-1} \xleftarrow{\g_{n-1}} \cdots 
\xleftarrow{\g_{2}} x_{1} \xleftarrow{\g_{1}} x
.
\ee
For such a choice define (we write $s_{j}$ instead of $s_{i_{j}}$ to 
avoid too many indices)
\be
\label{eq:pathlifting}
s^{\pi} ( \g ) := \a_{x'} * s_{n} ( \g_{n} ) * \a_{n-1} * s_{n-1}
( \g_{n-1} )
*  \cdots * s_{2} ( \g_{2} ) * \a_{1} * s_{1} ( \g_{1} ) * \a_{x} , 
\ee
where $\a_{x}$ is the unique isomorphism from $s^{\pi}(x)$ to 
$s_{1} (x),$ $\a_{j}$ is the unique isomorphism from $s_{j-1} (x_{j})$ 
to $s_{j} (x_{j}),$ and $\a_{x'}$ is the unique isomorphism from 
$s_{n} (x)$ to $s^{\pi}(x').$  
This definition comes from Figure \ref{fig:pathlift}. 
\begin{figure}
\centering
    \includegraphics[width=0.5\textwidth]{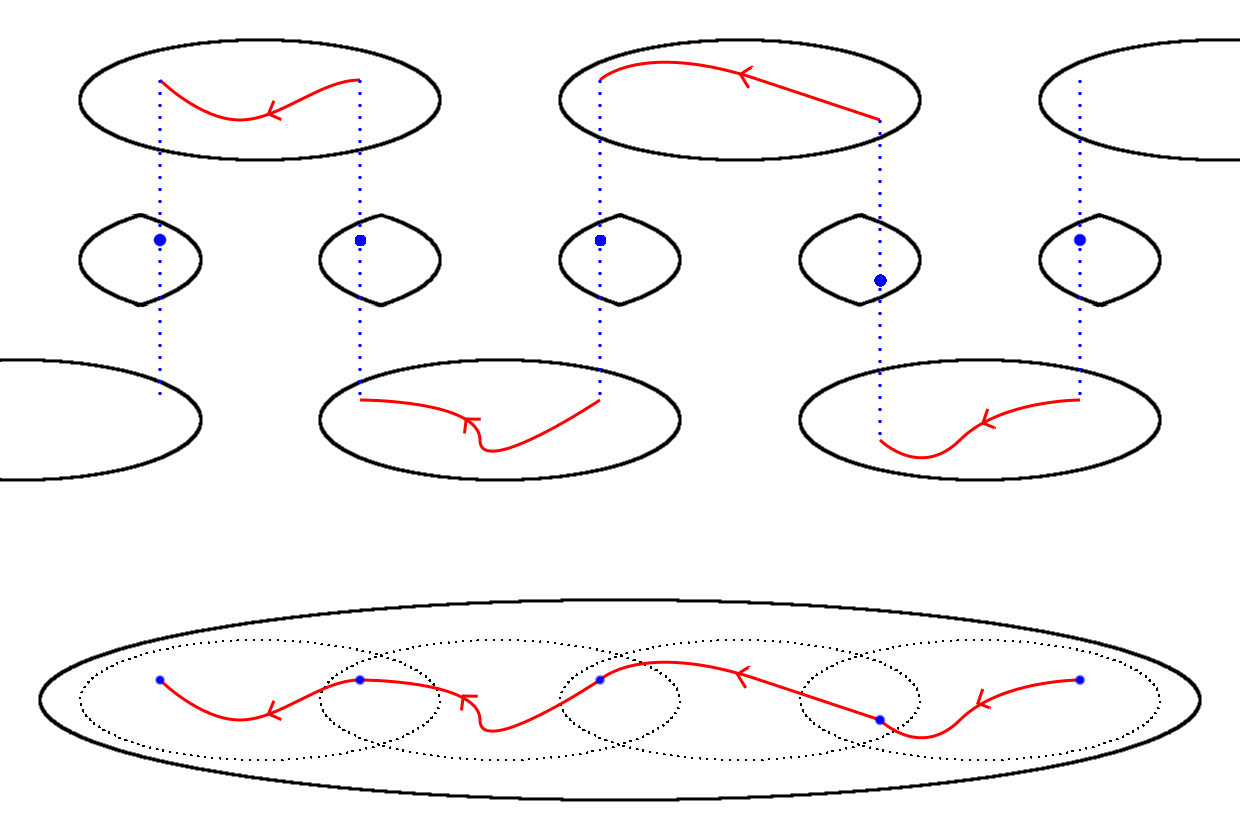}
    \vspace{-3mm}
    \caption{By choosing a decomposition of every path to land in 
    open sets one can lift using the locally defined sections. At the 
    beginning and end of the path, one must apply a jump since the 
    map $s$ defined on objects  might not coincide with the lift of the 
    endpoint of the path.}
    \label{fig:pathlift}
\end{figure} 

The functor $s^{\pi}$ is a weak inverse to $p^{\pi}$ (Lemma 2.15. of 
\cite{SW1}). For reference, by definition this means there exists a 
natural isomorphism 
\be
\z : s^{\pi} \circ p^{\pi} \Rightarrow \id_{\mathcal{P}_{1}^{\pi} (M)}.
\ee
that is part of an adjoint equivalence given by the quadruple 
$(s^{\pi}, p^{\pi}, \z, \id_{p^{\pi} \circ s^{\pi} } )$ since  
$p^{\pi} \circ s^{\pi} = \id_{\mathcal{P}_{1} (M) }.$
This natural isomorphism $\z$ is the one that sends 
$y \in Y$ to the unique jump, an isomorphism, from $y$ to 
$s^{\pi} (\pi(y)).$ It is natural by relation i) in Definition
\ref{defn:Cechpath1}.

\begin{rmk}
Note that we have not put a smooth structure on
$\mathcal{P}^{\pi}_{1} (M)$ nor will we (although it is done in 
\cite{SW1}). Indeed, the choice of lifts for the points could be 
sporadic. All the smoothness for transport functors is contained in the 
descent data. 
\end{rmk}

The functor 
$s^{\pi} : \mathcal{P}_{1} (M) \to \mathcal{P}^{\pi}_{1} (M)$ induces a 
pullback functor $s^{\pi *} : {\mathrm{Funct} 
( \mathcal{P}^{\pi}_{1} (M), T )} \rightarrow {\mathrm{Funct} 
( \mathcal{P}_{1} (M), T )}$ 
defined by $s^{\pi *} (F) :=
F \circ s^{\pi}$ 
on functors $F : \mathcal{P}^{\pi}_{1} (M) \to T$ and by 
$s^{\pi *} ( \rho) := \rho \circ \id_{s^{\pi}}$ on natural
transformations 
$\rho : F \Rightarrow G.$ Finally, $\mathrm{Rec}^{1}_{\pi}$ is defined 
as the composition in the diagram 
\be
\label{eq:Rec1pi}
\xy 0;/r.15pc/:
(-25,10)*+{\mathrm{Funct} ( \mathcal{P}_{1} (M) , T ) }="2";
(25,10)*+{\mathfrak{Des}^{1}_{\pi} (i) }="1";
(0,-10)*+{\mathrm{Funct} ( \mathcal{P}^{\pi}_{1} (M) , T ) }="3";
{\ar_{\qquad \mathrm{Rec}^{1}_{\pi}} "1";"2"};
{\ar^{R} "1";"3"};
{\ar^{s^{\pi *}} "3";"2"};
\endxy
.
\ee
The image of  $\mathfrak{Des}^{1}_{\pi} (i)$ under 
$\mathrm{Rec}_{\pi}^{1}$ is actually in $\mathrm{Triv}^{1}_{\pi} (i).$ 
This means at the level of objects that  associated to 
$R_{(\triv, g )} \circ s^{\pi}$   there exists a $\pi$-local $i$-
trivialization. We take $\triv$ itself for the first part of this datum.
To 
define $t : \pi^{*} \big ( s^{\pi *} ( R_{(\triv, g)} ) \big )
\Rightarrow
\triv_{i}$ we take the composition defined by the diagram 
\be
\xy 0;/r.15pc/:
(40,35)*+{\mathcal{P}_{1} (Y) }="1";
(-40,35)*+{\mathcal{P}_{1} (M) }="2";
(40,-35)*+{\mathrm{Gr}}="3";
(0,15)*+{\mathcal{P}_{1}^{\pi} (M)}="4";
(-40,-5)*+{\mathcal{P}_{1}^{\pi} (M)}="5";
(-40,-35)*+{T}="6";
(0,35)*+{}="a1";
(0,20)*+{}="b1";
(-5,5)*+{}="a2";
(30,-25)*+{}="b2";
{\ar_{\pi_{*}} "1";"2"};
{\ar^{\triv} "1";"3"};
{\ar_{p^{\pi}} "4";"2"};
{\ar@{^{(}->} "1";"4"}; 
{\ar^{\id} "4";"5"};
{\ar_{R_{(\triv, g)}} "5";"6"};
{\ar_{s^{\pi}} "2";"5"};
{\ar_{i} "3";"6"};
{\ar@{=>}_{\z} "2";(-21,7)};
{\ar@{=}"a1";"b1"^{\id}};
{\ar@{=}"a2";"b2"_{\id}};
\endxy
,
\ee
where the functor $\mathcal{P}_{1} (Y) \hookrightarrow 
\mathcal{P}_{1}^{\pi} (M)$ is the inclusion. The rest of the proof, 
namely the fact that the image of a morphism lands in 
$\mathrm{Triv}_{\pi}^{1} (i)$ under $\mathrm{Rec}_{\pi}^{1},$ is 
explained in Appendix B.1. of \cite{SW1}.

\subsection{Differential cocycle data}
\label{sec:diffcocycledata}

We now switch gears a bit and go in the other (infinitesimal) 
direction. We describe this in several parts. We focus on a local 
description first in terms of `trivialized' transport functors. We
extract 
the differential cocycle data from functors and then we construct 
functors from differential cocycle data. This is a brief and simplified 
account of the material covered in Section 4 of \cite{SW1} since we 
do not prove any results.

\subsubsection{From functors to 1-forms} 

Throughout this article, let $\un G$ denote the Lie algebra of $G.$ 
Given a smooth functor $F : \mathcal{P}_{1} (X) \to \mathcal{B} G,$ 
we will define a $\un G$-valued 1-form $A$ pointwise for 
every $x \in X$ and $v \in T_{x} X$ as follows. Let $\g : \R \to X$ be a
curve in $X$ with $\g(0) = x$ and $\frac{d \g}{dt} (0) = v.$ 
$\g : \R \to X$ induces a smooth pushforward functor 
$\g_{*} : \mathcal{P}_{1} (\R) \to \mathcal{P}_{1} (X).$ At the level of
morphisms, the space $P^{1} \R$ of thin homotopy 
classes of paths in $\R$ is actually smoothly equivalent to 
$\R \times \R.$ The diffeomorphism $\g_{\R} : \R \times \R \to 
P^{1} \R$  is defined by sending $(s,t)$ to the thin homotopy class of 
a path in $\R$ determined by its source point $s$ and target $t$ as 
shown schematically in Figure \ref{fig:gammar}. 

\begin{figure}[h]
\centering
\begin{picture}(0,0)
\put(-162,-38){$t$}
\put(-93,-38){$s$}
\put(111,-38){$t$}
\put(180,-38){$s$}
\put(-30,-20){$\xymatrix{\ar@/^0.5pc/@{|->}[rr]^{\g_{\R}} & & }$}
\end{picture}
    \includegraphics[width=1.00\textwidth]{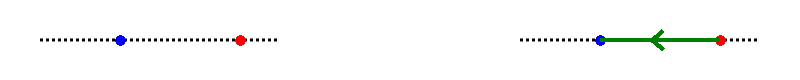}
    \vspace{-3mm}
    \caption{A point $(s,t)$ in $\R^{2}$ is drawn as two points on $\R$
and gets mapped to the thin path in $\R$ from the point $s$ to the point
$t$ with a representative shown on the right under the map $\g_{\R}.$}
    \label{fig:gammar}
\end{figure}

Therefore, we obtain a function $F_{1} \circ \g_{*} \circ \g_{\R}$ from
the composition 
\be
G \xleftarrow{F_{1}} P^{1} X \xleftarrow{\g_{*}} P^{1} \R
\xleftarrow{\g_{\R}} \R \times \R . 
\ee 
Here $F_{1}$ is $F$ restricted to the set of morphisms $P^{1} X.$ 
Using this, we define
\be
\label{eq:DXF}
A_{x} (v) := - \frac{d}{dt} \bigg |_{t=0} F_{1} \Big ( \g_{*} 
\big ( \g_{\R} (0,t)  \big) \Big). 
\ee
$A_{x}(v)$ is independent of $\g$ and only depends on $x$ and $v.$
Furthermore, it defines a 1-form $A \in \W^{1} (X; \un G).$

\subsubsection{From 1-forms to functors} 
\label{sec:1formstofunctors}

Starting with a $\un G$-valued 1-form $A \in \W^{1} (X; \un G)$ on
$X$ we want to define a smooth functor
$\mathcal{P}_{1} (X) \to \mathcal{B} G.$ To do this, we first define a
function, referred to as the \emph{\uline{path transport}},
$k_{A} : P X \to G$ on paths in $X$ with sitting instants
(we do \emph{not} mod out by thin homotopy).  Given $\g \in PX,$ we
can pull back the 1-form $A$ to $\R,$ namely
$\g^{*} (A) \in \W^{1} ([0,1]; \un G).$ We then define
$k_{A} (\g)$ using the path-ordered-exponential
\be
\label{eq:kA}
k_{A} (\g) :=  \mathcal{P} \exp \left \{ \int_{0}^{1} A_{t}
\left ( \frac{\p}{\p t} \right ) dt \right \} . 
\ee
Recall that this path-ordered exponential is defined by%
\footnote{In this expression, we are assuming that $G$ is a matrix
Lie group.}
\be
\label{eq:pathexp}
\mathcal{P} \exp \left \{ \int_{0}^{1} A_{t} \left ( \frac{\p}{\p t}
\right ) dt \right \} := \sum_{n=0}^{\infty} \frac{1}{n!} \int_{0}^{1}
dt_{n} \cdots \int_{0}^{1} dt_{1} \;  \mathcal{T} \left [  A_{t_{n}}
\left ( \frac{\p}{\p t} \right ) \cdots A_{t_{1}} \left (
\frac{\p}{\p t} \right ) \right ] , 
\ee
where the \emph{time-ordering operator} 
$\mathcal{T}$ is defined by
\be
\mathcal{T} \left [ A_{t} A_{s} \right ]  := 
\begin{cases}
A_{t} A_{s} &\mbox{ if } t \ge s \\
A_{s} A_{t} & \mbox{ if } s \ge t 
\end{cases}
.
\ee
The $n=0$ term on the right-hand side of equation (\ref{eq:pathexp})
is the identity. 
We can picture the path-ordered exponential schematically as a power
series of graphs with marked points as in Figure
\ref{fig:pathorderedintegral}.

\begin{figure}[h]
\centering
    \includegraphics[width=1.00\textwidth]{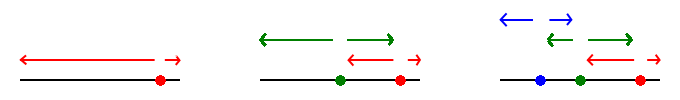}
    \vspace{-3mm}
    \caption{The path-ordered integral is depicted as a power series
over integrals. The first term (not drawn) is the identity. The second
term is the integral of $A_{t}$ (depicted as a bullet on the interval)
over all $t$ from the right to the left (the orientation goes from right
to left). The third term is the integral of $A_{t} A_{s}$ over the
interval but keeping earlier operators to the right. This is drawn by
showing the bullet on the right being able to move along the interval
provided it stays behind the bullet to its left. The fourth term
involves three operators. Higher terms are not drawn. All terms are
summed with appropriate factors.}
    \label{fig:pathorderedintegral}
\end{figure}

$k_{A}$ only depends on the thin homotopy class of $\g$
and therefore factors through a smooth map $F_{A} : P^{1} X \to G$ on
thin paths (see Definition \ref{defn:thinhomotopy}). This map defines
a smooth functor $F_{A} : \mathcal{P}_{1} (X) \to \mathcal{B} G$
(see Proposition 4.3. and Lemma 4.5. of \cite{SW1}).

\subsubsection{Local differential cocycles for transport functors}
\label{sec:diffcocycles}

The above constructions can be extended to smooth natural
transformations between smooth functors. Given a smooth natural
transformation $h : F \Rightarrow F'$ of smooth functors
$F , F' : \mathcal{P}_{1} (X) \to \mathcal{B} G$ we obtain a function,
written somewhat abusively also as $h : X \to G$ satisfying 
\be
h(y) F(\g) = F'(\g) h(x) 
\ee
for all thin paths $\g : x \to y$ in $X.$ If we differentiate this
condition, we obtain 
\be
\label{eq:A'toA}
A' = \mathrm{Ad}_{h} (A) - h^{*} \ov \theta,
\ee
where $\ov \theta$ is right Maurer-Cartan form, sometimes written as
$dg g^{-1}$ for matrix groups, $A$ is the 1-form corresponding to
$F,$  $A'$ is the 1-form corresponding to $F',$ and $\mathrm{Ad}$ is the
adjoint action on the Lie algebra $\un G$  defined by 
\be
\label{eq:Adjointaction}
\mathrm{Ad}_{h} ( T ) := \frac{d}{dt} \bigg |_{t=0} \Big ( h
\exp \{ t T \} h^{-1} \Big )
\ee 
for all $T \in \un G.$ This motivates the following definition. 
\begin{defn}
\label{defn:globalconnection}
Let $Z^{1}_{X} (G)^{\infty}$ be the category whose objects are 1-forms
$A \in \W^{1} (X ; \un G )$ and a morphism from $A$ to $A'$ is a
function $h :  X \to G$ satisfying 
\be
A' = \mathrm{Ad}_{h} (A) - h^{*} \ov \theta.
\ee
The composition is defined by 
\be
\Big ( A'' \xleftarrow{h'} A' \xleftarrow{h} A \Big ) \quad \mapsto
\quad  \Big ( A'' \xleftarrow{h' h} A \Big ),
\ee
where $h' h$ is (pointwise) multiplication of $G$-valued functions. 
\end{defn}

This (and the previous section) defines two functors 
\be
\label{eq:integratedifferentiate}
\xymatrix{ 
Z_{X}^{1} (G)^{\infty} \ar@<+1ex>[r]^(0.4){\mathcal{P}_{X}} &
\ar@<+1ex>[l]^(0.6){\mathcal{D}_{X}} \mathrm{Funct}^{\infty}
(\mathcal{P}_{1}(X), \mathcal{B} G )
}
,
\ee
where $\mathrm{Funct}^{\infty} (\mathcal{P}_{1}(X), \mathcal{B} G)$ is
the category of smooth functors and smooth natural transformations from
the thin path groupoid of $X$ to $\mathcal{B} G.$ These
functors are defined on objects by $\mathcal{D}_{X}(F) := A$ from
(\ref{eq:DXF}) and $\mathcal{P}_{X}(A) := F_{A}$ from (\ref{eq:kA}). 
These two functors are \emph{inverses} of each other, and not just an
\emph{equivalence} pair (Proposition 4.7. of \cite{SW1}). 

All of this was for globally defined smooth functors. Transport functors
on $M$ are not necessarily smooth globally. However, there must exist a
surjective submersion $\pi : Y \to M$ with a smooth $\pi$-local
$i$-trivialization. The smooth functor $\triv : \mathcal{P}_{1} (Y)
\to \mathcal{B} G$ corresponds to a 1-form $A \in \W^{1} (Y; \un G),$
which is an object in $Z_{Y}^{1} (G)^{\infty}.$ The natural
transformation $g : \pi_{1}^{*} \triv_{i} \Rightarrow \pi_{2}^{*}
\triv_{i}$ factors through a smooth natural transformation $\tilde{g} :
\pi_{1}^{*} \triv \Rightarrow \pi_{2}^{*} \triv,$ which is a morphism
in the category $Z_{Y^{[2]}}^{1} (G)^{\infty}$ from
$\pi_{1}^{*} A$ to $\pi_{2}^{*} A.$ This means 
\be
\pi_{2}^{*} A = \mathrm{Ad}_{\tilde{g}} ( \pi_{1}^{*} A ) -
\tilde{g}^{*} \ov \theta.
\ee
The condition 
\be
\begin{smallmatrix}
\pi_{12}^{*} g \\
\overset{\circ}{\pi_{23}^{*} g}
\end{smallmatrix}
=
\pi_{13}^{*} g
\ee
translates to 
\be
\pi_{23}^{*} \tilde{g} \ \pi_{12}^{*} \tilde{g} = \pi_{13}^{*}\tilde{g},
\ee
where the concatenation indicates multiplication of $G$-valued
functions.  
A morphism of transport functors subordinate to the same surjective
submersion is a natural transformation
$h : \triv_{i} \Rightarrow \triv'_{i}$ that factors through a smooth
natural transformation $\tilde{h} : \triv \Rightarrow \triv'$ and
therefore defines a morphism from $A$ to $A'$ in
$Z_{Y}^{1} (G)^{\infty}.$ This motivates the following definition of
local differential cocycles. 

\begin{defn}
\label{defn:diffcocyclespi}
Let $\pi : Y \to M$ be a surjective submersion. Define the category
$Z_{\pi}^{1} (G)^{\infty}$ of \emph{\uline{differential cocycles
subordinate to $\pi$}} as follows. An object of
$Z_{\pi}^{1} (G)^{\infty}$  is a pair $(A, g),$ where $A$ is an object
in $Z_{Y}^{1} (G)^{\infty},$ $g$ is a morphism from $\pi_{1}^{*} A$ to
$\pi_{2}^{*} A$ in $Z_{Y^{[2]}}^{1} (G)^{\infty}.$ A morphism from
$(A,g)$ to $(A',g')$ is a morphism $h$ from $A$ to $A'$ in $Z_{Y}^{1}
(G)^{\infty}.$ The composition of morphisms in
$Z_{\pi}^{1} (G)^{\infty}$ is defined by 
\be
\Big ( (A'',g'') \xleftarrow{h'} (A',g') \xleftarrow{h} (A,g) \Big )
\quad \mapsto \quad \Big ( (A'',g'') \xleftarrow{h' h} (A,g) \Big ).
\ee
\end{defn}

The above generalizations produce functors
\be
\label{eq:localintegratedifferentiate}
\xymatrix{ 
Z_{\pi}^{1} (G)^{\infty} \ar@<+1ex>[r]^(0.65){\mathcal{P}_{\pi}} & 
\ar@<+1ex>[l]^(0.35){\mathcal{D}_{\pi}} 
} \mathfrak{Des}_{\pi}^{1} (i)^{\infty}
\ee
exhibiting an equivalence of categories whenever
$i : \mathcal{B} G \to T$ is an equivalence (Corollary 4.9. in
\cite{SW1}).

\subsection{Limit over surjective submersions}
\label{sec:1limits}

Here we give a brief summary of the four levels of construction
introduced and the notation of the functors relating these categories.
To do this, we get rid of the dependence on the surjective submersion
in the categories introduced in the prequel. Several of our categories
depended on the choice of a surjective submersion. These categories
were $ \mathrm{Triv}_{\pi}^{1} (i)^{\infty}, \mathfrak{Des}^{1}_{\pi}
(i)^{\infty}, $ and $Z^{1}_{\pi} ( G )^{\infty}.$ On the contrast, the
category of transport functors
$\mathrm{Trans}_{\mathcal{B} G}^{1} ( M , T)$ does not depend on $\pi.$
To relate these categories better, we will take limits over $\pi.$
Changing the surjective submersion gives a collection of categories
depending on such surjective submersions. One can take a limit over
the collection of surjective submersions in this case. 

The general construction is done as follows. Let $S_{\pi}$ be a family
of categories parametrized by surjective submersions $\pi : Y \to M$
and let $F (\z) : S_{\pi} \to S_{\pi \circ \z}$ be a family of functors
for every refinement $\z : Y' \to Y$ of $\pi$ satisfying the condition
that for any iterated refinement $\z ' : Y'' \to Y'$ and $\z : Y' \to Y$
of $\pi : Y \to M$ then $F ( \z' \circ \z ) = F (\z') \circ F (\z).$ In
this case, an object of $\varinjlim_{\pi} S_{\pi}$ is given by a pair
$(\pi, X)$ of a surjective submersion $\pi : Y \to M$ and an object $X$
of $S_{\pi}.$  A morphism from $(\pi_{1}, X_{1} )$ to $(\pi_{2},X_{2})$
consists of an equivalence class of a common refinement 
\be
\xy 0;/r.15pc/:
(0,20)*+{Z}="Z";
(-20,0)*+{Y_{1}}="Y1";
(20,0)*+{Y_{2}}="Y2";
(0,-20)*+{M}="M";
{\ar^{\z} "Z";"M"};
{\ar_{y_{1}} "Z";"Y1"};
{\ar^{y_{2}} "Z";"Y2"};
{\ar_{\pi_{1}} "Y1";"M"};
{\ar^{\pi_{2}} "Y2";"M"};
\endxy
\ee
together with a morphism
$f : (F (y_{1})) (X_{1}) \to (F(y_{2})) (X_{2})$ in $S_{\z}.$
It is written as a pair $(\z, f).$ Two such $(\z, f)$ and $(\z', f')$
are equivalent if they agree (on the nose) on their common pullback.
The composition 
\be
(\pi_{3}, X_{3} ) \xleftarrow{(\z_{23}, g)} (\pi_{2}, X_{2} )
\xleftarrow{(\z_{12}, f)} (\pi_{1}, X_{1} ) 
\ee
is defined by choosing representatives and taking the pullback
refinement 
\be
\xy 0;/r.15pc/:
(0,30)*+{Z_{13}}="Z13";
(-20,10)*+{Z_{12}}="Z12";
(20,10)*+{Z_{23}}="Z23";
(-30,-10)*+{Y_{1}}="Y1";
(0,-10)*+{Y_{2}}="Y2";
(30,-10)*+{Y_{3}}="Y3";
(0,-30)*+{M}="M";
{\ar_{\z_{12}} "Z12";"M"};
{\ar^{\z_{23}} "Z23";"M"};
{\ar "Z12";"Y1"};
{\ar "Z12";"Y2"};
{\ar "Z23";"Y2"};
{\ar "Z23";"Y3"};
{\ar_{\pi_{1}} "Y1";"M"};
{\ar|-{\pi_{2}} "Y2";"M"};
{\ar^{\pi_{3}} "Y3";"M"};
{\ar_{\mathrm{pr}_{12}} "Z13";"Z12"};
{\ar^{\mathrm{pr}_{23}} "Z13";"Z23"};
\endxy
\ee
along with the composition
$(F(\mathrm{pr}_{23})) (g) \circ (F ( \mathrm{pr}_{12} ) ) (f ).$ One
can check this definition does not depend on the equivalence class
chosen. 

After getting rid of the specific choices of the surjective submersions,
we can take the limits of all the categories we have introduced. We set
the following notation, slightly differing from that of \cite{SW4}:
\begin{align}
\mathrm{Triv}^{1}_{M} (i)^{\infty} &:= \varinjlim_{\pi}
\mathrm{Triv}_{\pi}^{1} (i)^{\infty} \label{eq:triv1M}  \\
\mathfrak{Des}^{1}_{M} (i)^{\infty} &:= \varinjlim_{\pi}
\mathfrak{Des}^{1}_{\pi} (i)^{\infty} \label{eq:desc1M} \\
Z^{1} ( M; G )^{\infty} &:= \varinjlim_{\pi} Z^{1}_{\pi} ( G )^{\infty}
\label{eq:Z1MG}. 
\end{align}
Because a limit of such equivalences is still an equivalence, the
following facts, summarizing the several previous sections, hold. 
The categories $Z^{1} ( M; G )^{\infty}$ and
$\mathfrak{Des}^{1}_{M} (i)^{\infty}$ are equivalent under the condition
that $i : \mathcal{B} G \to T$ is an equivalence of categories.
$\mathfrak{Des}^{1}_{M} (i)^{\infty}$ and
$\mathrm{Triv}^{1}_{M} (i)^{\infty}$ are equivalent for \emph{any}
$i.$
Let $v : \mathrm{Triv}^{1}_{M} (i)^{\infty} \to
\mathrm{Trans}^{1}_{\mathcal{B} G} (M, T)$ be the forgetful functor
which forgets the specific local trivialization. If $i$ is full and
faithful, then $v : \mathrm{Triv}^{1}_{M} (i)^{\infty} \to
\mathrm{Trans}^{1}_{\mathcal{B} G} (M, T)$ is part of an equivalence of
categories. All these statements are proved in \cite{SW1} (except the
last one, but it follows from Lemma 3.3 in \cite{SW1}).

For the reader's convenience, we collect the categories and equivalences
(assuming $i$ is an equivalence)
introduced in the past few sections
\be
\label{eq:allequivalences1}
\xymatrix{
Z^{1} ( M; G )^{\infty} \ar@<+1ex>^{\mathcal{P}}[r]
& \ar@<+1ex>^{\mathcal{D}}[l]  \mathfrak{Des}^{1}_{M} (i)^{\infty}
\ar@<+1ex>^{\mathrm{Rec}^{1}}[r]   & \ar@<+1ex>^{\mathrm{Ex}^{1}}[l]
\mathrm{Triv}^{1}_{M} (i)^{\infty}  \ar@<+1ex>^(0.45){v}[r]
& \ar@<+1ex>^(0.55){c}[l] \mathrm{Trans}^{1}_{\mathcal{B} G} (M, T)
}
,
\ee
where we've introduced the notation $\mathcal{P} := \varinjlim_{\pi}
\mathcal{P}_{\pi}$ and similarly for the other functors. $c$ is a weak
inverse to $v$ and chooses a $\pi$-local $i$-trivialization for
transport functors.

\subsection{Parallel transport, holonomy, and gauge invariance}
\label{sec:1-holonomy} 

Holonomy for principal $G$-bundles with connection is defined in several
different ways. In all cases, it is a special case of parallel transport
where one restricts attention to paths whose target match their source,
i.e. \emph{marked loops}.%
\footnote{The terminology ``marked'' is chosen over ``based'' to avoid
confusion with the based loop space, which is the space of loops with a
single base point. We allow our basepoints to vary.}
 Holonomy along a marked loop is an isomorphism of the fiber over the
endpoint. However, for computational purposes, it is convenient to
express such isomorphisms as group elements. One common way of doing
this is to choose an open cover over which the bundle trivializes,
choose a trivialization, and for each path, choose a decomposition of
that path subordinate to the cover and parallel transport along each
piece while patching the terms together using the transition functions.
This is the procedure we discussed in Section \ref{sec:reconstruction}.
The problem with this procedure is that it depends on several choices.
One purpose of this section is to analyze the dependence on
these choices. The second purpose is to discuss (and make precise) the
dependence of such group elements on the marking chosen for loops. The
punchline is that to obtain a well-defined holonomy independent of such
choices, one needs to pass to conjugacy classes in $G.$ 

The first goal is accomplished by starting with a transport functor
$F:\mathcal{P}_{1} (M) \to T,$ choosing a local trivialization,
extracting the descent data, and using the descent data to reconstruct
a transport functor. This procedure can be described as a functor, which
we call $\scripty{t} \;,$ from
$\mathrm{Trans}_{\mathcal{B} G}^{1} (M, T)$ to itself (see Definition
\ref{defn:groupvaluedholonomyfunctor}). Although all the ingredients for
the functor $\scripty{t} \;$ were described in \cite{SW1}, this
procedure was not discussed. Here, we formulate this procedure and use
it to analyze holonomy along loops. Thus, starting with a transport
functor $F$ we obtain a new transport functor $\scripty{t} \; (F)$ that
produces group-valued holonomies along loops under suitable assumptions.
The first choice we made in this procedure is the transport functor $F$
itself. One could have chosen a different, but naturally isomorphic,
transport functor $F'$ to obtain $\scripty{t} \; (F')$. The other
choices made were those defining $\scripty{t} \;.$
Abstract nonsense tells us there is a morphism
$F \to \scripty{t} \; (F)$ of transport functors.
Different choices of local trivializations and reconstructions are thus
described in terms of natural isomorphisms. Formulated this way, it
becomes a tautology that holonomy along loops is independent of these
choices once one passes to conjugacy classes in $G.$ 

\begin{rmk}
One might argue that such a complicated formalism to obtain the
well-known fact that holonomy is defined only with respect to conjugacy
classes of $G$ is overkill. While this is true for holonomy along loops,
this formalism extends naturally to holonomy along surfaces, which is
our main objective, and the proofs are similar since they are expressed
in terms of category theory. In the case of surfaces, we will use these
ideas to generalize the results of Section 5.2 of \cite{SW4}. It is
therefore important to study the simpler case of holonomy along loops
first. 
\end{rmk}

The second goal, namely the dependence on markings, is accomplished by
showing that for any two loops that are thinly homotopic, but not
necessarily thinly homotopic preserving their marking, the group-valued
holonomy using $\scripty{t} \; (F)$ is well-defined up to conjugation.
Using all these observations, we define, for every isomorphism class of
transport functors, a holonomy map $L^{1}M \to G / \mathrm{Inn} (G)$
from the space of thin homotopy classes of free loops (see Definition
\ref{defn:thinbasedloopspace}) to the conjugacy classes of $G.$

We now define precisely what we mean by (functorially) extracting
group-valued parallel transport from arbitrary transport functors. In
order to accomplish this, we restrict our discussion to transport
functors with $\mathcal{B} G$-structure and with values in $T$ and
assume once and for all that $i : \mathcal{B} G \to T$ is full and
faithful. 

\begin{defn}
\label{defn:groupvaluedholonomyfunctor}
A \emph{\uline{group-valued transport extraction}} is a composition of
functors (starting at the left and moving clockwise)
\be
\label{eq:groupvaluedholonomyfunctor}
\xy 0;/r.15pc/:
(-30,0)*+{\mathrm{Trans}_{\mathcal{B} G}^{1} (M, T)}="1";
(0,15)*+{\mathrm{Triv}^{1} (i)^{\infty}}="2";
(30,0)*+{\mathfrak{Des}^{1} (i)^{\infty}}="3";
(0,-15)*+{\mathrm{Triv}^{1} (i)^{\infty}}="4";
{\ar@/^1pc/^(0.4){c} "1";"2"};
{\ar@/^1pc/^(0.6){\mathrm{Ex}^{1}} "2";"3"};
{\ar@/^1pc/^(0.35){\mathrm{Rec}^{1}} "3";"4"};
{\ar@/^1pc/^(0.6){v} "4";"1"};
\endxy
\ee
and consists of a choice of a weak inverse $c$ of the forgetful functor
$v$ and a reconstruction functor $\mathrm{Rec}^{1}$ (which itself
depends on the choice of a lifting of paths as in
(\ref{eq:pathlifting})). Such a functor is written as
$\scripty{t} \;  := v \circ \mathrm{Rec}^{1} \circ \mathrm{Ex}^{1}
\circ c.$ The notation $\scripty{t} \;$ stands for (local)
trivialization.
\end{defn}

\begin{rmk}
\label{rmk:trivialization}
Although the functor $\scripty{t} \;$ depends on both $c$ and $s^{\pi}$
(which defines $\mathrm{Rec}^{1}$) we suppress the notation. The reason
is because if we change $c$ and/or $s^{\pi},$ the functor
$\scripty{t} \;$ will change to a naturally isomorphic one and only this
fact will matter in any computation.
\end{rmk}

The purpose of $\scripty{t} \;$ is that it assigns group elements to
thin paths for \emph{every} transport functor $F$ and \emph{also}
assigns group-valued gauge transformations for \emph{every} morphism
$\eta : F \to F'$ of transport functors (this will be reviewed in the
following paragraphs). Furthermore, we know that a natural isomorphism
$\scripty{r} : \id \Rightarrow \scripty{t}\;$ exists because all the
functors in (\ref{eq:groupvaluedholonomyfunctor}) are (part of)
equivalences of categories. 
\emph{Choosing} such a natural isomorphism specifies isomorphisms from
the original fibers to the fiber $G$ viewed as a $G$-torsor and relates
our original parallel transports to the group elements defined from
$\scripty{t}\;.$ 

To see this, first recall what $\scripty{t}\;$ does. For a transport
functor $F,$ $c$ chooses a local trivialization
$c(F) := (\pi, F, \triv, t).$ Then we extract the smooth local descent
object $\mathrm{Ex}^{1} ( \pi, F , \triv, t ) := (\pi, \triv, g ).$
Then, we reconstruct a $\pi$-local $i$-trivialization $\mathrm{Rec}^{1}
( \pi, \triv, g)$ and then forget the trivialization data keeping just
the transport functor $v ( \mathrm{Rec}^{1} (\pi, \triv, g) ).$ The
resulting transport functor, written as $\scripty{t}_{F}$ (as opposed to
$\scripty{t}\;(F)$), is defined by (see the paragraph after Definition
\ref{defn:Cechpath1})
\be
\begin{split}
\mathcal{P}_{1}(M) &\xrightarrow{\scripty{t}_{F}} T \\
M \ni x &\mapsto i(\bullet) =: \triv_{i} ( s^{\pi} ( x) )  \\
P^{1} M \ni \g &\mapsto	R_{\mathrm{Ex}^{1} ( c (F) )} ( s^{\pi} (\g) ).
\end{split}
\ee
Here $\triv: \mathcal{P}_{1}(Y) \to \mathcal{B} G$ is the ``local''
transport, $s^{\pi} : \mathcal{P}_{1} (M) \to \mathcal{P}^{\pi}_{1} (M)$
is a choice of lifting points and paths, and
$R_{\mathrm{Ex}^{1} (c(F))} (s^{\pi}(\g)) : i(\bullet) \to i(\bullet)$
is an element of $G$ because $i$ is full and faithful.  
This element of $G$ is defined by choosing a lift of the path $\g$ (see
Figure \ref{fig:pathlift}) and applying trivialized transport on the
pieces and transition functions on the jumps (see Section
\ref{sec:reconstruction}). Note that in the special case that
$T = G\text{-}\mathrm{Tor},$ $i(\bullet)$ can be taken to be $G$ itself
and then $\scripty{t}_{F}(\g)$ for a thin path $\g$ is left
multiplication by some uniquely specified group element. 

To a morphism $\eta : F \to F'$ of transport functors, the resulting
morphism of transport functors, written as $\scripty{t}_{\eta},$ is
defined as follows. First, $c$ chooses surjective submersions
$\pi : Y \to M$ and $\pi' : Y' \to M$ for $F$ and $F',$ respectively,
along with local trivializations $(\triv, t)$ and $(\triv', t').$  This
means that under $c$ the morphism $c(\eta)$ is defined on a common
refinement $\z : Z \to M$ of both $\pi$ and $\pi'.$ The same thing
applies to the extracted descent morphism $\mathrm{Ex}^{1} (c (\eta) )
= ( \z, h ).$ Since our domain is changed under the refinement, $h$ is
defined by the composition 
\be
\label{eq:hforR}
\xy 0;/r.15pc/:
(30,15)*+{\mathcal{P}_{1} (Y) }="1";
(30,-15)*+{\mathcal{P}_{1} (Y') }="1p";
(0,0)*+{\mathcal{P}_{1}(M)}="2";
(60,0)*+{\mathcal{P}_{1}(Z)}="4";
(-40,0)*+{T}="3";
{\ar@/_2.5pc/_{\triv_{i}} "1";"3"};
{\ar_{y_{*}} "4";"1"};
{\ar^{y'_{*}} "4";"1p"};
{\ar_{\pi_{*}} "1";"2"};
{\ar^{\pi'_{*}} "1p";"2"};
{\ar@/_1pc/|-{F} "2";"3"};
{\ar@/^1pc/|-{F'} "2";"3"};
{\ar@/^2.5pc/^{\triv'_{i}} "1p";"3"};
{\ar@{=>}^{\ov t} (-5,22.5);(-5,5)};
{\ar@{=>}^{t'} (-5,-5);(-5,-22.5)};
{\ar@{=>}^{\eta} (-22,4);(-22,-4)};
{\ar@{=>}^{\id} "1"+(0,-8);"1p"+(0,8)};
\endxy
.
\ee
This composition satisfies the condition
\be
\begin{matrix}
y^{[2] *} g \\
\overset{\circ}{\z_{2}^{*} h}
\end{matrix}
\quad
=
\quad
\begin{matrix}
\z_{1}^{*} h \\
\overset{\circ}{y'^{[2] *} g'}
\end{matrix}
.
\ee
The notation means the following. A map $y : Z \to Y$ (and similarly for
$y' : Z \to Y'$) determines a unique map $y^{[2]}: Z^{[2]} \to Y^{[2]}$
defined by $y^{[2]}(z,z') := (y(z),y(z')).$ The maps
$\z_{1}, \z_{2} : Z^{[2]} \to Z$ are the two projections. 

The reconstruction functor $\mathrm{Rec}^{1} : \mathfrak{Des}_{M}^{1}(i)
\to \mathrm{Triv}_{M}^{1} (i) $ sends the morphism $h$ in
(\ref{eq:hforR}) to $\mathrm{Rec}^{1} ( \z, h ) := s^{\z *}
R_{( \z, h )}$ which is a morphism of transport functors
from $\mathrm{Rec}^{1} ( y^* (\pi, \triv, g ) )$ to $\mathrm{Rec}^{1}
( y'^* (\pi', \triv', g' ) )$ with respect to this common refinement and
where $s^{\z} : \mathcal{P}_{1} (M) \to \mathcal{P}_{1}^{\z} (M).$
$\mathrm{Rec}^{1} ( \z, h )$ is defined by sending $x \in M$ to
$h ( s^{\z} (x) )$ which is a morphism from
$\triv_{i} ( y ( s^{\z} (x) ) )$ to $\triv'_{i} ( y' ( s^{\z} (x ) ) ).$

Now, the natural isomorphism $\scripty{r} : \id \Rightarrow
\scripty{t}\;$ assigns to every transport functor $F$ a morphism of
transport functors $\scripty{r}_{F} : F \to  \scripty{t}_{F}.$  This
means (see Definition \ref{defn:mortran}) that associated to every
$x \in M$ is an isomorphism $\scripty{r}_{F} (x) : F(x) \to i(\bullet)$
satisfying naturality, which means that to every thin path
$\g \in P^{1} M$ from $x$ to $y,$ the diagram 
\be
\label{eq:rcsF}
\xy 0;/r.15pc/:
(0,-15)*+{F(y)}="3";
(-30,-15 )*+{i(\bullet)}="4";
(-30,15)*+{i(\bullet)}="2";
(0,15)*+{F(x)}="1";
{\ar_{\scripty{r}_{F} (x)} "1";"2" };
{\ar_{\scripty{t}_{F} (\g) } "2";"4" };
{\ar^{F(\g)} "1";"3"};
{\ar^{\scripty{r}_{F} (y)} "3";"4"};
\endxy
\ee
commutes. 

\begin{rmk}
In Section 3.2, \cite{SW1} \emph{define} the Wilson line, what we're
calling $\scripty{t}_{F} (\g),$ in terms of (\ref{eq:rcsF}) as the
composition 
$\scripty{r}_{F} (y) \circ F(\g) \circ \scripty{r}_{F} (x)^{-1} :
i(\bullet) \to i(\bullet)$ using that $i$ is full and faithful so that
this composition defines a unique group element. Our viewpoint is to
define the Wilson line functorially and globally by using the
group-valued transport extraction procedure $\scripty{t}\;.$ 
\end{rmk}

Since $\scripty{r}\;$ itself is a natural transformation, to every
morphism $\eta : F \to F'$ of transport functors, the diagram 
\be
\xy 0;/r.15pc/:
(0,-15)*+{F'}="3";
(-30,-15 )*+{\scripty{t}_{F'} }="4";
(-30,15)*+{\scripty{t}_{F}}="2";
(0,15)*+{F}="1";
{\ar_(0.4){\scripty{r}_{F}} "1";"2" };
{\ar_{\scripty{t}_{\eta}} "2";"4" };
{\ar^{\eta} "1";"3"};
{\ar^(0.4){\scripty{r}_{F'}} "3";"4"};
\endxy
\ee
commutes. 

To analyze holonomy, we need to restrict parallel transport to thin
paths whose source and target are the same, i.e. thin marked loops,
and eventually thin free loops. 

\begin{defn}
\label{defn:thinbasedloopspace}
The \emph{\uline{marked loop space of $M$}} is the set
\be
\label{eq:markedloopspace}
\mathfrak{L} M := \{ \g \in PM \ | \ s(\g) = t(\g) \}
\ee
equipped with the subspace smooth structure (see Example
\ref{ex:subspace}). Elements of $\mathfrak{L} M$ are called
\emph{\uline{marked loops}}. 
The \emph{\uline{thin marked loop space of $M$}} is the set
\be
\label{eq:thinmarkedloopspace}
\mathfrak{L}^{1} M := \{ \g \in P^1M \ | \ s(\g) = t(\g) \}
\ee
equipped with the subspace smooth structure. 
Elements of $\mathfrak{L}^{1} M$ are called
\emph{\uline{thin marked loops}}. 
\end{defn}

\begin{defn}
\label{defn:tcsholonomyF}
The \emph{\uline{$\scripty{t}\;$-holonomy of $F$}}, written as
$\mathrm{hol}_{\scripty{t}}^{F},$ is defined as the restriction of
parallel transport of a transport functor $F$ to the thin marked loop
space $\mathfrak{L}^{1} M$ of $M$:
\be
\mathrm{hol}_{\scripty{t}}^{F}
:=\scripty{t}_{F} \Big|_{\mathfrak{L}^{1}M} : \mathfrak{L}^{1} M \to G.
\ee
\end{defn}

We now pose three questions that will motivate the rest of our
discussion on holonomy along thin marked loops. 

\begin{enumerate}[i)]
\item
How does $\mathrm{hol}_{\scripty{t}}^{F}$ depend on the choice of
basepoint? Namely, suppose that two thin marked loops $\g:x \to x$
and $\g':x' \to x'$ are thinly homotopic \emph{without preserving the
marking}%
\footnote{The notion of thin homotopy introduced in Definition
\ref{defn:thinhomotopy} does not make sense when $x \ne x'.$}
(see Definition \ref{defn:thinloop}). Then, how is
$\mathrm{hol}_{\scripty{t}}^{F} (\g)$ related to
$\mathrm{hol}_{\scripty{t}}^{F} (\g')$? 
\item
How does $\mathrm{hol}_{\scripty{t}}^{F}$ depend on $F$? Namely, suppose
that $\eta : F \to F'$ is a morphism of transport functors. How is
$\mathrm{hol}_{\scripty{t}}^{F}$ related to
$\mathrm{hol}_{\scripty{t}}^{F'}$ in terms of $\eta$?
\item
How does $\mathrm{hol}_{\scripty{t}}^{F}$ depend on $\scripty{t}\;,$
i.e. the choices of $c$ and $s^{\pi}$? Namely, suppose that
$\scripty{t}\;\;{}'$ is another trivialization. Then how is 
$\mathrm{hol}_{\scripty{t}}^{F}$ related to
$\mathrm{hol}_{\scripty{t}\;{}'}^{F}$? 
\end{enumerate}

We first define what we mean by the thin free loop space and then we
proceed to answer the above questions. 
Denote the smooth space of loops in $M$ by
$LM := \{ \g : S^1 \to M \; | \; \g \text{ smooth}\}.$

\begin{defn}
\label{defn:thinloop}
Two smooth loops $\g, \g' \in LM$ are \emph{\uline{thinly homotopic}} if
there exists a smooth map $h: S^1 \times [0,1] \to M$ such that 
\begin{enumerate}[i)]
\item
there exists an $\e > 0$ with $h ( t , s ) = \g ( t)$ for $s \le \e$ and
$h (t, s ) = \g' (t)$ for $s \ge 1- \e$ and for all $t \in S^1$ and  
\item
the smooth map $h$ has rank $\le 1.$ 
\end{enumerate}
Such a smooth map $h$ is called an \emph{\uline{unmarked thin homotopy}}. 
The smooth space of such thin homotopy classes of loops is denoted by
$L^{1} M$ and is called the \emph{\uline{thin free loop space of $M$}}.
Elements of $L^{1}M$ are called \emph{\uline{thin free loops}} or just
\emph{\uline{thin loops}}. 
\end{defn}

The first condition guarantees that unmarked thin homotopy defines an
equivalence relation and $L^1 M$ is well-defined. The second condition
is where the thin structure is buried. We need to discuss a few
definitions and facts before relating thin loops to thin marked loops.
For the purposes of being absolutely clear, from Lemma
\ref{lem:forgetmarking} through Lemma \ref{lem:markingforloops} we will
distinguish between representatives of loops and thin homotopy
equivalence classes by using brackets $[ \ ].$ However, afterwards, we
will abuse notation and will rarely make the distinction. 

\begin{defn}
\label{defn:forgetmarking}
The function $f : \mathfrak{L} M \to LM$ defined by sending a marked
loop $\g : [0,1] \to M$ to the associated map $f(\g) : S^1 \to M$
obtained from identifying the endpoints of $[0,1]$ is called the
\emph{\uline{forgetful map}}. 
\end{defn}

\begin{lem}
\label{lem:forgetmarking}
There exists a unique map $f^{1} : \mathfrak{L}^{1}M \to L^1 M$ such
that the diagram 
\be
\xy 0;/r.25pc/:
(-10,7.5)*+{\mathfrak{L}M}="1";
(10,7.5)*+{\mathfrak{L}^{1} M}="2";
(-10,-7.5)*+{LM}="3";
(10,-7.5)*+{L^1 M}="4";
{\ar"1";"2"};
{\ar"1";"3"_{f}};
{\ar"3";"4"};
{\ar"2";"4"^{f^{1}}};
\endxy
\ee
commutes (the horizontal arrows are the projections onto thin homotopy
classes). 
\end{lem}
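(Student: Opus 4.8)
The plan is to realize $f^1$ through the universal property of the quotient map $q \colon \mathfrak{L}M \to \mathfrak{L}^1 M$ obtained by restricting the projection $PM \to P^1 M$ to loops. Since thin homotopy in the sense of Definition \ref{defn:thinhomotopy} preserves endpoints, this restriction is well defined and surjective onto $\mathfrak{L}^1 M$, and its fibers are exactly the marked thin homotopy classes (every representative of a thin path shares the same source and target, so a class with $s=t$ consists entirely of loops). Uniqueness of $f^1$ is then immediate: two maps agreeing after precomposition with the surjection $q$ must coincide. Hence the whole content is existence, i.e.\ the factorization of the composite $\mathfrak{L}M \xrightarrow{f} LM \to L^1 M$ through $q$.

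First I would reduce existence to a single claim: if two marked loops $\gamma, \gamma'$ (necessarily with a common basepoint $x$) are thinly homotopic as marked loops, then $f(\gamma)$ and $f(\gamma')$ represent the same class in $L^1 M$. Granting this, the composite $\mathfrak{L}M \to L^1 M$ is constant on the fibers of $q$ and therefore descends to the desired $f^1$ making the square commute.

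The heart of the argument is to turn a marked thin homotopy into an unmarked one. Given a bigon $\Gamma \colon [0,1]\times[0,1] \to M$ witnessing $\gamma \simeq \gamma'$, the boundary conditions (\ref{eq:bigon}) with $x = y$ force $\Gamma(t,s) = x$ for all $(t,s)$ with $t \in [0,\e]\cup[1-\e,1]$. In particular $\Gamma(0,s) = \Gamma(1,s) = x$ and $\Gamma$ is locally constant near the seam $\{0,1\}\times[0,1]$, so it descends to a map $h \colon S^1 \times [0,1] \to M$ after identifying the endpoints of $[0,1]$. I would then verify the two requirements of Definition \ref{defn:thinloop}: condition i) holds because $\Gamma(t,s) = \gamma(t)$ for $s \in [0,\e]$ and $\Gamma(t,s) = \gamma'(t)$ for $s \in [1-\e,1]$, which descend to $h(t,s) = f(\gamma)(t)$ and $h(t,s) = f(\gamma')(t)$ respectively; and condition ii) holds because, away from the seam, $h$ is locally just $\Gamma$, so $D h$ inherits the rank bound $\le 1$ from $D\Gamma$.

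The main obstacle I anticipate is the smoothness of the descended map $h$ across the glued seam, together with the transfer of the rank bound there. Both are handled by the constancy of $\Gamma$ on $t \in [0,\e]\cup[1-\e,1]$: near the seam $h$ is constant, hence automatically smooth and of rank $0 \le 1$, while away from the seam the quotient map is a local diffeomorphism and $h$ coincides with $\Gamma$. Once this is in place, the universal property of the quotient $q$ (and, should one wish to record smoothness, the fact that $q$ is a quotient map of diffeological spaces while $[f(-)]$ is smooth) yields the unique map $f^1$ completing the commutative square.
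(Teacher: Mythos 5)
Your proposal is correct and follows essentially the same route as the paper's proof: construct $f^1$ by choosing a representative, applying $f$, and projecting, with well-definedness established by descending a marked thin homotopy $\G$ to a cylinder map $S^1 \times [0,1] \to M$ via the constancy of $\G$ for $t \in [0,\e]\cup[1-\e,1]$. Your write-up merely makes explicit some points the paper leaves implicit (uniqueness via surjectivity of the quotient, smoothness across the seam, and the transfer of the sitting-instant and rank conditions), but the key idea is identical.
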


\begin{proof}
The map is constructed by choosing a representative, applying $f,$ and
then projecting to $L^1 M.$ Let $[ \g ] : x \to x$ be an element of
$\mathfrak{L}^{1} M$ and let $\g : x \to x$ and $\g' : x \to x$ be two
representatives in $\mathfrak{L} M.$ Then there exists a thin homotopy
$h : [0,1] \times [0,1] \to M$ from $\g$ to $\g'.$ Because $h(t,s) = x$
for all $s \in [0,1]$ and all $t \in [0,\e] \cup [1-\e, 0]$ for some
$\e > 0,$ the two ends of the first $[0,1]$ factor can be identified
resulting in a smooth map $\tilde{h} : S^1 \times [0,1].$ This gives
the desired homotopy from $f(\g)$ to $f(\g').$ 
\end{proof}

Note that there is also a function
$\mathrm{ev}_{0} : \mathfrak{L}^{1} M \to M$ given by evaluating a thin
loop at its endpoint. This function forgets the loop and remembers only
the basepoint. 

\begin{defn}
\label{defn:loopmarking}
A \emph{\uline{marking of thin loops}} is a section (\emph{not} necessarily
smooth) $\mathfrak{m} : L^1 M \to \mathfrak{L}^{1} M $ of $f^{1} :
\mathfrak{L}^{1} M \to L^1 M,$ i.e. $f^{1} \circ \mathfrak{m} =  \id.$ 
\end{defn}

\begin{rmk}
A marking of ordinary loops cannot be defined in this way as a section
of $f : \mathfrak{L} M \to LM$ because an arbitrary smooth map
$S^1 \to M$ need not have a sitting instant at any point. 
\end{rmk}

\begin{prop}
A marking of thin loops exists. 
\end{prop}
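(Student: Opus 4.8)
The plan is to reduce the proposition to the surjectivity of $f^{1}$ together with the axiom of choice. Indeed, a marking is by Definition \ref{defn:loopmarking} merely a set-theoretic section of $f^{1} : \mathfrak{L}^{1} M \to L^{1} M$, with no smoothness required; so once $f^{1}$ is known to be onto, we obtain $\mathfrak{m}$ simply by selecting, for each thin free loop, one thin marked loop in its preimage. Thus the entire content of the statement is the surjectivity of $f^{1}$. This is exactly where the thin story departs from the ordinary one: as the preceding remark observes, the honest forgetful map $f : \mathfrak{L} M \to L M$ need not be surjective, because a generic smooth loop $S^{1} \to M$ has no sitting instant, whereas after passing to thin homotopy classes we may always reparametrize to create one.

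To prove surjectivity I would begin with a class $[\g] \in L^{1} M$ and a smooth representative $\g : S^{1} \to M$. Fixing a basepoint $\q_{0} \in S^{1}$, I identify $S^{1}$ with $[0,1]$ with its endpoints glued, $\q_{0}$ being the image of $0$ and $1$, so that $\g$ becomes a smooth map $[0,1] \to M$ with $\g(0) = \g(1) = \g(\q_{0})$ and all derivatives matching at the endpoints. I then fix a smooth nondecreasing surjection $\psi : [0,1] \to [0,1]$ that is constant (equal to $0$, respectively $1$) on $[0,\e]$, respectively $[1-\e,1]$, and a diffeomorphism on the interior, and set $\tilde\g := \g \circ \psi$. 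By construction $\tilde\g$ is smooth, is constant on $[0,\e] \cup [1-\e,1]$, and satisfies $s(\tilde\g) = t(\tilde\g) = \g(\q_{0})$; hence $\tilde\g$ is a path with sitting instants (Definition \ref{defn:pwsi}) lying in $\mathfrak{L} M$, and its class $[\tilde\g]$ lies in $\mathfrak{L}^{1} M$.

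It then remains to verify $f^{1}([\tilde\g]) = [\g]$. Because $\psi$ is constant near both endpoints it descends to a smooth degree-one self-map $\bar\psi : S^{1} \to S^{1}$, and the loop $f(\tilde\g)$ obtained by regluing is precisely $\g \circ \bar\psi$. The loops $\g \circ \bar\psi$ and $\g$ are related by a reparametrization homotopy $h(\q,s) := \g\big(u(\q,s)\big)$, where $u(\q,s) := (1-\b(s))\,\bar\psi(\q) + \b(s)\,\q$ and $\b : [0,1] \to [0,1]$ is smooth with $\b \equiv 0$ near $s=0$ and $\b \equiv 1$ near $s=1$; the latter arrangement supplies sitting instants in the $s$-variable so that condition i) of Definition \ref{defn:thinloop} holds, with $h(\cdot,0) = \g \circ \bar\psi$ and $h(\cdot,1) = \g$. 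The crucial point is that $h$ factors through $\g$, so its image lies in the curve $\g(S^{1})$; concretely $D_{(\q,s)} h = \g'\big(u(\q,s)\big)\, D_{(\q,s)} u$ takes values in the at-most-one-dimensional span of $\g'$, whence $h$ has rank $\le 1$ and is an unmarked thin homotopy. Therefore $[f(\tilde\g)] = [\g]$ in $L^{1} M$, and since $f^{1}$ is induced by $f$ (Lemma \ref{lem:forgetmarking}) we conclude $f^{1}([\tilde\g]) = [\g]$. This establishes surjectivity, and the axiom of choice then delivers the desired section $\mathfrak{m}$.

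I expect the only genuine subtlety to be the rank computation confirming that reparametrization yields an \emph{unmarked thin} homotopy; the smoothness of $\bar\psi$ at the glued point and the insertion of sitting instants in $s$ via $\b$ are routine, but they must be recorded so that $h$ qualifies as a legitimate homotopy in the sense of Definition \ref{defn:thinloop}.
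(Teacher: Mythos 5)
Your proof is correct, but it takes a genuinely different route from the paper's. You reduce the proposition to surjectivity of $f^{1}$ plus the axiom of choice, and you obtain surjectivity by pure reparametrization: $\tilde{\g} := \g \circ \psi$ manufactures the sitting instants, and the homotopy back to $\g$ factors through $\g,$ so the rank-$\le 1$ requirement of Definition \ref{defn:thinloop} is automatic. In particular, the basepoint of your marked representative always lies on the image of the loop you started with. The paper instead derives the proposition from the stronger Lemma \ref{lem:markingforloops}: for \emph{any} prescribed basepoint function $\b : L^{1}M \to M$ satisfying the component condition (\ref{eq:choosingpointsforthinloops}), there is a marking $\mathfrak{m}$ with $\mathrm{ev}_{0} \circ \mathfrak{m} = \b.$ The thin homotopy used there is not a reparametrization but the ``tentacle'' move of Figure \ref{fig:markingloops}: the loop is dragged out to a basepoint that need not lie on it, still sweeping no area. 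That extra strength is what the ensuing discussion actually exercises --- the markings compared in Figure \ref{fig:twoloops} and Lemma \ref{lem:choosethinmarkedloop} have basepoints lying off each other's loops, and prescribing the basepoints in advance (say, marking every thin loop of a component at one fixed point) is impossible by reparametrization alone, which never moves the basepoint off the loop. What your route buys is economy and transparency: no tentacle construction, an immediate rank estimate, and the unavoidable appeal to choice made explicit rather than hidden in the phrase ``choose $\g_{\b}$.'' One detail to record, as you yourself flagged: the convex combination defining $u$ is not literally meaningful on $S^{1};$ define it on $[0,1],$ where $\psi$ and the identity both fix the endpoints, and check that it descends smoothly across the seam --- in the circle coordinate near the glued point it reduces to multiplication of that coordinate by your $s$-bump function, hence is smooth jointly in both variables. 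With that reading your $h$ is a legitimate unmarked thin homotopy and the argument is complete.
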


Actually, much more is true. Because the fact is somewhat surprising and
interesting (and only holds due to the thin homotopy equivalence
relation), we include it here. Let $\pi_{0} M$ denote the set of
components of $M$ and $p : M \to \pi_{0} M$ the canonical function
sending a point to its component. Let $c_0 : L^1 M \to \pi_0 M$ denote
the canonical function sending a thin loop to the component in which it
(every representative) lies. A marking of thin loops
$\mathfrak{m}$ determines a function $\b : L^1 M \to M$ given by
$\b := \mathrm{ev}_{0} \circ \mathfrak{m}$ that satisfies the
condition that 
\be
\label{eq:choosingpointsforthinloops}
\xy 0;/r.25pc/:
(-10,-7.5)*+{L^1 M}="1";
(10,7.5)*+{M}="2";
(10,-7.5)*+{\pi_{0}M}="3";
{\ar"1";"2"^{\b}};
{\ar"1";"3"_{c_0}};
{\ar"2";"3"^{p}};
\endxy
\ee 
commutes. 

\begin{lem}
\label{lem:markingforloops}
Let $\b : L^1 M \to M$ be \emph{any} function such that the diagram in
(\ref{eq:choosingpointsforthinloops}) commutes. Then there exists a
marking of thin loops $\mathfrak{m} :  L^1 M \to \mathfrak{L}^{1} M$
such that the diagram 
\be
\label{eq:thinloopmarkingev}
\xy 0;/r.25pc/:
(-10,7.5)*+{\mathfrak{L}^{1} M}="2";
(-10,-7.5)*+{L^1 M}="1";
(10,7.5)*+{M}="3";
{\ar"1";"2"^{\mathfrak{m}}};
{\ar"2";"3"^{\mathrm{ev}_{0}}};
{\ar"1";"3"_{\b}};
\endxy
\ee
commutes. 
\end{lem}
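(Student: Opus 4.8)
The plan is to define $\mathfrak{m}$ pointwise by a conjugation-by-a-path construction, exploiting the fact that under thin homotopy the basepoint of a loop may be slid freely within its connected component. Invoking the axiom of choice, for each thin free loop $\ell \in L^{1} M$ I would select a representative $\g$, a marking point on it yielding a marked representative $\g : x \to x$ (reparametrizing near the mark so that $\g$ acquires a sitting instant, which does not change $[\g] = \ell$), and a path with sitting instants $\s : \b(\ell) \to x$. Such a $\s$ exists because the commutativity of (\ref{eq:choosingpointsforthinloops}) gives $p(\b(\ell)) = c_{0}(\ell)$, i.e. $\b(\ell)$ lies in the same component of $M$ as $x$, and on a manifold (being locally path-connected) components coincide with path-components. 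I would then set $\mathfrak{m}(\ell)$ to be the thin homotopy class of the marked loop $\ov\s \circ \g \circ \s$ based at $\b(\ell)$. Since no smoothness or naturality of $\mathfrak{m}$ is demanded, these independent pointwise choices are legitimate.

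The two required properties are checked separately. The condition $\mathrm{ev}_{0} \circ \mathfrak{m} = \b$ in (\ref{eq:thinloopmarkingev}) is immediate, since $\ov\s \circ \g \circ \s$ is by construction a marked loop with endpoint $\b(\ell)$. The substance is the section property $f^{1} \circ \mathfrak{m} = \id$, i.e. that forgetting the marking recovers $\ell$; for this I would show $f^{1}\big([\ov\s \circ \g \circ \s]\big) = [\g] = \ell$ in $L^{1}M$ in two moves. First, as free loops the marking may be cyclically repositioned, so under the forgetful map $f$ of Definition \ref{defn:forgetmarking} one has that $\ov\s \circ \g \circ \s$ and $(\s \circ \ov\s) \circ \g$ have the same image (re-choosing the basepoint to be $x$). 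Second, the out-and-back path $\s \circ \ov\s$ (traversing $\ov\s$ from $x$ to $\b(\ell)$ and returning along $\s$) is thinly null-homotopic, so $(\s \circ \ov\s) \circ \g$ is thinly homotopic to $\g$; combining the two moves yields the claim.

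The main obstacle is verifying that \emph{every} step is realized within \emph{thin} homotopy, of differential rank $\le 1$, rather than ordinary homotopy — this is precisely what makes the statement true and is where, as noted after Definition \ref{defn:thinloop}, the thin structure is buried. Concretely, one must exhibit the basepoint-repositioning homotopy and the backtrack-contraction as smooth maps (to $M$, respectively $[0,1]\times[0,1] \to M$) whose rank is everywhere $\le 1$ in the sense of Definition \ref{defn:thinhomotopy}(b). Both factor through one-dimensional images: the backtrack-contraction retracts the excursion along $\s$ and so has image inside the trace of $\s$, forcing the rank bound; and the cyclic repositioning traces only a reparametrization of the loop. The remaining work is the sitting-instant bookkeeping needed so that the concatenations genuinely lie in $PM$ (Definition \ref{defn:pwsi}) and the resulting classes in $\mathfrak{L}^{1}M$ and $L^{1}M$; this is routine given the reparametrization freedom built into the definitions.
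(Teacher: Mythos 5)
Your proposal is correct and is essentially the paper's own argument: the paper likewise marks a thin loop by picking a point on it and extending a ``tentacle'' path out to $\beta(\ell)$ and back, observing that growing this tentacle is an unmarked thin homotopy that sweeps no area. Your explicit conjugation $\overline{\sigma}\circ\gamma\circ\sigma$, verified via the cyclic-repositioning and backtrack-contraction moves (each of rank $\le 1$), is just an unpacking of that single thin homotopy, so the two proofs coincide in substance.
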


\begin{proof}
A function $\mathfrak{m}$ can be defined as follows. For any thin loop
$[ \g ] \in L^1 M,$ let $\g : S^1 \to M$ be a representative. Then there
exists an unmarked thin homotopy $h$ from $\g$ to a loop $\g_{\b}$ with
sitting instants at $\b ([\g])$ because
(\ref{eq:choosingpointsforthinloops}) commutes. To see this, one can
simply pick a point on the loop and extend the loop out to the basepoint
and come back without sweeping out any area (see Figure
\ref{fig:markingloops}). 
\begin{figure}[h!]
\centering
	\begin{picture}(0,0)
	\put(115,95){$h$}
	\put(175,75){$\g$}
	\put(230,55){$\g_{\b}$}
	\put(275,5){$x$}
	\end{picture}
    \includegraphics[width=0.60\textwidth]{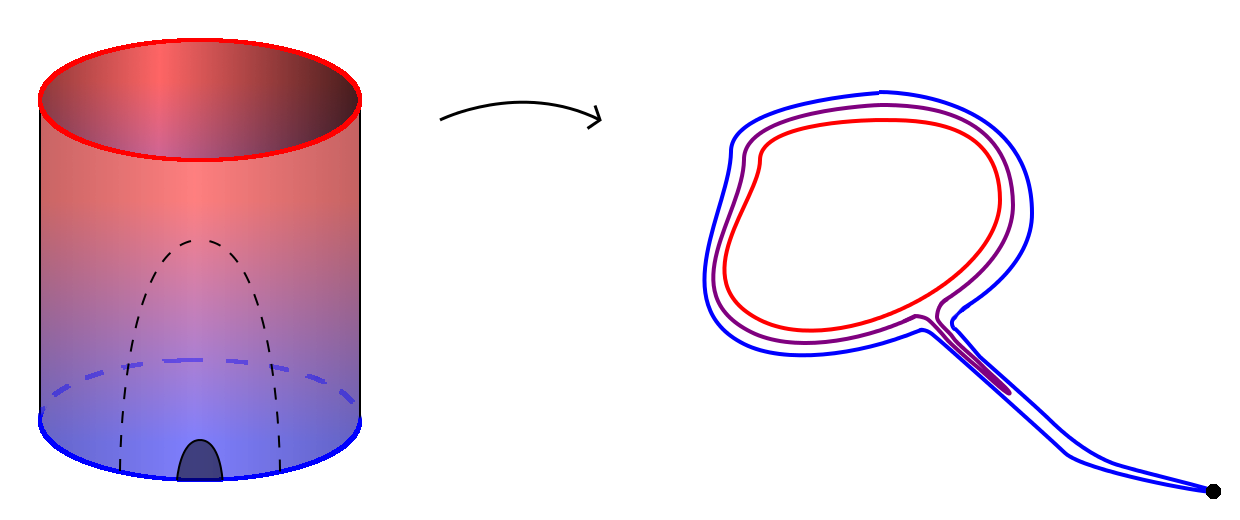}
    \vspace{-3mm}
    \caption{Let $[\g]$ be a thin free loop, $x:=\b([\g])$ a point in
the same connected component as $[\g],$ and $\g$ a representative loop
(in red). Then there exists a path $\g' : x \rightarrow x$ with sitting instants
(in blue) and an unmarked thin homotopy $h : \g \Rightarrow \g'.$ The
cylinder depicts such a homotopy with the middle loop (in purple)
indicating an intermediate loop. The dashed line on the cylinder
indicates that the loops begin to extend outwardly towards the marking
without sweeping any area. The ``mouse-hole'' on the cylinder indicates
that the loops from the homotopy eventually sit at $x.$}
    \label{fig:markingloops}
\end{figure} 
Then project $\g_{\b}$ to $\mathfrak{L}^{1} M.$ Thus, set
$\mathfrak{m}([\g]) := [\g_{\b}].$ To see that this is well-defined, let
$\g'$ be another representative of $[\g]$ and let $\tilde{h}$ be an
unmarked thin homotopy from $\g'$ to $\g.$ Then composing the two
unmarked thin homotopies $h \circ \tilde{h}$ gives an unmarked thin
homotopy from $\g'$ to $\g_{\b}.$ Of course, there are many possible
choices for $\g_{\b}$ for a given $\b$ that will give different markings
$\mathfrak{m}.$ 
\end{proof}

\begin{rmk}
If $\b$ is chosen so that the diagram in
(\ref{eq:choosingpointsforthinloops}) does not commute, a marking
$\mathfrak{m}$ satisfying (\ref{eq:thinloopmarkingev}) does not exist.
\end{rmk}

We now proceed to answering the above questions in order. 

\begin{enumerate}[i)]
\item
Let $\mathfrak{m} ,\mathfrak{m}' : L^{1} M \to \mathfrak{L}^{1} M$ be
two markings of thin loops in $M.$ Let $[\g]\in L^{1}M$ and denote
$x:=\mathrm{ev}_{0}(\mathfrak{m}([\g]))$ and
$x' :=\mathrm{ev}_{0}(\mathfrak{m}'([\g])).$ A choice of representatives
$\g:x\to x$ and $\g':x'\to x'$ as paths with sitting instants of
$\mathfrak{m} ( [ \g ] )$ and $\mathfrak{m}' ( [ \g ] ),$ respectively,
need not have the same image. In particular, $x$ and $x'$ might not lie
on each others images. Figure \ref{fig:twoloops} gives an example. This
makes it impossible to compare their holonomies using thin bigons in the
usual way (because no such bigon exists). 

\begin{figure}[h!]
\centering
	\begin{picture}(0,0)
	\put(18,13){$x$}
	\put(185,95){$x'$}
	\put(120,116){$\g'$}
	\put(48,13){$\g$}
	\end{picture}
    \includegraphics[width=0.4\textwidth]{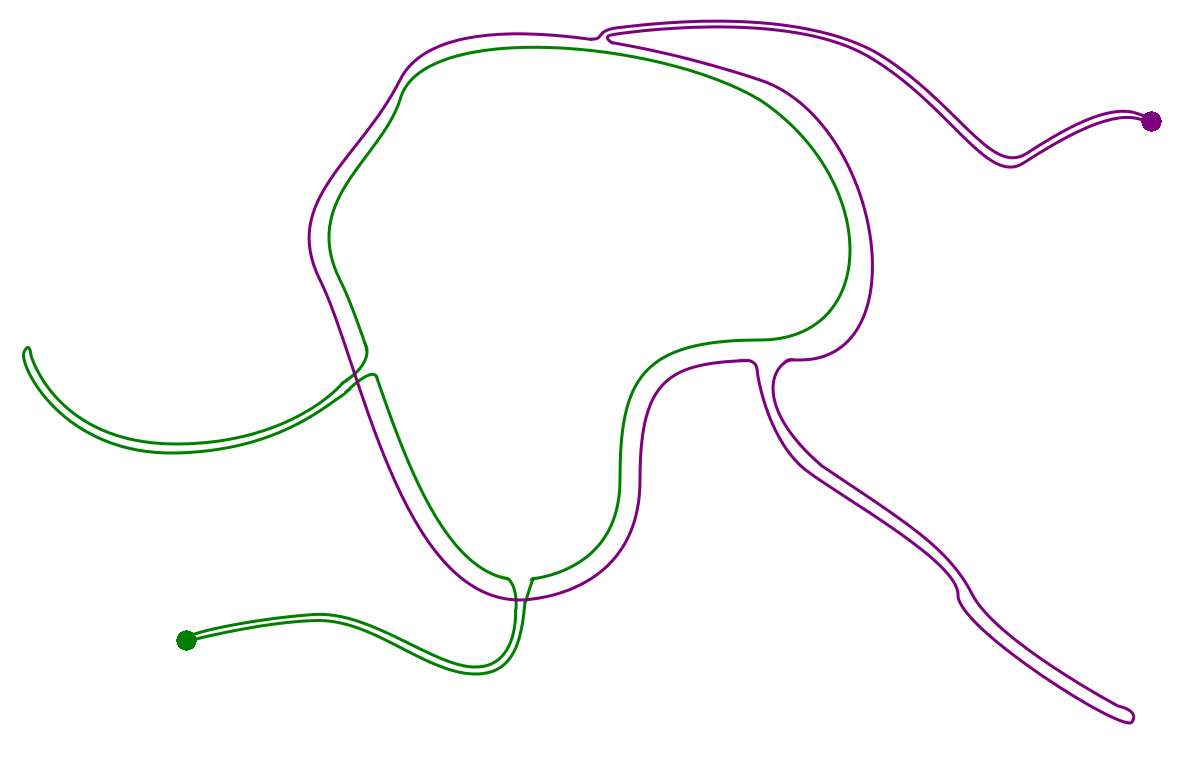}
    \vspace{-3mm}
    \caption{Two representatives $\g$ and $\g'$ of two markings of a
single thin loop are shown. Their respective basepoints $x$ and $x'$ do
not lie on each others images.}
    \label{fig:twoloops}
\end{figure} 

However, there is an \emph{unmarked} thin homotopy
$h : S^{1} \times [0,1] \to M$ with $h (t,s) = \g (t)$ for $s \le \e$
and $h(t,s) = \g' (t)$ for $s \ge 1-\e$ for some $\e > 0.$ Therefore,
one can choose a loop $\tilde{\g}$ and two \emph{paths} with sitting
instants $\g_{x' x} : x \to x'$ and $\g_{x x'} : x' \to x$ with the
following three properties. First, as a loop, $\tilde{\g}$ can be
written as the composition $\g_{x' x} $ and $\g_{x x'}$ in some order,
i.e. using the map $f$ of Definition \ref{defn:forgetmarking},
$\tilde{\g} = f(\g_{x' x} \circ \g_{x x'})$ or
$f(\g_{x x'} \circ \g_{x' x})$.
Second, the composition $\g_{x x' } \circ \g_{x' x}$ is thinly homotopic
to $\g$ preserving the basepoint $x.$  Third, the composition
$\g_{x' x } \circ \g_{x x'}$ is thinly homotopic to $\g'$ preserving the
basepoint $x'.$ This is depicted in Figure \ref{fig:thinhomotopyloop}. 

\begin{figure}[h]
\centering
	\begin{picture}(0,0)
	\put(80,48){$x$}
	\put(80,156){$x'$}
	\put(0,90){$\g'$}
	\put(110,75){$\g$}
	\put(139,75){$\tilde{\g}$}
	\put(88,25){$\g_{x' x}$}
	\put(72,130){$\g_{x x'}$}
	\end{picture}
    \includegraphics[width=0.35\textwidth]{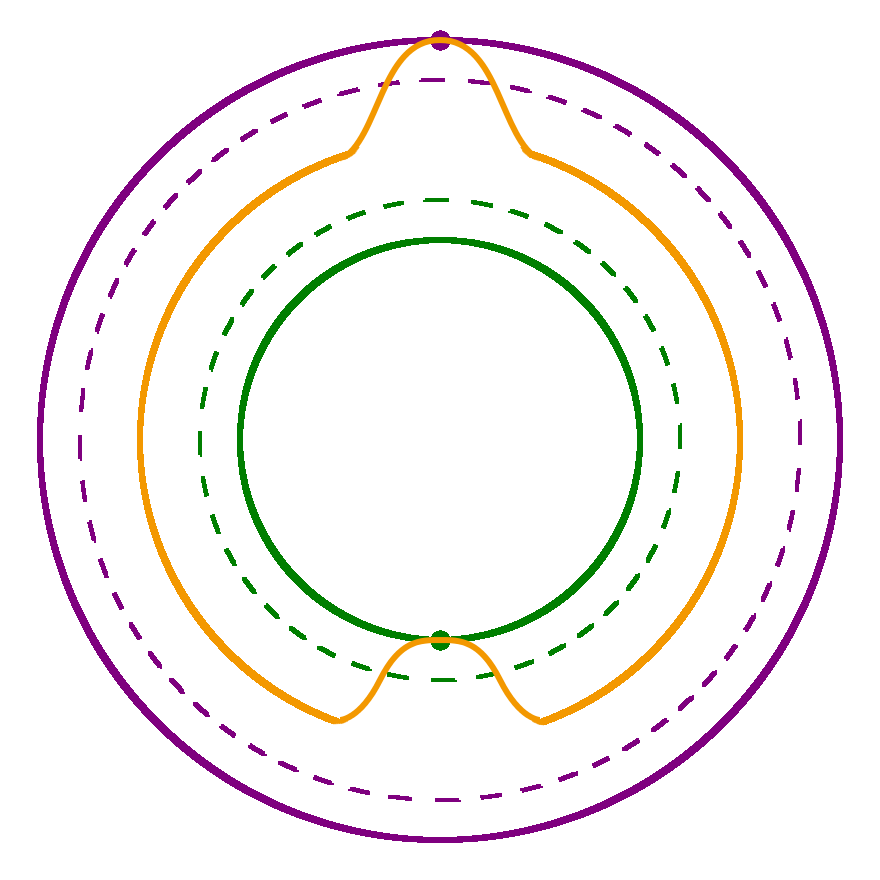}
    \vspace{-3mm}
    \caption{The domain of the unmarked thin homotopy
$h : S^{1} \times [0,1] \rightarrow M$ 
is drawn as an annulus depicting the
domain of $\g$ as the inner circle and that of $\g'$ as the outer
circle. The homotopy allows us to choose a loop $\tilde{\g},$ drawn
somewhat in the middle (in orange), that contains both $x$ and $x'$
and is thinly homotopic to both $\g$ and $\g'.$ This loop $\tilde{\g}$
is decomposed into two paths
$\g_{x'x} : x \rightarrow x'$ and $\g_{x x'} : x' \rightarrow x.$ The dashed lines
indicate the regions of sitting instants. All paths are oriented
counter-clockwise. Note that, by a reparametrization if necessary, the
homotopy $h$ may be chosen to separate the two basepoints into the
northern and southern hemispheres as drawn.}
    \label{fig:thinhomotopyloop}
\end{figure} 

This says that given two marked loops, with possibly different markings,
that are thinly homotopic \emph{without} preserving the marking, one can
always choose a representative of such a thin loop in $M$ with
\emph{two} marked points so that the associated two \emph{marked} loops
(coming from starting at either marking) are thinly homotopic to the
original two with a thin homotopy that preserves the marking. More
precisely, we proved the following fact. 

\begin{lem}
\label{lem:choosethinmarkedloop}
Let $\mathfrak{m},\mathfrak{m}': L^{1} M \to \mathfrak{L}^{1} M$ be two
markings. Let $[ \g ] \in L^{1} M$ be a thin loop in $M$ and write
$x:= \mathrm{ev}_{0} ( \mathfrak{m} ( [ \g ] ) )$ and
$x' := \mathrm{ev}_{0} ( \mathfrak{m}' ( [ \g ] ) ).$ 
Then, there exist two paths $\g_{x' x} : x \to x'$ and
$\g_{x x'} : x' \to x$ with sitting instants such that the following
three properties hold (see Figure \ref{fig:gammaxy}). 
\begin{enumerate}[i)]
\item
The composition of $\g_{x x' }$ and $\g_{x' x}$ (in either order) and
forgetting the marking is a representative of $[\g].$
\item
$\g_{x x' } \circ \g_{x' x}$ is a representative of
$\mathfrak{m} ([\g])$ as a path with sitting instants.  
\item
$\g_{x' x } \circ \g_{x x'}$ is a representative of
$\mathfrak{m}' ([\g])$ as a path with sitting instants.
\end{enumerate}
\end{lem}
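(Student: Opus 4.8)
The plan is to extract everything from a single unmarked thin homotopy between the two marked representatives, and then cut from it an intermediate loop passing through both basepoints. First I would fix representatives $\g : x \to x$ of $\mathfrak{m}([\g])$ and $\g' : x' \to x'$ of $\mathfrak{m}'([\g])$ as paths with sitting instants. Since $\mathfrak{m}$ and $\mathfrak{m}'$ are sections of $f^{1}$, we have $f^{1}(\mathfrak{m}([\g])) = [\g] = f^{1}(\mathfrak{m}'([\g]))$, so the free loops $f(\g)$ and $f(\g')$ represent the same class in $L^{1}M$; by Definition \ref{defn:thinloop} there is an unmarked thin homotopy $h : S^{1} \times [0,1] \to M$ equal to $f(\g)$ for small $s$ and to $f(\g')$ for $s$ near $1$, with rank $\le 1$ everywhere.

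The geometric heart of the argument is to use $h$ to produce an intermediate loop touching both basepoints. Next I would normalize $h$: using the sitting instants of $\g$ at $x$ and of $\g'$ at $x'$, and reparametrizing the $S^{1}$ factor if necessary (as noted in the caption of Figure \ref{fig:thinhomotopyloop}), I would arrange that the sitting arc of $\g$ over $x$ lies in one fixed arc of $S^{1}$ (the ``south'') and the sitting arc of $\g'$ over $x'$ lies in the complementary arc (the ``north''). I would then choose a loop $\tilde{c} : S^{1} \to S^{1} \times [0,1]$ in the cylinder running through the southern point of $S^{1} \times \{0\}$ and the northern point of $S^{1} \times \{1\}$, sitting at each, and set $\tilde{\g} := h \circ \tilde{c}$. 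By construction $\tilde{\g}$ passes through $x$ and $x'$ with sitting instants, and I would define $\g_{x'x} : x \to x'$ and $\g_{xx'} : x' \to x$ to be the two arcs of $\tilde{\g}$ between these marked points, so that $\g_{xx'} \circ \g_{x'x}$ recovers $\tilde{\g}$ based at $x$ and $\g_{x'x} \circ \g_{xx'}$ recovers it based at $x'$.

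To verify the three properties I would restrict $h$ to the two sub-cylinders cut out by the image of $\tilde{c}$. Property i) follows because the sub-cylinder between $\tilde{c}$ and $S^{1} \times \{0\}$ gives a free homotopy from $f(\tilde{\g})$ to $f(\g)$, which is thin since it is a restriction of $h$ (rank $\le 1$); hence $f(\tilde{\g})$ represents $[\g]$, and the same free loop arises from either order of composition. For property ii) I would upgrade this to a basepoint-preserving homotopy: because the southern arc sits entirely over $x$ on both $\tilde{\g}$ and $\g$, the restricted homotopy can be taken constant at $x$ there, so it descends to a marked thin homotopy $\g_{xx'} \circ \g_{x'x} \Rightarrow \g$ fixing $x$, exhibiting $\g_{xx'} \circ \g_{x'x}$ as a representative of $\mathfrak{m}([\g])$. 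Property iii) is identical with the roles of top and bottom, and of $x$ and $x'$, interchanged.

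The step I expect to be the main obstacle is producing the basepoint-preserving (marked) thin homotopies in ii) and iii): the homotopy $h$ furnished by Definition \ref{defn:thinloop} need not fix any point, so the whole force of the argument rests on normalizing $h$ so that the relevant sitting arcs lie constantly over $x$ (respectively $x'$), and then checking that the cut-off homotopy retains rank $\le 1$ after being made constant on those arcs. Everything else, namely the decomposition of $\tilde{\g}$ and the free-loop statement of i), is essentially bookkeeping about concatenation and reparametrization under the sitting-instants conventions of Definitions \ref{defn:pwsi} and \ref{defn:thinhomotopy}.
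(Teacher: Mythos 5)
Your proposal is correct and follows essentially the same route as the paper: the paper likewise takes the unmarked thin homotopy $h : S^1 \times [0,1] \to M$, reparametrizes so the two basepoints sit in opposite hemispheres of the annulus/cylinder, chooses an intermediate loop $\tilde{\g}$ through both $x$ and $x'$, splits it into $\g_{x'x}$ and $\g_{xx'}$, and obtains the marked thin homotopies of ii) and iii) by restricting $h$ to the two sub-regions, with thinness automatic since the restricted homotopies factor through $h$. Your write-up is merely more explicit about the normalization and sitting-instant bookkeeping that the paper delegates to Figure \ref{fig:thinhomotopyloop} and its caption.
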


\begin{figure}[h]
	\centering
	\begin{picture}(0,0)
	\put(70,35){$\g$}
	\put(37,-2.5){$x$}
	\end{picture}
	\begin{subfigure}[b]{0.22\textwidth}
		\includegraphics[width=\textwidth]{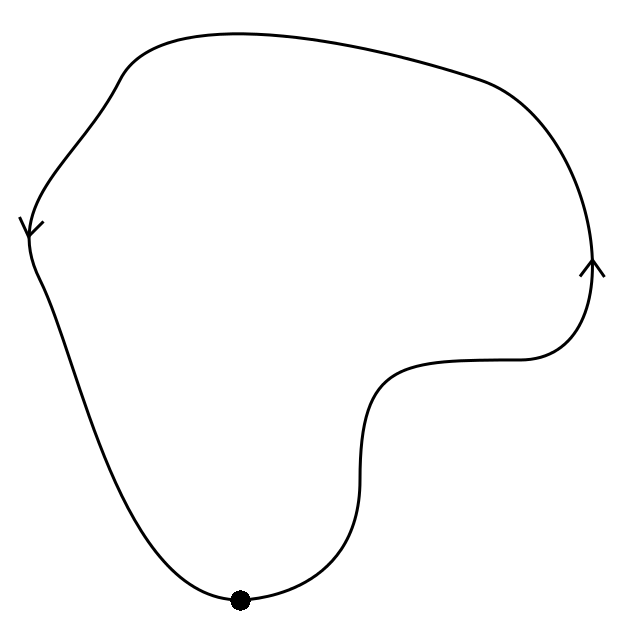}
	\end{subfigure}
	\quad
	\begin{picture}(0,0)
	\put(9,80){$\g'$}
	\put(51,97){$x'$}
	\end{picture}
	\begin{subfigure}[b]{0.22\textwidth}
		\includegraphics[width=\textwidth]{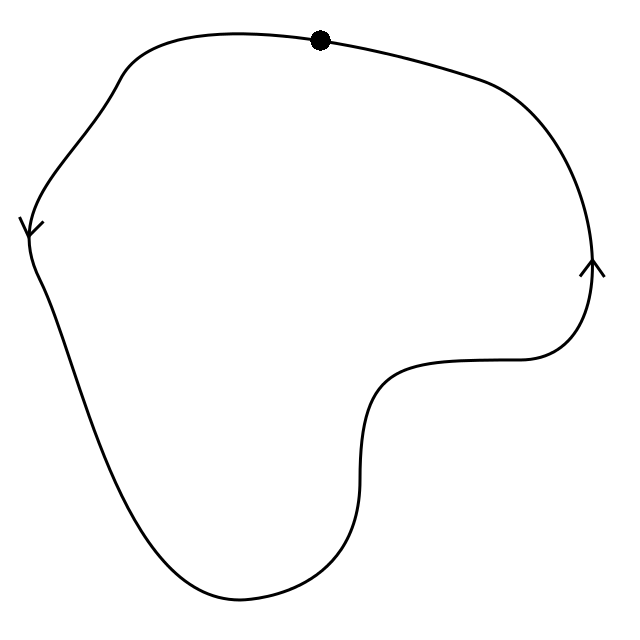}
	\end{subfigure}
	\quad
	\begin{picture}(0,0)
	\put(96,69){$\g_{x'x}$}
	\put(-6,43){$\g_{xx'}$}
	\put(51,97){$x'$}
	\put(37,-2.5){$x$}
	\end{picture}
	\begin{subfigure}[b]{0.22\textwidth}
		\includegraphics[width=\textwidth]{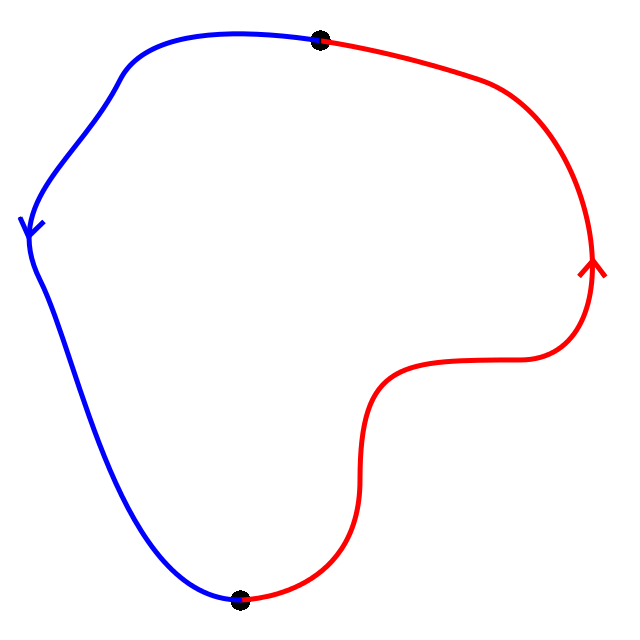}
	\end{subfigure}
	\caption{For two markings with associated basepoints $x$ and
$x'$ of a thin loop $[\g],$ there exist representatives paths with
sitting instants (shown on the right) $\g_{x'x} : x \rightarrow x'$ (in red)
and $\g_{xx'} : x' \rightarrow x$ (in blue) such that
$\g := \g_{xx'} \circ \g_{x'x}$ (shown on the left) represents one
marking and $\g' := \g_{x'x} \circ \g_{xx'}$ (shown in the middle)
represents the other. Note that $\g'$ and $\overline{\g_{xx'}} \circ \g
\circ \g_{xx'}$ are thinly homotopic.}
	\label{fig:gammaxy}
\end{figure}

Therefore, without loss of generality, we can choose a \emph{single}
representative $\tilde{\g}$ of a thin free loop $[\g]$ with a
decomposition as in the Lemma. We denote $\g':=\g_{x' x } \circ
\g_{x x'}$ and $\g:=\g_{x x'} \circ \g_{x' x}.$ Thus $\tilde{\g}$ is one
of $f(\g)$ or $f(\g').$  Note that $\g'$ and $\overline{\g_{xx'}} \circ
\g \circ \g_{xx'}$ are thinly homotopic. For convenience, from now on we
abuse notation often and do not distinguish between the actual paths
versus the thin homotopy classes as elements of $P^{1}M.$ 

By functoriality of the transport functor $\scripty{t}_{F},$ we have
\be
\begin{split}
\mathrm{hol}_{\scripty{t}}^{F} (\g') &= \scripty{t}_{F} (\g') \\
	&=\scripty{t}_{F}(\overline{\g_{xx'}}\circ\g\circ\g_{xx'}) \\
	&= \scripty{t}_{F} ( \overline{\g_{xx'}} ) \scripty{t}_{F} (\g)
\scripty{t}_{F} ( \g_{xx'} ) \\
	&= ( \scripty{t}_{F} (\g_{xx'} ) )^{-1}
\mathrm{hol}_{\scripty{t}}^{F} (\g) \scripty{t}_{F} (\g_{xx'} )
\end{split}
\ee
so that $\mathrm{hol}_{\scripty{t}}^{F} $ changes by conjugation in $G$
when the marking is changed. 

\item
Suppose that $\h : F \to F'$ is a morphism of transport functors. Then,
for every thin path $\g : x \to y$  we have a commutative diagram 
\be
\xy 0;/r.15pc/:
(20,-15)*+{\scripty{t}_{F}(y)}="3";
(-20,-15 )*+{\scripty{t}_{F'}(y)}="4";
(-20,15)*+{\scripty{t}_{F'}(x)}="2";
(20,15)*+{\scripty{t}_{F}(x)}="1";
{\ar_{\scripty{t}_{\eta} (x) } "1";"2" };
{\ar_{\scripty{t}_{F'} (\g) } "2";"4" };
{\ar^{\scripty{t}_{F} (\g) } "1";"3"};
{\ar^{\scripty{t}_{\eta} (y)} "3";"4"};
\endxy
,
\ee
which says
\be
\scripty{t}_{\eta} (y) \scripty{t}_{F} (\g) = \scripty{t}_{F'} (\g)
\scripty{t}_{\eta} (x). 
\ee
If we restrict this to a thin marked loop $\g$ with $y=x,$ then 
\be
\mathrm{hol}_{\scripty{t}}^{F'} (\g) = ( \scripty{t}_{\eta} (x) )^{-1}
\mathrm{hol}_{\scripty{t}}^{F} (\g) \scripty{t}_{\eta} (x)
\ee
so that again, $\mathrm{hol}^{F}_{\scripty{t}}$ changes under
conjugation when the functor $F$ is changed to an isomorphic one. 

\item
Suppose that another trivialization $\scripty{t}\;\;{}'$ was chosen.
Then following the comments after Remark \ref{rmk:trivialization}, we
can choose natural isomorphisms $\scripty{r} : \id \Rightarrow
\scripty{t}\;$ and $\scripty{r}\;{}' : \id \Rightarrow
\scripty{t}\;\;{}'$ resulting in a natural isomorphism
$\scripty{s} := \begin{smallmatrix} \overline{ \scripty{r}\;{}'}  \\
\overset{\circ}{\scripty{r}\;}\end{smallmatrix} : \scripty{t}\;\;{}'
\Rightarrow \scripty{t}\;.$ This means every transport functor $F$
gets assigned a morphism of transport functors $\scripty{s}_{F}:
\scripty{t}_{F} {}' \to \scripty{t}_{F}$ satisfying naturality. This
means to every $x \in M$ we have a morphism $\scripty{s}_{F} (x) :
\scripty{t}_{F}{}'(x) \to \scripty{t}_{F} (x)$ satisfying naturality,
i.e. to every path $\g : x \to y$ the diagram 
\be
\xy 0;/r.15pc/:
(20,-15)*+{\scripty{t}_{F}{}' (y)}="3";
(-20,-15 )*+{\scripty{t}_{F} (y)}="4";
(-20,15)*+{\scripty{t}_{F} (x)}="2";
(20,15)*+{\scripty{t}_{F}{}' (x)}="1";
{\ar_{\scripty{s}_{F} (x) } "1";"2" };
{\ar_{\scripty{t}_{F} (\g) } "2";"4" };
{\ar^{\scripty{t}_{F}{}'  (\g) } "1";"3"};
{\ar^{\scripty{s}_{F} (y)} "3";"4"};
\endxy
\ee
commutes. In case $\g$ is a thin loop at $x,$ this gives
\be
\mathrm{hol}_{\scripty{t}\;\;{}'}^{F}(\g)=(\scripty{s}_{F}(x))^{-1}
\mathrm{hol}_{\scripty{t}}^{F} (\g) \scripty{s}_{F} (x) .
\ee

\end{enumerate}

In conclusion, the answer to every one of the three questions is
conjugation. This is what is called \emph{gauge covariance}. To get
something gauge invariant, we first denote the quotient map from $G$ to
its conjugacy classes by $q : G \to G / \mathrm{Inn}(G),$ where
$\mathrm{Inn}(G)$ stands for the inner automorphisms of $G$ and the
quotient $G / \mathrm{Inn} (G)$ is given by the conjugation action of
$G$ on itself.  All of the above considerations show that the following
theorem holds. 

\begin{thm}
\label{thm:gaugeinvarianthol}
Let $M$ be a smooth manifold, $G$ be a Lie group, $T$ a category, and
suppose that $i : \mathcal{B} G \to T$ is full and faithful. 
Let $F \in \mathrm{Trans}^{1}_{\mathcal{B}G}(M,T)$ be a transport
functor and $\scripty{t} \;$ a group-valued transport extraction.
Let $L^{1} M, \mathfrak{L}^{1} M, \mathfrak{m},
\mathrm{hol}_{\scripty{t}}^{F}$ and $q$ be defined as above. Then the
composition 
\be
G / \mathrm{Inn} (G) \xleftarrow{q} G
\xleftarrow{\mathrm{hol}^{F}_{\scripty{t}}} \mathfrak{L}^{1} M
\xleftarrow{\mathfrak{m}} L^{1} M
\ee
is 
\begin{enumerate}[i)]
\item
independent of $\mathfrak{m},$ 
\item
independent of the isomorphism class of $F,$
\item
and independent of the isomorphism class of $\scripty{t} \; .$ 
\end{enumerate}
\end{thm}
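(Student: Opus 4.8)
The plan is to treat the three independence claims separately and, in each case, reduce the assertion to the statement that the group-valued holonomy transforms by conjugation in $G$, a transformation that becomes trivial upon applying the quotient map $q : G \to G/\mathrm{Inn}(G)$. Since $q$ by construction identifies any two elements of $G$ that differ by conjugation, it will suffice for each claim to exhibit an element $g \in G$ (depending on whichever choice is being varied) such that the two holonomies in question are related by $h \mapsto g^{-1} h g$. All three conjugation identities have in fact already been established in the three enumerated computations immediately preceding the theorem, so the work is to assemble them and observe that $q$ collapses each conjugation.

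For claim (i) the genuine difficulty is that two distinct markings $\mathfrak{m}, \mathfrak{m}'$ of a single thin loop $[\gamma]$ may assign different basepoints $x$ and $x'$, so the two marked loops cannot be compared by a thin bigon directly; indeed their images need not even pass through the opposite basepoint, as in Figure \ref{fig:twoloops}. This is precisely the obstacle that the marking lemma circumvents. I would invoke Lemma \ref{lem:choosethinmarkedloop} to choose a single representative $\tilde\gamma$ of $[\gamma]$ with a decomposition into paths $\gamma_{x'x} : x \to x'$ and $\gamma_{xx'} : x' \to x$ having sitting instants, such that $\gamma := \gamma_{xx'} \circ \gamma_{x'x}$ represents $\mathfrak{m}([\gamma])$ and $\gamma' := \gamma_{x'x} \circ \gamma_{xx'}$ represents $\mathfrak{m}'([\gamma])$. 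Since $\gamma'$ is then thinly homotopic, preserving the basepoint $x'$, to $\overline{\gamma_{xx'}} \circ \gamma \circ \gamma_{xx'}$, functoriality of $\scripty{t}_{F}$ gives
\[
\mathrm{hol}_{\scripty{t}}^{F}(\gamma') = \big(\scripty{t}_{F}(\gamma_{xx'})\big)^{-1}\, \mathrm{hol}_{\scripty{t}}^{F}(\gamma)\, \scripty{t}_{F}(\gamma_{xx'}),
\]
which is conjugation by $g := \scripty{t}_{F}(\gamma_{xx'})$, whence the two values agree after $q$.

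For claims (ii) and (iii) the basepoint issue does not arise and the argument is a direct application of naturality. If $\eta : F \to F'$ is a morphism of transport functors, then $\scripty{t}_{\eta}$ is a morphism of the associated group-valued transport functors, and its naturality square along a thin marked loop $\gamma : x \to x$ yields $\mathrm{hol}_{\scripty{t}}^{F'}(\gamma) = (\scripty{t}_{\eta}(x))^{-1}\, \mathrm{hol}_{\scripty{t}}^{F}(\gamma)\, \scripty{t}_{\eta}(x)$. Likewise, for a second transport extraction $\scripty{t}'$, I would choose natural isomorphisms $\scripty{r} : \id \Rightarrow \scripty{t}$ and $\scripty{r}' : \id \Rightarrow \scripty{t}'$ (available because every functor in \eqref{eq:groupvaluedholonomyfunctor} is part of an equivalence), form the composite natural isomorphism $\scripty{s} : \scripty{t}' \Rightarrow \scripty{t}$, and read off from its naturality square $\mathrm{hol}_{\scripty{t}'}^{F}(\gamma) = (\scripty{s}_{F}(x))^{-1}\, \mathrm{hol}_{\scripty{t}}^{F}(\gamma)\, \scripty{s}_{F}(x)$. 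Both are again conjugations, so $q$ equalizes them.

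Finally, to package these into the single statement about the composite $q \circ \mathrm{hol}_{\scripty{t}}^{F} \circ \mathfrak{m} : L^{1}M \to G/\mathrm{Inn}(G)$, I would note that each of the three displayed identities holds at the level of the chosen representative loop, and that $q$ sends conjugate elements to the same conjugacy class; hence the composite depends on neither $\mathfrak{m}$, nor the representative within the isomorphism class of $F$, nor the choice of $\scripty{t}$. The main obstacle is entirely concentrated in claim (i): making rigorous the comparison of holonomies across \emph{different} basepoints, which is what forces the use of Lemma \ref{lem:choosethinmarkedloop} in place of a naive thin-bigon argument. Claims (ii) and (iii) are then purely formal consequences of the naturality of $\scripty{t}_{\eta}$ and $\scripty{s}$, respectively.
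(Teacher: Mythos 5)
Your proposal is correct and follows essentially the same route as the paper: the paper's own proof consists precisely of the three enumerated computations preceding the theorem (Lemma \ref{lem:choosethinmarkedloop} giving conjugation by $\scripty{t}_{F}(\g_{xx'})$ for the change of marking, naturality of $\scripty{t}_{\eta}$ for the change of representative of $[F]$, and the composite natural isomorphism $\scripty{s}$ for the change of extraction), after which $q$ collapses all three conjugations. You have reassembled exactly this argument, including the correct identification that the only genuine difficulty is the basepoint comparison in claim (i).
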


Notice that this theorem lets us make the following definition.  

\begin{defn}
\label{defn:gaugeinvhol}
Let $[F]$ be an isomorphism class of transport functors. 
The \emph{\uline{gauge invariant holonomy}} of $[F]$ is defined to be the
map in the previous theorem, namely 
\be
\label{eq:gaugeinvarianthol}
\mathrm{hol}^{[F]}
:= q \circ \mathrm{hol}^{F}_{\scripty{t}} \circ \mathfrak{m} : L^{1} M
\to G / \mathrm{Inn} (G)
\ee
where $F$ is a representative of $[F],$ $\scripty{t} \;$ is a 
group-valued transport extraction, 
and $\mathfrak{m} : L^{1} M \to \mathfrak{L}^{1} M$ is a marking of
thin loops in $M.$ Let $\g \in L^{1} M.$ If $\mathrm{hol}^{[F]} (\g)$
is such that $q^{-1} ( \mathrm{hol}^{[F]} (\g))$ is a single element,
we will say that $\mathrm{hol}^{[F]} (\g)$ is
\emph{\uline{gauge invariant}} and abusively write
$\mathrm{hol}^{[F]} (\g)$ instead of $q^{-1} (\mathrm{hol}^{[F]} (\g)).$
\end{defn}

\section{Transport 2-functors and gauge invariant surface holonomy}
\label{sec:review2}

In the present section, we review the basics of transport 2-functors and
also provide some new and interesting results. As a preliminary, we
briefly set our notation and review some facts about (strict) 2-groups
and crossed modules. Then we split up the discussion into several parts
and follow a similar pattern to the transport functor case. However,
since we are now aware of what local triviality should mean, we skip the
guess-work and head straight to the correct theory. We start with a
\v Cech description of ordinary principal (strict) 2-group 2-bundles
(without connection) in terms of smooth 2-functors. We then discuss how
to add connection data by introducing transport functors, local
triviality, and descent data. The discussion of the reconstruction
functor is more involved, and because it is important for the
calculation, we spend some time on it. Nevertheless, we skip some
technical details (such as compositors and unifiers). Then we consider
the differential cocycle data and discuss a formula for higher holonomy
in terms of an iterated surface integral. We summarize the results as
before. Sections \ref{sec:2-groups} through \ref{sec:2limit} are a
summary of \cite{SW2}, \cite{SW3}, and \cite{SW4}. 

Finally, in Section \ref{sec:2-holonomy}, we discuss some results on
surface holonomy and its gauge covariance. We introduce a notion of
\emph{$\a$-conjugacy classes} for a 2-group in Definition
\ref{defn:alphaconj} and prove in Theorem
\ref{thm:invariancesurfaceholonomy} that surface holonomy along
\emph{spheres} is well-defined in $\a$-conjugacy classes generalizing
the reduced group of \cite{SW4} (it is not yet known whether this
generalization will work for more general surfaces). In the process, the
procedure of group-valued transport extraction is categorified for the
purposes of (i) proving this theorem and (ii) providing a functorial
description for computing transport locally, which we utilize in Section
\ref{sec:examples}. 

We assume the reader is familiar with the basics of 2-categories. A
review sufficient for most of our purposes can be found in Appendix A of
\cite{SW3}. 

\subsection{2-group conventions} 
\label{sec:2-groups}

The theory of 2-groups is discussed in great detail in the article
\cite{BL}. However, to simplify the discussion, we will define a
(strict) \emph{\uline{2-group}} as a strict one-object 2-groupoid, i.e. a
strict 2-category with inverses for all 1- and 2-morphisms. Normally,
one defines a 2-group as a groupal groupoid as in \cite{BL}, but we find
this unnecessary. However, to be consistent with notation in the
literature, we will write our 2-groups as $\mathcal{B}\mathfrak{G}$ and
use the notation $\mathfrak{G}$ where appropriate. 

There is a 2-category of strict 2-groups denoted by $2$-$\mathrm{Grp}$
whose 1-morphisms and 2-morphisms are functors and natural
transformations, respectively. It is useful to relate this
higher-categorical definition to one involving ordinary groups.
Although this is standard, we set the notation, which may differ from
some authors. 

\begin{defn}
\label{defn:crossedmodule}
A \emph{\uline{crossed module}} is a quadruple $(H, G , \t, \a)$ of two
groups, $G$ and $H,$ and group homomorphisms $\t : H \to G$ and
$\a : G \to \mathrm{Aut} (H)$ satisfying the two conditions 
\be
\label{eq:Pfeiffer}
\a_{\t(h)} (h') = h h' h^{-1} 
\ee
and
\be
\label{eq:crossedmodule2}
\t ( \a_{g} ( h) ) = g \t ( h ) g^{-1} . 
\ee
\end{defn}
In this definition, $\mathrm{Aut} (H)$ is the automorphism group of $H.$
The collection of crossed modules form the objects of a 2-category
$\mathrm{CrsMod}.$ 

\begin{thm}
\label{thm:CrsMod2group}
The 2-categories  $\mathrm{CrsMod}$ and $2$-$\mathrm{Grp}$ are
equivalent. 
\end{thm}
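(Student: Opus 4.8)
The plan is to exhibit a pair of (weak) inverse $2$-functors between $2$-$\mathrm{Grp}$ and $\mathrm{CrsMod}$, constructing them first on objects and then extending to $1$- and $2$-morphisms. I would not aim for a strict isomorphism of $2$-categories but for an equivalence, so that the object-level assignments need only be mutually inverse up to ($2$-natural) isomorphism.

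\textbf{Objects, forward direction.} Given a strict $2$-group $\mathcal{B}\mathfrak{G}$ with unique object $\bullet$, I would set $G$ to be the group $(\mathcal{B}\mathfrak{G})_{1}$ of $1$-morphisms under composition, and let $H$ be the set of $2$-morphisms whose source is $\id_{\bullet}$. Since $\id_{\bullet} \circ \id_{\bullet} = \id_{\bullet}$, horizontal composition restricts to a group operation on $H$, and taking the target of such a $2$-morphism defines a homomorphism $\tau : H \to G$. The action is defined by whiskering, $\alpha_{g}(h) := \id_{g} \circ h \circ \id_{\ov g}$; a short source/target count shows $\alpha_{g}(h)$ again has source $\id_{\bullet}$ and target $g\,\tau(h)\,\ov g$, so that (\ref{eq:crossedmodule2}) holds immediately. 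The Peiffer identity (\ref{eq:Pfeiffer}) is the one nontrivial point: it is obtained from the interchange law by rewriting the whiskering $\alpha_{\tau(h)}(h')$ as a composite of $h$, $h'$, and $\ov h$, using that vertical and horizontal composition commute.

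\textbf{Objects, reverse direction.} Given a crossed module $(H,G,\tau,\alpha)$, I would reconstruct a one-object strict $2$-groupoid whose $1$-morphisms are the elements of $G$ and whose $2$-morphisms $g \Rightarrow g'$ are pairs $(g,h) \in G \times H$ with $g' = \tau(h)g$. Vertical composition is multiplication in $H$ (with the evident source bookkeeping), and horizontal composition is $(g_{1},h_{1}) \circ (g_{2},h_{2}) := (g_{1}g_{2},\, h_{1}\,\alpha_{g_{1}}(h_{2}))$. Here axiom (\ref{eq:crossedmodule2}) guarantees that the proposed targets match, while the Peiffer identity (\ref{eq:Pfeiffer}) is exactly what is needed for the interchange law to hold and for $2$-morphisms to admit inverses; associativity and unitality then follow routinely from the group axioms in $G$ and $H$.

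\textbf{Morphisms and conclusion.} I would then extend both assignments to higher cells: a strict $2$-functor $\mathcal{B}\mathfrak{G} \to \mathcal{B}\mathfrak{G}'$ restricts to a pair of homomorphisms $H \to H'$ and $G \to G'$ commuting with $\tau$ and intertwining the actions $\alpha$, i.e.\ a crossed-module morphism, and conversely; a $2$-morphism in $2$-$\mathrm{Grp}$ (a natural transformation) is recovered by evaluating at $\bullet$ and corresponds to the appropriate notion of track between crossed-module morphisms. Finally I would check $2$-functoriality of both directions and produce the two comparison $2$-natural isomorphisms, one of which is an equality on the nose while the other is canonically isomorphic to the identity via the evident reindexing of $2$-cells. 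The main obstacle, and the step I would treat most carefully, is matching the interchange law with the Peiffer identity (\ref{eq:Pfeiffer}) while keeping the whiskering conventions consistent with the paper's right-to-left composition; the remaining verifications at the level of $1$- and $2$-morphisms are lengthy but mechanical, so I would either relegate them to a remark or cite the detailed treatment in \cite{BL}.
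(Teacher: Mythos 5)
Your proposal is correct, and where it overlaps with the paper it is essentially the same construction: your reverse direction (crossed module $\to$ one-object strict 2-groupoid, with vertical composition given by multiplication in $H$ and horizontal composition $(g_1g_2,\,h_1\alpha_{g_1}(h_2))$) reproduces exactly the paper's conventions in (\ref{eq:(h,g)})--(\ref{eq:horizontal}). The difference is one of scope rather than method: the paper deliberately proves the equivalence \emph{only at the level of objects and only in that one direction}, treating the proof as a device for fixing notation and citing \cite{BaHu} and \cite{BL} for the rest, whereas you also carry out the forward direction (extracting $(H,G,\tau,\alpha)$ from a 2-group via 2-cells with source $\id_\bullet$, with $\tau$ the target map and $\alpha$ given by whiskering, and the Peiffer identity coming from the interchange law) and sketch the correspondence on 1- and 2-morphisms. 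All of that added material is the standard argument and is correct; your identification of the Peiffer identity (\ref{eq:Pfeiffer}) with the interchange law, and of (\ref{eq:crossedmodule2}) with the target-matching condition for horizontal composition, is precisely the right accounting of where each crossed-module axiom is used.
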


This theorem has been known for quite some time in several different
forms. A simple place to start for this is in the article \cite{BaHu}
with more information in \cite{BL}. 

\begin{proof}
We only prove the equivalence at the level of objects and in only one
direction. This will set up our conventions throughout the paper. Given
a crossed module $(H, G, \t, \a)$ the associated 2-group $\mathcal{B}
\mathfrak{G}$ is defined to have a single object $\bullet,$ $G$ as its
set of 1-morphisms, and $H \rtimes G$ as its set of 2-morphisms.
Composition of 1-morphisms is given by multiplication in $G.$ The source
and target maps of 2-morphisms are defined pictorially by 
\be
\label{eq:(h,g)}
\xymatrix{
\bullet & & \ar@/_1.5pc/[ll]_{g}="1" \ar@/^1.5pc/[ll]^{\t(h) g}="2"
\ar@{}"1";"2"|(.15){\,}="3" \ar@{}"1";"2"|(.85){\,}="4"
\ar@{=>}"3";"4"^{(h,g)}  \bullet
}
.
\ee
Vertical and horizontal compositions are defined by 
\be
\label{eq:vertical}
\xymatrix{
\bullet & & & \ar@/_3pc/[lll]_{g}="1" \ar[lll]|-{\t(h) g}="2"
\ar@{}"1";"2"|(.25){\,}="3" \ar@{}"1";"2"|(.85){\,}="4"
\ar@{=>}"3";"4"^{(h,g)}  \ar@/^3pc/[lll]^{\t(\tilde{h}) \t(h) g}="5"
\ar@{}"2";"5"|(.15){\,}="6" \ar@{}"2";"5"|(.75){\,}="7"
\ar@{=>}"6";"7"^(0.3){(\tilde{h}, \t(h) g)}   \bullet
}
\quad \mapsto \quad 
\xymatrix{
\bullet & & \ar@/_1.5pc/[ll]_{g}="1"
\ar@/^1.5pc/[ll]^{\t ( \tilde{h} h ) g }="2" \ar@{}"1";"2"|(.15){\,}="3"
\ar@{}"1";"2"|(.85){\,}="4" \ar@{=>}"3";"4"^{(\tilde{h} h, g)}  \bullet
}
\ee
and
\be
\label{eq:horizontal}
\xymatrix{
\bullet & & \ar@/_1.5pc/[ll]_{g'}="5" \ar@/^1.5pc/[ll]^{\t(h') g'}="6"
\ar@{}"5";"6"|(.15){\,}="7" \ar@{}"5";"6"|(.85){\,}="8"
\ar@{=>}"7";"8"^{(h',g')}  \bullet & & \ar@/_1.5pc/[ll]_{g}="1"
\ar@/^1.5pc/[ll]^{\t(h) g}="2" \ar@{}"1";"2"|(.15){\,}="3"
\ar@{}"1";"2"|(.85){\,}="4" \ar@{=>}"3";"4"^{(h,g)}  \bullet
}
\quad
\mapsto 
\quad
\xymatrix{
\bullet & & \ar@/_1.5pc/[ll]_{g' g}="1"
\ar@/^1.5pc/[ll]^{\t(h') g' \t(h) g}="2" \ar@{}"1";"2"|(.15){\,}="3"
\ar@{}"1";"2"|(.85){\,}="4" \ar@{=>}"3";"4"|-{(h' \a_{g'} (h) ,g' g)}
\bullet
},
\ee
respectively. When writing 2-group multiplication, we will always drop
the composition symbol $\circ,$ which is a common practice for ordinary
group multiplication. 
\end{proof}

The above proof sets up our convention for 2-group multiplication.
Equation (\ref{eq:(h,g)}) shows that what's needed to specify a
2-morphism is an element of $G,$ the source of the 2-morphism, and an
element of $H.$ Thus, if the source is already known, the element in
$H$ specifies the 2-morphism. Equation (\ref{eq:vertical}) defines
vertical composition and equation (\ref{eq:horizontal}) defines
horizontal composition. Please be aware that different authors have
different conventions (since the 2-categories $\mathrm{CrsMod}$ and
$2$-$\mathrm{Grp}$ are equivalent in many ways). 

The following is a simple but important fact (which we use in studying
gauge invariance, mainly Corollary \ref{cor:gaugeinvariantflux}). 

\begin{lem}
\label{lem:central}
Let $(H, G , \t, \a)$ be a crossed module. Then
$\ker \t := \{ h \in H \ | \ \t(h) = e \}$ is a central subgroup of $H.$
\end{lem}

\begin{proof}
Let $k \in \ker \t$ and $h \in H.$ Then 
\be
k h = k h k^{-1} k = \a_{\t(k)}(h) k = \a_{e}(h) k = h k.
\ee
\end{proof}

\begin{defn}
A \emph{\uline{Lie crossed module}} is a crossed module $(H,G,\t,\a)$ with
$G$ and $H$ Lie groups and where $\t$ and $\a$ are smooth maps, where
$\a$ being smooth technically means that the adjoint map
$G \times H \to H$ is smooth.
\end{defn}

\begin{defn}
\label{defn:Lie2groupoid}
A \emph{\uline{Lie 2-groupoid}} is a strict 2-category $\mathrm{Gr}$ whose
objects, 1-morphisms, and 2-morphisms are all smooth spaces and all
structure maps are smooth. Furthermore, all 1- and 2-morphisms are
invertible and the inversion maps are all smooth. 
\end{defn}

\begin{defn}
A \emph{\uline{Lie 2-group}} is a Lie 2-groupoid with a single object.
\end{defn}

\begin{rmk}
Lie crossed modules form the objects of a 2-category 
and Lie 2-groups form the objects of a 2-category. 
A similar proof shows that these 2-categories are also equivalent. 
\end{rmk}

\subsection{A \v Cech description of principal $\mathfrak{G}$-2-bundles}
\label{sec:2Cech}

Let $\mathcal{B}\mathfrak{G}$ be a Lie 2-group and denote the associated
crossed module by $(H,G,\t,\a).$ Principal $\mathfrak{G}$-2-bundles over
a manifold $M$ can be described in terms of 2-functors using the \v Cech
groupoid as well (this also comes from Remark II.13. of \cite{Wo}).
However, since we are dealing with 2-categories we need to slightly
modify the \v Cech groupoid of Definition \ref{defn:Cechgroupoid}. The
way we do this is just by throwing on identity 2-morphisms. In other
words, given an open cover $\{ U_{i} \}_{i \in I}$ of $M,$ a 2-morphism
from $(x,i,j)$ to $(x',i',j')$ exists only if $x' = x, i' =i,$ and
$j' = j$ and in this case there is only the identity 2-morphism.
Composition is uniquely defined by this. This defines the
\emph{\uline{\v Cech 2-groupoid}}, also written as $\mathfrak{U}.$ This is
a Lie 2-groupoid. 

\begin{defn}
2-functors between Lie 2-groupoids are \emph{\uline{smooth}} if they assign
data smoothly. Similarly, pseudonatural transformations and
modifications are smooth when the assignments defining them are smooth. 
\end{defn}

Any smooth 2-functor $\mathfrak{U} \to \mathcal{B} \mathfrak{G}$ gives
the \v Cech cocycle data of a principal $\mathfrak{G}$-2-bundle over $M$
subordinate to the cover $\{ U_{i} \}_{i \in I}.$ To see this, simply
recall what a 2-functor does (see Definition A.5. of \cite{SW3}). To
each object $(x,i)$ in $\mathfrak{U}$ it assigns the single object
$\bullet$ in $\mathcal{B} \mathfrak{G}.$ To each jump $(x,i,j),$ it
assigns an element denoted by $g_{ij} (x) \in G$ in such a way that we
get a smooth 1-cochain $g_{ij} : U_{ij} \to G$ as in Section
\ref{sec:1Cech}. However, to each triple intersection $U_{ijk},$ which
corresponds to the composition of $U_{ij}$ with $U_{jk},$ it assigns an
element $f_{ijk} (x) \in H$ in such a way that we get a smooth 2-cochain
$f_{ijk} : U_{ijk} \to H$
\be
\xy 0;/r.25pc/:
(-13,-10)*{k};
(0,12)*{j};
(13,-10)*{i};
(-10,-9)*\xycircle(15,15){.};
(10,-9)*\xycircle(15,15){.};
(0,8.32)*\xycircle(15,15){.};
(-10,-9)*{\bullet}="k";
(0,8.32)*{\bullet}="j";
(10,-9)*{\bullet}="i";
"i";"j"**\dir{-} ?(0.55)*\dir{>}+(3,0)*{ij};
"j";"k"**\dir{-} ?(0.55)*\dir{>}+(-3,2)*{jk};
"i";"k"**\dir{-} ?(0.55)*\dir{>}+(1,-3)*{ik};
\endxy
\qquad
\mapsto
\qquad  
\xy 0;/r.30pc/:
(-10,-9)*{\bullet}="k";
(0,8.32)*{\bullet}="j";
(10,-9)*{\bullet}="i";
"i";"j"**\dir{-} ?(0.55)*\dir{>}+(4,0)*{g_{ij}};
"j";"k"**\dir{-} ?(0.55)*\dir{>}+(-3,2)*{g_{jk}};
"i";"k"**\dir{-} ?(0.55)*\dir{>}+(1,-3)*{g_{ik}};
{\ar@{=>}|-(0.66){(f_{ijk}, g_{jk} g_{ij})} (0,6);(0,-8)};
\endxy
,
\ee
which says
\be
\t ( f_{ijk} ) g_{jk} g_{ij} = g_{ik}.
\ee
The 2-functor satisfies an associativity condition which is translated
into a condition on quadruple intersections giving a
``cocycle condition'' 
\be
\xy 0;/r.18pc/:
(-30,0)*+{\bullet}="k";
(-8,-30)*+{\bullet}="l";
(30,-10)*+{\bullet}="i";
(-1,40)*+{\bullet}="j";
"i";"j"**\dir{-} ?(0.55)*\dir{>}+(4,0)*{ij};
"j";"k"**\dir{-} ?(0.55)*\dir{>}+(-3,3)*{jk};
"i";"k"**\dir{.} ?(0.55)*\dir{>}+(2,-4)*{ik};
"i";"l"**\dir{-} ?(0.55)*\dir{>}+(3,-3)*{il};
"j";"l"**\dir{-} ?(0.55)*\dir{>}+(3,2)*{jl};
"k";"l"**\dir{-} ?(0.55)*\dir{>}+(-4,-1)*{kl};
\endxy
\quad 
\mapsto
\quad 
\xy 0;/r.18pc/:
(-30,0)*+{\bullet}="k";
(-8,-30)*+{\bullet}="l";
(30,-10)*+{\bullet}="i";
(-1,40)*+{\bullet}="j";
"i";"j"**\dir{-} ?(0.55)*\dir{>}+(5,0)*{g_{ij}};
"j";"k"**\dir{-} ?(0.55)*\dir{>}+(-4,3)*{g_{jk}};
"i";"k"**\dir{.} ?(0.55)*\dir{>}+(2,-4)*{g_{ik}};
"i";"l"**\dir{-} ?(0.55)*\dir{>}+(3,-3)*{g_{il}};
"j";"l"**\dir{-} ?(0.55)*\dir{>}+(4,2)*{g_{jl}};
"k";"l"**\dir{-} ?(0.55)*\dir{>}+(-5,0)*{g_{kl}};
{\ar@{:>}|-(0.65){f_{ijk} } (-1,34);(-1,-2)};
{\ar@{:>}|-(0.70){ f_{ikl} } (-25,-3);(7,-19)};
{\ar@{=>}|-(0.5){ f_{jkl} } (-25,2);(-8,4)};
{\ar@{=>}|-(0.65){ f_{ijl} } (0,34);(11,-18)};
\endxy
\ee
where $f_{ijk}$ is short for $(f_{ijk} , g_{jk} g_{ij} ),$ etc. This
condition says 
\be
\begin{matrix}
(f_{jkl}, g_{kl} g_{jk} ) ( e, g_{ij} ) \\
(f_{ijl}, g_{jl} g_{ij} )   
\end{matrix}
=
\begin{matrix} 
(e, g_{kl}) (f_{ijk}, g_{jk} g_{ij}) \\
(f_{ikl}, g_{kl} g_{ik}) 
\end{matrix}
,
\ee
which after multiplying out (using the rules of Section
\ref{sec:2-groups}) and projecting both sides to $H$ gives
\be
f_{ijl} f_{jkl} = f_{ikl}  \a_{g_{kl} } ( f_{ijk} ) .
\ee
The 2-functor also assigns 0-cochains $\psi_{i} : U_{i} \to H$ 
\be
\xy 0;/r.25pc/:
(-10,3)*{i};
(10,3)*{i};
(-10,0)*\xycircle(15,15){.};
(10,0)*\xycircle(15,15){.};
(-10,0)*{\bullet}="j";
(10,0)*{\bullet}="i";
"i";"j"**\dir{-} ?(0.55)*\dir{>}+(1,3)*{ii};
\endxy
\qquad
\mapsto
\qquad  
\xy 0;/r.25pc/:
(-15,0)*{\bullet}="j";
(15,0)*{\bullet}="i";
"i";"j"**\crv{(0,15)} ?(0.50)*\dir{>}+(0,3)*{e};
"i";"j"**\crv{(0,-15)} ?(0.50)*\dir{>}+(0,-3)*{g_{ii}};
{\ar@{=>}|-(0.5){( \psi_{i}, e )} (0,7);(0,-7)};
\endxy
,
\ee
which says
\be
\t (\psi_{i} ) = g_{ii}.
\ee
These satisfy two ``degenerate'' cocycle conditions 
on each double intersection $U_{ij}$ of $M$ for 
the two ways one edge can be collapsed on the triangle. One is 
\be
\xy 0;/r.50pc/:
(-10,-9)*{\bullet}="k";
(0,8.32)*{\bullet}="j";
(10,-9)*{\bullet}="i";
"i";"j"**\crv{(11.5,3.83)} ?(0.50)*\dir{>}+(0,2)*{e};
"i";"j"**\crv{(-2,-3.33)} ?(0.45)*\dir{>}+(1,-3)*{g_{ii}};
"j";"k"**\dir{-} ?(0.55)*\dir{>}+(-1,1)*{g_{ij}};
"i";"k"**\dir{-} ?(0.55)*\dir{>}+(1,-2)*{g_{ij}};
{\ar@{=>}|-(0.64){(f_{iij}, g_{ij} g_{ii} ) \; } (-1.5,3);(0,-8)};
{\ar@{=>}|-(0.45){( \psi_{i}, e)} (7.5,1.83);(2.5,-1.83)};
\endxy
\quad
=
\quad  
\xy 0;/r.50pc/:
(-10,-9)*{\bullet}="k";
(0,8.32)*{\bullet}="j";
(10,-9)*{\bullet}="i";
"i";"j"**\dir{-} ?(0.50)*\dir{>}+(2,0)*{e};
"j";"k"**\dir{-} ?(0.50)*\dir{>}+(-2,0)*{g_{ij}};
"i";"k"**\dir{-} ?(0.50)*\dir{>}+(0,-2)*{g_{ij}};
{\ar@{=>}|-(0.65){(e, g_{ij} )} (0,6);(0,-8)};
\endxy
,
\ee
which after multiplying out and projecting to $H$ says  
\be
f_{iij} \a_{g_{ij}} ( \psi_{i} ) =  e .
\ee
The other cocycle condition is 
\be
\xy 0;/r.50pc/:
(-10,-9)*{\bullet}="k";
(0,8.32)*{\bullet}="j";
(10,-9)*{\bullet}="i";
"j";"k"**\crv{(1,-3.33)} ?(0.55)*\dir{>}+(-1,-3)*{g_{jj}};
"j";"k"**\crv{(-12,3.33)} ?(0.55)*\dir{>}+(-1,1)*{e};
"i";"j"**\dir{-} ?(0.50)*\dir{>}+(2,0)*{g_{ij}};
"i";"k"**\dir{-} ?(0.50)*\dir{>}+(1,-2)*{g_{ij}};
{\ar@{=>}|-(0.64){ ( f_{ijj}, g_{jj} g_{ij} ) } (1,3);(0,-8)};
{\ar@{=>}|-(0.45){( \psi_{j},e)} (-7.5,1.83);(-2.5,-1.83)};
\endxy
\quad
=
\quad
\xy 0;/r.50pc/:
(-10,-9)*{\bullet}="k";
(0,8.32)*{\bullet}="j";
(10,-9)*{\bullet}="i";
"i";"j"**\dir{-} ?(0.50)*\dir{>}+(2,0)*{g_{ij}};
"j";"k"**\dir{-} ?(0.50)*\dir{>}+(-2,0)*{e};
"i";"k"**\dir{-} ?(0.50)*\dir{>}+(0,-2)*{g_{ij}};
{\ar@{=>}|-(0.65){(e, g_{ij} )} (0,6);(0,-8)};
\endxy
,
\ee
which after multiplying out and projecting to $H$ says 
\be
f_{ijj} \psi_{j} = e .
\ee

Refinements and 1-morphisms between two such 2-functors is similar to
the ordinary functor case from Section \ref{sec:1Cech} but a bit more
subtle due to modifications (which we won't discuss now anyway). Let
$\{ U_{i'} \}_{i' \in I'}$ be another cover of $M$ with associated
\v Cech 2-groupoid $\mathfrak{U}'.$ Let $P: \mathfrak{U} \to \mathcal{B}
\mathfrak{G}$ and $P' : \mathfrak{U}' \to \mathcal{B} \mathfrak{G}$ be
two smooth 2-functors. A 1-morphism from $P$ to $P'$ consists of a
common refinement $\{ V_{a} \}_{a \in A}$ of both
$\{ U_{i} \}_{i \in I}$ and $\{ U_{i'} \}_{i' \in I'}$ along with a
smooth pseudo-natural transformation 
\be
\xy 0;/r.20pc/:
(0,-15)*+{\mathfrak{U}'}="3";
(15,0)*+{\mathcal{B} \mathfrak{G}}="4";
(,15)*+{ \mathfrak{U}}="2";
(-15,0)*+{ \mathfrak{V}}="1";
{\ar^{\a} "1";"2" };
{\ar^{P} "2";"4" };
{\ar_{\a'} "1";"3"};
{\ar_{P} "3";"4"};
{\ar@{=>}^{h} (0,10);(0,-10)};
\endxy
.
\ee
By definition (see Definition A.6. in \cite{SW3}), to each object
$(x,a)$ in $\mathfrak{V}$ such a pseudo-natural transformation gives a
smooth function $h_{a} : V_{a} \to G$ as before, but also to each jump
$(x,a,b)$ in $\mathfrak{V},$ it gives another smooth function
$\e_{ab} : V_{a b } \to H$ fitting in the diagram 
\be
\xy 0;/r.15pc/:
(15,15)*+{\bullet}="1";
(-15,15)*+{\bullet}="2";
(15,-15)*+{\bullet}="3";
(-15,-15 )*+{\bullet}="4";
{\ar_{g_{a b}} "1";"2" };
{\ar_{h_{b}} "2";"4" };
{\ar^{h_{a}} "1";"3"};
{\ar^{g'_{a b}} "3";"4"};
{\ar@{=>}|-{ (\e_{ab} , h_{b} g_{ab} )} "2";"3"};
\endxy
,
\ee
which says that 
\be
\t ( \e_{ab} ) h_{b} g_{ab} = g'_{ab} h_{a}. 
\ee
The higher naturality conditions of a pseudo-natural transformation are
given as follows. In general, to every 2-morphism,  there is an
associated naturality condition, but because the 2-morphisms in
$\mathfrak{U}$ are all identities, this condition is vacuously true. To
every pair of composable 1-morphisms $(x,i,j)$ and $(x,j,k)$ we get 
\be
\xy 0;/r.50pc/:
(-12,7)*{\bullet}="k";
(0,12.32)*{\bullet}="j";
(8,1)*{\bullet}="i";
"i";"j"**\dir{-} ?(0.55)*\dir{>}+(2,1)*{g_{ab}};
"j";"k"**\dir{-} ?(0.55)*\dir{>}+(-2,1)*{g_{bc}};
"i";"k"**\dir{-} ?(0.55)*\dir{>}+(0,-1)*{g_{ac}};
{\ar@{=>}|-(0.55){ f_{abc} } (-0.5,11);(-2,5)};
(-12,-8)*{\bullet}="kb";
(0,-3.32)*{\bullet}="jb";
(8,-14)*{\bullet}="ib";
"ib";"jb"**\dir{.} ?(0.55)*\dir{>}+(1,1)*{g'_{ab}};
"jb";"kb"**\dir{.} ?(0.55)*\dir{>}+(1,2)*{g'_{bc}};
"ib";"kb"**\dir{-} ?(0.55)*\dir{>}+(-1,-2)*{g'_{ac}};
{\ar@{:>}|-(0.55){ f'_{abc} } (-0.5,-5);(-2,-10)};
"i";"ib"**\dir{-} ?(0.55)*\dir{>}+(2,0)*{h_{a}};
"j";"jb"**\dir{.} ?(0.55)*\dir{>}+(1,1.5)*{h_{b}};
"k";"kb"**\dir{-} ?(0.55)*\dir{>}+(-2,0)*{h_{c}};
{\ar@{:>}|-(0.5){ \e_{ab} } (1,9.62);(7,-11)};
{\ar@{:>}|-(0.5){ \e_{bc} } (-11.3,6);(-.68,-2.32)};
{\ar@{=>}|-(0.5){ \e_{ac} } (-11.3,6);(5.32,-12)};
\endxy
.
\ee
Commutativity of this diagram says 
\be
\begin{matrix}
( e, h_{c} ) ( f_{abc}, g_{bc} g_{ab} ) \\
(  \e_{ac},  h_{c} g_{ac})    
\end{matrix}
=
\begin{matrix} 
(\e_{bc}, h_{c} g_{bc} ) ( e, g_{ab} ) \\
(e, g'_{bc} ) ( \e_{ab} , h_{b} g_{ab} )  \\
(f'_{abc} , g'_{bc} g'_{ab} ) 
\end{matrix}
,
\ee
which after equating both projections to $H$ gives 
\be
\e_{ac} \a_{h_{c}} (f_{abc}) = f'_{abc} \a_{g'_{bc}} (\e_{ab}) \e_{bc}
\ee
for all $a,b,c \in A.$  

Finally, to every object $(x,i)$ we get on each open set $U_{i}$
\be
\xy 0;/r.40pc/:
(-10,10)*{\bullet}="j";
(10,10)*{\bullet}="i";
"i";"j"**\crv{(0,20)} ?(0.55)*\dir{>}+(1,2)*{e};
"i";"j"**\crv{(0,0)} ?(0.55)*\dir{>}+(1,-2)*{g_{ii}};
(-10,-10)*{\bullet}="jb";
(10,-10)*{\bullet}="ib";
"ib";"jb"**\crv{~*=<6pt>{.} (0,0)} ?(0.55)*\dir{>}+(1,1.5)*{e};
"ib";"jb"**\crv{(0,-20)} ?(0.55)*\dir{>}+(1,-2)*{g'_{ii}};
"i";"ib"**\dir{-} ?(0.55)*\dir{>}+(3,1)*{h_{a}};
"j";"jb"**\dir{-} ?(0.55)*\dir{>}+(-3,1)*{h_{a}};
{\ar@{=>}|-(0.5){ ( \e_{aa} , h_{a}  g_{aa} ) } "j"-(-1,3);"ib"+(-3,1)};
{\ar@{=>}|-(0.5){ (  \psi_{a}, e ) } (0,14);(0,6)};
{\ar@{:>}|-(0.5){ ( \psi'_{a}, e ) } (0,-6);(0,-14)};
\endxy
\quad ,
\ee
where the back face of the cylinder is the identity 2-morphism
$(e,h_a).$ This reads
\be
\begin{matrix}
(e, h_{a} ) ( \psi_{a}, e ) \\
( \e_{aa}, h_{a} g_{aa} ) 
\end{matrix}
\quad 
=
\quad
\begin{matrix}
(e, h_{a} )  \\
( \psi'_{a} ,  e ) ( e, h_{a} ) 
\end{matrix}
\quad ,
\ee
which after projecting to $H$ says 
\be
\e_{aa} \a_{h_{a}} ( \psi_{a} ) = \psi'_{a}. 
\ee
Therefore, a 1-morphism of such principal 2-bundles as described above
defines an equivalence of principal 2-bundles as described in \cite{Wo}.

We won't discuss 2-morphisms now because we will see that the above
construction is a special case of the concept of limits
of 2-categories in Section \ref{sec:2limit}.

\subsection{Local triviality of 2-functors}
Just as transport functors describe parallel transport along paths,
transport 2-functors describe parallel transport along paths and
surfaces. They exhibit a formulation of a generalization of bundles with
connection that describe such transport. We start by generalizing the
thin path groupoid $\mathcal{P}_{1} (X)$ to the thin path 2-groupoid
$\mathcal{P}_{2} (X).$ At this point, one should recall Definition
\ref{defn:thinloop} where bigons are introduced.

\begin{defn}
\label{defn:bigonthinhomotopy}
Let $X$ be a smooth manifold. Two bigons $\G$ and $\G'$ are said to be
\emph{\uline{thinly homotopic}} if there exists a smooth map
$A : [ 0, 1 ] \times [ 0, 1 ] \times [ 0, 1 ] \to X$ with the following
two properties. 
\begin{enumerate}[i)]
\item
First, there exists an  $\e$ with $\frac{1}{2} > \e > 0$ such that 
\be
A ( t, s, r ) = 
	\begin{cases}
	x&\mbox{ for all }(t,s,r) \in [0,\e]\times [0,1]\times [0,1]\\
	y&\mbox{ for all }(t,s,r)\in [1-\e,1]\times [0,1]\times [0,1]\\
	\g(t)&\mbox{ for all }(t,s,r)\in[0,1]\times[0,\e]\times [0,1]\\
	\g'(t)&\mbox{ for all }(t,s,r)\in[0,1]\times[1-\e,1]\times[0,1]
\\
	\G(t,s)& \mbox{ for all } (t,s,r) \in [0,1] \times [0,1] 
\times [ 0, \e ] \\
	\G'(t,s) &\mbox{ for all } (t,s,r) \in [0,1] \times [0,1]
\times [ 1 - \e, 1 ] 
	\end{cases}  
\ee
\item
Second, the rank of $A$ is strictly less than 3 for all
$(t,s,r ) \in [0,1] \times [0,1] \times [0,1]$ and is strictly less than
2 for all $(t,s,r)\in[0,1]\times([0,\e]\cup [1-\e, 1]) \times [0,1].$ 
\end{enumerate}
In this case, $A$ is said to be a \emph{\uline{thin homotopy}} from $\G$ to
$\G'.$ The set of equivalence classes of bigons under thin homotopy is
denoted by $P^2 X.$ Elements of $P^2 X$ are called
\emph{\uline{thin bigons}}. 
\end{defn}

\begin{defn}
\label{defn:path2groupoid}
Let $X$ be a smooth manifold. The \emph{\uline{thin path-2-groupoid}} is a
2-category $\mathcal{P}_{2} (X)$ defined as follows. The set of objects
and 1-morphisms of $\mathcal{P}_{2} (X)$ coincide with that of
$\mathcal{P}_{1}(X).$ The set of 2-morphisms of $\mathcal{P}_{2} (X)$ is
$P^2 X.$ Let $[\G]$ be a thin bigon. The source and targets are defined
by choosing a representative bigon $\G$ and taking the thin homotopy
equivalence classes of the paths $t \mapsto \G(t,0)$ and
$t \mapsto \G(t,1),$ respectively. For a thin path $[\g],$ the identity
at $[\g]$ is the thin homotopy class of the bigon $(t,s) \mapsto \g(t).$

The various compositions in $\mathcal{P}_{2} (X)$ are the usual ones of
composing paths and homotopies by either stacking squares vertically or
horizontally and parametrizing via double speed vertically or
horizontally, respectively. More concretely, given two vertically
composable thin bigons%
\footnote{To be absolutely clear, we write square brackets to denote the
thin homotopy equivalence classes. After this definition, we will
generally \emph{not} do this, unless otherwise specified.}
\be
\xymatrix{
y & & \ar@/_2pc/[ll]_{[\g]}="1" \ar[ll]|-{[\de]}="2"
\ar@{}"1";"2"|(.25){\,}="3" \ar@{}"1";"2"|(.85){\,}="4"
\ar@{=>}"3";"4"^{[\G]}  \ar@/^2pc/[ll]^{[\e]}="5"
\ar@{}"2";"5"|(.15){\,}="6" \ar@{}"2";"5"|(.75){\,}="7"
\ar@{=>}"6";"7"^{[\D]}   x
}
\ee
the vertical composition is given by first choosing representatives
$\de$ for the target of $\G$ and $\de'$ for the source of $\D.$ Then,
there exists a thin (rank strictly less than 2) homotopy
$\S: \de \Rightarrow \de.'$ Using this thin homotopy, the vertical
composition is the thin homotopy class associated to the bigon 
\be
\label{eq:verticalcompositionbigons}
\begin{smallmatrix}
\G \\
\overset{\circ}{\D}
\end{smallmatrix} (t,s) := 
	\begin{cases}
	\G(t,3s) & \mbox{ for } 0 \le s \le \frac{1}{3} \\
	\S (t, 3s-1) & \mbox{ for } \frac{1}{3} \le s \le \frac{2}{3} \\
	\D (t,3s-2) & \mbox{ for } \frac{2}{3} \le s \le 1
	\end{cases} 
	\ , \ t \in [0,1] .
\ee
Given two horizontally composable thin bigons
\be
\xymatrix{
z & & \ar@/_1.5pc/[ll]_{[\g']}="5" \ar@/^1.5pc/[ll]^{[\de']}="6"
\ar@{}"5";"6"|(.15){\,}="7" \ar@{}"5";"6"|(.85){\,}="8"
\ar@{=>}"7";"8"^{[\G']}  y & & \ar@/_1.5pc/[ll]_{[\g]}="1"
\ar@/^1.5pc/[ll]^{[\de]}="2" \ar@{}"1";"2"|(.15){\,}="3"
\ar@{}"1";"2"|(.85){\,}="4" \ar@{=>}"3";"4"^{[\G]}  x
}
\ee
the horizontal composition is given by the thin homotopy class
associated to 
\be
\label{eq:horizontalcompositionbigons}
(\G' \circ \G ) (t,s) := 
	\begin{cases}
	\G(2t,s) & \mbox{ for } 0 \le t \le \frac{1}{2} \\
	\G' (2t-1,s) & \mbox{ for } \frac{1}{2} \le t \le 1
	\end{cases} 
	\ , \ s \in [0,1] . 
\ee
\end{defn}

All such compositions are well-defined, smooth, associative, have left
and right units given by constant bigons for horizontal composition and
paths viewed as bigons for vertical composition respectively, and
satisfy the interchange law.
$\mathcal{P}_{2} (X)$ is a Lie 2-groupoid since thin homotopy classes of
bigons are invertible in both ways and the functions that assign every
class to its vertical and horizontal inverses are both smooth. 

\begin{rmk}
In the definition of vertical composition
(\ref{eq:verticalcompositionbigons}), we can always choose
representatives of $\G$ and $\D$ so that $\de = \de'$ and we can ignore
$\S$ for all practical purposes of this paper. Therefore, we will always
write the vertical composition as 
\be
\begin{smallmatrix}
\G \\
\overset{\circ}{\D}
\end{smallmatrix} (t,s) := 
	\begin{cases}
	\G(t,2s) & \mbox{ for } 0 \le s \le \frac{1}{2} \\
	\D (t, 2s-1) & \mbox{ for } \frac{1}{2} \le s 1 
	\end{cases} 
	\ , \ t \in [0,1] .
\ee
\end{rmk}

\begin{defn}
\label{defn:trivialization2}
Let $\Gr$ be a Lie 2-groupoid, $T$ be a 2-category, $i:\mathrm{Gr}\to T$
a 2-functor, and $M$ a smooth manifold. Fix a surjective submersion
$\pi : Y \to M.$ A \emph{\uline{$\pi$-local $i$-trivialization}} of a
2-functor $F : \mathcal{P}_{2} (M) \to T$ is a pair $(\triv, t)$ of a
strict 2-functor $\triv : \mathcal{P}_{2} (Y) \to \Gr$ and a
pseudonatural equivalence 
\be
\xy 0;/r.15pc/:
(15,-15)*+{\Gr}="3";
(-15,-15 )*+{T}="4";
(-15,15)*+{\mathcal{P}_{2} (M)}="2";
(15,15)*+{\mathcal{P}_{2} (Y)}="1";
{\ar_{\pi_{*}} "1";"2" };
{\ar_{F} "2";"4" };
{\ar^{\triv} "1";"3"};
{\ar^{i} "3";"4"};
{\ar@{=>}^{t} "2";"3"};
\endxy
\ee
meaning that there exist a weak inverse $\ov t$ along with modifications
(see Definition A.8. in \cite{SW3}) $i_{t} : \begin{smallmatrix} t \\
\overset{\circ}{\ov t} \end{smallmatrix} \Rrightarrow \id_{\pi^* F }$
and $j_{t} : \id_{\triv_{i}} \Rrightarrow \begin{smallmatrix} \ov t \\
\overset{\circ}{t} \end{smallmatrix}$ satisfying the zig-zag identities
(see Definition 7. of \cite{BL} and particularly their discussion on
string diagrams). The 2-groupoid $\Gr$ is called the
\emph{\uline{structure 2-groupoid}} for $F.$ 
\end{defn}

2-functors $F : \mathcal{P}_{2} (M) \to T$ equipped with $\pi$-local
$i$-trivializations $(\triv, t)$ form the objects, written as triples
$(F, \triv, t),$ of a 2-category denoted by
$\mathrm{Triv}^{2}_{\pi} (i).$ 

\begin{defn}
\label{defn:trivialization2mor}
A \emph{\uline{1-morphism of $\pi$-local $i$-trivializations}}
$\a : (F, \triv, t) \to (F', \triv', t')$ in
$\mathrm{Triv}^{2}_{\pi} (i)$ is a pseudo-natural transformation
$\a : F \Rightarrow F'.$ A \emph{\uline{2-morphism}} $\a \Rightarrow \a'$
is a modification. 
\end{defn}

\begin{defn}
\label{defn:descent2}
Let $\Gr$ be a Lie 2-groupoid, $T$ a 2-category, $i : \Gr \to T$ a
2-functor and $\pi : Y \to M$ a surjective submersion. A
\emph{\uline{descent object}} is a quadruple $(\triv, g , \psi,  f )$
consisting of a strict 2-functor $\triv : \mathcal{P}_{2} (Y) \to \Gr,$
a pseudonatural equivalence 
\be
\xy 0;/r.15pc/:
(17.5,-15)*+{\mathcal{P}_{2} (Y)}="3";
(-17.5,-15 )*+{T}="4";
(-17.5,15)*+{\mathcal{P}_{2} (Y)}="2";
(17.5,15)*+{\mathcal{P}_{2} (Y^{[2]})}="1";
{\ar_{\pi_{1*}} "1";"2" };
{\ar_{\triv_{i}} "2";"4" };
{\ar^{\pi_{2*}} "1";"3"};
{\ar^{\triv_{i}} "3";"4"};
{\ar@{=>}^{g} "2";"3"};
\endxy
\ee
and invertible modifications 
\be
f : \begin{smallmatrix}
\pi^{*}_{12} g \\
\overset{\circ}{\pi^{*}_{23} g}
\end{smallmatrix} \Rrightarrow  \pi^{*}_{13} g 
\ee
and
\be
\psi : \id_{\triv_{i}} \Rrightarrow \D^* g. 
\ee
These modifications must satisfy the coherence conditions which are
explicitly given in Definition 2.2.1. of \cite{SW3} (in the examples of
this current paper, the above modifications will actually be trivial and
the coherence conditions will automatically be satisfied, which is why
we leave them out). 
\end{defn}

Descent objects form the objects of a 2-category denoted by
$\mathfrak{Des}^{2}_{\pi} (i).$ Morphisms and 2-morphisms are defined as
follows. 

\begin{defn}
\label{defn:descent2mor}
A \emph{\uline{descent 1-morphism}} from $(\triv, g, \psi, f)$ to
$(\triv', g', \psi', f')$ is a pair $(h, \e)$ with $h$ a pseudo-natural
transformation $h : \triv_{i} \Rightarrow \triv'_{i}$ and $\e$ an
invertible modification 
\be
\e : 
\begin{matrix}
g \\
\overset{\circ}{\pi_{2}^{*}  h}
\end{matrix}
\Rrightarrow
\begin{matrix}
\pi_{1}^{*} h \\
\overset{\circ}{g'}
\end{matrix}
.
\ee
These must satisfy certain identities explained in Definition 2.2.2. of
\cite{SW3}. 
\end{defn}

\begin{defn}
\label{defn:descent2mor2}
Let $(h, \e)$ and $(h', \e')$ be two descent 1-morphisms from
$(\triv, g, \psi, f)$ to $(\triv', g', \psi', f').$  A
\emph{\uline{descent 2-morphism}} from $(h, \e)$ to $(h', \e')$ is a
modification $E : h \Rrightarrow h'$ satisfying a certain identity
explained in Definition 2.2.3. of \cite{SW3}. 
\end{defn}

There is a 2-functor $\mathrm{Ex}^{2}_{\pi} : \mathrm{Triv}^{2}_{\pi}(i)
\to \mathfrak{Des}^{2}_{\pi} (i)$ that extracts descent data from
trivialization data. At the level of objects, this functor is defined as
follows. Let $(F, \triv, t)$ be an object in
$\mathrm{Triv}^{2}_{\pi}(i).$ For the quadruple $(\triv, g, \psi, f),$
take $\triv$ to be exactly the same. For $g$ take the composition
$g := \begin{smallmatrix} \pi_{1}^{*} \ov t \\
\overset{\circ}{\pi_{2}^{*} t} \end{smallmatrix}$ coming from the
composition in the diagram
\be
\xy 0;/r.15pc/:
(15,0)*+{\mathcal{P}_{2} (Y)}="3";
(-15,0 )*+{\mathcal{P}_{2} (M)}="4";
(-15,30)*+{\mathcal{P}_{2} (Y)}="2";
(15,30)*+{\mathcal{P}_{2} (Y^{[2]})}="1";
(-10,25)*+{}="a";
(10,5)*+{}="b";
{\ar_{\pi_{1*}} "1";"2" };
{\ar_{\pi_{*}} "2";"4" };
{\ar^{\pi_{2*}} "1";"3"};
{\ar^{\pi_{*}} "3";"4"};
{\ar@{=}^{\id} "a";"b"};
(-40,10)*+{\Gr}="5";
(-40,-20)*+{T}="7";
(-5,-20)*+{\Gr}="6";
{\ar^{\triv} "3";"6"};
{\ar^{i} "6";"7"};
{\ar^{F} "4";"7"};
{\ar_{i} "5";"7"};
{\ar_{\triv} "2";"5"};
{\ar@{=>}^(0.4){\ov t} "5";"4"};
{\ar@{=>}^{t} "4";"6"};
\endxy
\ee
just as before 
but this time $\ov t$ is a \emph{weak} (vertical) inverse to $t.$ 
By definition $f$ should be a modification $f : \begin{smallmatrix}
\pi^{*}_{12} g \\ \overset{\circ}{\pi^{*}_{23} g} \end{smallmatrix}
\Rrightarrow  \pi^{*}_{13} g.$ Using our definition of $g,$ this
means that we can break it down as follows
\be
f : \begin{smallmatrix} \pi^{*}_{12} g \\
\overset{\circ}{\pi^{*}_{23} g} \end{smallmatrix} = 
\begin{smallmatrix} 
	\pi^{*}_{12} \left( \begin{smallmatrix} \pi_{1}^{*} \ov t \\
\overset{\circ}{\pi_{2}^{*} t} \end{smallmatrix} \right ) \\
	\overset{\circ}{\pi^{*}_{23}  \left ( \begin{smallmatrix}
\pi_{1}^{*} \ov t \\
\overset{\circ}{\pi_{2}^{*} t} \end{smallmatrix} \right )}
\end{smallmatrix}
= 
\begin{smallmatrix} 
	\left ( \begin{smallmatrix} \pi_{1}^{*} \ov t \\
\overset{\circ}{\pi_{2}^{*} t} \end{smallmatrix} \right ) \\
	\overset{\circ}{\left ( \begin{smallmatrix} \pi_{2}^{*} \ov t \\
\overset{\circ}{\pi_{3}^{*} t} \end{smallmatrix} \right )}
\end{smallmatrix}
\Rrightarrow 
\begin{smallmatrix} \pi^{*}_{1} \ov t \\
\overset{\circ}{\pi^{*}_{3} t} \end{smallmatrix} 
=
\pi^{*}_{13} g 
,
\ee
where all equalities hold by commutativity of certain diagrams and the
leftover $\Rrightarrow$ is specified by the following sequence of
modifications
\be
\xy0;/r.15pc/:
(-50,0)*+{\begin{smallmatrix} 
	\left ( \begin{smallmatrix}   \pi_{1}^{*} \ov t \\
\overset{\circ}{\pi_{2}^{*} t} \end{smallmatrix} \right ) \\
	\overset{\circ}{\left ( \begin{smallmatrix} \pi_{2}^{*} \ov t \\
\overset{\circ}{\pi_{3}^{*} t}\end{smallmatrix} \right )}
\end{smallmatrix}}="1";
(0,0)*+{ \begin{smallmatrix}  \left ( \begin{smallmatrix} 
	\pi_{1}^{*} \ov t \\
	\overset{\circ}{\left ( \begin{smallmatrix}  \pi_{2}^{*} t  \\
	\overset{\circ}{\pi_{2}^{*} \ov t} \end{smallmatrix} \right )}
\end{smallmatrix} \right ) \\
	\overset{\circ}{\pi_{3}^{*} t} \end{smallmatrix}}="2";
{\ar@3{->}^(0.45){\text{associators}} "1";"2"};
(60,0)*+{ \begin{smallmatrix}  \left ( \begin{smallmatrix} 
	\pi_{1}^{*} \ov t \\
	\overset{\circ}{\pi_{2}^{*} \id_{\pi^{*} F }} \end{smallmatrix}
\right ) \\
	\overset{\circ}{\pi_{3}^{*} t} \end{smallmatrix}}="3";
{\ar@3{->}^{ \begin{smallmatrix}  \left ( \begin{smallmatrix} 
	\id_{\pi_{1}^{*} \ov t} \\
	\overset{\circ}{\pi_{2}^{*} i_{t}} \end{smallmatrix} \right ) \\
	\overset{\circ}{\id_{\pi_{3}^{*} t}} \end{smallmatrix}}"2";"3"};
(110,0)*+{\begin{smallmatrix}  \pi^{*}_{1} \ov t \\
\overset{\circ}{\pi^{*}_{3} t} \end{smallmatrix} }="4";
{\ar@3{->}^(0.6){ \begin{smallmatrix} \pi_{1}^{*} l \\
\overset{\circ}{\id_{\pi_{3}^{*} t}}  \end{smallmatrix} } "3";"4"};
\endxy
,
\ee
where $i_{t}$ is part of the pseudo-natural equivalence from $t$ and
$\ov t,$ and $l$ is a left unifier.

Finally, by definition $\psi$ should be a modification
$\psi : \id_{\triv_{i}} \Rrightarrow \D^{*} g.$ Using our definition
of $g,$ we can break it down as follows
\be
\psi :  \id_{\triv_{i}} =  \D^{*} \pi_{1}^{*}  \id_{\triv_{i}}
\Rrightarrow \D^{*} \left ( \begin{smallmatrix} \pi_{1}^{*} \ov t \\
\overset{\circ}{\pi_{2}^{*} t} \end{smallmatrix} \right ) = \D^{*} g
\ee
and such a modification can be achieved by
\be
\xy0;/r.15pc/:
(-50,0)*+{\D^{*} \pi_{1}^{*}  \id_{\triv_{i}}}="1";
(35,0)*+{\D^{*} \left ( \begin{smallmatrix} \pi_{1}^{*} \ov t \\
\overset{\circ}{\pi_{2}^{*} t} \end{smallmatrix} \right )}="3";
(0,0)*+{\D^{*} \pi_{1}^{*} \left ( \begin{smallmatrix}  \ov t \\
\overset{\circ}{t} \end{smallmatrix} \right )}="2";
{\ar@3{->}^(0.5){\D^{*} \pi_{1}^{*} j_{t} } "1";"2"};
{\ar@{=} "2";"3"};
\endxy
,
\ee
where $j_{t}$ is the other part of the pseudo-natural equivalence from
$t$ and $\ov t.$ 
This indeed defines a descent object and that this assignment of descent
data to trivialization data extends to a 2-functor
$\mathrm{Ex}^{2}_{\pi} : \mathrm{Triv}_{\pi}^{2} (i) \to
\mathfrak{Des}^{2}_{\pi} (i)$ to include 1-morphisms and 2-morphisms
(see Lemma 2.3.1., Lemma 2.3.2., and Lemma 2.3.3. of \cite{SW3}). 

\begin{defn}
Let $(F, \triv, t)$ be a $\pi$-local $i$-trivialization of a 2-functor
$F : \mathcal{P}_{2} (M) \to T,$ i.e. an object of
$\mathrm{Triv}^{2}_{\pi} (i).$ The \emph{\uline{descent object associated
to the $\pi$-local $i$-trivialization}} is
$\mathrm{Ex}^{2}_{\pi} (F, \triv, t).$ A similar definition is made for
1- and 2-morphisms. 
\end{defn}

\subsection{Transport 2-functors}
\label{sec:trans2funct}

We now wish to discuss smoothness for descent data. However, to do this
is not so simple as it was for ordinary functors. We will have to make a
detour to describe how to think of natural transformations as functors
and modifications as natural transformations by altering the source and
target categories. For the purposes of this document, we will make
stricter assumptions than is done in \cite{SW4} that are sufficient for
our purposes and simplify several of the arguments and constructions. 

Let $\mathcal{C}$ and $\mathcal{D}$ be two strict 2-categories. Let
$\mathcal{C}_{0,1}$ denote the category whose objects and morphisms are
the objects and 1-morphisms of $\mathcal{C}$ respectively. Because
$\mathcal{C}$ is strict, this defines a category. Let $\L \mathcal{D}$
be the category whose objects are morphisms
$X_{f} \xrightarrow{f} Y_{f}$ of $\mathcal{D}.$ The set of morphisms in
$\L \mathcal{D}$ from $X_{f} \xrightarrow{f} Y_{f}$ to
$X_{g} \xrightarrow{g} Y_{g}$ are pairs of morphisms $(x : X_{f} \to
X_{g}, y : Y_{f} \to Y_{g})$ along with a 2-morphism $\varphi : g \circ
x \Rightarrow y \circ f$ as in the diagram
\be
\xy 0;/r.15pc/:
(15,-15)*+{Y_{f}}="3";
(-15,-15 )*+{Y_{g}}="4";
(-15,15)*+{X_{g}}="2";
(15,15)*+{X_{f}}="1";
{\ar_{x} "1";"2" };
{\ar_{g} "2";"4" };
{\ar^{f} "1";"3"};
{\ar^{y} "3";"4"};
{\ar@{=>}^{\varphi} "2";"3"};
\endxy
.
\ee
The composition is given by stacking  
\be
\xy 0;/r.15pc/:
(15,-15)*+{Y_{f}}="3";
(-15,-15 )*+{Y_{g}}="4";
(-15,15)*+{X_{g}}="2";
(15,15)*+{X_{f}}="1";
(-45,-15 )*+{Y_{h}}="5";
(-45,15)*+{X_{h}}="6";
{\ar_{x} "1";"2" };
{\ar|-{g} "2";"4" };
{\ar^{f} "1";"3"};
{\ar^{y} "3";"4"};
{\ar_{x'} "2";"6" };
{\ar_{h} "6";"5" };
{\ar^{y'} "4";"5"};
{\ar@{=>}^{\varphi} "2";"3"};
{\ar@{=>}^{\psi} "6";"4"};
\endxy
\quad = \quad 
\xy 0;/r.15pc/:
(15,-15)*+{Y_{f}}="3";
(-15,-15 )*+{Y_{h}}="4";
(-15,15)*+{X_{h}}="2";
(15,15)*+{X_{f}}="1";
{\ar_{x' \circ x} "1";"2" };
{\ar_{h} "2";"4" };
{\ar^{f} "1";"3"};
{\ar^{y' \circ y} "3";"4"};
{\ar@{=>}|-{\begin{smallmatrix} \psi \circ \id_{x} \\
\overset{\circ}{\id_{y'} \circ \varphi}\end{smallmatrix} } "2";"3"};
\endxy
.
\ee
One can check that under our assumptions, this forms a category. 

Notice that $\L \mathcal{D}$ has a bit more structure. It also has a
partially defined operation on objects and 1-morphisms given by
``stacking vertically.'' Suppose that $X_{f} \xrightarrow{f} Y_{f}$ and
$Y_{f} \xrightarrow{f'} Z_{f}$ are two 1-morphisms in $\mathcal{D}$ then
one can compose them and this gives a partially defined associative and
unital operation on objects of $\L \mathcal{D}.$ Similarly, given
morphisms in $\L \mathcal{D}$ which can be vertically stacked as in the
diagram 
\be
\xy 0;/r.15pc/:
(15,0)*+{Y_{f}}="3";
(-15,0 )*+{Y_{g}}="4";
(-15,30)*+{X_{g}}="2";
(15,30)*+{X_{f}}="1";
{\ar_{x} "1";"2" };
{\ar_{g} "2";"4" };
{\ar^{f} "1";"3"};
{\ar|-{y} "3";"4"};
{\ar@{=>}^{\varphi} "2";"3"};
(15,-30)*+{Z_{f}}="5";
(-15,-30 )*+{Z_{g}}="6";
{\ar_{g'} "4";"6" };
{\ar^{f'} "3";"5"};
{\ar^{z} "5";"6"};
{\ar@{=>}^{\varphi'} "4";"5"};
\endxy
\quad = \quad
\xy 0;/r.15pc/:
(15,-15)*+{Y_{f}}="3";
(-15,-15 )*+{Y_{g}}="4";
(-15,15)*+{X_{g}}="2";
(15,15)*+{X_{f}}="1";
{\ar_{x} "1";"2" };
{\ar_{g' \circ g} "2";"4" };
{\ar^{f' \circ f} "1";"3"};
{\ar^{z} "3";"4"};
{\ar@{=>}|-{\begin{smallmatrix} \id_{g'} \circ \varphi \\
\overset{\circ}{\varphi' \circ \id_{f}} \end{smallmatrix} } "2";"3"};
\endxy
.
\ee
This additional partially defined composition is written as $\otimes$ in
\cite{SW4} so we stick with this notation. 

Associated to a pseudo-natural transformation $\rho$ as in 
\be
\label{eq:curlyFbeginning}
\xy0;/r.15pc/:
(-20,0)*+{\mathcal{D}}="d";
(20,0)*+{\mathcal{C}}="c";
{\ar@/_1.5pc/"c";"d"_{F}};
{\ar@/^1.5pc/"c";"d"^{G}};
(0,10)*+{}="f";
(0,-10)*+{}="g";
{\ar@{=>}"f";"g"^{\rho}};
\endxy
\ee
is a functor
$\mathcal{F} (\rho) : \mathcal{C}_{0,1} \to \L \mathcal{D}$ defined by 
\be
\xy0;/r.15pc/:
(-30,0)*+{X}="1";
(30,10)*+{FX}="2";
(30,-10)*+{GX}="3";
{\ar^{\rho(X)} "2";"3"};
{\ar@{|->}^{\mathcal{F} (\rho) } "1"+(20,0);(10,0)};
\endxy
\ee
on objects $X$ in $\mathcal{C}_{0,1},$ i.e. objects in $\mathcal{C},$
and
\be
\xy0;/r.15pc/:
(-30,0)*+{X}="1x";
(-60,0)*+{Y}="1y";
{\ar_{f} "1x";"1y"};
(30,10)*+{FY}="2y";
(30,-10)*+{GY}="3y";
(60,10)*+{FX}="2x";
(60,-10)*+{GX}="3x";
{\ar^{\rho(X)} "2x";"3x"};
{\ar_{\rho(Y)} "2y";"3y"};
{\ar_{Ff} "2x";"2y"};
{\ar^{Gf} "3x";"3y"};
{\ar@{=>}^{\rho(f)} "2y";"3x"};
{\ar@{|->}^{\mathcal{F} (\rho) } "1x"+(20,0);(10,0)};
\endxy
\ee
on morphisms in $\mathcal{C}_{0,1},$ i.e. 1-morphisms in $\mathcal{C}.$
One can check this defines a functor. 

Associated to a modification $A$ as in 
\be
\xy0;/r.15pc/:
(-20,0)*+{\mathcal{D}}="d";
(20,0)*+{\mathcal{C}}="c";
{\ar@/_1.5pc/"c";"d"_{F}};
{\ar@/^1.5pc/"c";"d"^{G}};
(0,10)*+{}="f";
(0,-10)*+{}="g";
{\ar@{=>}@/^1pc/"f";"g"^{\rho}};
{\ar@{=>}@/_1pc/"f";"g"_{\s}};
(-5,0)*+{}="s";
(5,0)*+{}="r";
{\ar@3{->}"r";"s"_{A}};
\endxy
\ee
is a natural transformation
$\mathcal{F} (A) : \mathcal{F} (\rho) \Rightarrow \mathcal{F} (\s)$
defined by 
\be
\label{eq:curlyFend}
\xy0;/r.15pc/:
(-40,0)*+{X}="1x";
(30,10)*+{FX}="2y";
(30,-10)*+{GX}="3y";
(60,10)*+{FX}="2x";
(60,-10)*+{GX}="3x";
{\ar^{\rho(X)} "2x";"3x"};
{\ar_{\s(X)} "2y";"3y"};
{\ar_{\id_{FX}} "2x";"2y"};
{\ar^{\id_{GX}} "3x";"3y"};
{\ar@{=>}^{A(X)} "2y";"3x"};
{\ar@{|->}^{\mathcal{F} (A) } "1x"+(20,0);(0,0)};
\endxy
.
\ee
This defines a functor $\mathcal{F} : \mathrm{Hom} (F, G ) \to
\mathrm{Funct} (\mathcal{C}_{0,1}, \L \mathcal{D} ),$ where
$\mathrm{Hom} (F, G )$ is the category whose objects are pseudonatural
transformations and morphisms are modifications while $\mathrm{Funct}
(\mathcal{E}, \mathcal{E}' )$ (between two ordinary categories
$\mathcal{E}$ and $\mathcal{E}'$) is the category whose objects are
functors from $\mathcal{E}$ to $\mathcal{E}'$ and whose morphisms are
natural transformations. 

Separately, notice also that if $F : \mathcal{C} \to \mathcal{D}$ is a
2-functor then there is a functor $\L F : \L \mathcal{C} \to \L
\mathcal{D}$ defined by 
\be
\xy0;/r.15pc/:
(-30,10)*+{X_{f}}="1";
(-30,-10)*+{Y_{f}}="4";
{\ar^{f} "1";"4"};
(30,10)*+{FX_{f}}="2";
(30,-10)*+{FY_{f}}="3";
{\ar^{Ff} "2";"3"};
{\ar@{|->}^{\L F } (-10,0);(10,0)};
\endxy
\ee
on objects and 
\be
\xy0;/r.15pc/:
(-30,10)*+{X_{f}}="1x";
(-60,10)*+{X_{g}}="1y";
(-30,-10)*+{Y_{f}}="1x'";
(-60,-10)*+{Y_{g}}="1y'";
{\ar^{f} "1x";"1x'"};
{\ar_{g} "1y";"1y'"};
{\ar_{x} "1x";"1y"};
{\ar^{y} "1x'";"1y'"};
{\ar@{=>}^{\s} "1y";"1x'"};
(30,10)*+{FX_{g}}="2y";
(30,-10)*+{FY_{g}}="3y";
(60,10)*+{FX_{f}}="2x";
(60,-10)*+{FY_{f}}="3x";
{\ar^{Ff} "2x";"3x"};
{\ar_{Fg} "2y";"3y"};
{\ar_{Fx} "2x";"2y"};
{\ar^{Fy} "3x";"3y"};
{\ar@{=>}^{F \s} "2y";"3x"};
{\ar@{|->}^{\L F } (-10,0);(10,0)};
\endxy
\ee
on morphisms. 

\begin{defn}
\label{defn:smoothdescent2}
A descent object $(\triv, g, \psi, f)$ as in Definition
\ref{defn:descent2} is said to be \emph{\uline{smooth}} if 
\begin{enumerate}[i)]
\item
the 2-functor $\triv : \mathcal{P}_{2} (Y) \to \Gr$ is smooth, 
\item
the functor $\mathcal{F} (g) : \mathcal{P}_{1} (Y^{[2]}) \to \L T$ is a
transport functor with $\L \Gr$-structure, and 
\item
the natural transformations $\mathcal{F} (\psi) : \mathcal{F}
( \id_{\triv_{i}} ) \Rightarrow \D^{*} \mathcal{F} ( g )$ and
$\mathcal{F} (f) : \pi^{*}_{23} \mathcal{F} (g) \otimes \pi^{*}_{12}
\mathcal{F}(g) \Rightarrow \pi^{*}_{13} \mathcal{F}(g)$ are morphisms
between transport functors. 
\end{enumerate}
\end{defn}

Smooth descent objects form the objects of a 2-category denoted by
$\mathfrak{Des}^{2}_{\pi} (i)^{\infty}$ and form a sub-2-category of
$\mathfrak{Des}^{2}_{\pi} (i).$ Smoothness of descent 1-morphisms and
descent 2-morphisms is discussed in \cite{SW4} following Definition
3.1.2. 

\begin{defn}
\label{defn:smoothpilocalitrivialization2}
A $\pi$-local $i$-trivialization $(F, \triv, t)$ is said to be
\emph{\uline{smooth}} if the associated descent object
$\mathrm{Ex}_{\pi}^{2} (F, \triv, t)$ is smooth. The same can be said of
1-morphisms and 2-morphisms. 
\end{defn}

Smooth local trivializations, 1-morphisms, and 2-morphisms form a
sub-2-category denoted by $\mathrm{Triv}^{2}_{\pi} (i)^{\infty}$ of
$\mathrm{Triv}^{2}_{\pi} (i).$ Furthermore, $\mathrm{Ex}_{\pi}^{2}$
restricts to an equivalence of 2-categories of smooth data
(Lemma 3.2.3. of \cite{SW4}).  

After all this formalism, it should be more or less clear now what the
definition of a transport 2-functor is by just abstracting what we did
for the one-dimensional case (Definition 3.2.1. of \cite{SW4}). 

\begin{defn}
\label{defn:transport2functor}
Let $\mathrm{Gr}$ be a Lie 2-groupoid, $T$ a 2-category, $i :
\mathrm{Gr} \to T$ a 2-functor, and $M$ a smooth manifold. A
\emph{\uline{transport 2-functor on $M$ with values in a 2-category $T$ and
with $\mathrm{Gr}$-structure}} is a 2-functor $\mathrm{tra} :
\mathcal{P}_{2} (M) \to T$ such that there exists a
surjective submersion $\pi : Y \to M$ and a smooth $\pi$-local
$i$-trivialization $(\mathrm{triv}, t).$ 
\end{defn}

Transport 2-functors over $M$ with values in $T$ with
$\mathrm{Gr}$-structure form the objects of a 2-category
$\mathrm{Trans}^{2}_{\mathrm{Gr}} (M,T).$ A 1-morphism of transport
functors is a pseudo-natural transformation of 2-functors for which
there exists a common surjective submersion $\pi$ and smooth $\pi$-local
$i$-trivializations of both 2-functors so that the associated descent
1-morphism is smooth. A similar definition exists for 2-morphisms. 

As a short summary, in the past two sections we introduced three
categories for describing transport 2-functors. These were
$\mathfrak{Des}_{\pi}^{2} (i),$ $\mathrm{Triv}_{\pi}^{2}(i),$ and
$\mathrm{Trans}^{2}_{\mathrm{Gr}} (M, T).$ The category
$\mathrm{Triv}_{\pi}^{2}(i)$ was used to describe local triviality of
transport 2-functors and their morphisms in
$\mathrm{Trans}^{2}_{\mathrm{Gr}} (M, T).$ We then constructed a
2-functor $\mathrm{Ex}_{\pi}^{2} : \mathrm{Triv}_{\pi}^{2}(i) \to
\mathfrak{Des}_{\pi}^{2} (i)$ that allowed us to describe smoothness
via the subcategory $\mathfrak{Des}_{\pi}^{2} (i)^{\infty} \subset
\mathfrak{Des}_{\pi}^{2} (i)$ from which we defined
$\mathrm{Triv}_{\pi}^{2}(i)^{\infty}\subset\mathrm{Triv}_{\pi}^{2}(i).$

\subsection{The reconstruction 2-functor: from local to global}
\label{sec:reconstruction2}
The 2-functor $\mathrm{Ex}_{\pi}^{2} : \mathrm{Triv}^{2}_{\pi} (i) 
\rightarrow
\mathfrak{Des}^{2}_{\pi} (i)$ is an equivalence of 2-categories
(Proposition 4.1.1. of \cite{SW3}). To construct a (weak) inverse
$\mathrm{Rec}_{\pi}^{2} : \mathfrak{Des}^{2}_{\pi} (i) \to
\mathrm{Triv}^{2}_{\pi} (i),$ we need to enhance the \v Cech path
groupoid so that it includes more data. 

We do not require the full general definition of
$\mathcal{P}_{2}^{\pi} (M)$  in Section 3.1 of \cite{SW3} for our
purposes, but briefly the general definition is obtained by keeping the
same objects and morphisms but replacing the relations that we imposed
by 2-morphisms and setting relations on those. There are also additional
2-morphisms given by thin bigons, thin paths on intersections, and other
formal 2-morphisms such as associators, unitors, and 2-morphisms
relating the formal product to the usual composition of paths. We
therefore warn the reader that although the following definition is not
the same as that in \cite{SW3}, we use their general results and
theorems which in fact rely on their more general definition.  

\begin{defn}
\label{defn:Cechpath2groupoid}
Let $M$ be a smooth manifold and let $\pi : Y \to M$ be a surjective
submersion. The \emph{\uline{\v Cech path 2-groupoid of $M$}} is the
2-category $\mathcal{P}_{2}^{\pi} (M)$ whose set of objects and
1-morphisms are the objects and morphisms of $\mathcal{P}_{1}^{\pi}(M),$
respectively. The set of 2-morphisms are freely generated by 
\begin{enumerate}[i)]
\item
thin bigons $\G$ in $Y,$ 
\item
thin paths $\Theta : \a \to \b$ in $Y^{[2]}$ with sitting instants
thought of as 2-isomorphisms 
\be
\xy 0;/r.15pc/:
(-15,15)*+{\pi_{1} (\b)}="3";
(-15,-15 )*+{\pi_{2} (\b) }="4";
(15,-15)*+{\pi_{2} (\a) }="2";
(15,15)*+{\pi_{1} (\a) }="1";
{\ar^{\a} "1";"2" };
{\ar^{\pi_{2} (\Theta)} "2";"4" };
{\ar_{\pi_{1} (\Theta)} "1";"3"};
{\ar_{\b} "3";"4"};
{\ar@{=>}_{\Theta} "2";"3"};
\endxy
\ee
(one should think of this as weakening the first relation in Definition
\ref{defn:Cechpath1} of $\mathcal{P}^{\pi}_{1} (M)$---see Figure
\ref{fig:pathjumpcondition2} for a visualization of this),
\begin{figure}[h]
\centering
	\begin{picture}(0,0)
	\put(40,48){$\b$}
	\put(98,66){$\a$}
	\put(60,96){$\pi_{1} ( \Theta )$}
	\put(60,17){$\pi_{2} (\Theta)$}
	\put(65,40){$\xy 0;/r.25pc/: {\ar@{=>} (15,-15),(-5,15)} \endxy$}
	\end{picture}
    \includegraphics[width=0.33\textwidth]{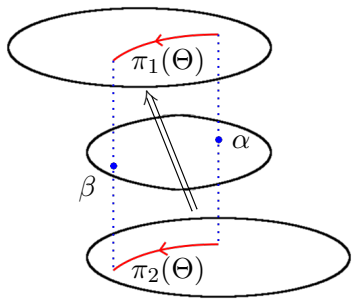}
    \caption{Thinking in terms of an open cover as a submersion,
condition ii) above says that if a thin path 
$\Theta : \a \rightarrow \b$ (with
chosen representative) is in a double intersection, there is a
relationship between going along the path first and then jumping versus
jumping first and then going along the path. The two need not be equal.}
    \label{fig:pathjumpcondition2}
\end{figure} 

\item
points $\X$ in $Y^{[3]}$ thought of as 2-isomorphisms
\be
\xy 0;/r.15pc/:
(-25,-12.5)*+{\pi_{3} (\X)}="3";
(0,12.5)*+{\pi_{2} (\X)}="2";
(25,-12.5)*+{\pi_{1} (\X)}="1";
{\ar_{\pi_{12}(\X)} "1";"2" };
{\ar_{\pi_{23}(\X)} "2";"3" };
{\ar^{\pi_{13}(\X)} "1";"3"};
{\ar@{=>}^{\X} "2";(0,-10)};
\endxy
\ee
(one should think of this as weakening the second relation in Definition
\ref{defn:Cechpath1} of $\mathcal{P}^{\pi}_{1} (M)$), 

\item
points $a$ in $Y$ thought of as 2-isomorphisms ($\id^{*}_{a}$ is the
formal identity)
\be
\xy0;/r.15pc/:
(-40,0)*+{a}="d";
(0,0)*+{a}="c";
{\ar@/_1.5pc/"c";"d"_{\id^{*}_{a}}};
{\ar@/^1.5pc/"c";"d"^{\D (a)}};
(-20,10)*+{}="f";
(-20,-10)*+{}="g";
{\ar@{=>}"f";"g"^{\D_{a}}};
\endxy
\ee
(one should think of this as weakening part of the third relation in
Definition \ref{defn:Cechpath1} of $\mathcal{P}^{\pi}_{1} (M)$), 

\item
and several other more technical generators that will not be discussed
here.  

\end{enumerate}

There are several relations imposed on the set of 1-morphisms and
2-morphisms. We will not discuss any of them, and the reader is referred
to Section 3.1 of \cite{SW3} for the details. As before, the
compositions will be written with $*$ and will be drawn vertically or
horizontally when dealing with 2-morphisms. 

\end{defn}

As before, we associate to every object $(\triv, g, \psi, f )$ in
$\mathfrak{Des}^{2}_{\pi} (i)$ a functor $R_{(\triv, g, \psi, f)} :
\mathcal{P}_{2}^{\pi} (M) \to T$ defined as follows. It sends $y \in Y$
to $\triv_{i} (y),$ thin paths $\g$ in $Y$ to $\triv_{i} (\g),$ and
jumps $\a \in Y^{[2]}$ to $g ( \a ) : \triv_{i} (  \pi_{1} (\a) ) \to
\triv_{i} ( \pi_{2} (\a) ).$ For the basic 2-morphisms, it makes the
following assignments 
\be
\label{eq:Rtrivgpsifbeginning}
\xy0;/r.15pc/:
(-40,0)*+{y}="d";
(0,0)*+{x}="c";
{\ar@/_1.5pc/"c";"d"_{\g}};
{\ar@/^1.5pc/"c";"d"^{\de}};
(-20,10)*+{}="f";
(-20,-10)*+{}="g";
{\ar@{=>}"f";"g"^{\G}};
{\ar@{|->}^{R_{(\triv, g, \psi, f)}} (10,0);(40,0)};
(60,0)*+{\triv_{i} (y)}="d2";
(100,0)*+{\triv_{i} (x)}="c2";
{\ar@/_1.5pc/"c2";"d2"_{\triv_{i} ( \g ) }};
{\ar@/^1.5pc/"c2";"d2"^{\triv_{i} (\de)}};
(80,10)*+{}="f2";
(80,-10)*+{}="g2";
{\ar@{=>}"f2";"g2"|-{\triv_{i} (\G)}};
\endxy
\ee
for thin bigons $\G : \g \Rightarrow \de$ in $Y,$ 
\be
\xy 0;/r.15pc/:
(-5,-15)*+{\pi_{1} (\b)}="3";
(-35,-15 )*+{\pi_{2} (\b) }="4";
(-35,15)*+{\pi_{2} (\a) }="2";
(-5,15)*+{\pi_{1} (\a) }="1";
{\ar_{\a} "1";"2" };
{\ar_{\pi_{2} (\Theta)} "2";"4" };
{\ar^{\pi_{1} (\Theta)} "1";"3"};
{\ar^{\b} "3";"4"};
{\ar@{=>}^{\Theta} "2";"3"};
{\ar@{|->}^{R_{(\triv, g, \psi, f)}} (15,0);(45,0)};
(110,-15)*+{\triv_{i} ( \pi_{1} (\b) )}="3m";
(65,-15 )*+{\triv_{i} (\pi_{2} (\b) ) }="4m";
(65,15)*+{\triv_{i} ( \pi_{2} (\a) ) }="2m";
(110,15)*+{\triv_{i} ( \pi_{1} (\a) ) }="1m";
{\ar_{g ( \a )} "1m";"2m" };
{\ar|-{\triv_{i} ( \pi_{2} (\Theta) ) } "2m";"4m" };
{\ar|-{\triv_{i} (\pi_{1} (\Theta) )} "1m";"3m"};
{\ar^{g ( \b )} "3m";"4m"};
{\ar@{=>}|-{g ( \Theta )} "2m";"3m"};
\endxy
\ee
for thin paths $\Theta : \a \to \b$ in $Y^{[2]},$ 
\be
\xy 0;/r.15pc/:
(-40,-12.5)*+{\pi_{3} (\X)}="3";
(-15,12.5)*+{\pi_{2} (\X)}="2";
(10,-12.5)*+{\pi_{1} (\X)}="1";
{\ar_{\pi_{12}(\X)} "1";"2" };
{\ar_{\pi_{23}(\X)} "2";"3" };
{\ar^{\pi_{13}(\X)} "1";"3"};
{\ar@{=>}^{\X} "2";(-15,-10)};
{\ar@{|->}^{R_{(\triv, g, \psi, f)}} (20,0);(50,0)};
(70,-12.5)*+{\triv_{i} ( \pi_{3} (\X) )}="3m";
(95,12.5)*+{\triv_{i} ( \pi_{2} (\X) )}="2m";
(120,-12.5)*+{\triv_{i} ( \pi_{1} (\X) )}="1m";
{\ar_{g ( \pi_{12}(\X) )} "1m";"2m" };
{\ar_{g ( \pi_{23}(\X) )} "2m";"3m" };
{\ar^{g ( \pi_{13}(\X) )} "1m";"3m"};
{\ar@{=>}|-{f ( \X)} "2m";(95,-10)};
\endxy
\ee
for points $\X$ in $Y^{[3]},$ and 
\be
\label{eq:Rtrivgpsifend}
\xy0;/r.15pc/:
(-40,0)*+{a}="d";
(0,0)*+{a}="c";
{\ar@/_1.5pc/"c";"d"_{\id^{*}_{a}}};
{\ar@/^1.5pc/"c";"d"^{\D (a)}};
(-20,10)*+{}="f";
(-20,-10)*+{}="g";
{\ar@{=>}"f";"g"^{\D_{a}}};
{\ar@{|->}^{R_{(\triv, g, \psi, f)}} (10,0);(40,0)};
(60,0)*+{\triv_{i} (a)}="d2";
(100,0)*+{\triv_{i} (a)}="c2";
{\ar@/_1.5pc/"c2";"d2"_{1_{ \triv_{i} ( a ) }} };
{\ar@/^1.5pc/"c2";"d2"^{g ( \D (a) ) }};
(80,10)*+{}="f2";
(80,-10)*+{}="g2";
{\ar@{=>}"f2";"g2"|-{\psi(a)}};
\endxy
\ee
for points $a$ in $Y.$ This defines a 2-functor
$R : \mathfrak{Des}^{2}_{\pi} (i) \to \mathrm{Funct}
(\mathcal{P}_{2}^{\pi} (M) , T )$ at the level of objects.  The rest of
this 2-functor is defined in Proposition 3.3.2. of \cite{SW3}.

There is a canonical projection functor
$p^{\pi}:\mathcal{P}^{\pi}_{2}(M) \to  \mathcal{P}_{2} (M)$ defined in
the same way as $p^{\pi} :  \mathcal{P}^{\pi}_{1} (M) \to
\mathcal{P}_{1} (M)$ on the level of objects and morphisms. On the level
of 2-morphisms, $p^{\pi}$  sends a thin bigon $\G$ in $Y$ to a the thin
bigon $\pi (\G)$ in $M.$ It sends a thin path $\Theta$ in $Y^{[2]}$ to
the identity thin bigon $\id_{\pi (\Theta) }$ (the vertical identity) in
$M$ and it sends a point $\Xi$ in $Y^{[3]}$ to to the constant thin
bigon at the point $\pi ( \Xi )$ in $M.$ Finally, it sends a point $a$
in $Y$ to the constant thin bigon at the point $\pi (a)$ in $M.$ We now
move on to defining, as before, a weak inverse
$s^{\pi} : \mathcal{P}_{2} (M) \to \mathcal{P}^{\pi}_{2} (M)$ of the
canonical projection functor. To define $s^{\pi},$ we will constantly
use  the following important fact (Lemma 3.2.2. of \cite{SW3}). 

\begin{lem}
\label{lem:unique}
Let $\g : x \to x'$ be a thin path in $M$ and let $\tilde{\g}$ and
$\tilde{\g}'$ be two lifts of $\g$ as 1-morphisms in
$\mathcal{P}_{2}^{\pi} (M)$ (the existence follows from our choices
above when we defined $s^{\pi} : \mathcal{P}_{1} (M) \to
\mathcal{P}_{1}^{\pi} (M)$). Then there exists a unique 2-isomorphism
$A : \tilde{\g} \Rightarrow \tilde{\g}'$ in $\mathcal{P}_{2}^{\pi} (M)$
such that $p^{\pi} ( A ) = \id_{\g}.$ 
\end{lem} 

We will use this to define $s^{\pi} : \mathcal{P}_{2} (M) \to
\mathcal{P}_{2}^{\pi} (M)$ on thin bigons (we have already defined
$s^{\pi}$ near (\ref{eq:pathlifting}) on objects and 1-morphisms). Let
$\G : \g \Rightarrow \de$ be any thin bigon in $M$ as in Figure
\ref{fig:bigon}. 

\begin{figure}[h]
\centering
\begin{picture}(0,0)
\put(112,11){$s$}
\put(9,115){$t$}
\put(98,112){$\G$}
\end{picture}
    \includegraphics[width=0.55\textwidth]{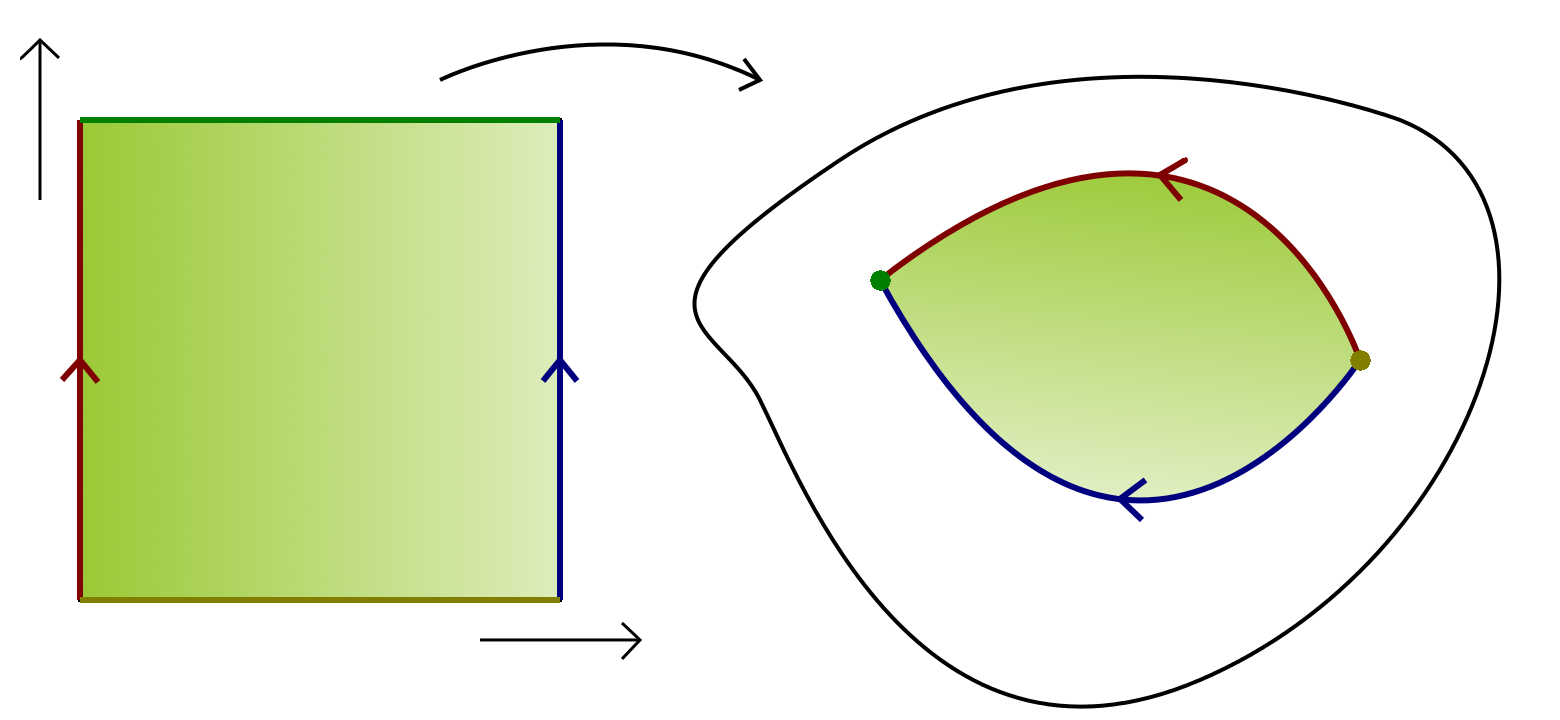}
    \vspace{-3mm}
    \caption{A representative of a thin bigon $\G$ in $M$ drawn as a map
of a square into $M.$ The $s=0$ line is drawn on top in the figure on
the right while the $s=1$ line is drawn on the bottom. The entire $t=0$
line gets mapped to the source point and the $t=1$ line gets mapped to
the target point.}
    \label{fig:bigon}
\end{figure} 

As in the case of a path, because the domain is compact, there exists a
decomposition of the bigon $\G$ (we abuse notation and write $\G$ to
mean a bigon and its thin homotopy class relying on context to
distinguish them) into smaller bigons $\{ \G_{j} \}_{j},$ as in Figure
\ref{fig:bigondecomp}, each of which fits into an open set $U_{j}.$ We
use the same notation $s_{j} : U_{j} \to Y$ as before for our local
sections. 

\begin{figure}[h]
\centering
\begin{picture}(0,0)
\put(112,11){$s$}
\put(9,115){$t$}
\put(98,112){$\G$}
\end{picture}
    \includegraphics[width=0.55\textwidth]{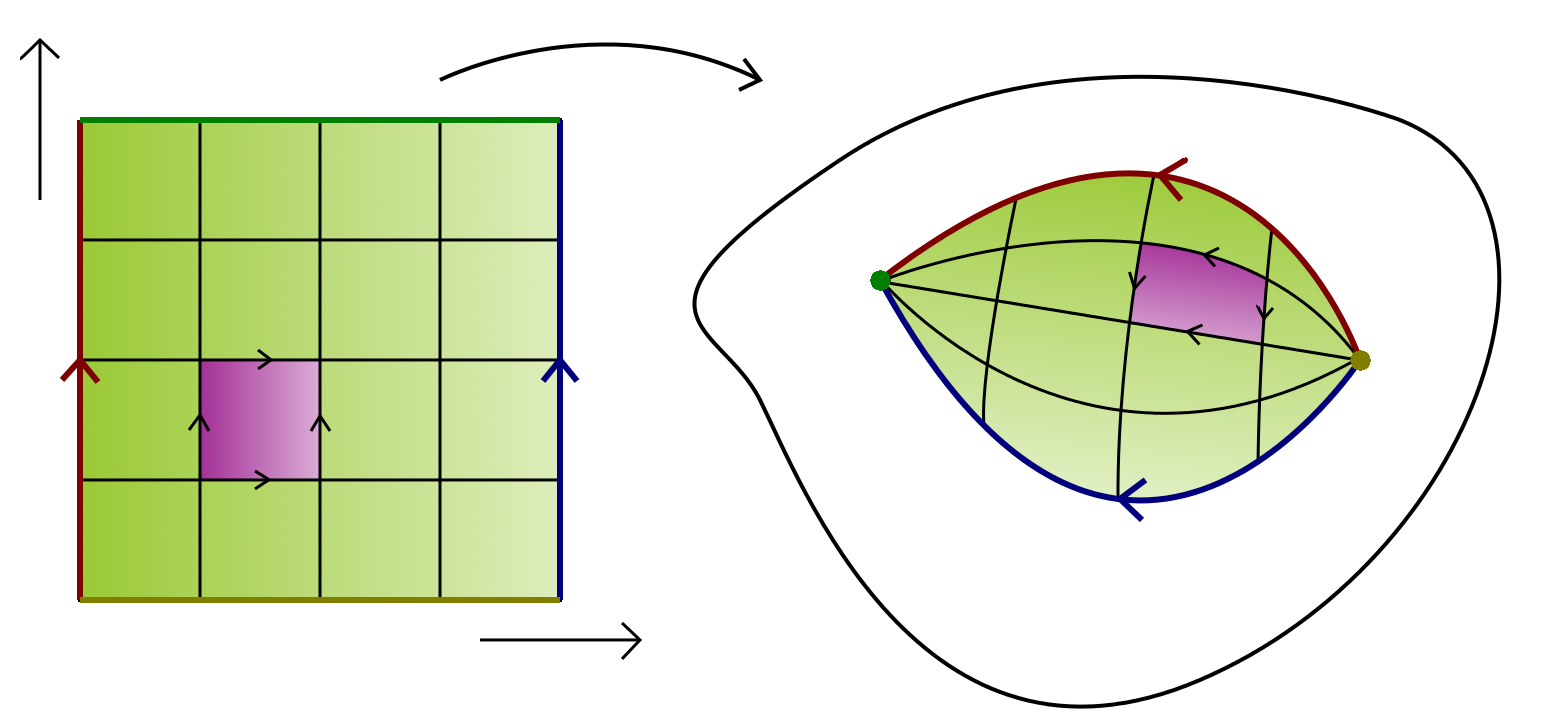}
    \vspace{-3mm}
    \caption{A decomposition of a representative of a thin bigon $\G$ in
$M$ with a single sub-bigon $\G_{j}$ highlighted. $s^{\pi} (\G)$ will be
defined as a composition of several $s^{\pi} ( \G_{j} ).$ Of course, a
general decomposition would not necessarily look like this, but such a
decomposition always exists 
by a thin homotopy so that the decomposed pieces are bigons.}
     \label{fig:bigondecomp}
\end{figure} 

Therefore, it suffices to define $s^{\pi} (\G_{j})$ for a single one of
the associated thin bigons provided that we match up all sources and
targets for the individual ones. Denote the thin bigon by 
\be
\xy0;/r.15pc/:
(-40,0)*+{x'_{j}}="d";
(0,0)*+{x_{j}}="c";
{\ar@/_1.5pc/"c";"d"_{\g_{j}}};
{\ar@/^1.5pc/"c";"d"^{\de_{j}}};
(-20,10)*+{}="f";
(-20,-10)*+{}="g";
{\ar@{=>}"f";"g"^{\G_{j}}};
\endxy
.
\ee
Then the image of this under $s^{\pi}$ is defined as the composition 
\be
\label{eq:bigonlift}
\xy0;/r.15pc/:
(-70,0)*+{s^{\pi}(x'_{j})}="left";
(30,0)*+{s^{\pi}(x_{j})}="right";
(-40,0)*+{s_{j} ( x'_{j} ) }="d";
(0,0)*+{s_{j} ( x_{j} ) }="c";
{\ar@/_1.5pc/"c";"d"_{s_{j} ( \g_{j} )}};
{\ar@/^1.5pc/"c";"d"^{s_{j} ( \de_{j} ) }};
(-20,10)*+{}="f";
(-20,-10)*+{}="g";
{\ar@{=>}"f";"g"|-{s_{j} ( \G_{j} ) }};
{\ar@/_4pc/"right";"left"_{s^{\pi}(\g_{j})}};
{\ar@/^4pc/"right";"left"^{s^{\pi}(\de_{j})}};
{\ar"right";"c"};
{\ar"left";"d"};
{\ar@{=>}(-20,25);(-20,16)};
{\ar@{=>}(-20,-16);(-20,-25)};
\endxy
.
\ee
In other words, we have lifted $\G_{j}$ using the section $s_{j} : U_{j}
\to Y,$ but to make sure that this image matches up with how $s^{\pi}$
was already defined on objects and 1-morphisms, we use the obvious jumps
and the unique 2-isomorphisms from Lemma \ref{lem:unique} to match
everything (these are the unlabeled 1-morphisms and 2-morphisms).  The
image of the entire thin bigon $\G$ is then defined by vertical and
horizontal compositions of all the $s^{\pi} ( \G_{j} )$ so that
$s^{\pi}$ respects compositions. 

The 2-functor $s^{\pi}$ is a weak inverse to $p^{\pi}$ as in the case
for the path groupoid (Proposition 3.2.1. of \cite{SW3}). However, a
weak inverse in 2-category theory in this case means that there exists
a pseudo-natural equivalence $\z : s^{\pi} \circ p^{\pi} \Rightarrow
\id_{\mathcal{P}_{2}^{\pi} (M) }$ since $p^{\pi} \circ s^{\pi} =
\id_{\mathcal{P}_{2} (M)}.$ This means there exists a weak inverse to
$\z$ which is written as $\xi : \id_{\mathcal{P}_{2}^{\pi} (M) }
\Rightarrow s^{\pi} \circ p^{\pi}.$ ``Weak'' means that
there are invertible modifications $i_{\z} : \xi \circ \z \Rrightarrow
\id_{ s^{\pi} \circ p^{\pi} }$ and $j_{\z} :
\id_{\id_{\mathcal{P}_{2}^{\pi} (M)} } \Rrightarrow \z \circ \xi$
that satisfy the zig-zag identities. The details are irrelevant for our
purposes but can be found in Section 3.2 of \cite{SW3}. An important
consequence of $s^{\pi}$ being a weak inverse to $p^{\pi}$ is the
following (general categorical) fact reproduced here  for convenience
(Corollary 3.2.5. of \cite{SW3}).

\begin{cor}
\label{cor:section}
Any two weak inverses $s^{\pi}, s'^{\pi} : \mathcal{P}_{2} (M) \to
\mathcal{P}_{2}^{\pi} (M)$ of $p^{\pi}$ are pseudo-naturally equivalent.
\end{cor}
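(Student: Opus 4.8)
The plan is to imitate the elementary proof that inverses in a monoid are unique, transported to the bicategorical setting where ``equal'' is replaced by ``pseudo-naturally equivalent'' and the bookkeeping is handled by whiskering. Recall that $s^{\pi}$ being a weak inverse of $p^{\pi}$ supplies two pieces of data: the strict equality $p^{\pi} \circ s^{\pi} = \id_{\mathcal{P}_{2}(M)}$ together with a pseudo-natural equivalence $\z : s^{\pi} \circ p^{\pi} \Rightarrow \id_{\mathcal{P}_{2}^{\pi}(M)}$ (with its weak inverse $\xi$ and invertible modifications $i_{\z}, j_{\z}$ satisfying the zig-zag identities). The second section $s'^{\pi}$ comes with the analogous data, in particular the strict equality $p^{\pi} \circ s'^{\pi} = \id_{\mathcal{P}_{2}(M)}$.

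First I would exploit strictness of $2$-functor composition to rewrite $s^{\pi}$. Since composition of strict $2$-functors is associative and unital on the nose, and since $p^{\pi} \circ s'^{\pi} = \id_{\mathcal{P}_{2}(M)}$, we have the chain of \emph{equalities} of $2$-functors
\be
s^{\pi} = s^{\pi} \circ \id_{\mathcal{P}_{2}(M)} = s^{\pi} \circ (p^{\pi} \circ s'^{\pi}) = (s^{\pi} \circ p^{\pi}) \circ s'^{\pi}.
\ee
The point is that the right-hand term now contains the composite $s^{\pi} \circ p^{\pi}$, which the datum $\z$ exhibits as equivalent to the identity. Next I would whisker $\z$ on the right by $s'^{\pi}$: horizontal composition of the pseudo-natural equivalence $\z$ with the strict $2$-functor $s'^{\pi}$ produces a pseudo-natural transformation
\be
\z \circ \id_{s'^{\pi}} : (s^{\pi} \circ p^{\pi}) \circ s'^{\pi} \Rightarrow \id_{\mathcal{P}_{2}^{\pi}(M)} \circ s'^{\pi} = s'^{\pi},
\ee
which, combined with the equalities above, is a pseudo-natural transformation $s^{\pi} \Rightarrow s'^{\pi}$.

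It then remains to check that this is an \emph{equivalence}, which is where the only real content lies. I would argue that whiskering a pseudo-natural equivalence by a fixed $2$-functor again yields a pseudo-natural equivalence: the candidate weak inverse is $\xi \circ \id_{s'^{\pi}}$, and the required invertible modifications are obtained by whiskering $i_{\z}$ and $j_{\z}$ by $s'^{\pi}$, after which the zig-zag identities for the whiskered data follow from those for $\z$ (again using strictness to avoid associator and unitor corrections).

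The main obstacle, such as it is, is purely the $2$-categorical bookkeeping in this last step: verifying that the whiskered modifications remain invertible and still satisfy the zig-zag identities. Everything else is formal, and I expect no genuine difficulty because $\mathcal{P}_{2}(M)$, $\mathcal{P}_{2}^{\pi}(M)$, and the $2$-functors $p^{\pi}, s^{\pi}, s'^{\pi}$ are all strict, so horizontal composition is strictly functorial and commutes with the coherence data on the nose. The statement is thus genuinely the ``general categorical fact'' asserted in the text, namely that any two weak inverses of a $1$-morphism in a bicategory are unique up to equivalence.
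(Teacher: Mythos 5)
Your argument is correct: it is the standard bicategorical proof that weak inverses of a fixed $1$-cell are unique up to equivalence, and the point you isolate as the only real content (that whiskering $\z$, $\xi$, $i_{\z}$, $j_{\z}$ on the right by the strict $2$-functor $s'^{\pi}$ again yields an equivalence, with no associator or unitor corrections) does go through in this strict setting. One small caveat: you build the strict equality $p^{\pi} \circ s'^{\pi} = \id_{\mathcal{P}_{2}(M)}$ into the definition of a weak inverse. For the sections constructed in this paper that equality does hold, but for a weak inverse in the general sense one only has an equivalence $p^{\pi} \circ s'^{\pi} \Rightarrow \id,$ and then your chain of equalities must be replaced by a chain of equivalences with one additional whiskering (by $s^{\pi}$ on the left); this is a cosmetic change, not a gap.

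The paper takes a different, more concrete route: it quotes the general categorical fact from Corollary 3.2.5 of \cite{SW3} and then simply exhibits the equivalence $\h : s^{\pi} \Rightarrow s'^{\pi}$ componentwise, assigning to $x \in M$ the jump from $s^{\pi}(x)$ to $s'^{\pi}(x)$ (these lie in a common fiber of $\pi,$ hence give a point of $Y^{[2]}$), and to a thin path $\g \in P^{1}M$ the unique $2$-isomorphism $s^{\pi}(\g) \Rightarrow s'^{\pi}(\g)$ furnished by Lemma \ref{lem:unique}. The two constructions essentially agree: the component of your whiskered equivalence $\z \circ \id_{s'^{\pi}}$ at $x$ is precisely a jump between $s^{\pi}(x)$ and $s'^{\pi}(x).$ What each approach buys is different, though. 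Your abstract argument is shorter and works verbatim for any weak inverse in any strict $2$-categorical context, but it leaves the components implicit; the paper's explicit description is exactly what is needed later, since the holonomy computations in Section \ref{sec:examples} require knowing, jump by jump, how two different choices of lifting $s^{\pi}$ are related.
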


We can define such a pseudo-natural equivalence $\h : s^{\pi}
\Rightarrow s'^{\pi}$ by the following assignment $M \ni x \mapsto$
the jump from $s^{\pi}(x)$ to $s'^{\pi}(x)$ and $P^{1}M \ni \g \mapsto$
the unique 2-isomorphism $s^{\pi}(\g) \Rightarrow s'^{\pi}(\g)$
specified by Lemma \ref{lem:unique}. We will exploit this fact when
discussing examples of higher holonomy in Section \ref{sec:examples}. 

As before, the 2-functor $s^{\pi} : \mathcal{P}_{2} (M) \to
\mathcal{P}^{\pi}_{2} (M)$ induces a 2-functor $s^{\pi *} :
\mathrm{Funct} ( \mathcal{P}^{\pi}_{2} (M), T ) 
\rightarrow \mathrm{Funct}(\mathcal{P}_{2} (M),T),$ 
the pullback along $s^{\pi}.$ Similarly,
$\mathrm{Rec}^{2}_{\pi}$ is defined as the composition in the diagram 
\be
\label{eq:rec2functor}
\xy 0;/r.15pc/:
(-25,10)*+{\mathrm{Funct} ( \mathcal{P}_{2} (M) , T ) }="2";
(25,10)*+{\mathfrak{Des}^{2}_{\pi} (i) }="1";
(0,-10)*+{\mathrm{Funct} ( \mathcal{P}^{\pi}_{2} (M) , T ) }="3";
{\ar_{\qquad \mathrm{Rec}^{2}_{\pi}} "1";"2"};
{\ar^{R} "1";"3"};
{\ar^{s^{\pi *}} "3";"2"};
\endxy
.
\ee
As before, the image of $\mathfrak{Des}^{2}_{\pi} (i)$ under
$\mathrm{Rec}_{\pi}^{2}$ lands in $\mathrm{Triv}^{2}_{\pi} (i)$
and the definition is the same as it was before, only this time $\z$ is
a pseudo-natural equivalence between 2-functors between 2-categories. 

As a short summary, in this section we introduced a weak inverse functor
$\mathrm{Rec}_{\pi}^{2} : \mathfrak{Des}_{\pi}^{2} (i) \rightarrow
\mathrm{Triv}_{\pi}^{2} (i)$ for $\mathrm{Ex}_{\pi}^{2} :
\mathrm{Triv}_{\pi}^{2} (i) \to  \mathfrak{Des}_{\pi}^{2} (i)$ by using
the 2-groupoid  $\mathcal{P}_{2}^{\pi} (M)$ associated to the surjective
submersion $\pi : Y \to M$ to lift points, thin paths, and thin bigons
in $M$ to points, thin paths and/or jumps, and thin bigons and/or jumps
in $\mathcal{P}_{2}^{\pi} (M),$ respectively.

\subsection{Differential cocycle data}
\label{sec:2cocycle}

In this section, we will give a brief review of an equivalence between
differential forms and smooth 2-functors following Section 2 of
\cite{SW2}. This will allow us to describe parallel transport locally
in terms of differential cocycle data. We will leave out several proofs
but will provide pictures that we find illustrate the necessary ideas
behind the statements. We first remind the reader of the ``Lie algebra''
of a Lie crossed module. 

Given a Lie crossed module $(H, G, \t ,\a)$ 
(recall Definition \ref{defn:crossedmodule})
there is an associated
\emph{\uline{differential Lie crossed module}}  $(\un H, \un G, \un \t ,
\un \a),$ where $\un \t : \un H \to \un G$ is the differential of
$\t : H \to G,$ $\un \a : \un G \to \mathrm{Der} (\un H)$ is the
differential of the associated action (given the same name)
$\a : G \times H \to H$ (``Der'' stands for derivations). The
differential Lie crossed module data satisfy 
\be
\un \a_{\un \t (B') } ( B ) = [ B', B ] 
\ee
and 
\be
\un \t ( \un \a_{A} (B) ) = [ A, \un \t (B) ]
\ee
for all $A \in \un G$ and $B, B' \in \un H.$ 

Note that by restricting the action $\a$ to $\{ g \} \times H$ for any
$g \in G$ and differentiating with respect to the second coordinate,
we obtain a Lie algebra homomorphism $\underline{\a_{g}} : \un H \to
\un H.$ Both $\un \a$ and $\underline{\a_{g}}$ are important for
understanding the differential cocycle data of Section
\ref{sec:2cocycle}. A more thorough review can be found in \cite{BaHu}.

\subsubsection{From 2-functors to 2-forms} 

Let $\mathcal{B}\mathfrak{G}$ be a Lie 2-group and $(H, G , \t , \a)$
its corresponding crossed module. 
Given a strict smooth 2-functor $F : \mathcal{P}_{2} (X) \to \mathcal{B}
\mathfrak{G},$ we will obtain differential forms
$A \in \W^{1} (X; \un G)$ and $B \in \W^{2} ( X; \un H ).$ These will
form the objects of a 2-category $Z^{2}_{X} (\mathfrak{G}).$ By our
previous discussion and since our 2-categories $\mathcal{P}_{2} (X)$ and
$\mathcal{B} \mathfrak{G}$ are strict and the 2-functor $F$ is strict,
the restriction of $F$ to $\mathcal{P}_{1}(X)$ is smooth. Therefore, we
obtain a differential form $A \in \W^{1} (X; \un G)$ by the results of
Section \ref{sec:diffcocycledata}. To obtain the differential form
$B \in \W^{2} ( X; \un H )$ we will ``differentiate'' the composition 
\be
H \xleftarrow{p_{H}} H \rtimes G \xleftarrow{F_{2}} P^{2} X, 
\ee
where $p_{H}$ is the projection onto the $H$ factor and $F_{2}$ is $F$
restricted to 2-morphisms. 

Infinitesimally, a bigon is determined by a point and the two tangent
vectors that begin to span it. Therefore, let $x \in X$ and $v_{1},
v_{2} \in T_{x} X$ and let $\G : \R^{2} \to X$ be a smooth map such that
\be
\label{eq:xv1v2}
\G ( (0,0) ) = x, \quad \frac{\p }{\p s} \bigg |_{s=0} \G (s,t=0)
= v_{1}, \quad\&\quad \frac{\p }{\p t} \bigg |_{t=0} \G (s=0,t) = v_{2}.
\ee
Let $\S_{\R} : \R^{2} \to P^2 \R^2$ be the (smooth) map that sends
$(s,t)$ to the thin homotopy class of the bigon in Figure
\ref{fig:sigmar}. This is unambiguously defined after modding out by
thin homotopy because a thin bigon in $\R^2$ is determined by its
source and target thin paths in $\R^2.$ 

\begin{figure}[h]
\centering
\begin{picture}(0,0)
\put(105,90){$(s,t)$}
\put(253,93){$(s,t)$}
\put(110,70){$\xymatrix{\ar@/^0.5pc/@{|->}[rr]^{\S_{\R}} & & }$}
\end{picture}
    \includegraphics[width=0.60\textwidth]{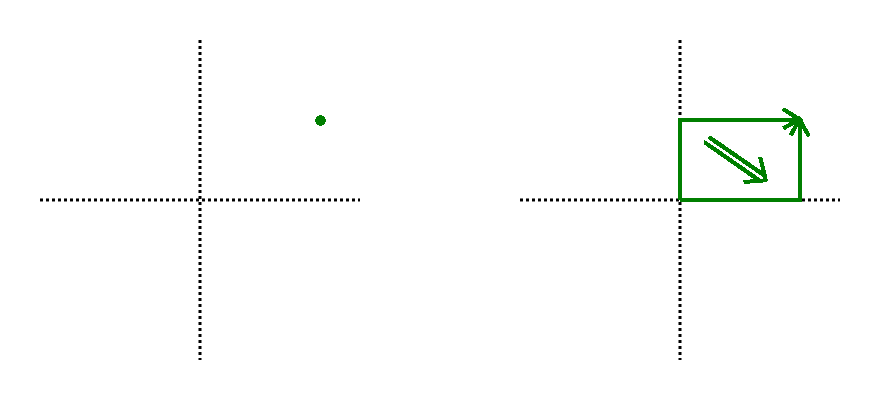}
    \vspace{-5mm}
    \caption{A point $(s,t)$ in $\R^{2}$ gets mapped to the bigon in
$\R^{2}$ shown on the right under the map $\S_{\R}.$}
    \label{fig:sigmar}
\end{figure} 

Then we use this to define a smooth map $F_{\G}$ by the composition of
smooth maps 
\be
H \xleftarrow{p_{H}} H \rtimes G \xleftarrow{F_{2}} P^2 X
\xleftarrow{\G_{*}} P^2 \R^2 \xleftarrow{\S_{\R}} \R^{2}.
\ee
This gives an element of the Lie algebra $\un H$ by taking derivatives 
\be
B_{x} (v_{1}, v_{2} ) := - \frac{\p^2 F_{\G}}{ \p s \p t }
\bigg |_{(0,0)} \in \un H.
\ee
Furthermore, this element is independent of the choice of $\G$ provided
that equation (\ref{eq:xv1v2}) still holds. In fact, we get 
a smooth differential form $B \in \W^{2} ( X; \un H).$ 

Now let $\G : \g \Rightarrow \de$ be a thin bigon between two thin
paths. The \emph{source-target matching condition}, which says
$\t ( p_{H} ( F ( \G ) ) ) F ( \g ) = F ( \de),$ implies 
\be
\label{eq:B}
d A + \frac{1}{2} [ A, A ] = \un \t ( B ).
\ee
All of these claims are proved in Section 2.2.1 of \cite{SW2}.

\subsubsection{From 2-forms to 2-functors} 

Starting with a $\un G$-valued 1-form $A \in \W^{1} (X; \un G)$ on $X$
and a $\un H$-valued 2-form $B \in  \W^{2} (X; \un H)$ on $X$  we want
to define a smooth functor $\mathcal{P}_{2} (X) \to \mathcal{B}
\mathfrak{G}.$ From Section \ref{sec:1formstofunctors}, we have already
defined the functor at the level of objects and thin paths. What remains
is to define $F_{2} : P^{2} X \to  H \rtimes G.$  To do this, we will
define a function $k_{A,B} : B X \to H$ on bigons in $X$ (we do not mod
out by thin homotopy). Given a bigon $\S : [ 0 ,1 ] \times [0,1] \to X,$
we can pull back the 1-form $A$ and the 2-form $B$ to $[0,1] \times
[0,1],$ obtaining  $\S^{*} (A) \in \W^{1} ([0,1] \times [0,1] ; \un G )$
and $\S^{*} (B) \in \W^{2} ([0,1] \times [0,1] ; \un H).$ 

To define $k_{A,B},$ we first introduce an $\un H$-valued 1-form
$\mathcal{A}_{\S} \in \W^{1} ( [ 0 , 1 ] ; \un H)$ obtained by
integrating over one of the directions for the bigon. It is defined by
\be
\label{eq:Asigma}
( \mathcal{A}_{\S} )_{s} \left ( \frac{d}{ds} \right )
:= -\int_{0}^{1} dt \; \underline{\a_{F_{1} (\S_{*} \g_{s,t} )^{-1} } }
\left ( ( \S^{*} B )_{(s,t)} \left ( \frac{\p}{\p s} , \frac{\p }{\p t }
\right ) \right ) , 
\ee
where $\g_{s,t}$ is defined to be the straight vertical path from
$(s,0)$ to $(s,t)$ in $[0,1] \times [0,1]$ as in Figure
\ref{fig:gammast}. Note that in expression (\ref{eq:Asigma}), it is
assumed that $\S_{*} \g_{s,t}$ refers to the thin homotopy class of the
path (otherwise, applying the function $F_{1}$ would not make sense).
Therefore, the parametrization of $\g_{s,t}$ is irrelevant. 

\begin{figure}[h]
\centering
  \begin{picture}(0,0)
\put(34,90){$(s,t)$}
\put(42,3){$s$}
\put(7,80){$t$}
\put(52,50){$\g_{s,t}$}
\end{picture}
    \includegraphics[width=0.3\textwidth]{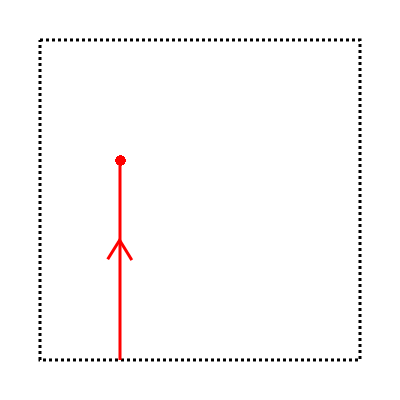}
    \vspace{-3mm}
    \caption{The path $\g_{s,t}$ is the straight vertical path from the
point $(s,0)$ to $(s,t)$ in $[0,1] \times [0,1].$}
    \label{fig:gammast}
\end{figure} 

Besides the path-ordered integral expression from the term
$F_{1} ( \S_{*} ( \g_{s,t} ) ),$ the expression for
$\mathcal{A}_{\S}$ is an ordinary integral. Also note that
$\mathcal{A}_{\S}$ depends on $\S.$ In particular, it is \emph{not}
invariant under thin homotopy. 

\begin{rmk}
Incidentally, although Schreiber and Waldorf in \cite{SW2} made their
own arguments for how to obtain such a formula for $\mathcal{A}_{\S},$
this formula appears in a special case as early as 1977 in the work of
Goddard, Nuyts, and Olive on magnetic monopoles \cite{GNO} on the
right-hand side of equation (2.9) and it may have been known earlier
\cite{Ch}. The special case \cite{GNO} considered is the case of the
crossed module $(G, G,\id, \a)$ with $\a$ being the ordinary
conjugation action.
\end{rmk}

Finally, to every bigon $\S : \g \Rightarrow \de,$ we define 
\be
\label{eq:surfaceintegral}
k_{A,B} (\S) := \a_{F_{1} ( \g ) } \left ( \mathcal{P} \exp
\left \{ - \int_{0}^{1} \mathcal{A}_{\S} \right \} \right ) .
\ee
 
In Figure \ref{fig:doublepathorderedintegral},
this integral is schematically drawn as a power series of graphs
with marked points and paths analogous to Figure
\ref{fig:pathorderedintegral}. Each of the paths drawn has a
path-ordered integral expression attached to it, and therefore each
expression has an additional power series of the form we
discussed for the ordinary path-ordered integral. 

\begin{figure}[h]
\centering
    \includegraphics[width=1.00\textwidth]{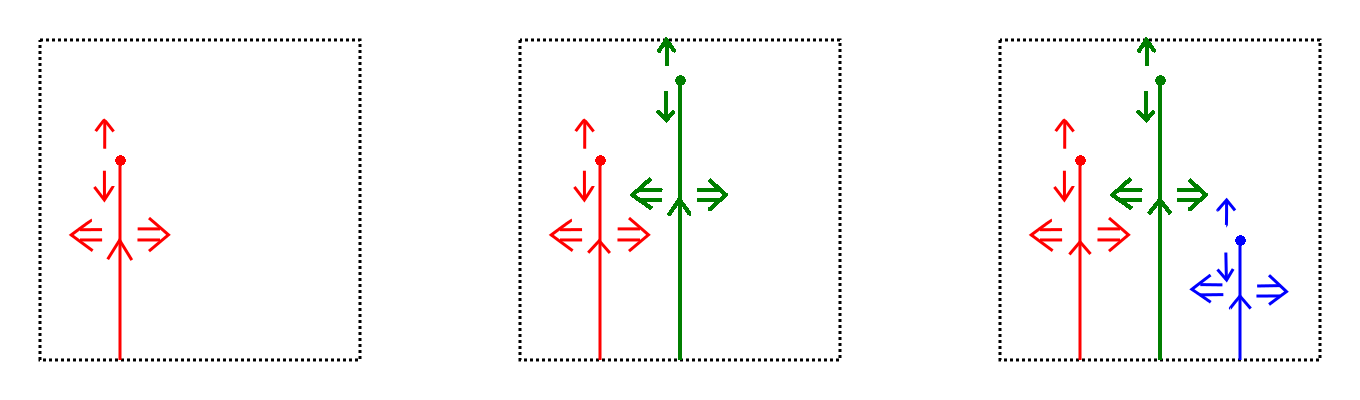}
    \vspace{-5mm}
    \caption{The path-ordered integral $\mathcal{P} \exp
\left \{-\int_{0}^{1} \mathcal{A}_{\S} \right \}$ is depicted
schematically as an infinite sum of terms expressed by placing $B$ at
the endpoints of the paths, along which we've computed parallel
transport using $A$ making sure to keep the later $s$-valued terms on
the right. The picture is to be interpreted similarly to the
one-dimensional case once we've integrated along the $t$ direction
(vertical) to obtain $\mathcal{A}_{\S}.$}
    \label{fig:doublepathorderedintegral}
\end{figure} 

\begin{defn}
The group element $k_{A, B} ( \S )$ is called the
\emph{\uline{surface transport}} associated to the bigon $\S$ and the
differential forms $A$ and $B.$
\end{defn}

$k_{A,B}$ only depends on the thin homotopy class of $\S$ and therefore
factors through a smooth map $F_{2} : P^{2} X \to H$ on thin homotopy
classes of paths. This map together with $F_{1}$ define a strict smooth
2-functor $F : \mathcal{P}_{2} (X) \to \mathcal{B} \mathfrak{G}$
(Proposition 2.17. of \cite{SW2}). 

\begin{rmk}
Historically, understanding the appropriate generalization of the
path-ordered integral to surfaces was a difficult task. It was not
obvious which formulas were correct or even what the criteria for
correctness should be. The language of functors allows one to make this
precise. The criteria for correctness is that surface holonomy should
be expressed in terms of a transport 2-functor. \emph{Any} formula that
satisfies these functorial properties, has the local constraint given
by equation (\ref{eq:B}), and changes appropriately under gauge
transformations (which we have so far only discussed globally but will
discuss differentially soon), can be rightfully called surface
transport. The specific formula in equation (\ref{eq:surfaceintegral})
is only one such formula that works. However, there could be many other,
potentially simpler formulas, that also describe 2-holonomy. In Section
\ref{sec:path-curvature} for instance, we prove that for certain
structure 2-groups, the formula  (\ref{eq:surfaceintegral}) agrees with
one that is easily computable in terms of homotopy classes of paths. 
\end{rmk}

\subsubsection{Local differential cocycles for transport 2-functors} 

By similar considerations to the previous sections, we can differentiate
transport functors and use their properties to obtain relations among
all the differential data. All the information in this section is
discussed in more detail in \cite{SW4}. In particular, the functions,
differential forms, and their relations are all derived. We merely
reproduce the results here for use in later calculations. 

\begin{defn}
\label{defn:Z2XG}
Let $Z^{2}_{X} (\mathfrak{G})^{\infty}$ be the category defined as
follows. An object of  $Z^{2}_{X} (\mathfrak{G})^{\infty}$ is a pair
$(A,B)$ of a 1-form $A \in \W^{1} (X ; \un G)$ and a 2-form
$B \in \W^{2} (X ; \un H)$ satisfying
\be
\un \t (B) = dA + \frac{1}{2} [A,A]. 
\ee 
A 1-morphism from $(A,B)$ to $(A',B')$ is a pair $(h, \varphi)$ of a
smooth map $h :  X \to G$ and a 1-form $\varphi \in \W^{1} ( X; \un H )$
satisfying 
\be
A' + \un \t ( \varphi ) = \mathrm{Ad}_{h} (A) - h^{*} \ov \theta
\ee
and
\be
B' + \un \a_{A'} ( \varphi ) + d \varphi
+ \frac{1}{2} [ \varphi, \varphi ] = \underline{\a_{g}} (B). 
\ee
The composition is defined by 
\be
(A'',B'') \xleftarrow{(h',\varphi')} (A',B') \xleftarrow{(h,\varphi)}
(A,B) := ( A'' ,B'') \xleftarrow{(h' h,\underline{\a_{h'}} (\varphi)
+ \varphi' ) } (A,B). 
\ee
A 2-morphism from $(h, \varphi)$ to $(h', \varphi'),$ which are both
1-morphisms from $(A,B)$ to $(A',B'),$ is a smooth map $f : X \to H$
satisfying 
\be
h' = \t ( f ) h
\ee
and
\be
\varphi' + ( \underline{ R_{f}^{-1} \circ \a_{f} } ) ( A')
= \mathrm{Ad}_{f} (\varphi) - f^{*} \ov \theta. 
\ee
The vertical composition is defined by 
\be
\xymatrix{
(A',B') & & \ar@/_2pc/[ll]_{(h, \varphi)}="1"
\ar[ll]|-{(h', \varphi')}="2" \ar@{}"1";"2"|(.25){\,}="3"
\ar@{}"1";"2"|(.85){\,}="4" \ar@{=>}"3";"4"^{f}
\ar@/^2pc/[ll]^{(h'', \varphi'')}="5" \ar@{}"2";"5"|(.15){\,}="6"
\ar@{}"2";"5"|(.75){\,}="7" \ar@{=>}"6";"7"^{f'}   (A,B)
}
\quad := \quad 
\xymatrix{
(A',B') & & \ar@/_1.5pc/[ll]_{(h, \varphi)}="1"
\ar@/^1.5pc/[ll]^{(h'', \varphi'') }="2" \ar@{}"1";"2"|(.15){\,}="3"
\ar@{}"1";"2"|(.85){\,}="4" \ar@{=>}"3";"4"^{f'f}  (A,B)
}
.
\ee
The horizontal composition is defined by 
\be
\xymatrix{
(A'',B'') &  \ar@/_1.5pc/[l]_{(h_1, \varphi_1)}="5"
\ar@/^1.5pc/[l]^{(h'_1, \varphi'_1)}="6" \ar@{}"5";"6"|(.15){\,}="7"
\ar@{}"5";"6"|(.85){\,}="8" \ar@{=>}"7";"8"^{f_1}  (A',B') &
\ar@/_1.5pc/[l]_{(h_2, \varphi_2)}="1"
\ar@/^1.5pc/[l]^{(h'_2, \varphi'_2)}="2" \ar@{}"1";"2"|(.15){\,}="3"
\ar@{}"1";"2"|(.85){\,}="4" \ar@{=>}"3";"4"^{f_2}  (A,B)
}
\
:=
\
\xymatrix{
(A'',B'') & & \ar@/_1.5pc/[ll]_{(h_1 h_2,\underline{\a_{h_1}}(\varphi_2)
+ \varphi_1)}="1" \ar@/^1.5pc/[ll]^{(h'_1 h'_2,\underline{\a_{h'_1}}
(\varphi'_2) + \varphi'_1)}="2" \ar@{}"1";"2"|(.15){\,}="3"
\ar@{}"1";"2"|(.85){\,}="4" \ar@{=>}"3";"4"|-{f_1 \a_{h_1} (f_2) } (A,B)
}
.
\ee
\end{defn}

As in Section \ref{sec:diffcocycles}, these arguments define 2-functors 
\be
\label{eq:integratedifferentiate2}
\xymatrix{ 
Z_{X}^{2} (\mathfrak{G})^{\infty} \ar@<+1ex>[r]^(0.4){\mathcal{P}_{X}} &
\ar@<+1ex>[l]^(0.6){\mathcal{D}_{X}} \mathrm{Funct}^{\infty}
(X, \mathcal{B} \mathfrak{G} )
}
,
\ee
which turn out to be strict inverses of each other (Theorem 2.21 of
\cite{SW2}). 

As before, this was for globally defined differential data corresponding
to globally trivial transport 2-functors. Transport 2-functors on $M$
are not necessarily of this type, but they are locally trivializable
via some surjective submersion $\pi : Y \to M$ and a $\pi$-local
$i$-trivialization. By similar arguments to the discussion in Section
\ref{sec:diffcocycles}, we are led to the following, rather long and
complicated, definition. 

\begin{defn}
\label{defn:diffcocyclespi2}
Let $\pi : Y \to M$ be a surjective submersion. Define the 2-category
$Z_{\pi}^{2} (\mathfrak{G})^{\infty}$ of \emph{\uline{differential cocycles
subordinate to $\pi$}} as follows. An object of $Z_{\pi}^{2}
(\mathfrak{G})^{\infty}$ is a tuple $( (A, B), (g, \varphi), \psi, f),$
where $(A,B)$ is an object in $Z_{Y}^{2} (G),$ $(g,\varphi)$ is a
1-morphism from $\pi_{1}^{*} (A,B)$ to $\pi_{2}^{*} (A,B)$ in
$Z_{Y^{[2]}}^{2} (\mathfrak{G}),$ $\psi$ is a 2-morphism from
$\id_{(A,B)}$ to $\D^{*} (g, \varphi)$ in $Z_{Y}^{2} (\mathfrak{G}),$
and $f$ is a 2-morphism from $\pi^{*}_{23} (g, \varphi) \circ
\pi_{12}^{*} (g, \varphi)$ to $\pi_{13}^{*} (g, \varphi).$ A 1-morphism
from $( (A,B), (g,\varphi ),\psi,f)$ to
$( (A',B'),(g',\varphi' ),\psi',f' )$ is tuple $( ( h, \phi ), \e ),$
where $(h, \phi)$ is a 1-morphism from $(A,B)$ to $(A',B')$ in
$Z_{Y}^{2} (\mathfrak{G} )$ and $\e$ is a 2-morphism from
$\pi_{2}^{*} ( h, \phi ) \circ (g, \varphi)$ to $(g', \varphi') \circ
\pi_{1}^{*} ( h, \phi )$ in $Z_{Y^{[2]}}^{2} (\mathfrak{G} ).$ A
2-morphism from $( (h, \phi ), \e)$ to $( (h', \phi' ), \e')$ is a
2-morphism $E$ from $(h, \phi)$ to $(h', \phi')$ in
$Z_{Y}^{2} (\mathfrak{G}).$ 
\end{defn}

The above generalizations produce functors 
\be
\label{eq:localintegratedifferentiate2}
\xymatrix{ 
Z_{\pi}^{2} (\mathfrak{G})^{\infty}
\ar@<+1ex>[r]^(0.65){\mathcal{P}_{\pi}} &
\ar@<+1ex>[l]^(0.35){\mathcal{D}_{\pi}} 
} \mathfrak{Des}_{\pi}^{2} (i)^{\infty}
\ee
exhibiting an equivalence of 2-categories whenever
$i : \mathcal{B} \mathfrak{G} \to T$ is an equivalence.

\subsection{Direct limits}
\label{sec:2limit}

In this section, we get rid of the dependence on the surjective
submersion in the categories introduced in the prequel. Several of our
2-categories depended on the choice of a surjective submersion. These
2-categories were $ \mathrm{Triv}_{\pi}^{2} (i)^{\infty},
\mathfrak{Des}^{2}_{\pi} (i)^{\infty},$ and
$Z^{2}_{\pi} (\mathfrak{G})^{\infty}.$ Changing the surjective
submersion gives a collection of 2-categories dependent on this
surjective submersion. One can take a limit over the collection of
surjective submersions in this case. This will be a slight
generalization of what was done in Section \ref{sec:1limits}. However,
there are subtle issues in terms of defining the many compositions. 

The general construction is done as follows. Let $S_{\pi}$ be a family
of 2-categories parametrized by surjective submersions $\pi : Y \to M$
and let $F (\z) : S_{\pi} \to S_{\pi \circ \z}$ be a family of
2-functors for every refinement $\z : Y' \to Y$ of $\pi$ satisfying the
condition that for any iterated refinement $\z ' : Y'' \to Y'$ and
$\z : Y' \to Y$ of $\pi : Y \to M$ then $F ( \z' \circ \z ) = F (\z')
\circ F (\z).$ In this case, an object of $S_{M} := \varinjlim_{\pi}
S_{\pi}$ is given by a pair $(\pi, X)$ of a surjective submersion
$\pi : Y \to M$ and an object $X$ of $S_{\pi}.$  A 1-morphism from
$(\pi_{1} , X_{1} )$ to $(\pi_{2} , X_{2})$ consists of a common
refinement 
\be
\xy 0;/r.15pc/:
(0,20)*+{Z}="Z";
(-20,0)*+{Y_{1}}="Y1";
(20,0)*+{Y_{2}}="Y2";
(0,-20)*+{M}="M";
{\ar^{\z} "Z";"M"};
{\ar_{y_{1}} "Z";"Y1"};
{\ar^{y_{2}} "Z";"Y2"};
{\ar_{\pi_{1}} "Y1";"M"};
{\ar^{\pi_{2}} "Y2";"M"};
\endxy
\ee
together with a 1-morphism $f : (F (y_{1}) ) (X_{1}) \to ( F(y_{2}) )
(X_{2})$ in $S_{\z}.$ It is written as a pair $(\z, f).$ The composition
\be
(\pi_{3}, X_{3} ) \xleftarrow{(\z_{23}, g)} (\pi_{2}, X_{2} )
\xleftarrow{(\z_{12}, f)} (\pi_{1}, X_{1} ) 
\ee
consists of the pullback refinement 
\be
\xy 0;/r.15pc/:
(0,30)*+{Z_{13}}="Z13";
(-20,10)*+{Z_{12}}="Z12";
(20,10)*+{Z_{23}}="Z23";
(-30,-10)*+{Y_{1}}="Y1";
(0,-10)*+{Y_{2}}="Y2";
(30,-10)*+{Y_{3}}="Y3";
(0,-30)*+{M}="M";
{\ar_{\z_{12}} "Z12";"M"};
{\ar^{\z_{23}} "Z23";"M"};
{\ar "Z12";"Y1"};
{\ar "Z12";"Y2"};
{\ar "Z23";"Y2"};
{\ar "Z23";"Y3"};
{\ar_{\pi_{1}} "Y1";"M"};
{\ar|-{\pi_{2}} "Y2";"M"};
{\ar^{\pi_{3}} "Y3";"M"};
{\ar_{\mathrm{pr}_{12}} "Z13";"Z12"};
{\ar^{\mathrm{pr}_{23}} "Z13";"Z23"};
\endxy
\ee
along with the composition $( F(\mathrm{pr}_{23}) ) (g) \circ
( F (\mathrm{pr}_{12}) ) (f).$ A 2-morphism from $(\z, f)$ to
$(\z', f')$ consists of an equivalence class of pairs $(\w, \a),$ where
$\w$ is a common refinement of $\z$ and $\z'$ as in the following
diagram 
\be
\xy 0;/r.15pc/:
(-15,15)*+{Z}="Z";
(15,15)*+{Z'}="Z'";
(0,35)*+{W}="W";
(-30,-10)*+{Y_{1}}="Y1";
(30,-10)*+{Y_{2}}="Y2";
(0,-30)*+{M}="M";
{\ar_(0.55){\z} "Z";"M"};
{\ar^(0.55){\z'} "Z'";"M"};
{\ar_{y_{1}} "Z";"Y1"};
{\ar|-(0.45){y_{2}} "Z";"Y2"};
{\ar|-(0.45){y'_{1}} "Z'";"Y1"};
{\ar^{y'_{2}} "Z'";"Y2"};
{\ar_{z} "W";"Z"};
{\ar^{z'} "W";"Z'"};
{\ar_{\pi_{1}} "Y1";"M"};
{\ar^{\pi_{2}} "Y2";"M"};
{\ar|-(0.6){\w} "W";"M"};
\endxy
\ee
and $\a$ is a 2-morphism $\a : F(z) (f) \Rightarrow F(z') (f').$ Two
such pairs $(\w_{1}, \a_{1})$ and $(\w_{2}, \a_{2})$ are equivalent if
they agree on the pullback. 

After getting rid of the specific choices of the surjective submersions,
we can take the limits of all the categories we have introduced. We make
the following notation, slightly differing from that of \cite{SW4}:
\begin{align}
\mathrm{Triv}^{2}_{M} (i)^{\infty} &:= \varinjlim_{\pi}
\mathrm{Triv}_{\pi}^{2} (i)^{\infty} \label{eq:triv2M}  \\
\mathfrak{Des}^{2}_{M} (i)^{\infty} &:= \varinjlim_{\pi}
\mathfrak{Des}^{2}_{\pi} (i)^{\infty} \label{eq:desc2M}\\
Z^{2} ( M; \mathfrak{G} )^{\infty} &:= \varinjlim_{\pi} Z^{2}_{\pi}
(\mathfrak{G} )^{\infty} \label{eq:Z2MG}. 
\end{align}
Then from our previous discussions, we collect the functors we have
introduced relating all these categories to
$\mathrm{Trans}^{2}_{\mathcal{B} \mathfrak{G}} (M, T)$ after taking such
limits over surjective submersions:
\be
\label{eq:Z2Des2Triv2Trans2}
\xymatrix{
Z^{2} ( M; \mathfrak{G} )^{\infty}   \ar@<+1ex>^{\mathcal{P}}[r] &
\ar@<+1ex>^{\mathcal{D}}[l]  \mathfrak{Des}^{2}_{M} (i)^{\infty}
\ar@<+1ex>^{\mathrm{Rec}^{2}}[r]   & \ar@<+1ex>^{\mathrm{Ex}^{2}}[l]
\mathrm{Triv}^{2}_{M} (i)^{\infty}  \ar@<+1ex>^(0.45){v}[r] &
\ar@<+1ex>^(0.55){c}[l] \mathrm{Trans}^{2}_{\mathcal{B} \mathfrak{G}}
(M, T)
}
\ee
Under the conditions that $i : \mathcal{B} \mathfrak{G} \to T$ is an
equivalence of categories, all of the above 2-functors are equivalence
pairs. Without the smoothness assumptions, a simpler version of some of
these equivalences is proven in Proposition 4.2.1. and Theorem 4.2.2. of
\cite{SW3} while the equivalences in (\ref{eq:Z2Des2Triv2Trans2}) are
proven in Theorem 3.2.2., Lemma 3.2.3., and Lemma 3.2.4. of \cite{SW4}.
Completely analogous versions of comments regarding the assumptions on
$i$ made before (\ref{eq:allequivalences1}) apply here as well. 

\subsection{Surface transport, 2-holonomy, and gauge invariance}
\label{sec:2-holonomy}

In Section \ref{sec:1-holonomy}, we described a procedure that began
with a transport functor and produced a group-valued parallel transport
operator for thin loops with markings. We discovered that the value of
holonomy changed by conjugation depending on the markings for the loops,
the choice of a local trivialization procedure, and by using an
isomorphic transport functor. In this section, we will analyze holonomy
along surfaces in an analogous manner. The main difference is that
bigons have source and target paths so that a closed surface has a
marking of one lower dimension, and is therefore not in general just a
point as it was for loops. For the examples we give later in this paper,
we specialize to spheres with a point marking. Such a surface is
depicted as a bigon from the constant loop at a point $x$ to itself (see
Figure \ref{fig:sphere} below and \cite{HS}). 
\begin{figure}[h]
\centering
  \begin{picture}(0,0)
\put(2,54){$x$}
\end{picture}
    \includegraphics[width=0.25\textwidth]{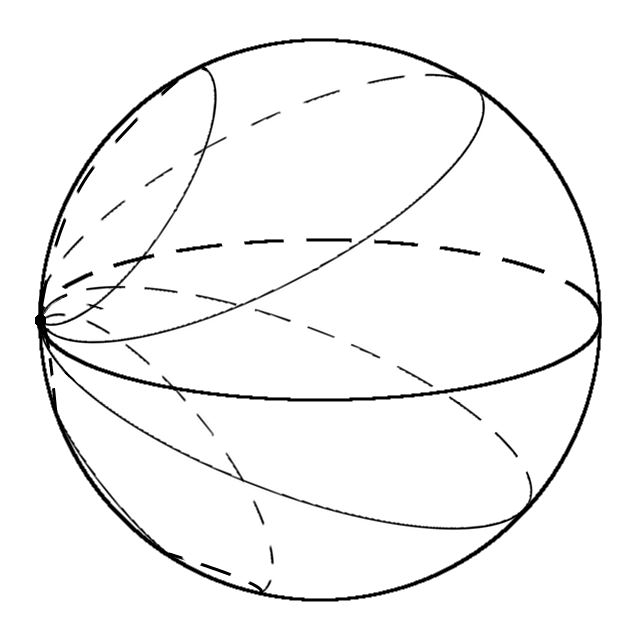}
    \vspace{-3mm}
    \caption{A based sphere viewed as a bigon $\S : \id_{x} \Rightarrow
\id_{x}.$}
    \label{fig:sphere}
\end{figure} 
However, a sphere can be more generally described as a bigon from a loop
to itself, so we analyze parallel transport for such bigons to cover
these extra cases. This analysis is completely independent of what types
of Lie 2-groups $\mathcal{B}\mathfrak{G}$ we use. For simplicity, we
assume that $i : \mathcal{B} \mathfrak{G} \to T$ is
a full and faithful 2-functor. This will differ from the presentation in
Section 5 of \cite{SW4}, where surface holonomy was defined using the
reduced 2-group. We will \emph{not} be making this restriction. 

\begin{defn}
\label{defn:groupvalued2holonomyfunctor}
A \emph{\uline{2-group-valued transport extraction}} is a composition of
functors (starting at the left and moving clockwise)
\be
\label{eq:groupvalued2holonomyfunctor}
\xy 0;/r.15pc/:
(-30,0)*+{\mathrm{Trans}_{\mathcal{B} \mathfrak{G}}^{2} (M, T)}="1";
(0,15)*+{\mathrm{Triv}^{2} (i)^{\infty}}="2";
(30,0)*+{\mathfrak{Des}^{2} (i)^{\infty}}="3";
(0,-15)*+{\mathrm{Triv}^{2} (i)^{\infty}}="4";
{\ar@/^1pc/^(0.4){c} "1";"2"};
{\ar@/^1pc/^(0.6){\mathrm{Ex}^{2}} "2";"3"};
{\ar@/^1pc/^(0.35){\mathrm{Rec}^{2}} "3";"4"};
{\ar@/^1pc/^(0.6){v} "4";"1"};
\endxy
.
\ee
\end{defn}

We write the composition (\ref{eq:groupvalued2holonomyfunctor}) as
$\scripty{t} \; .$ By the reconstruction procedure of  Section
\ref{sec:reconstruction2}, $\scripty{t} \;$ assigns $G$-valued elements
to thin paths for every transport functor $F$ as well as $H$-valued
elements to thin bigons (more on this below). Technically, thin bigons
will be assigned elements in $H \rtimes G$ but as is discussed in
Section \ref{sec:2-groups}, particularly after the proof of Theorem
\ref{thm:CrsMod2group}, such elements are completely determined by their
source, an element of $G,$ and their projection in $H.$ $\scripty{t} \;$
will also assign $G$-valued and $H$-valued gauge transformations for
every 1-morphism $\eta : F \to F'$ of transport functors. In addition,
$\scripty{t} \;$ will assign $H$-valued 2-gauge transformations for
every 2-morphism  $A : \eta \Rightarrow \eta'.$ A pseudo-natural
equivalence $\scripty{r} : \id \Rightarrow \scripty{t} \; $ describes
how to relate the transport functor to the locally trivialized one.
Although modifications of pseudo-natural transformations are allowed,
we will not analyze them in this paper. Such modifications are to be
interpreted as relating the two different ways of choosing the
pseudo-natural transformations that relate the transport functor to the
locally trivialized one. 

Just as before, we briefly review what the composition of 2-functors
defining $\scripty{t} \;$ are. For a transport 2-functor $F,$ we choose
a local trivialization $c(F) = ( \pi, F, \triv, t ).$ Then we extract
the local descent object $\mathrm{Ex}^{2} ( \pi, F , \triv, t ) = ( \pi,
\triv, g , \psi, f ).$ Then, we reconstruct a transport 2-functor
$\mathrm{Rec}^{2} ( \pi, \triv, g, \psi, f)$ and then forget the
trivialization data keeping just the 2-functor $v ( \mathrm{Rec}^{2}
(\pi, \triv, g, \psi, f ) ).$ The resulting transport 2-functor, written
as $\scripty{t}_{F},$ is defined by (see Section
\ref{sec:reconstruction2})
\be
\begin{split}
\mathcal{P}_{2}(M) &\xrightarrow{\scripty{t}_{F}} T \\
M \ni x &\mapsto i(\bullet) =: \triv_{i} ( s^{\pi} ( x) )  ,  \\
P^{1} M \ni \g &\mapsto R_{\mathrm{Ex}^{2} ( c(F) )} ( s^{\pi} (\g) ) ,
\text{ and }\\
P^{2} M \ni \S &\mapsto R_{\mathrm{Ex}^{2} ( c(F) )} ( s^{\pi} (\S) ) .
\end{split}
\ee
Points in $M$ get sent to $i(\bullet)$ by construction. Because $i$ is
full and faithful, the 1-morphisms $R_{\mathrm{Ex}^{2} ( c (F) )}
(s^{\pi} (\g) ) : i(\bullet) \to i(\bullet)$ determine unique elements
of $G.$ Similarly, the 2-morphisms $R_{\mathrm{Ex}^{2} ( c (F) )}
(s^{\pi} (\S) )$ determine unique elements in $H.$ 

The interested reader can explicitly define the compositor and the
unitor for the 2-functor $\scripty{t}_{F}.$ We won't need the precise
details for our analysis when studying surface holonomy. All we need to
know is that the 2-functors defining $\scripty{t} \; \;$ are (weakly)
invertible. 

We'd like to restrict surface holonomy to thin homotopy classes of
marked spheres for the purpose of this paper (in general, one would like
to restrict to the more general space of thin homotopy classes of marked
closed surfaces) and eventually thin free spheres. First we make a
definition of the thin marked sphere space, which should be thought of
as analogous to the thin marked loop space. 

\begin{defn}
\label{defn:thinmarkedsphere}
The \emph{\uline{marked sphere space of $M$}} is the set
\be
\label{eq:markedspherespace}
\mathfrak{S} M := \{ \S \in BM \ | \ s(\S) = t(\S)
\text{ and } s ( s ( \S ) ) = t ( t ( \S ) ) \}
\ee
equipped with the subspace smooth structure. Elements of
$\mathfrak{S} M$ are called \emph{\uline{marked spheres}}. Similarly, the
\emph{\uline{thin marked sphere space of $M$}} is the smooth space
\be
\label{eq:thinmarkedsphere}
\mathfrak{S}^{2} M = \{ \S \in P^{2} M  \ | \ s ( \S ) = t ( \S )
\text{ and } s ( s ( \S ) ) = t ( t ( \S ) ) \} . 
\ee
Elements of $\mathfrak{S}^{2} M $ are called
\emph{\uline{thin marked spheres}}.  
\end{defn}

\begin{rmk}
Note that elements of $\mathfrak{S}^{2} M$ need not look like embedded
spheres in $M.$ Indeed, they might look like pinched croissants as
Figure \ref{fig:croissant} indicates (or worse). This won't matter in
any of our calculations or proofs.
\end{rmk}

\begin{figure}[h]
\vspace{-5mm}
\centering
    \includegraphics[width=0.25\textwidth]{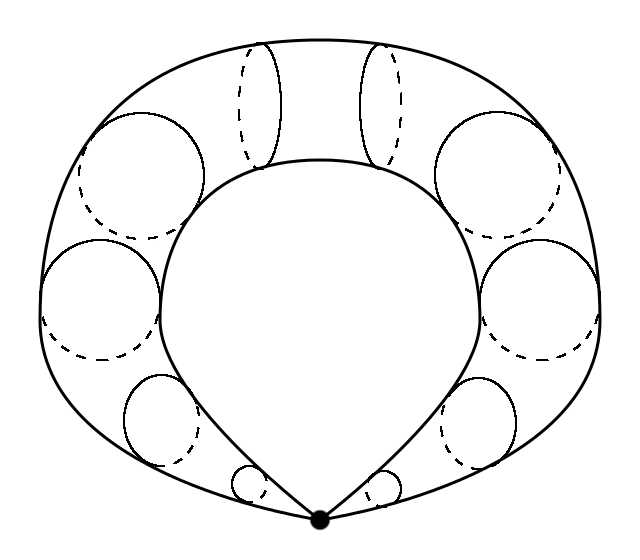}
    \vspace{-5mm}
    \caption{A pinched croissant is an example of a thin marked sphere.}
    \label{fig:croissant}
\end{figure} 

\begin{defn}
\label{defn:t2holonomy}
The \emph{\uline{$\scripty{t} \;$-2-holonomy of $F$}}, written as
$\mathrm{hol}_{\scripty{t}}^{F},$ is defined as the projection to $H$ of
the restriction of parallel transport of a transport 2-functor $F$ to
the thin marked sphere space of $M$:
\be
\mathrm{hol}_{\scripty{t}}^{F} := p_{H} \circ \scripty{t}_{F}
\Big |_{\mathfrak{S}^{2} M} : \mathfrak{S}^{2} M \to H. 
\ee
\end{defn}

\begin{rmk}
Note that $\mathrm{hol}_{\scripty{t}}^{F}$ is the same notation used for
thin loops with values in $G.$ This should cause no confusion because
thin loops are always written using lower case Greek letters such as
$\g, \de,$ etc. while thin spheres are written using upper case Greek
letters such as $\S, \G,$ etc. 
\end{rmk}

We now pose three questions analogous to those for 1-holonomy. 

\begin{enumerate}[i)]
\item
How does $\mathrm{hol}_{\scripty{t}}^{F}$ depend on the choice of a thin
marked sphere? Namely, suppose that two thin marked spheres $\S$ and
$\S',$ with possibly different markings, are thinly homotopic
\emph{without preserving the marking} (see Definition
\ref{defn:thinsphere}). Then, how is $\mathrm{hol}_{\scripty{t}}^{F}
(\S)$ related to $\mathrm{hol}_{\scripty{t}}^{F} (\S')$? 

\item
How does $\mathrm{hol}_{\scripty{t}}^{F}$ depend on $F$? Namely, suppose
that $\eta : F \to F'$ is a morphism of transport functors. How is
$\mathrm{hol}_{\scripty{t}}^{F}$ related to
$\mathrm{hol}_{\scripty{t}}^{F'}$ in terms of $\eta$?
\item
How does $\mathrm{hol}_{\scripty{t}}^{F}$ depend on $\scripty{t} \; ,$
the choice of trivialization? Namely, suppose that
$\scripty{t} \; \; {}' \;$ is another  trivialization. Then how is
$\mathrm{hol}_{\scripty{t}}^{F}$ related to
$\mathrm{hol}_{\scripty{t} \; \; {}' }^{F}$? 
\end{enumerate}

Due to the fact that we are restricting ourselves to marked spheres
instead of arbitrary surfaces, the answer will be closely related to the
1-holonomy case and will be given by a generalized version of
conjugation. As before, we need to define what we mean by thin free
sphere space and then we'll proceed to answer the above questions. 
Denote the smooth space of spheres in $M$ by 
$SM = \{ \S : S^2 \to M \ | \ \S \text{ is smooth} \}.$

\begin{defn}
\label{defn:thinsphere}
Two smooth spheres $\S$ and $\S'$  in $M$ are
\emph{\uline{thinly homotopic}} if there exists a smooth map
$h: S^2 \times [0,1] \to M$ such that 
\begin{enumerate}[i)]
\item
there exists an $\e > 0$ with $h ( t , s ) = \S (t)$ for $s \le \e$ and
$h (t, s ) = \S' (t)$ for $s \ge \e$ and for all $t \in S^2$ and  
\item
the smooth map $h$ has rank $\le 2.$ 
\end{enumerate}
The space of equivalences classes is denoted by $S^{2} M$ and is called
the \emph{\uline{thin free sphere space of $M$}}. Elements of $S^{2} M$ are
called \emph{\uline{thin spheres}}. 
\end{defn}

\begin{defn}
Define a function $f : \mathfrak{S} M \to SM$ by sending a marked sphere
$\S : [0,1] \times [0,1] \to M$ to the associated smooth map
$f(\S) : S^2 \to M$ obtained from identifying the top and bottom of the
second interval and then pinching the two ends (see Figure
\ref{fig:bigontosphere}).
$f$ is called the \emph{\uline{forgetful map}}. 
\end{defn}

\begin{figure}[h!]
\vspace{-2mm}
\centering
	\begin{picture}(0,0)
	\put(2,36){$y$}
	\put(76,36){$x$}
	\put(38,74){$\g$}
	\put(38,0){$\g$}
	\put(38,36){$\S$}
	\put(83,48){$\rightsquigarrow$}
	\put(101,36){$y$}
	\put(194,36){$x$}
	\put(150,74){$\g$}
	\put(202,48){$\rightsquigarrow$}
	\put(221,36){$y$}
	\put(293,36){$x$}
	\put(255,74){$\g$}
	\put(248,36){$f(\S)$}
	\end{picture}
    \includegraphics[width=0.65\textwidth]{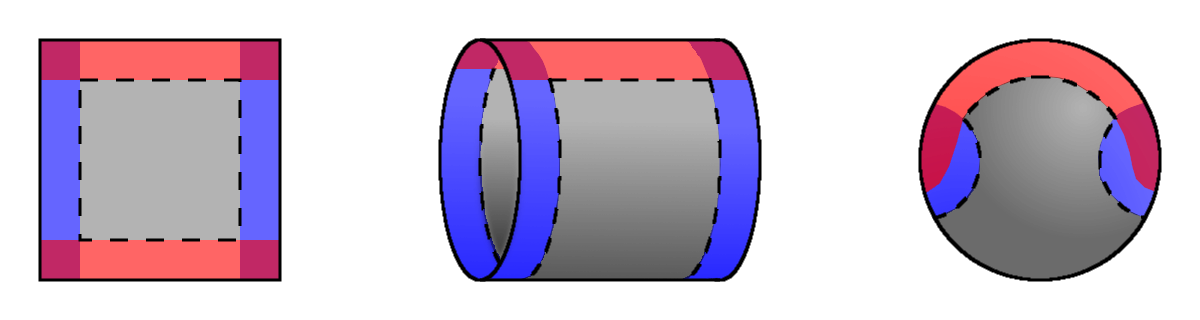}
    \caption{The definition of $f : \mathfrak{S} M \rightarrow SM.$ This
definition makes sense even when $y \ne x.$ $y=x$ is a special case.}
    \label{fig:bigontosphere}
\end{figure} 

\begin{lem}
There exists a unique map $f^{2} : \mathfrak{S}^{2}M \to S^2 M$ such
that the diagram 
\be
\xy 0;/r.25pc/:
(-10,7.5)*+{\mathfrak{S}M}="1";
(10,7.5)*+{\mathfrak{S}^{2} M}="2";
(-10,-7.5)*+{SM}="3";
(10,-7.5)*+{S^2 M}="4";
{\ar"1";"2"};
{\ar"1";"3"_{f}};
{\ar"3";"4"};
{\ar"2";"4"^{f^{2}}};
\endxy
\ee
commutes (the horizontal arrows are the projections onto thin homotopy
classes). 
\end{lem}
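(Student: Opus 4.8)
The plan is to mirror exactly the proof of the analogous statement one dimension down, namely Lemma \ref{lem:forgetmarking}, since the present lemma is its direct categorification from loops/bigons to spheres/thin marked spheres. The assertion has two parts: existence of a map $f^{2} : \mathfrak{S}^{2} M \to S^2 M$ making the square commute, and its uniqueness. Uniqueness is immediate and should be dispatched first: the top horizontal arrow $\mathfrak{S} M \to \mathfrak{S}^{2} M$ is the projection onto thin homotopy classes and is therefore surjective, so any map $f^{2}$ satisfying $f^{2} \circ (\text{proj}) = (\text{proj}) \circ f$ is forced on every class, hence unique.

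For existence, the strategy is the standard ``define on representatives, check well-definedness'' argument. First I would take a thin marked sphere $[\S] \in \mathfrak{S}^{2} M$, choose a representative marked sphere $\S \in \mathfrak{S} M$, apply the forgetful map $f : \mathfrak{S} M \to SM$ of Definition \ref{defn:thinmarkedsphere} (the one producing $f(\S) : S^2 \to M$ by identifying the ends of the second interval and pinching), and then project $f(\S)$ into $S^2 M$. I would then set $f^{2}([\S]) := [f(\S)]$. The content of the proof is showing this is independent of the representative: if $\S$ and $\S'$ are two marked spheres representing the same class in $\mathfrak{S}^{2} M$, then by Definition \ref{defn:bigonthinhomotopy} there is a thin homotopy of bigons $A : [0,1]^3 \to M$ from $\S$ to $\S'$ whose rank is $< 3$ (and is suitably degenerate on the boundary slabs). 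I would argue that the boundary degeneracy built into Definition \ref{defn:bigonthinhomotopy} — specifically that $A$ collapses the two ends of the first interval to the source/target points and respects the source and target paths — is precisely what permits the same pinching-and-identification used to build $f$ to be carried out \emph{fiberwise in the homotopy parameter $r$}. This produces a smooth map $\tilde{A} : S^2 \times [0,1] \to M$ which is a thin homotopy of spheres (rank $\le 2$) in the sense of Definition \ref{defn:thinsphere}, from $f(\S)$ to $f(\S')$, giving $[f(\S)] = [f(\S')]$ in $S^2 M$. Commutativity of the square is then automatic from the definition $f^{2}([\S]) = [f(\S)]$.

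I expect the main obstacle to be the rank/degeneracy bookkeeping in this fiberwise pinching. In the one-dimensional case (Lemma \ref{lem:forgetmarking}) one only had to identify the two ends of a single interval factor, and the sitting-instant condition (\ref{eq:bigon}) made that identification smooth; here the quotient defining $f$ is more elaborate — it identifies top and bottom of the second interval \emph{and} pinches, as depicted in Figure \ref{fig:bigontosphere} — so I must verify that applying this quotient in the presence of the homotopy direction $r$ yields a genuinely smooth map on $S^2 \times [0,1]$ and, crucially, that its rank stays $\le 2$ as required by Definition \ref{defn:thinsphere}. The rank control should follow because the thin homotopy $A$ already has rank $< 3$ and the pinching quotient is a (piecewise) submersion that does not increase rank, but the pinch points are where smoothness is most delicate. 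I would handle this by invoking the boundary conditions of Definition \ref{defn:bigonthinhomotopy} (part i) that guarantee $A$ is constant at $x$ and $y$ on the relevant slabs for all $r$, so the pinching happens in a region where $A$ is already collapsed, exactly as the sitting instants saved the argument in the $1$-dimensional case. Beyond this geometric point the proof is routine, so I would not grind through the explicit formula for $\tilde{A}$.
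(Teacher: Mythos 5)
Your proposal is correct and follows essentially the same route as the paper: choose a representative, apply the forgetful map $f$, project to $S^2 M$, and obtain well-definedness by descending the thin homotopy of bigons through the identification-and-pinching quotient to a thin homotopy of spheres, exactly as in the loop case (Lemma \ref{lem:forgetmarking}). Your added care with the rank bookkeeping at the pinch points and the explicit uniqueness remark are details the paper leaves implicit, but the argument is the same.
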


\begin{proof}
The proof is analogous to the case of loops. One chooses a
representative, applies $f,$ and then projects. The map is well-defined
by the thin homotopy equivalence relation on $S^2 M.$ 
\end{proof}

Note that there is also a function $\mathrm{ev}_{1} : \mathfrak{S}^2 M
\to \mathfrak{L}^1 M$ given by evaluating a thin marked sphere at its
source/target. This function forgets the sphere and remembers only the
source thin marked loop. 

\begin{defn}
\label{defn:spheremarking}
A \emph{\uline{marking of thin spheres}} is a section $\mathfrak{m} :
S^2 M \to \mathfrak{S}^{2} M $ of $f^2 : \mathfrak{S}^{2} M \to S^2 M.$
\end{defn}

\begin{lem}
A marking of thin spheres exists. 
\end{lem}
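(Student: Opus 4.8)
The plan is to construct the section $\mathfrak{m}$ by realizing every thin free sphere as a \emph{based} sphere, that is, as a bigon $\id_x \Rightarrow \id_x$ from the constant loop at a point $x$ to itself, exactly as in Figure \ref{fig:sphere}; this mirrors the construction of a marking of thin loops. The first observation I would record is a description of marked spheres of this special form: a bigon $\G:[0,1]^2\to M$ with $s(\G)=t(\G)=\id_x$ necessarily sends all of $\partial([0,1]^2)$ to the single point $x$ (the condition (\ref{eq:bigon}) collapses the left and right edges, while $s(\G)=t(\G)=\id_x$ collapses the top and bottom edges, and $s(s(\G))=t(t(\G))$ identifies the two endpoints). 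If moreover $\G$ is constant equal to $x$ on a collar of the boundary, it descends to a based smooth map $(S^2,*)\to(M,x)$, where $S^2=[0,1]^2/\partial([0,1]^2)$, and under $f$ it is carried to this underlying sphere. Since a marking is merely \emph{any} section of $f^2$ (not required to be smooth, just as for loops), it therefore suffices to prove that $f^2$ is surjective and then pick one preimage for each thin class.

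To prove surjectivity, given $[\S]\in S^2 M$ I would fix a representative $\S:S^2\to M$ together with a smooth collapse map $q:[0,1]^2\to S^2$ that is constant equal to a chosen basepoint $p_0$ on a collar of $\partial([0,1]^2)$, has degree one, and restricts to a diffeomorphism of the complement of that collar onto $S^2\setminus\{p_0\}$; such a $q$ exists by the standard smooth identification $[0,1]^2/\partial\cong S^2$. Setting $x:=\S(p_0)$, the composite $\G:=\S\circ q:[0,1]^2\to M$ is constant equal to $x$ on the collar, hence has sitting instants and is a genuine marked sphere of type $\id_x\Rightarrow\id_x$ by the first paragraph. What remains is to check that $f(\G)$ is thinly homotopic to $\S$, so that $f^2([\G])=[\S]$.

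The main technical point is precisely this thin homotopy $f(\G)\simeq\S$. Here $f(\G)$ is the sphere $\S\circ\bar q$, where $\bar q:S^2\to S^2$ is $q$ descended along the identifications defining $f$; since $\bar q$ has degree one it is homotopic to $\id_{S^2}$, so I choose a smooth homotopy $\Phi:S^2\times[0,1]\to S^2$ with $\Phi(\,\cdot\,,0)=\id_{S^2}$ and $\Phi(\,\cdot\,,1)=\bar q$. Then $h:=\S\circ\Phi:S^2\times[0,1]\to M$ is a homotopy from $\S$ to $f(\G)$, and I claim it is \emph{thin}: indeed $dh=d\S\circ d\Phi$ factors through $TS^2$, which is $2$-dimensional, so $\mathrm{rank}\,dh\le\mathrm{rank}\,d\S\le 2$ at every point, which is the rank condition of Definition \ref{defn:thinsphere}. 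After a harmless reparametrization of $\Phi$ in the homotopy variable (still factoring through $S^2$, hence still of rank $\le 2$) one arranges $h(\,\cdot\,,s)=\S$ for $s$ near $0$ and $h(\,\cdot\,,s)=f(\G)$ for $s$ near $1$, so $h$ is an admissible thin homotopy. This is the exact $2$-dimensional analogue of the fact, used for loops, that a reparametrization homotopy has rank $\le 1$ because it factors through the $1$-dimensional image of the loop.

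Finally I would assemble the section: choosing a representative $\S$ for each class $[\S]\in S^2 M$ and setting $\mathfrak{m}([\S]):=[\S\circ q]$ defines a function $S^2 M\to\mathfrak{S}^2 M$, and the previous paragraph gives $f^2\circ\mathfrak{m}=\id$, so $\mathfrak{m}$ is a marking of thin spheres. I expect the only delicate step to be the rank bound guaranteeing that passing between a sphere and its bigon presentation costs only a rank-$\le 2$ homotopy; once that is in hand, well-definedness is automatic, since a marking is required to be nothing more than a section, exactly as in the loop case.
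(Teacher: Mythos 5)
Your proof is correct, but it takes a genuinely different route from the paper's. The paper marks a thin sphere by its \emph{equator} loop: it chooses a representative $\S : S^2 \to M$, takes $\g$ to be the image of the equator, and asserts (with reference to Figure \ref{fig:lamespheremarking}) a thin homotopy deforming $\S$ so that the latitude loops sit at $\g$ near the equator, producing a marked sphere $\g \Rightarrow \g$; it then checks independence of the representative by composing thin homotopies. You instead mark every sphere by a \emph{constant} loop, presenting it as a based bigon $\id_x \Rightarrow \id_x$ via a smooth degree-one collapse $q : [0,1]^2 \to S^2$ that is constant on a collar of the boundary, and you verify $f^2([\S \circ q]) = [\S]$ by noting that $f(\S\circ q) = \S\circ\bar q$ with $\bar q$ of degree one, hence smoothly homotopic to $\id_{S^2}$, and that any homotopy of the form $\S \circ \Phi$ automatically satisfies the rank condition of Definition \ref{defn:thinsphere} because its differential factors through the $2$-dimensional $TS^2$. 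That rank-factoring observation is the cleanest part of your argument: it is exactly the justification that the paper leaves implicit in its figure (the paper's deformation is also of the form $\S$ composed with a homotopy in $S^2$), so you have made rigorous the step the paper only gestures at. What each approach buys: yours requires a little more machinery (existence of the smooth collapse map, $[S^2,S^2]\cong\Z$ via degree, smooth approximation of the homotopy) but lands on constant-loop markings, which is precisely how spheres are presented as bigons in Figure \ref{fig:sphere} and throughout the computations of Section \ref{sec:examples}; the paper's construction avoids degree theory entirely, keeps a non-constant marked loop (the form of marking its later Lemma \ref{lem:spheremarking} must handle anyway), and its explicit representative-independence check makes the section look canonical, whereas you simply choose one representative per class --- which is perfectly adequate, since a marking is only required to be a section. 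Two harmless imprecisions in your write-up: the quotient defining $f$ is the cylinder with pinched ends (the boundary of the square maps to a meridian arc, not to a single point), so $f(\G)$ lives on that quotient rather than on $[0,1]^2/\partial$ --- but your actual argument correctly defines $\bar q$ by descending $q$ along $f$'s identifications, which works because $q$ is constant on a collar of the whole boundary; and $q$ restricts to a diffeomorphism on the \emph{interior} of the complement of the collar, with the orientation chosen so that $\bar q$ has degree $+1$ rather than $-1$.
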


\begin{proof}
Let $[ \S ] \in \mathfrak{S}^2 M$ be a thin sphere and choose
representative $\S : S^2 \to M$ in $SM.$ Pick a point $\bullet$ on the
equator viewed as a loop $\ell : \bullet \to \bullet.$ The image of
$\ell$ under $\S$ defines a loop, $\g : x \to x,$ where
$x := \S(\bullet).$ There exists a thin homotopy
$h : S^2 \times [0,1] \to M$ from $\S$ to a smooth map
$\S_{\ell} : S^2 \to M$ such that the family of loops in Figure
\ref{fig:lamespheremarking} on the domain of $\S_{\ell}$ define a marked
sphere $\tilde{\S} : \g \Rightarrow \g.$ Projecting to thin marked
spheres defines $\mathfrak{m}([\S]).$ To see that this is well-defined,
let $\S' \in SM$ be another representative. Then there exists a thin
unmarked homotopy $\tilde{h} : \S' \Rightarrow \S.$ Composing this with
the thin homotopy $h$ gives $h \circ \tilde{h} : \S' \Rightarrow
\S_{\ell}.$ By the thin homotopy equivalence relation on $S^2 M,$ this
defines a section of $f^2.$ 
\end{proof}

\begin{figure}[h!]
\vspace{-5mm}
\centering
	\begin{picture}(0,0)
	\put(0,53){$\ell$}
	\end{picture}
    \includegraphics[width=0.25\textwidth]{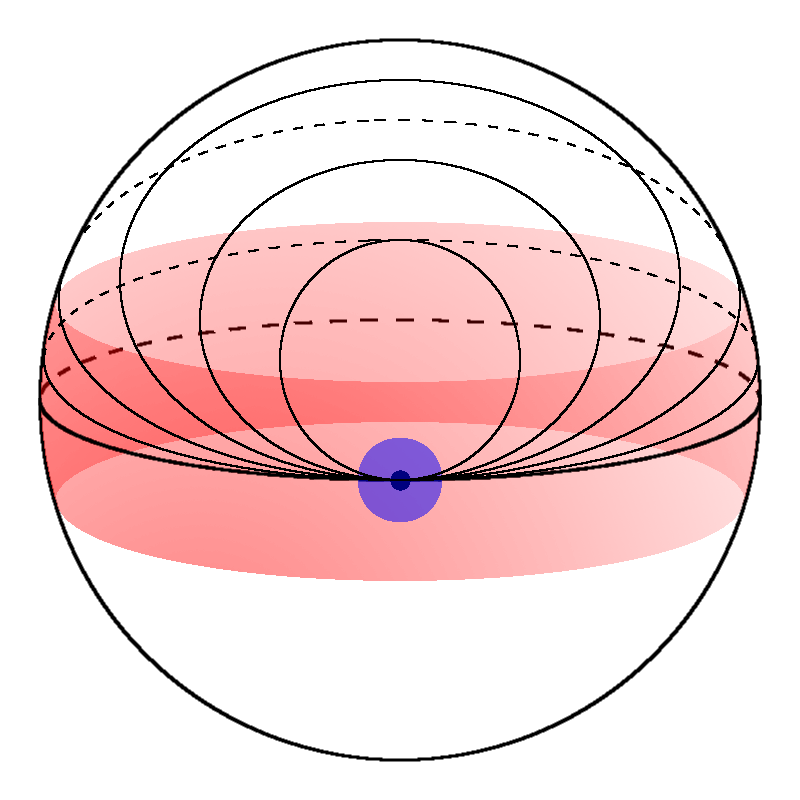}
    \vspace{-3mm}
    \caption{By a thin homotopy, the region around the equator is made
to sit at the loop $\ell$ around the equator so that the nearby loops
drawn in the shaded region agree with $\ell.$ The family of all these
loops define a marking.}
    \label{fig:lamespheremarking}
\end{figure} 

We now proceed to answering the above questions in order. 

\begin{enumerate}[i)]
\item
Let $\mathfrak{m}, \mathfrak{m}' : S^2 M \to \mathfrak{S}^2 M$ be two
markings for thin spheres in $M.$ Let $[\S] \in S^2 M$ be a thin sphere
and let $\S : \g \Rightarrow \g$ with $\g : x \to x$ be a representative
of $\mathfrak{m}([\S])$ and $\S' : \g' \Rightarrow \g'$ with
$\g' : x' \to x'$ be a representative of $\mathfrak{m}'([\S]).$ Note
that these representatives need not have associated marked loops that
lie on some common image. Figure \ref{fig:twospheres} depicts such a
possible situation. 

\begin{figure}[h!]
\vspace{-5mm}
\centering
	\begin{picture}(0,0)
	\put(24,85){$x$}
	\put(242,77){$x'$}
	\put(238,100){$\g'$}
	\put(3,85){$\g$}
	\put(40,85){$\S$}
	\put(205,85){$\S'$}
	\end{picture}
    \includegraphics[width=0.53\textwidth]{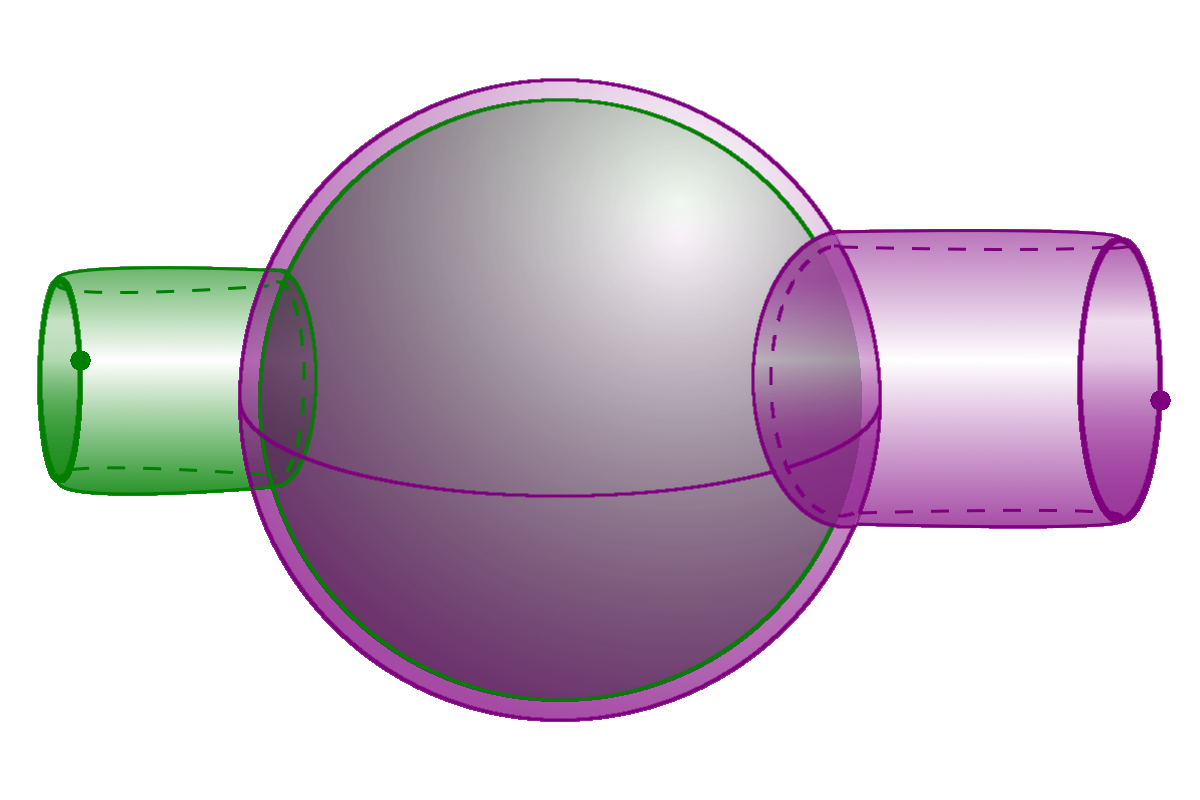}
    \vspace{-5mm}
    \caption{Two different representatives $\S$ (the `inner' sphere in
green extending left) and $\S'$ (the `outer' sphere in purple extending
right) of two markings of a thin sphere are shown. The extensions do
\emph{not} enclose any volume so that both spheres are thinly homotopic.
Their respective sources are $\g : x \rightarrow x$ 
and $\g' : x' \rightarrow x',$
neither of which lie on the other's image. Compare this to Figure
\ref{fig:basedspheres} where the two marked loops \emph{do} lie on a
common sphere.}
    \label{fig:twospheres}
\end{figure} 

As in the case of loops, we can use thin homotopy to draw both marked
loops on the same sphere (a more precise statement will be given
shortly). First notice that there is a thin homotopy $h : S^{2} \times
[0,1] \to M$ with $h ( \ \cdot \ , s ) = \S$  for $s \le \e$ and
$h ( \ \cdot \ , s ) = \S'$ for $s \ge 1 - \e$ for some $\e > 0.$ Such a
homotopy allows us to choose a sphere $\tilde{\S} \in SM,$ a path
$\g_{x' x} : x \to x',$ and three \emph{bigons}
$\S_{\g x} : \id_{x} \Rightarrow \g,$ $\S_{x' \g'} : \g' \Rightarrow
\id_{x'},$ and $\Delta : \g_{x' x} \circ \g \circ \overline{\g_{x' x}}$
with the following properties. First $\tilde{\S}$ can be expressed as
either of the compositions
\be
f \left (
\begin{matrix}
\S_{\g' x'} \\
\overset{\circ}{\id_{\g_{x' x}} \circ \S_{\g x } \circ
\id_{\overline{\g_{x' x}} }} \\
\overset{\circ}{\Delta}
\end{matrix}
\right )
\qquad 
\text{ or } 
\qquad 
f \left (
\begin{matrix}
\id_{\overline{\g_{x'x}}} \circ \Delta \circ \id_{\g_{x'x}} \\
\overset{\circ}{\id_{\overline{\g_{x'x}}} \circ \S_{\g' x' } \circ
\id_{\g_{x'x}}} \\
\overset{\circ}{\S_{\g x}}
\end{matrix}
\right )
\ee 
(in either order vertically). Second, the composition of \emph{bigons} 
\be
\begin{matrix}
\id_{\overline{\g_{x'x}}} \circ \Delta \circ \id_{\g_{x'x}} \\
\overset{\circ}{\id_{\overline{\g_{x'x}}} \circ \S_{\g' x' } \circ
\id_{\g_{x'x}}} \\
\overset{\circ}{\S_{\g x}}
\end{matrix}
\ee
is thinly homotopic to $\S$ preserving the marked loop $\g : x \to x.$
Third, the composition of \emph{bigons}  
\be
\begin{matrix}
\S_{\g' x'} \\
\overset{\circ}{\id_{\g_{x' x}} \circ \S_{\g x } \circ
\id_{\overline{\g_{x' x}} }} \\
\overset{\circ}{\Delta}
\end{matrix}
\ee 
is thinly homotopic to $\S'$ preserving the marked loop
$\g' : x' \to x'.$  This is depicted in Figures
\ref{fig:thinhomotopysphere} and \ref{fig:capDcup}. 

\begin{figure}[h]
\centering
	\begin{picture}(0,0)
	\put(97,105){$x$}
	\put(110,25){$x'$}
	\put(115,115){$\g$}
	\put(85,117){$\S$}
	\put(180,115){$\S'$}
	\put(172,40){$\g'$}
	\put(150,55){$\tilde{\S}$}
	\put(110,80){$\g_{x' x}$}
	\end{picture}
    \includegraphics[width=0.40\textwidth]{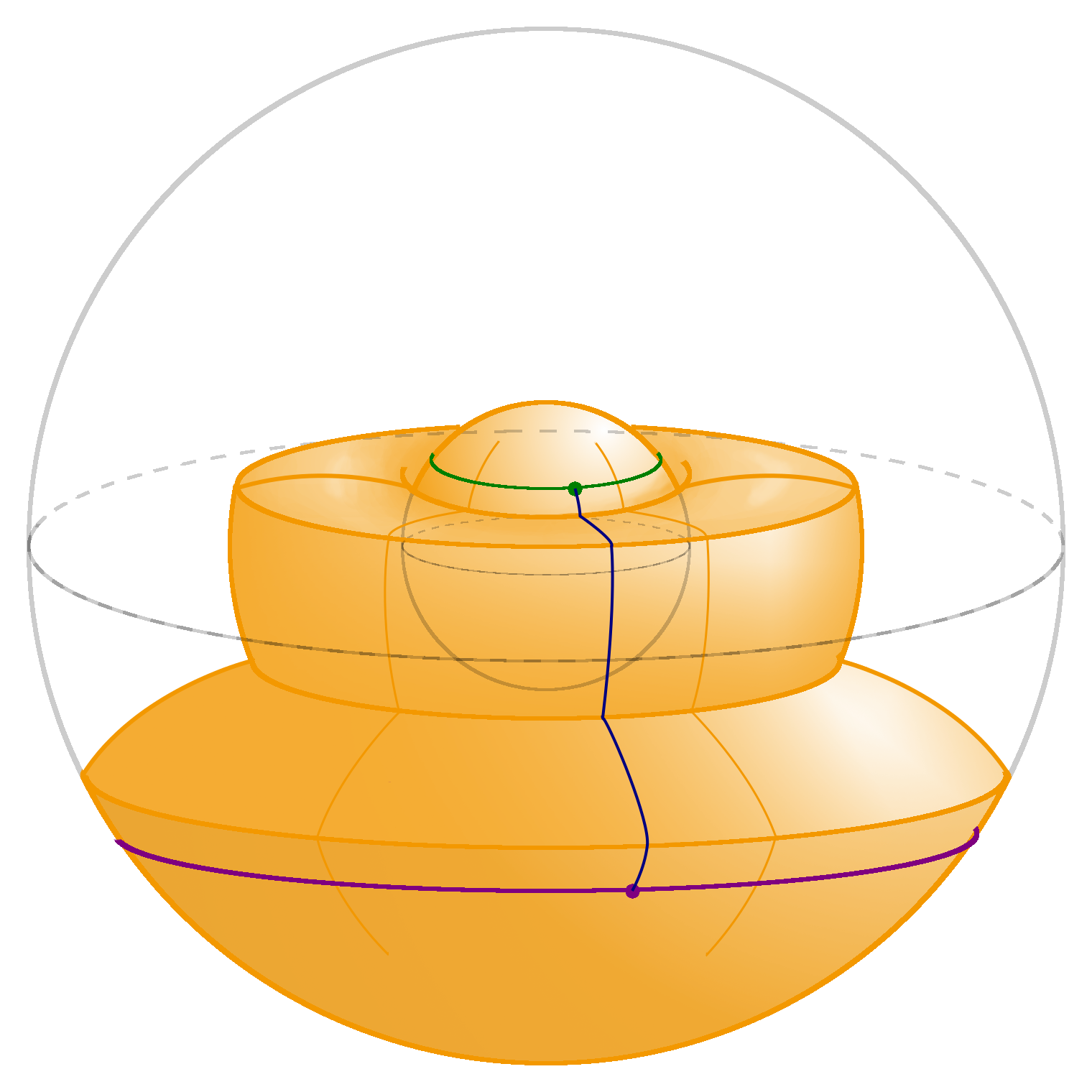}
    \vspace{-3mm}
    \caption{The domain of the homotopy 
    $h : S^{2} \times [0,1] \rightarrow M$
is drawn as a solid ball with a smaller solid ball removed from the
center. It depicts $\S$ as the inner sphere
and $\S'$ as the outer sphere. The marked loop 
$\g : x \rightarrow x$
of $\S$ is drawn on the northern hemisphere while the marked loop
$\g' : x' \rightarrow x'$ of $\S'$ is 
drawn on the southern hemisphere (by a
thin homotopy, one can always position the marked loops in this way).
The homotopy $h$ allows us to choose a sphere $\tilde{\S},$ drawn
somewhat in the middle (in orange), that contains both based loops $\g$
and $\g'$ and is thinly homotopic to both $\S$ and $\S'.$ As a result,
there exists a path $\g_{x' x} : x \rightarrow x'$ on $\tilde{\S}.$ We continue
this analysis in Figure \ref{fig:capDcup}.}
    \label{fig:thinhomotopysphere}
\end{figure} 

\begin{figure}[h]
	\centering
	\begin{picture}(0,0)
	\put(90,5){$\g$}
	\put(65,0){$x$}
	\put(30,50){$\S_{\g x}$}
	\end{picture}
	\begin{subfigure}[b]{0.24\textwidth}
		\includegraphics[width=\textwidth]{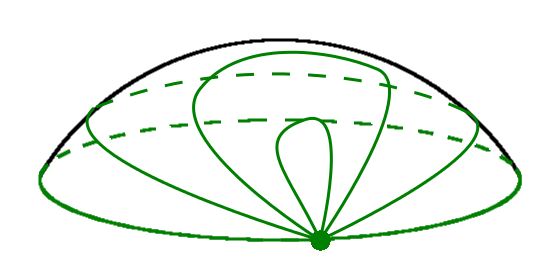}
	\end{subfigure}
	\quad
	\begin{picture}(0,0)
	\put(80,77){$\g$}
	\put(63,58){$x$}
	\put(85,1){$\g'$}
	\put(60,-5){$x'$}
	\put(47,45){$\g_{x' x}$}
	\put(107,45){$\D$}
	\end{picture}
	\begin{subfigure}[b]{0.24\textwidth}
		\includegraphics[width=\textwidth]{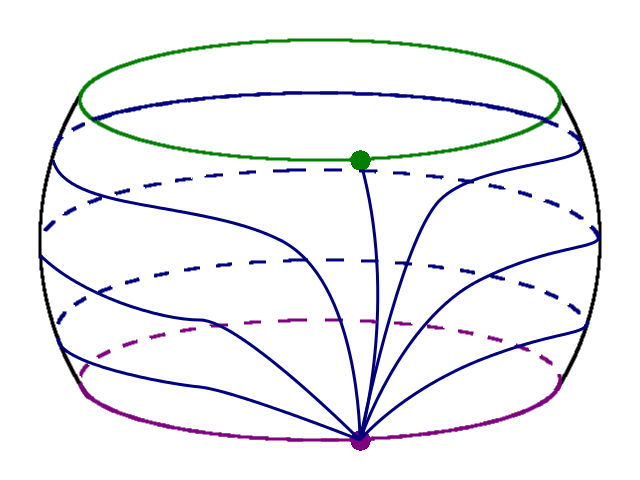}
	\end{subfigure}
	\quad
	\begin{picture}(0,0)
	\put(75,50){$\g'$}
	\put(56,25){$x'$}
	\put(25,4){$\S_{x' \g'}$}
	\end{picture}
	\begin{subfigure}[b]{0.24\textwidth}
		\includegraphics[width=\textwidth]{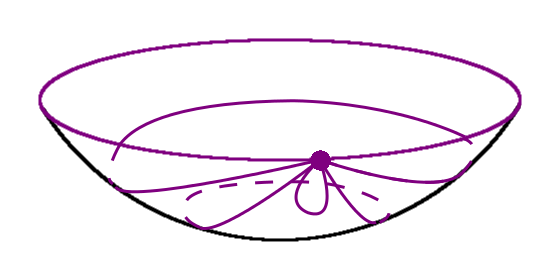}
	\end{subfigure}
    \caption{From the sphere $\tilde{\S}$ in Figure
\ref{fig:thinhomotopysphere}, the top cap defines a bigon
$\S_{\g x} : \id_{x} \Rightarrow \g,$ drawn on the left in this figure.
The path $\g_{x' x} : x \rightarrow x'$ in Figure  \ref{fig:thinhomotopysphere}
defines a bigon $\D : \g_{x' x} \circ \g \circ \overline{ \g_{x' x} }
\Rightarrow \g'$ drawn in the middle of this figure. The bottom cap
defines a bigon $\S_{x' \g'} : \g' \Rightarrow \id_{x'}$ drawn on the
right.}
    \label{fig:capDcup}
\end{figure}     

These last two equations let us write the bigon $\S$ in terms of $\S'$
and vice versa. In fact, we have
\be
\S ' = 
\begin{matrix}
\ov \Delta \\
\overset{\circ}{\id_{ \g_{x' x} } \circ \S 
\circ \id_{\overline{\g_{x'x}}}} \\
\overset{\circ}{\Delta}
\end{matrix}
\ee
up to thin homotopy preserving the marked loop $\g' : x' \to x'.$ There
is also a similar expression for $\S$ preserving the marked loop
$\g: x \to x.$  

The above argument says that given two marked spheres, with possibly
different markings, that are thinly homotopic \emph{without} preserving
the markings, one can always choose a representative of such a thin
sphere in $M$ with two marked loops so that the associated two
\emph{marked} spheres (coming from starting at either marking) are
thinly homotopic to the original two with a thin homotopy that preserves
the marking. More precisely, we proved the following. 

\begin{lem}
\label{lem:spheremarking}
Let $\mathfrak{m}, \mathfrak{m}' : S^{2} M \to \mathfrak{S}^2 M$ be two
markings. Let $[ \S ] \in S^{2} M$ be a thin sphere in $M$ and write
$[\g] : x \to x$ for $\mathrm{ev}_{1} ( \mathfrak{m} ([\S]))$ and
$[\g']:x'\to x'$ for $\mathrm{ev}_{1} ( \mathfrak{m}' ([\S])).$
Then, there exists representatives $\g$ and $\g'$ of $[\g]$ and $[\g'],$
respectively,
a path $\g_{x' x} : x \to x'$ with sitting instants and three bigons
$\S_{\g x} : \id_{x} \Rightarrow \g,$ $\S_{x' \g'} : \g' \Rightarrow
\id_{x'},$ and $\Delta : \g_{x'x} \circ \g \circ \overline{\g_{x'x}}
\Rightarrow \g',$ such that the following three properties hold (see
Figure \ref{fig:basedspheres}). 
\begin{enumerate}[i)]
\item
The vertical composition of  $\S_{\g' x'},$ $\id_{\g_{x' x}} \circ
\S_{\g x } \circ \id_{\overline{\g_{x' x}} },$ and $\Delta$ in the order
given (or a cyclic permutation of this order) and forgetting the marking
is a representative of $[\S].$  
\item
$\left ( \begin{smallmatrix}
\id_{\overline{\g_{x'x}}} \circ \Delta \circ \id_{\g_{x'x}} \\
\overset{\circ}{\id_{\overline{\g_{x'x}}} \circ \S_{\g' x' }
\circ \id_{\g_{x'x}}} \\
\overset{\circ}{\S_{\g x}}
\end{smallmatrix} \right )$ is a representative of $\mathfrak{m}([\S])$
as a bigon.  
\item
$\left ( \begin{smallmatrix}
\S_{\g' x'} \\
\overset{\circ}{\id_{\g_{x' x}}\circ\S_{\g x }\circ\id_{\overline{\g_{x'x}}}}\\
\overset{\circ}{\Delta}
\end{smallmatrix} \right )$ is a representative of $\mathfrak{m}'([\S])$
as a bigon.  
\end{enumerate}
\end{lem}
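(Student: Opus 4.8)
The plan is to reduce this to the content already assembled in the paragraphs preceding the statement, which construct exactly the required data geometrically. First I would invoke Definition \ref{defn:thinsphere}: since $[\S] \in S^2 M$, the two chosen representatives $\S \colon \g \Rightarrow \g$ (representing $\mathfrak{m}([\S])$) and $\S' \colon \g' \Rightarrow \g'$ (representing $\mathfrak{m}'([\S])$) lie in the same thin free homotopy class, so there is a thin (rank $\le 2$) homotopy $h \colon S^2 \times [0,1] \to M$ with $h(\,\cdot\,,s) = \S$ for $s \le \e$ and $h(\,\cdot\,,s) = \S'$ for $s \ge 1-\e$. The geometric heart of the argument is to position the two marked loops $\g$ and $\g'$ on a single intermediate sphere $\tilde\S$ interpolating between $\S$ and $\S'$ in $h$, placing $\g$ in the northern hemisphere and $\g'$ in the southern, as illustrated in Figure \ref{fig:thinhomotopysphere}. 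Because a thin homotopy sweeps out no volume, one can reparametrize $h$ so that this positioning is achieved without altering the thin homotopy classes involved.

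Next I would extract the path and the three bigons from $\tilde\S$. The two marked points $x = s(\g)$ and $x' = s(\g')$ are connected by a path $\g_{x'x} \colon x \to x'$ running along $\tilde\S$ from the northern to the southern hemisphere, taken with sitting instants by the usual reparametrization. Cutting $\tilde\S$ along $\g$, $\g'$, and $\g_{x'x}$ yields three pieces: the top cap is a bigon $\S_{\g x} \colon \id_x \Rightarrow \g$, the bottom cap is a bigon $\S_{x'\g'} \colon \g' \Rightarrow \id_{x'}$, and the middle band gives $\Delta \colon \g_{x'x} \circ \g \circ \overline{\g_{x'x}} \Rightarrow \g'$, exactly as drawn in Figure \ref{fig:capDcup}. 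Property i) then follows because reassembling these three pieces by vertical composition in the stated order (or a cyclic permutation) and applying the forgetful map $f$ reconstructs $\tilde\S$, which is thinly homotopic to $\S$ and hence represents $[\S]$.

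For properties ii) and iii) I would verify that the two indicated vertical composites of bigons recover $\S$ and $\S'$ respectively, up to thin homotopy preserving the relevant marked loop. The composite in ii) is a bigon $\g \Rightarrow \g$ (one checks the source and target match by tracking the whiskering $\id_{\overline{\g_{x'x}}} \circ (-) \circ \id_{\g_{x'x}}$ through each factor), and the thin homotopy $h$ restricted appropriately furnishes a marking-preserving thin homotopy from this composite to $\S$; symmetrically for iii) and $\S'$. The only genuine subtlety — and the step I expect to be the main obstacle — is establishing that all of these compositions and the interpolating homotopy can be arranged to respect sitting instants and to keep the relevant marked loop fixed throughout, so that the resulting homotopies genuinely preserve the marking rather than merely the free class. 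This is precisely where the rank $\le 2$ condition of Definition \ref{defn:thinsphere} and the sitting-instant conventions are used, and it mirrors the one-dimensional argument of Lemma \ref{lem:choosethinmarkedloop}; since that analogous reduction was carried out in detail for loops, I would cite the parallel reasoning and note that the extra homotopy coordinate poses no new difficulty beyond bookkeeping, which is why the lemma's own proof is stated as already completed in the preceding discussion.
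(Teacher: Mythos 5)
Your proposal is correct and follows essentially the same route as the paper: the paper's own proof is precisely the discussion preceding the lemma, which uses the unmarked thin homotopy $h : S^2 \times [0,1] \to M$ to select an intermediate sphere $\tilde{\S}$ carrying both marked loops, cuts it into the top cap $\S_{\g x},$ the middle band $\Delta,$ and the bottom cap $\S_{x'\g'},$ and checks the three properties up to marking-preserving thin homotopy (Figures \ref{fig:thinhomotopysphere} and \ref{fig:capDcup}). Your handling of the sitting-instant and marking-preservation subtlety by analogy with Lemma \ref{lem:choosethinmarkedloop} matches the paper's level of detail as well.
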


\begin{figure}[h]
	\centering
	\begin{picture}(0,0)
	\put(95,85){$\g$}
	\put(76,80){$x$}
	\put(30,54){$\S$}
	\end{picture}
	\begin{subfigure}[b]{0.24\textwidth}
		\includegraphics[width=\textwidth]{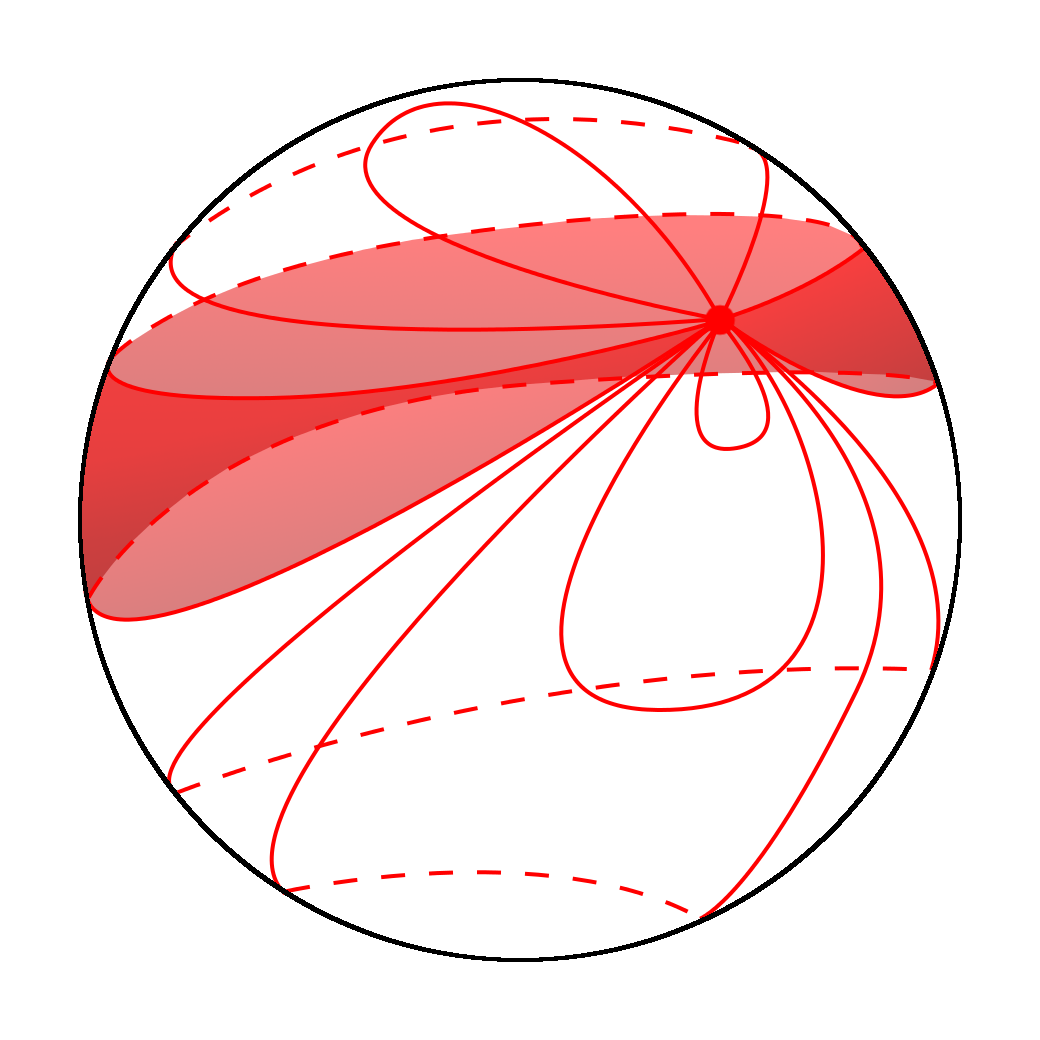}
	\end{subfigure}
	\quad
	\begin{picture}(0,0)
	\put(95,85){$\g$}
	\put(77,67){$x$}
	\put(75,5){$\g'$}
	\put(35,40){$x'$}
	\put(60,53){$\g_{x'x}$}
	\end{picture}
	\begin{subfigure}[b]{0.24\textwidth}
		\includegraphics[width=\textwidth]{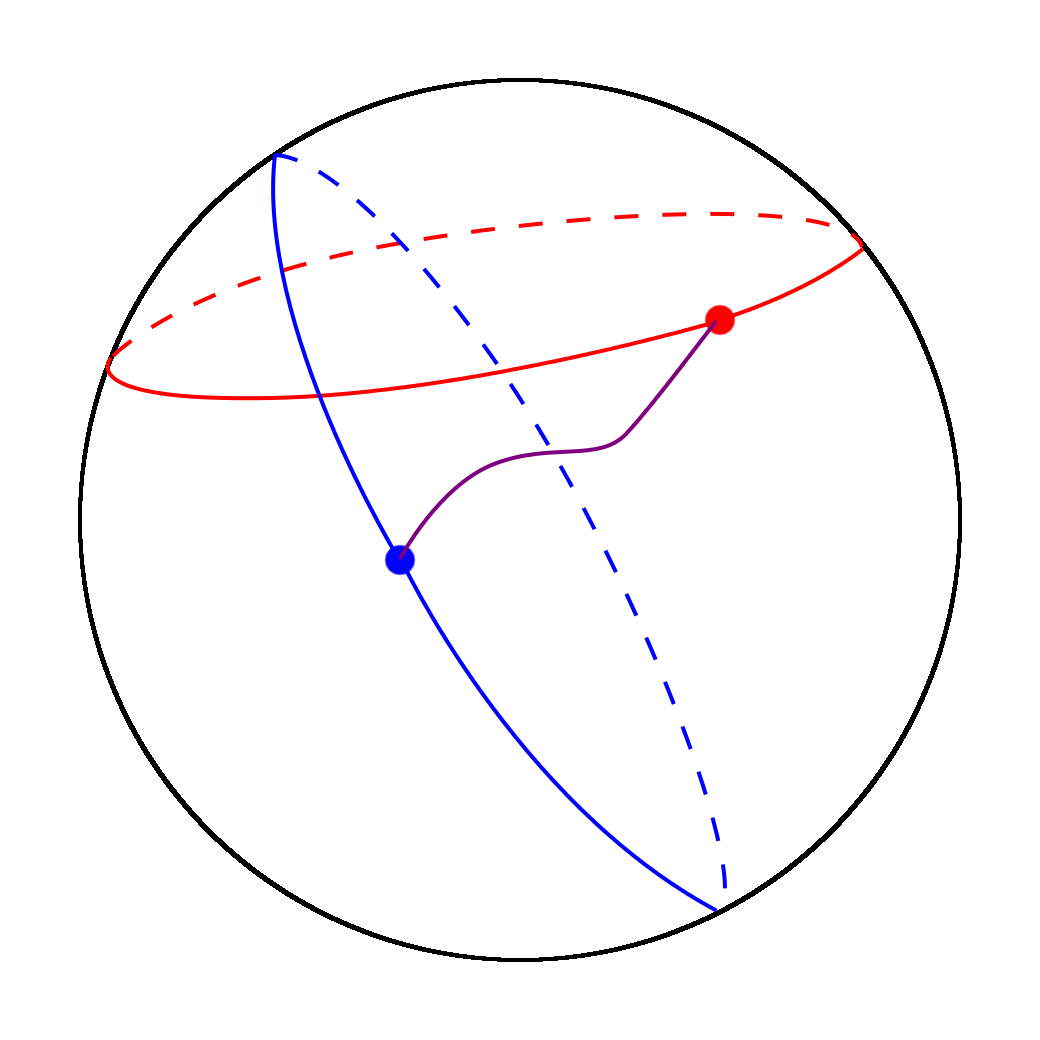}
	\end{subfigure}
	\quad
	\begin{picture}(0,0)
	\put(95,85){$\g$}
	\put(75,5){$\g'$}
	\put(35,40){$x'$}
	\put(105,45){$\D$}
	\end{picture}
	\begin{subfigure}[b]{0.24\textwidth}
		\includegraphics[width=\textwidth]{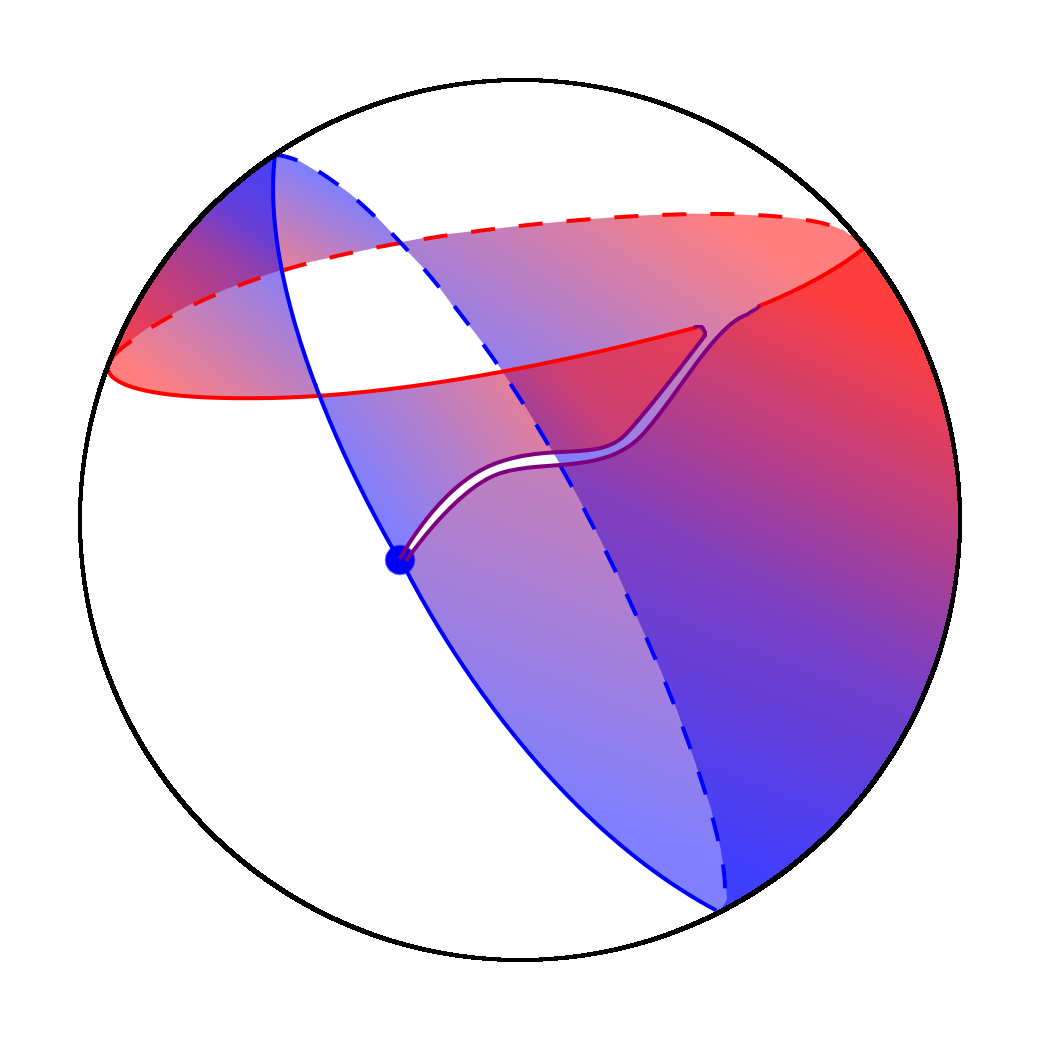}
	\end{subfigure}
	\vspace{-3mm}
	\caption{For every thin sphere and two markings, there exists a
representative with a decomposition as in Lemma \ref{lem:spheremarking}.
On the left is a bigon $\S : \g \Rightarrow \g$ with $\g : x \rightarrow x.$ The
shaded region depicts the surface swept out between $s=0$ and some small
$s.$ In the middle is another bigon $\S' : \g' \Rightarrow \g'$ with
$\g' : x' \rightarrow x'$ and a path $\g_{x' x} : x \rightarrow x'$ with sitting
instants. On the right is a bigon $\D : \g_{x'x} \circ \g \circ
\overline{\g_{x'x}}  \Rightarrow \g'$ relating the two marked loops as
in (\ref{eq:basedspheres}). 
	 }
	\label{fig:basedspheres}
\end{figure}

Therefore, without loss of generality, we can choose a \emph{single}
representative $\tilde{\S}$ of a thin free sphere $[\S]$ with a
decomposition as in the Lemma. We use $\S$ to denote the bigon in ii) of
Lemma \ref{lem:spheremarking} and $\S'$ to denote the bigon in iii). The
two are related by  
\be
\xy0;/r.15pc/:
(-40,0)*+{x'}="d";
(0,0)*+{x'}="c";
{\ar@/_1.5pc/"c";"d"_{\g' }};
{\ar@/^1.5pc/"c";"d"^{\g' }};
(-20,10)*+{}="f";
(-20,-10)*+{}="g";
{\ar@{=>}"f";"g"^{ \S' }};
\endxy
\quad 
=
\quad
\xy0;/r.15pc/:
(-70,0)*+{x'}="left";
(30,0)*+{x'}="right";
(-40,0)*+{x}="d";
(0,0)*+{x}="c";
{\ar@/_1.5pc/"c";"d"_{\g }};
{\ar@/^1.5pc/"c";"d"^{\g }};
(-20,10)*+{}="f";
(-20,-10)*+{}="g";
{\ar@{=>}"f";"g"^{ \S }};
{\ar@/_4pc/"right";"left"_{ \g' }};
{\ar@/^4pc/"right";"left"^{ \g' }};
{\ar"right";"c"_{ \overline{\g_{x'x}} }};
{\ar"d";"left"_{\g_{x'x}}};
{\ar@{=>}(-20,25);(-20,16)^{\ov \D}};
{\ar@{=>}(-20,-16);(-20,-25)^{\D}};
\endxy
\ee
i.e.
\be
\label{eq:basedspheres}
\S_{y} = \begin{matrix} \ov \D \\ \overset{\circ}{\id_{\g_{x'x}} \circ
\S \circ \id_{\overline{\g_{x'x}}}} \\ \overset{\circ}{\D}
\end{matrix} . 
\ee
By functoriality of the transport 2-functor $\scripty{t}_{F},$ we have 
\be
\begin{split}
\mathrm{hol}_{\scripty{t}}^{F} (\S') &= p_{H} \left(\scripty{t}_{F}
(\S' ) \right) \\
	&= p_{H} \left (  \begin{smallmatrix} 
		\scripty{t}_{F} ( \ov \D ) \\
		C^{-1} \\
		\id_{\scripty{t}_{F} ( \g_{yx} ) } \scripty{t}_{F}
( \S_{x} ) \id_{\scripty{t}_{F} ( \overline{ \g_{yx} } ) } \\
		C \\
		\scripty{t}_{F} (  \D )
		\end{smallmatrix}
		\right ) 
\end{split}
,
\ee
where $C : \scripty{t}_{F} ( \g_{x'x} ) \scripty{t}_{F} ( \g_{x} )
\scripty{t}_{F} ( \overline{ \g_{x'x} } ) \Rightarrow \scripty{t}_{F}
( \g_{x'x} \circ \g_{x} \circ \overline{ \g_{x'x} } )$ is  a combination
of  compositors and associators. Writing out this composition in the
2-group $\mathcal{B} \mathfrak{G}$   gives
\be
\begin{matrix}
\left( ( p_{H} ( \scripty{t}_{F} (  \D ) ) )^{-1} , \scripty{t}_{F}
( \g' )\right ) \\
\left( p_{H} ( C )^{-1} , \scripty{t}_{F} ( \g_{x'x} \circ \g \circ
\overline{ \g_{x'x} } )  \right) \\
\left( e , \scripty{t}_{F} ( \g_{x'x} ) \right) \left(
\mathrm{hol}_{\scripty{t}}^{F} ( \S ) , \scripty{t}_{F} (\g) )
(e,\scripty{t}_{F} ( \overline{ \g_{x'x} } ) \right) \\
\left( p_{H} ( C ) , \scripty{t}_{F} ( \g_{x'x} ) \scripty{t}_{F} (\g)
\scripty{t}_{F} ( \overline{ \g_{x'x} } ) \right) \\
\left( p_{H} ( \scripty{t}_{F} ( \D ) ) , \scripty{t}_{F} ( \g_{x'x}
\circ \g \circ \overline{ \g_{x'x} } ) \right )
\end{matrix}
\quad.
\ee
Multiplying these results out using the rules of 2-group multiplication
(see equations (\ref{eq:vertical}) and (\ref{eq:horizontal})) and taking
the $H$ component gives
\be
\begin{split}
\mathrm{hol}_{\scripty{t}}^{F} (\S') &= 
p_{H} ( \scripty{t}_{F} ( \D ) ) p_{H} ( C ) \a_{ \scripty{t}_{F}
(\g_{x'x}) } \left ( \mathrm{hol}_{\scripty{t}}^{F} ( \S ) \right )
p_{H} (C)^{-1} \left ( p_{H} \left ( \scripty{t}_{F} ( \D ) \right )
\right )^{-1} \\
&= \a_{ \t (  p_{H} \left ( \scripty{t}_{F} ( \D ) \right ) p_{H} (C) )
\scripty{t}_{F} ( \g_{x'x} ) } \left (  \mathrm{hol}_{\scripty{t}}^{F}
( \S ) \right) .
\end{split}
\ee
This result says that the 2-holonomy changes by $\a$-conjugation under a
change of marking for a thin sphere. 

\item
Now suppose that $\h : F \to F'$ is a 1-morphism of transport
2-functors. Then, for every thin path $\g : x \to y$  we have a
2-isomorphism  (remember that $\scripty{t}_{F'}(x) = i(\bullet)$ and
$\scripty{t}_{F}(x) = i(\bullet)$ for all $x \in M$)
\be
\xy 0;/r.15pc/:
(20,-15)*+{i(\bullet)}="3";
(-20,-15 )*+{i(\bullet)}="4";
(-20,15)*+{i(\bullet)}="2";
(20,15)*+{i(\bullet)}="1";
{\ar_{\scripty{t}_{\eta} (x) } "1";"2" };
{\ar_{\scripty{t}_{F'} (\g) } "2";"4" };
{\ar^{\scripty{t}_{F} (\g) } "1";"3"};
{\ar^{\scripty{t}_{\eta} (y)} "3";"4"};
{\ar@{=>}|-{\scripty{t}_{\eta} (\g)} "3";"2"};
\endxy
\ee
satisfying the condition that for any thin bigon $\S : \g \Rightarrow
\de,$ with $\de : x \to y$ another path, the diagram 
\be
\xy 0;/r.25pc/:
(20,-15)*+{i(\bullet)}="3";
(-30,-15 )*+{i(\bullet)}="4";
(-30,15)*+{i(\bullet)}="2";
(20,15)*+{i(\bullet)}="1";
{\ar_{\scripty{t}_{\eta} (x) } "1";"2" };
{\ar@/^2pc/^{\scripty{t}_{F'} (\g) } "2";"4" };
{\ar@/_2pc/_{\scripty{t}_{F'} (\de) } "2";"4" };
{\ar@/^2pc/^{\scripty{t}_{F} (\g) } "1";"3"};
{\ar@{-->}@/_2pc/_{\scripty{t}_{F} (\de) } "1";"3"};
{\ar^{\scripty{t}_{\eta} (y)} "3";"4"};
{\ar@{=>}|-{\scripty{t}_{\eta} (\g)} "3";"2"};
{\ar@{==>}_{\scripty{t}_{F} (\S)} (25,0);(15,0)};
{\ar@{=>}_{\scripty{t}_{F'} (\S)} (-25,0);(-35,0)};
\endxy
\ee
commutes. In this diagram, the $\scripty{t}_{\eta} (\de)$ in the back is
not shown. This diagram commuting means that  
\be
\begin{matrix}
\scripty{t}_{\eta} (\g) \\
\overset{\circ}{\scripty{t}_{F'} (\S ) \id_{\scripty{t}_{\eta}(x)}}
\end{matrix} 
\quad = \quad
\begin{matrix}
\id_{\scripty{t}_{\eta} (y) } \scripty{t}_{F} (\S) \\
\overset{\circ}{\scripty{t}_{\eta} (\de)}
\end{matrix}
\ee
and writing this out using group elements gives 
\be
\qquad
\begin{matrix} 
\left( p_{H} ( \scripty{t}_{\eta} (\g) ) , \scripty{t}_{\eta} (y)
\scripty{t}_{F} (\g) \right) \\
( p_{H} ( \scripty{t}_{F'} (\S) ),\scripty{t}_{F'} (\g) )
(e,\scripty{t}_{\eta} (x)   ) 
\end{matrix} 
\quad = \quad 
\begin{matrix}
(e,\scripty{t}_{\eta} (y) ) ( p_{H} (\scripty{t}_{F} (\S) ),
\scripty{t}_{F} (\g) ) \\
( p_{H} ( \scripty{t}_{\eta} (\de) ), \scripty{t}_{\eta} (y)
\scripty{t}_{F} (\de) )   
\end{matrix}
, \;
\ee
which after evaluating both sides and projecting to $H$  yields
\be
p_{H} (  \scripty{t}_{F'} (\S) )  p_{H} ( \scripty{t}_{\eta} (\g) )
= p_{H} ( \scripty{t}_{\eta} (\de) ) \a_{  \scripty{t}_{\eta} (y) }
\left ( p_{H} ( \scripty{t}_{F} (\S) ) \right ) .
\ee
Solving for $p_{H} (  \scripty{t}_{F'} (\S) )$ gives
\be
p_{H} (\scripty{t}_{F'} (\S) ) = p_{H} ( \scripty{t}_{\eta} (\de) )
\a_{\scripty{t}_{\eta} (y) } \left ( p_{H} (\scripty{t}_{F} (\S) )
\right ) p_{H} ( \scripty{t}_{\eta} (\g)  )^{-1}. 
\ee
Now, after specializing to the case where the source and targets of
$\S$ are all the same, i.e. $y = x$ and $\de = \g,$ so that we are
comparing this transport along thin marked spheres, this reduces to 
\be
\begin{split}
\mathrm{hol}_{\scripty{t}}^{F'} (\S) &= p_{H} (\scripty{t}_{\eta} (\g) )
\a_{\scripty{t}_{\eta} (x) } \left ( \mathrm{hol}_{\scripty{t}}^{F} (\S)
\right ) p_{H} ( \scripty{t}_{\eta} (\g)  )^{-1} \\
	&= \a_{ \t ( p_{H} ( \scripty{t}_{\eta} (\g) ) )
\scripty{t}_{\eta} (x)  } \left ( \mathrm{hol}_{\scripty{t}}^{F} (\S)
\right ) .
\end{split}
\ee
This says that $\mathrm{hol}^{F}_{\scripty{t}}$ when restricted to thin
marked spheres changes under $\a$-conjugation when the functor $F$ is
changed to a gauge equivalent one $F'.$ 

\item
Suppose that another 2-group transport extraction procedure
$\scripty{t} \; \; '$ was chosen.  Any two such procedures are
pseudo-naturally equivalent, i.e. if $\scripty{t} \; \; '$ was another
such choice, then there exists a weakly invertible pseudonatural
transformation $\scripty{s} : \scripty{t} \; \; ' \Rightarrow
\scripty{t} \; .$ This follows from the fact that each 2-functor in the
composition of 2-functors that define $\scripty{t} \; \;$ is an
equivalence of 2-categories and weak inverses are unique up to
pseudo-natural equivalences. Therefore, for every transport 2-functor
$F$ we have a 1-morphism of transport functors $\scripty{s}_{F} :
\scripty{t}_{F} {} '  \to \scripty{t}_{F}.$ Of course, we also have a
map assigning to every 1-morphism of transport functors $\eta : F \to
F'$ a 2-morphism $\scripty{s}_{\eta} : \scripty{s}_{F} \Rightarrow
\scripty{s}_{F'}$ satisfying naturality, but we will not need this fact
for the following observation because we are dealing with strict Lie
2-groups. The 1-morphism of transport functors $\scripty{s}_{F}$ assigns
to every point $x \in M$ a morphism $\scripty{s}_{F} (x) :
\scripty{t}_{F} {} ' (x) \to \scripty{t}_{F}(x)$ and to every path
$\g : x \to y$ a 2-isomorphism 
\be
\xy 0;/r.15pc/:
(20,-15)*+{i(\bullet)}="3";
(-20,-15 )*+{i(\bullet)}="4";
(-20,15)*+{i(\bullet)}="2";
(20,15)*+{i(\bullet)}="1";
{\ar_{\scripty{s}_{F} (x) } "1";"2" };
{\ar_{\scripty{t}_{F} (\g) } "2";"4" };
{\ar^{\scripty{t}_{F} {} ' (\g) } "1";"3"};
{\ar^{\scripty{s}_{F} (y)} "3";"4"};
{\ar@{=>}|-{\scripty{s}_{F} (\g)} "3";"2"};
\endxy
\ee
satisfying the condition that for a thin bigon $\S : \g \to \de$ between
two thin paths $\g, \de : x \to y$ the diagram 
\be
\xy 0;/r.25pc/:
(20,-15)*+{i(\bullet)}="3";
(-30,-15 )*+{i(\bullet)}="4";
(-30,15)*+{i(\bullet)}="2";
(20,15)*+{i(\bullet)}="1";
{\ar_{\scripty{s}_{F} (x) } "1";"2" };
{\ar@/^2pc/^{\scripty{t}_{F} (\g) } "2";"4" };
{\ar@/_2pc/_{\scripty{t}_{F} (\de) } "2";"4" };
{\ar@/^2pc/^{\scripty{t}_{F} {}' (\g) } "1";"3"};
{\ar@{-->}@/_2pc/_{\scripty{t}_{F} {}' (\de) } "1";"3"};
{\ar^{\scripty{s}_{F} (y)} "3";"4"};
{\ar@{=>}|-{\scripty{s}_{F} (\g)} "3";"2"};
{\ar@{==>}_{\scripty{t}_{F} {}' (\S)} (25,0);(15,0)};
{\ar@{=>}_{\scripty{t}_{F} (\S)} (-25,0);(-35,0)};
\endxy
\ee
commutes. This result is very similar to the previous one and is given
by
\be
\mathrm{hol}_{\scripty{t}  \; {}' }^{F}(\S)
= \a_{ \t ( p_{H} ( \scripty{s}_{F} (\g) ) ) \scripty{s}_{F} (x) }
\left ( \mathrm{hol}_{\scripty{t}}^{F} (\S)  \right ) , 
\ee
which is again just $\a$-conjugation. 
\end{enumerate}

In conclusion, when restricted to a sphere, 2-holonomy changes under
$\a$-conjugation in each of the three situations described above. This
should therefore also be called gauge covariance as in the case for
loops. This motivates the following definition. 

\begin{defn}
\label{defn:alphaconj}
Let $(H, G, \t, \a)$ be a crossed module. The \emph{\uline{$\a$-conjugacy
classes in $H$}}, denoted by $H / \a,$ is defined to be the quotient of
$H$ under the equivalence relation 
\be
h \sim h' \iff \text{ there exists a $g \in G$ such that } h
= \a_{g} (h'). 
\ee
Denote the quotient map by $q : H \to H / \a.$ 
\end{defn}

As before, we have a similar theorem for gauge-invariance of 2-holonomy.

\begin{thm}
\label{thm:invariancesurfaceholonomy}
Let $M$ be a smooth manifold, $\mathcal{B}\mathfrak{G}$ a Lie 2-group,
$T$ a 2-category, and suppose that $i : \mathcal{B} \mathfrak{G} \to T$
is a full and faithful 2-functor. Let $F$ be a transport 2-functor and
$\scripty{t} \;$ a 2-group-valued transport extraction. Let $S^{2} M,
\mathfrak{S}^2 M, \mathfrak{m}, \mathrm{hol}^{F}_{\scripty{t}}$ and $q$
be defined as above. Then the composition 
\be
H / \a \xleftarrow{q} H \xleftarrow{\mathrm{hol}^{F}_{\scripty{t}}}
\mathfrak{S}^2 M \xleftarrow{\mathfrak{m}} S^{2} M
\ee
is 
\begin{enumerate}[i)]
\item
independent of $\mathfrak{m},$ 
\item
independent of the equivalence class of $F,$
\item
and independent of the equivalence class of $\scripty{t} \; .$ 
\end{enumerate}
\end{thm}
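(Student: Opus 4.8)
The plan is to prove each of the three invariance statements by reducing it to the $\a$-conjugation covariance results established in the three enumerated analyses immediately preceding the theorem, and then showing that $\a$-conjugation is precisely quotiented out by the map $q : H \to H/\a$. The overarching structure mirrors Theorem \ref{thm:gaugeinvarianthol} exactly, except that ordinary conjugation in $G$ is replaced by $\a$-conjugation in $H$. Before starting, I would remark that the key algebraic fact is the following: by Definition \ref{defn:alphaconj}, two elements $h, h' \in H$ satisfy $q(h) = q(h')$ if and only if $h = \a_g(h')$ for some $g \in G$. Thus any transformation of $\mathrm{hol}^{F}_{\scripty{t}}$ of the form $h \mapsto \a_{g}(h)$ for a $G$-element $g$ becomes the identity after composing with $q$.

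First I would handle statement i), independence of the marking $\mathfrak{m}$. The analysis in item i) above shows that for two markings $\mathfrak{m}, \mathfrak{m}'$ of a thin sphere $[\S] \in S^2 M$, one has
\be
\mathrm{hol}_{\scripty{t}}^{F} (\S') = \a_{ \t (  p_{H} ( \scripty{t}_{F} ( \D ) ) p_{H} (C) ) \scripty{t}_{F} ( \g_{x'x} ) } \left (  \mathrm{hol}_{\scripty{t}}^{F} ( \S ) \right),
\ee
where $\S$ and $\S'$ represent $\mathfrak{m}([\S])$ and $\mathfrak{m}'([\S])$ respectively. Since the right-hand side is $\a_{g}$ applied to $\mathrm{hol}_{\scripty{t}}^{F}(\S)$ with $g := \t(p_H(\scripty{t}_F(\D)) p_H(C)) \scripty{t}_F(\g_{x'x}) \in G$, applying $q$ gives $q(\mathrm{hol}_{\scripty{t}}^{F}(\S')) = q(\mathrm{hol}_{\scripty{t}}^{F}(\S))$. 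Hence $q \circ \mathrm{hol}^{F}_{\scripty{t}} \circ \mathfrak{m}$ does not depend on the choice of $\mathfrak{m}$. Statements ii) and iii) follow the identical pattern: item ii) established that a $1$-morphism $\eta : F \to F'$ yields $\mathrm{hol}_{\scripty{t}}^{F'}(\S) = \a_{\t(p_H(\scripty{t}_\eta(\g)))\scripty{t}_\eta(x)}(\mathrm{hol}_{\scripty{t}}^{F}(\S))$, and item iii) established that a second extraction $\scripty{t}\;{}'$ yields $\mathrm{hol}_{\scripty{t}\;{}'}^{F}(\S) = \a_{\t(p_H(\scripty{s}_F(\g)))\scripty{s}_F(x)}(\mathrm{hol}_{\scripty{t}}^{F}(\S))$; in both cases the transformation is an $\a_g$, so it is annihilated by $q$.

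The one point requiring genuine care—and the step I expect to be the main obstacle—is verifying that these statements compose correctly as claims about \emph{equivalence classes} rather than about individual representatives. For ii), I must check that $F \simeq F'$ (equivalence of transport $2$-functors, i.e. connected by a pseudonatural $1$-morphism $\eta$) is what the covariance formula actually uses, and that the formula is symmetric enough to give a well-defined descent to equivalence classes; since $\eta$ is weakly invertible, the relation is symmetric, and transitivity follows by composing the $\a_g$ transformations, which remain $\a_g$ transformations because $\a$ is a group action. For i) and iii), the subtlety is that $\mathfrak{m}([\S])$ is only defined up to the thin-homotopy-preserving-marking equivalence and that different choices of the auxiliary data ($\g_{x'x}$, $\D$, the compositor $C$) in Lemma \ref{lem:spheremarking} could in principle change $\mathrm{hol}_{\scripty{t}}^{F}$; but again each such change is itself an $\a$-conjugation, so the composite with $q$ is unaffected. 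I would close by noting that combining i), ii), and iii) shows the composition $q \circ \mathrm{hol}^{F}_{\scripty{t}} \circ \mathfrak{m}$ is independent of all three choices, which is exactly the assertion of the theorem.
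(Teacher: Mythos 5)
Your proposal is correct and follows essentially the same route as the paper: the three enumerated analyses preceding the theorem establish that each change of marking, of representative of $[F]$, and of extraction $\scripty{t}\;$ alters $\mathrm{hol}^{F}_{\scripty{t}}$ by an $\a$-conjugation, and composing with $q : H \to H/\a$ annihilates exactly these transformations. Your additional check that the relations descend coherently to equivalence classes (via weak invertibility of $\eta$ and the group-action property of $\a$) is a sound elaboration of the same argument rather than a different approach.
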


This theorem lets us make the following definition.  

\begin{defn}
\label{defn:gaugeinv2hol}
Let $[F]$ be an equivalence class of transport 2-functors.
The \emph{\uline{gauge invariant 2-holonomy}} of $[F]$ is defined to be the
smooth map in the previous theorem, namely 
\be
\mathrm{hol}^{[F]} 
:= q \circ \mathrm{hol}^{F}_{\scripty{t}} \circ \mathfrak{m} : S^{2} M
\to H / \a
\ee
where $F$ is a representative of $[F],$ $\scripty{t} \;$ is a
group-valued transport extraction, and $\mathfrak{m} : S^{2} M \to
\mathfrak{S}^2 M$ is a marking for thin spheres in $M.$ Let $\S \in
S^{2} M.$ If $\mathrm{hol}^{[F]} (\S)$ is such that
$q^{-1} ( \mathrm{hol}^{[F]} (\S))$ is a single element, we will say
that $\mathrm{hol}^{[F]} (\S)$ is \emph{\uline{gauge invariant}} and
abusively write $\mathrm{hol}^{[F]} (\S)$ instead of
$q^{-1} (\mathrm{hol}^{[F]} (\S)).$
\end{defn}

\begin{rmk}
A result analogous to Theorem \ref{thm:invariancesurfaceholonomy} was
obtained in the context of a cubical category approach to 2-bundles in
\cite{MP}. 
\end{rmk}

We now compare this result to that in \cite{SW4}, where
the \emph{reduced group} associated to a 2-group was introduced in
order to obtain a well-defined 2-holonomy independent of the marking as
well as the representative of the transport functor used. 

\begin{defn}
\label{defn:red2group}
Let $\mathcal{B}\mathfrak{G}$ be a 2-group with associated crossed
module $(H, G, \t, \a).$ The \emph{\uline{reduced group of
$\mathcal{B}\mathfrak{G}$}} is $\mathfrak{G}_{\mathrm{red}}
:= H / [G, H],$ where $[G, H] = \< h^{-1} \a_{g} (h) \ |  \ g \in G, h
\in H \>,$ i.e. the subgroup of $H$ generated by elements of the form
$h^{-1} \a_{g} (h).$ 
\end{defn}

The analogue of the reduced 2-group in the case of ordinary holonomy for
principal $G$ bundles with connection is $G / [G, G],$ the
abelianization of $G.$ Recall, $[G, G] = \< g g' g^{-1} g'^{-1} \ | \ g,
g' \in G \>$ is a normal subgroup, called the
\emph{commutator subgroup}, of $G$ so the quotient is an abelian group,
in fact in a universal sense. 

\begin{lem}
\label{lem:conjugacyclasses}
Let $G$ be a group, $[G, G]$ its commutator subgroup, and
$G/ \mathrm{Inn}(G)$ conjugacy classes in $G.$ The map
$G / \mathrm{Inn}(G) \to G/ [G,G]$ given by taking a conjugacy class
$[g],$ choosing a representative, and projecting to the quotient
$G / [G, G],$ is 
\begin{enumerate}[i)]
\item
well-defined,  
\item
surjective, 
\item
and need not be injective in general.
\end{enumerate} 
\end{lem}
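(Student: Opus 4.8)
The plan is to recognize that the map in the lemma is induced by the canonical abelianization homomorphism, so that parts (i) and (ii) are purely formal and only (iii) carries content. Write $\kappa : G \to G/\mathrm{Inn}(G)$ for the quotient of $G$ onto its set of conjugacy classes, $p : G \to G/[G,G]$ for the abelianization homomorphism, and $\mu : G/\mathrm{Inn}(G) \to G/[G,G]$ for the map under consideration. The single organizing identity I would establish first is the factorization $\mu \circ \kappa = p$, since both of the easy assertions fall out of it immediately.

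For well-definedness (i), I would check that $p$ is constant on conjugacy classes: if $g' = h g h^{-1}$, then, because $G/[G,G]$ is \emph{abelian}, $p(g') = p(h)\, p(g)\, p(h)^{-1} = p(g)$. Thus $p$ descends uniquely through $\kappa$ to a map $\mu$ with $\mu \circ \kappa = p$, and this $\mu$ is precisely the prescription ``choose a representative of $[g]$ and project to $G/[G,G]$.'' For surjectivity (ii), I would observe that $p$ is surjective (being a quotient homomorphism onto a group) and that $p = \mu \circ \kappa$ forces $\mu$ to be surjective as well.

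The only nonformal step is the failure of injectivity (iii), and here the plan is to give one counterexample that works for \emph{every} non-abelian group. If $G$ is non-abelian then $[G,G] \neq \{e\}$, so I may choose $c \in [G,G]$ with $c \neq e$. Since the conjugacy class of the identity is $\{e\}$, the element $c$ lies in a different class, giving $\kappa(c) \neq \kappa(e)$; yet $p(c)$ and $p(e)$ are both the identity coset, so $\mu(\kappa(c)) = \mu(\kappa(e))$ and $\mu$ is not injective. For concreteness I would record the explicit instance $G = S_{3}$, where $[S_{3},S_{3}] = A_{3}$ and $G/[G,G] \cong \Z/2$, so that the two distinct conjugacy classes $\{e\}$ and $\{(1\,2\,3),(1\,3\,2)\}$ (the even permutations) both map to the trivial element of $\Z/2$. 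There is no genuine obstacle in any of this; the only point to keep straight is that $G/[G,G]$ is a group (so the abelian cancellation $p(h)p(g)p(h)^{-1} = p(g)$ is legitimate), whereas $G/\mathrm{Inn}(G)$ is merely a set of conjugacy classes, which is exactly why $\mu$ is only a map of sets and injectivity is the sole property that can, and does, fail.
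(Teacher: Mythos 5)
Your proposal is correct, and parts (i) and (ii) coincide in substance with the paper's argument: the paper also verifies that conjugate elements differ by a commutator (it writes $g'g^{-1} = \tilde{g} g \tilde{g}^{-1} g^{-1} \in [G,G]$, which is the same computation as your $p(hgh^{-1}) = p(h)p(g)p(h)^{-1} = p(g)$, just performed upstairs rather than in the abelian quotient), and it deduces surjectivity from surjectivity of $G \to G/[G,G]$ exactly as you do via the factorization through $\kappa$.

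Where you genuinely diverge is part (iii). The paper argues by counting: it takes $G = S_n$, notes $[S_n,S_n] = A_n$ so the abelianization always has order $2$, and then computes that $S_3$ has $3$ conjugacy classes (with $5$ for $S_4$ and $7$ for $S_5$), so a set of $3$ or more elements cannot inject into a $2$-element set. Your argument instead isolates the structural reason for the failure: in \emph{any} non-abelian group there is some $c \in [G,G]$ with $c \neq e$, and then the distinct classes $\kappa(c) \neq \kappa(e)$ (distinct because the class of the identity is the singleton $\{e\}$) both land on the identity coset of $G/[G,G]$. This is strictly more general -- it shows the map is injective \emph{only if} $G$ is abelian -- and it identifies the colliding classes explicitly, whereas the paper's counting argument shows non-injectivity without exhibiting which classes merge. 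What the paper's approach buys in exchange is quantitative: the growing gap between the number of conjugacy classes and the fixed order-$2$ abelianization illustrates the theme (emphasized right after the lemma, and echoed in the $\a$-conjugacy analogue, Lemma 3.32) that conjugacy classes retain much more holonomy information than the abelianization. Your concrete instance $G = S_3$, with $\{e\}$ and the class of $3$-cycles both mapping to the trivial element of $\Z/2$, is a perfectly good specialization and matches the paper's chosen group.
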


\begin{proof}

\; $ $ 
{}

\begin{enumerate}[i)]

\item
The map $G / \mathrm{Inn}(G) \to G/ [G,G]$ is well-defined because if
$g'$ was another representative of $[g],$ then there would be a
$\tilde{g} \in G$ such that $\tilde{g} g \tilde{g}^{-1} = g',$ and under
the quotient map, the difference between $g$ and $g'$ is $g' g^{-1} =
\tilde{g} g \tilde{g}^{-1} g^{-1}  \in [ G, G ].$

\item
Since $G \to G/ [G,G]$ is surjective, and the map $G / \mathrm{Inn}(G)
\to G/ [G,G]$ defined by choosing a representative is well-defined, the
map $G / \mathrm{Inn}(G) \to G/ [G,G]$ is surjective.

\item
To see why the map $G / \mathrm{Inn} (G) \to G / [G, G]$ is, in general,
not injective, consider the following example \cite{DF}. Let $S_{n}$ be
the symmetric group on $n$ letters, i.e. it is the permutation group of
$n$ elements. Let $A_{n}$ be the alternating group on $n$ letters. This
group is defined as the kernel of the homomorphism $S_{n} \to \{ - 1, 1
\}$ given by taking the sign of the permutation. It turns out this
kernel is also the commutator subgroup of $S_{n}.$ Furthermore, its
index is $[ S_{n} / [ S_{n}, S_{n} ] ] = [ S_{n} / A_{n} ] \equiv
[ S_{n} : A_{n} ] = 2.$ On the other hand, let's compute the conjugacy
classes of $S_{n}$ for some small $n.$ The simplest case actually
suffices, although we'll quote some results for higher $n$ to indicate
that the difference between conjugacy classes and abelianization gets
bigger. For $n=3,$ the set of conjugacy classes in $S_{3}$ is given by
the following elements. The identity element, written as $( \ \ ),$ is in
its own class. The elements $(1,2), (1,3),$ and $(2,3)$ are in their own
class. Finally, the elements $(1,2,3)$ and $(1,3,2)$ are in their own
class. Therefore, the set of conjugacy classes for $S_{3}$ is given by a
3-element set whereas the abelianization is a 2-element group. For
$S_{4},$ the set of conjugacy classes is a set of 5 elements. For
$S_{5},$ the set of conjugacy classes is a set of 7 elements. The
abelianization, however, is always of order 2. 
\end{enumerate}
\end{proof}

Therefore, conjugacy classes contain at least as much information about
ordinary holonomy as do elements of the abelianization, and they are
exactly the elements needed to define holonomy in a gauge invariant way
due to Theorem \ref{thm:gaugeinvarianthol}. 

In a similar way, the reduced group $\mathfrak{G}_{\mathrm{red}}$ of a
2-group $\mathcal{B}\mathfrak{G}$  is analogous to the abelianization
and does not contain the full information of 2-holonomy in general. One
needs an analogue of conjugacy classes for 2-holonomy. The candidate,
for spheres at least, is $\a$-conjugacy classes, $H / \a.$ In fact, we
have a similar fact concerning $\a$-conjugacy classes and the reduced
group. 

\begin{lem}
\label{lem:alphaconjtoreducedgroup}
Let $(H, G , \t , \a)$ be a crossed module, $\mathcal{B}\mathfrak{G}$
the associated 2-group, $\mathfrak{G}_{\mathrm{red}} := H / [ G, H ]$
the reduced group of $\mathcal{B}\mathfrak{G},$ and $H/ \a$ the
$\a$-conjugacy classes in $H$. The map $H / \a \to
\mathfrak{G}_{\mathrm{red}}$ given by taking a conjugacy class $[h],$
choosing a representative, and projecting to the quotient $H / [G, H],$
is 
\begin{enumerate}[i)]
\item
well-defined,  
\item
surjective, 
\item
and need not be injective in general.
\end{enumerate} 
\end{lem}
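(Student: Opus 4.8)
The plan is to mirror the proof of Lemma \ref{lem:conjugacyclasses}, since the two statements are formally parallel. Write $q : H \to H/\a$ for the quotient onto $\a$-conjugacy classes and $\pi : H \to \mathfrak{G}_{\mathrm{red}} = H/[G,H]$ for the quotient onto the reduced group; the asserted map $\Phi : H/\a \to \mathfrak{G}_{\mathrm{red}}$ is to be characterized by $\Phi \circ q = \pi$. Before anything else I would check that $[G,H]$ is genuinely a normal subgroup of $H$, so that $\mathfrak{G}_{\mathrm{red}}$ is a group and $\pi$ a homomorphism. Writing $c_{g,h} := h^{-1}\a_g(h)$ for a defining generator, a short computation using that $\a_g$ is an automorphism yields the identity $h^{-1} c_{g,k}\, h = c_{g,kh}\, c_{g,h}^{-1}$, which shows that conjugation by any element of $H$ carries each generator $c_{g,k}$ back into $[G,H]$; hence $[G,H]$ is normal in $H$.

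For well-definedness (i), suppose $h$ and $h'$ are $\a$-conjugate, say $h = \a_g(h')$ for some $g \in G.$ Then $h'^{-1} h = h'^{-1}\a_g(h') = c_{g,h'}$ is one of the defining generators of $[G,H]$, so $h$ and $h'$ lie in the same coset of $[G,H]$ and $\pi(h) = \pi(h')$. Thus $\Phi$ is well-defined by $\Phi([h]) := \pi(h).$ For surjectivity (ii), observe that $\pi = \Phi \circ q$ is the quotient map $H \to H/[G,H]$, which is surjective; since $q$ is surjective onto $H/\a$, the map $\Phi$ must be surjective as well.

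For non-injectivity (iii), I would produce a crossed module in which $\Phi$ is exactly the map already analyzed in Lemma \ref{lem:conjugacyclasses}. Take $(H,G,\t,\a) = (G, G, \id, \a)$ with $\a$ the conjugation action $\a_g(h) = g h g^{-1}$; conditions (\ref{eq:Pfeiffer}) and (\ref{eq:crossedmodule2}) hold immediately because $\t = \id.$ Here the $\a$-conjugacy relation on $H = G$ is literally ordinary conjugacy, so $H/\a = G/\mathrm{Inn}(G)$, while each generator $c_{g,h} = h^{-1}\a_g(h) = h^{-1} g h g^{-1} = [h^{-1}, g]$ is an ordinary commutator and $h^{-1}$ ranges over all of $G$, so $[G,H] = [G,G]$ and $\mathfrak{G}_{\mathrm{red}} = G/[G,G].$ Under these identifications $\Phi$ becomes precisely the map $G/\mathrm{Inn}(G) \to G/[G,G]$ of Lemma \ref{lem:conjugacyclasses}, which is shown there to fail injectivity already for $G = S_3.$ This furnishes the required counterexample.

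The only genuinely technical point is the normality verification in the first step; everything else is a direct transcription of the group-level argument. The conceptual heart of (iii) is recognizing that the conjugation crossed module $(G,G,\id,\a)$ specializes $\a$-conjugacy classes to honest conjugacy classes and the reduced group to the abelianization, so that the earlier non-injectivity example transfers verbatim. I expect no serious obstacle beyond keeping the left/right conventions for cosets and for the $\a$-action consistent with those fixed in Section \ref{sec:2-groups}.
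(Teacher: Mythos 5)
Your proof is correct and, for parts (i) and (ii), essentially identical to the paper's; your counterexample for (iii) --- the conjugation crossed module $(G,G,\id,\a)$, which reduces the question to Lemma \ref{lem:conjugacyclasses} --- is also the first example given in the paper, and your normality verification for $[G,H]$ is a detail the paper leaves implicit. The one thing the paper does that you do not: it supplements (iii) with a second counterexample, the crossed module $(\Z_{p}, \mathrm{Aut}(\Z_{p}), \t, \a)$ with $p \ge 3$ prime, $\t$ trivial, and $\a$ the natural action, for which $H/\a = \{[0],[1]\}$ has two elements while $\mathfrak{G}_{\mathrm{red}}$ is trivial. The motivation is that 2-holonomy along spheres takes values in $\ker \t \le H$ by the source-target matching condition, and in your example $\ker \t = \ker(\id)$ is trivial, so your counterexample alone cannot show that the failure of injectivity is visible on the part of $H$ where sphere holonomies actually live; the paper's second example has $\ker \t = H$ and settles exactly that point. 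For the lemma as literally stated, your proof is complete and needs no repair.
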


\begin{proof}

\; $ $ 
{}

\begin{enumerate}[i)]

\item
Let $h$ and $h'$ be two representatives. Then there exists a $g \in G$
such that $\a_{g} (h) = h'$ and so the difference between $h$ and $h'$
is $h^{-1} h' = h^{-1} \a_{g} (h) \in [ G, H ] .$

\item
Since $H \to H/ [G,H]$ is surjective, and the map $H / \a \to
\mathfrak{G}_{\mathrm{red}}$ defined by choosing a representative is
well-defined, the map $H / \a \to \mathfrak{G}_{\mathrm{red}}$ is
surjective.

\item
To see why the map $H / \a \to \mathfrak{G}_{\mathrm{red}}$ is, in
general, not injective, consider the special case where $H = G,$
$\t = \id,$ and $\a$ is the ordinary conjugation. Then this case
reduces to the previous case of Lemma \ref{lem:conjugacyclasses}. 

Although the previous example suffices to show why $\a$-conjugacy
classes $H/ \a$ contain more information than the reduced group in
general, holonomy along spheres takes values in $\ker \t \le H$ by the
source-target matching condition. Therefore, it is also important to
find an example of a crossed module $(H,G,\t,\a)$ such that
$\ker \t = H$ and the map $H/ \a \to \mathfrak{G}_{\mathrm{red}}$ is not
injective.

Take $H := \Z_{p},$ the (additive) cyclic group of order $p,$ where
$p \ge 3$ is prime. Set $G := \mathrm{Aut}(\Z_{p}),$ the automorphism
group of $\Z_{p}.$ Let $\t$ be the trivial map and $\a := \id$ be
the identity map. $(\Z_{p}, \mathrm{Aut}(\Z_{p}), \t,\a)$
defines a crossed module. 

Every element of $\mathrm{Aut}(\Z_{p})$ is of the form $\s_{k}$ with
$k \in \{ 1,2, \dots, p-1\}$ and is determined by where it sends the
generator: ${\s_{k} (1 \mod p)} := {k \mod p}.$ For this proof, denote
the $\a$-conjugacy class of an element $m \in \Z_{p}$ by $[m].$ For all
$k,$ ${\s_{k} (0 \mod p)} = {0 \mod p}$ so that $0 \mod p$ is fixed
under the $\a$ action. However, since $\s_{k}(1) = k \mod p,$ the set of
$\a$-conjugacy classes of $(\Z_{p}, \mathrm{Aut}(\Z_{p}), \t, \a)$ is 
$\Z_{p} / \a = \{ [0], [1] \},$
which is just a 2-element set. However, the reduced group is trivial. To
see this, consider generators of $[ \mathrm{Aut}(\Z_{p}), \Z_{p}],$
which are of the form ${(\s_{k} (m) - m) \mod p}$ with $k \in \{ 1,2,
\dots, p-1\}$ and $m \in \{ 0, 1,2, \dots, p-1\}.$ Set $m=1$ and $k=2.$
Then ${(\s_{k} (m) - m) \mod p} = {1 \mod p}.$ Therefore, the generator
of $\Z_{p}$ is in the subgroup $[\mathrm{Aut}(\Z_{p}), \Z_{p}]$ which
means $[\mathrm{Aut}(\Z_{p}), \Z_{p}] = \Z_{p}.$ Thus 
$\Z_{p} / [\mathrm{Aut}(\Z_{p}), \Z_{p}] = \Z_{p}/\Z_{p} \cong \{e\}.$

\end{enumerate}
\end{proof}

In this case, one can make sense of gauge-invariant quantities coming
from 2-holonomy without passing to the reduced group as is done in
\cite{SW4}. 
In the case of the examples considered in Section \ref{sec:examples},
one even gets a fixed point under the $\a$ action, in which case one
does not need to pass to the $\a$-conjugacy classes. 

\begin{defn}
\label{defn:Inv}
Let $(H, G,\t, \a)$ be a crossed module. Denote the fixed points of $H$
under the $\a$ action by 
\be
\mathrm{Inv}(\a) := \{ h \in H \ | \ \a_{g} (h) = h \text{ for all } g
\in G \}. 
\ee
\end{defn}

\begin{lem}
\label{lem:fixedpoints}
In the notation of Definition \ref{defn:Inv}, $\mathrm{Inv}(\a)$ is a
central subgroup of $H.$ 
\end{lem}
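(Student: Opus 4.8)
The plan is to verify two things: that $\mathrm{Inv}(\a)$ is a subgroup of $H$, and that it is central. The subgroup part is routine and relies only on the fact that each $\a_g : H \to H$ is a group homomorphism (indeed an automorphism). First I would note that $\a_g(e) = e$ for every $g \in G$, so $e \in \mathrm{Inv}(\a)$. Then for $h, h' \in \mathrm{Inv}(\a)$ one computes $\a_g(hh') = \a_g(h)\a_g(h') = hh'$ and $\a_g(h^{-1}) = \a_g(h)^{-1} = h^{-1}$ for all $g$, so $\mathrm{Inv}(\a)$ is closed under multiplication and inversion. This disposes of the subgroup claim with no real difficulty.

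The substantive content is centrality, and the key tool is the Peiffer identity \eqref{eq:Pfeiffer}, namely $\a_{\t(h')}(h) = h' h h'^{-1}$. The idea is exactly parallel to the proof of Lemma \ref{lem:central}, but run in the opposite direction. Fix $h \in \mathrm{Inv}(\a)$ and let $h' \in H$ be arbitrary. Applying the Peiffer identity with the element $h'$ gives
\be
\a_{\t(h')}(h) = h' h h'^{-1}.
\ee
Since $h \in \mathrm{Inv}(\a)$, the defining property of $\mathrm{Inv}(\a)$ says $\a_g(h) = h$ for \emph{every} $g \in G$; in particular this holds for the specific choice $g = \t(h')$. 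Hence the left-hand side equals $h$, yielding $h' h h'^{-1} = h$, i.e.\ $h' h = h h'$. As $h'$ was arbitrary, $h$ commutes with all of $H$, so $h$ is central.

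I do not anticipate a genuine obstacle here: the only point requiring attention is recognizing that one should specialize the group element appearing in the Peiffer identity to $g = \t(h')$, which is precisely where the invariance hypothesis $\a_g(h) = h$ \emph{for all} $g$ is used (a weaker condition would not suffice). Once that substitution is made, centrality drops out immediately. Combining the subgroup verification with the centrality argument then gives the full statement that $\mathrm{Inv}(\a)$ is a central subgroup of $H$.
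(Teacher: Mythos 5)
Your proposal is correct and is essentially identical to the paper's proof: both verify the subgroup axioms using that each $\a_g$ is a homomorphism, and both obtain centrality by specializing the Peiffer identity (\ref{eq:Pfeiffer}) to $g = \t(h')$ and invoking invariance, exactly as the paper does (with the roles of the letters $h$ and $k$ swapped). No gaps.
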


\begin{proof}
Let $h, h' \in \mathrm{Inv}(\a).$ Then 
\be
\a_{g} (h h') = \a_{g} ( h ) \a_{g} (h')  = h h'
\ee
for all $g \in G.$ Thus, $\mathrm{Inv}(\a)$ is closed. $\a_{g} (e) = e$
for all $g \in G$ says  $e \in \mathrm{Inv}(\a).$ Let $h \in
\mathrm{Inv}(\a),$ then $\a_{g} (h^{-1} ) = (\a_{g} (h) )^{-1} = h^{-1}$
showing that $h^{-1} \in \mathrm{Inv}(\a).$ Finally, $\mathrm{Inv}(\a)$
is central because
\be
h k h^{-1} = \a_{\t ( h) } (k) = k 
\ee
for all $h \in H$ and $k \in \mathrm{Inv}(\a).$ 
\end{proof}

This will have physical relevance when discussing monopoles, which, as
we will show, take values in $\mathrm{Inv}(\a).$

\section{The path-curvature 2-functor associated to a transport functor}
\label{sec:path-curvature}

In this section, given a principal $G$-bundle with connection and a
choice of a subgroup of $\pi_{1} (G),$ we construct a principal 2-bundle
with connection whose structure 2-group is a covering 2-group obtained
from $G$ and the subgroup of $\pi_{1}(G).$ This assignment is
functorial. We describe it on all levels introduced in the review,
namely as a globally defined transport functor, in terms of descent
data, and via differential cocycle data. These constructions respect all
of the functors relating these different levels.

\subsection{The path-curvature 2-functor}

The transport 2-functor defined later in this section is motivated by
the study of magnetic monopoles in gauge theories as described in
\cite{HS}. Some of the earlier accounts of similar descriptions can be
found in the work of Wu and Yang in \cite{WY} under the name
`total circuit'  and also in the work of Goddard, Nuyts, and Olive in
\cite{GNO}. Of course, several others worked on understanding the
``topological quantum number'' due to a magnetic charge in terms of just
the magnetic charge alone, but the three references mentioned are the
ones that have influenced us. We argue in Section \ref{sec:examples}
that in the case where the base space is a 3-manifold, this transport
2-functor has 2-holonomy along a sphere which is given by the magnetic
flux through that sphere. Therefore, we give a mathematically rigorous
description of non-abelian flux for magnetic monopoles in a non-abelian
gauge theory. A more detailed description of the physics will be given
in that section, but first we explain the mathematical structure. 

The starting data consist of (i) a principal $G$-bundle, where $G$ is a
connected Lie group, with connection over a smooth manifold $M,$ and
(ii) a subgroup $N$ of $\pi_{1} (G).$ By the main theorem of \cite{SW1},
the first part of the data corresponds to a transport functor
$\tra : \mathcal{P}_{1} (M) \to G\text{-}\mathrm{Tor}$ with
$\mathcal{B} G$ structure. From this data, we will construct a transport
2-functor which we call the \emph{path-curvature 2-functor.} We will
discuss two interesting cases for the choice of $N$ although other
choices are important for applications in physics so we keep this
generality for future applications. When $N = \pi_{1} (G),$ the
path-curvature 2-functor coincides with the curvature 2-functor of
Schreiber and Waldorf \cite{SW4}. The  choice  $N = \{ 1 \},$ the
trivial group, will be more appropriate in the context of gauge theory
and computing invariants. This is the case we focus on for all our
computations in Section \ref{sec:examples}.

To set up this example, we introduce the following Lie 2-group
associated to any connected Lie group $G.$ Let $\tilde{G}$ be the
universal over of $G$ (we will describe what happens for arbitrary
covers later) and denote the covering map by $\t : \tilde{G} \to G.$ An
explicit construction of $\tilde{G}$ in terms of homotopy classes of
paths will be useful for our purposes 
\be
\label{eq:universalcoverG}
\tilde{G} :=
 \{h : [0,1]\to G \ | \ h(0)=e \text{ and $h$ is continuous} \}/_{\sim}
\ee
where $h \sim h'$ if $h(1) = h' (1)$ and there exists a homotopy
$h \Rightarrow h'$ relative the endpoints.
$\tilde{G}$ naturally acquires a topology as the quotient space of a
subspace of paths. 
Denote the equivalence class representing a path with square brackets as
in $[ h ]$ or $[ t \mapsto h(t) ],$ where it is understood that $t$
takes values in $[0,1].$ The multiplication in $\tilde{G}$ is defined by
choosing representatives and multiplying them pointwise (later we will
show that this multiplication can be described in another way that is
sometimes more convenient for our examples). Let $\a : G \to
\mathrm{Aut} (\tilde{G})$ be the conjugation map $\a_{g} ( [h] )
:= [ g h g^{-1} ],$ meaning 
\be
\a_{g} ([h]) := [ t \mapsto g h(t) g^{-1} ] , 
\ee 
where the concatenation means multiplication in $G.$ 
Define $\t : \tilde{G} \to G$ to be evaluation at the endpoint, 
\be
\t ( [h] ) := h(1).
\ee

\begin{prop}
$(\tilde{G}, G, \t, \a)$ defined in the previous paragraph is a Lie
crossed module. 
\end{prop}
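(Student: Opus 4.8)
The plan is to verify that the quadruple $(\tilde{G}, G, \t, \a)$ satisfies all the defining conditions of a Lie crossed module from Definition \ref{defn:crossedmodule}, namely that $\t$ and $\a$ are group homomorphisms, that the two compatibility conditions (\ref{eq:Pfeiffer}) and (\ref{eq:crossedmodule2}) hold, and finally that all maps are smooth in the appropriate sense. First I would check that $\tilde{G}$ is indeed a group under pointwise multiplication of representatives: this requires confirming that the operation is well-defined on homotopy classes relative to endpoints (if $h \sim h_1$ and $k \sim k_1$, then $hk \sim h_1 k_1$, which follows by multiplying the two relative homotopies pointwise), and that the identity is the class of the constant path at $e$ while inverses are given by pointwise inversion. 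This is routine and I would state it briefly.

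Next I would verify the two homomorphism properties. That $\t([h]) := h(1)$ is a homomorphism is immediate, since evaluation at $1$ of a pointwise product is the product of the evaluations. For $\a$, I would check that each $\a_g$ is an automorphism of $\tilde{G}$ (it clearly respects pointwise products and is invertible with inverse $\a_{g^{-1}}$, and it is well-defined on classes since conjugating a relative homotopy by a fixed $g$ yields another relative homotopy) and that $g \mapsto \a_g$ is a homomorphism $G \to \mathrm{Aut}(\tilde{G})$, which follows from $\a_{g_1 g_2}([h]) = [t \mapsto g_1 g_2 h(t) g_2^{-1} g_1^{-1}] = \a_{g_1}(\a_{g_2}([h]))$.

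Then I would check the two crossed module axioms. For (\ref{eq:crossedmodule2}), the computation is $\t(\a_g([h])) = \t([t \mapsto g h(t) g^{-1}]) = g h(1) g^{-1} = g \t([h]) g^{-1}$, which is direct. For the Peiffer identity (\ref{eq:Pfeiffer}), I must show $\a_{\t([h'])}([h]) = [h'][h][h']^{-1}$ for all $[h], [h'] \in \tilde{G}$. The left side is the class of $t \mapsto h'(1) h(t) h'(1)^{-1}$, while the right side is the class of $t \mapsto h'(t) h(t) h'(t)^{-1}$. These have the same endpoint, so I expect this to be the main obstacle: I must produce a homotopy relative to endpoints between the path $t \mapsto h'(t) h(t) h'(t)^{-1}$ and the path $t \mapsto h'(1) h(t) h'(1)^{-1}$. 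The natural construction is the homotopy $H(t,s) := h'\big((1-s)t + s\big)\, h(t)\, h'\big((1-s)t + s\big)^{-1}$, which at $s=0$ gives the left-hand path and at $s=1$ gives the right-hand path; I would check it fixes both endpoints $t=0$ (giving $e$ throughout, using $h(0)=e$) and $t=1$ (giving $h'(1)h(1)h'(1)^{-1}$ throughout), so it is a relative homotopy and the two classes agree.

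Finally I would address smoothness, confirming that $\tilde{G}$ is a Lie group with the quotient topology described (it is the universal cover of the connected Lie group $G$, hence carries a canonical Lie group structure for which $\t$ is a smooth covering homomorphism), that $\t$ is smooth, and that the adjoint map $G \times \tilde{G} \to \tilde{G}$, $(g,[h]) \mapsto \a_g([h])$, is smooth; this last point follows because conjugation $G \times G \to G$ is smooth and descends to the covers. I would note that these smoothness facts are standard consequences of covering space theory for Lie groups, so I would cite them rather than prove them in detail.
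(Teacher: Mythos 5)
Your proposal is correct and takes essentially the same route as the paper: well-definedness of $\t$ and $\a$ on homotopy classes, the one-line computation for the axiom (\ref{eq:crossedmodule2}), an explicit straight-line reparametrization homotopy for the Peiffer identity (your $H(t,s) = h'\big((1-s)t+s\big)\, h(t)\, h'\big((1-s)t+s\big)^{-1}$ is the paper's homotopy up to reversing the $s$-direction and swapping the roles of $h$ and $h'$), and an appeal to standard covering-space theory for the Lie group structure and smoothness of $\t$ and $\a$. The only slip is cosmetic: your homotopy yields the conjugation path $t \mapsto h'(t)h(t)h'(t)^{-1}$ at $s=0$ and the path $t \mapsto h'(1)h(t)h'(1)^{-1}$ at $s=1$, the reverse of what you state, which is immaterial since a homotopy can be run backwards.
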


\begin{proof}
It is useful to recall the definition of a crossed module (Definition
\ref{defn:crossedmodule}) at this point. Since the equivalence relation
involves homotopy relative endpoints, $\t$ is well-defined. $\a$ is
well-defined because $h \sim h'$ implies $ghg^{-1} \sim g h' g^{-1}.$ 
The topological space $\tilde{G}$ has a unique smooth structure making
the map $\t$ a homomorphism and a smooth covering map, i.e. a smooth
surjective submersion with the property that for every $g \in G,$ there
exists an open neighborhood $U$ containing $g$ such that each component
of $\t^{-1}(U)$ maps to $U$ diffeomorphically. This follows from some
basic differential topology (see for example Theorem 2.13 of \cite{Le}).
Conjugation in $G$ is a smooth map, and because $\a$ is well-defined,
$\a$ is therefore smooth. The only things left to check are the crossed
module identities. First, let $[h],[h'] \in \tilde{G}$ and let $h$ and
$h'$ be representatives of $[h]$ and $[h']$ respectively. Then the map 
\be
[0,1]\times[0,1] \ni (s,t)
\mapsto h\Big((1-s)+st\Big)h'(t) h\Big((1-s)+st\Big)^{-1}
\ee
is a homotopy (relative endpoints) from the path
$t \mapsto h(1)h'(t)h(1)^{-1}$ (when $s=0$) to the path
$t \mapsto h(t)h'(t)h(t)^{-1}$ (when $s=1$). Therefore, 
\be
\begin{split}
\a_{\t([h])}([h'])&=[t\mapsto h(1) h'(t) h(1)^{-1} ] \\
&=[t\mapsto h(t)h'(t)h(t)^{-1}] \\
&= [h][h'][h^{-1}], 
\end{split}
\ee
which is the first identity (\ref{eq:Pfeiffer}). For the second
identity, let $g \in G$ and $[h]\in \tilde{G}$ with a representative
$h.$ Then 
\be
\t (\a_{g}([h])) = \t [ t \mapsto g h(t) g^{-1} ] = g h(1) g^{-1}
= g \t([h]) g^{-1}, 
\ee
which proves the other identity (\ref{eq:crossedmodule2}). 
\end{proof}

\begin{defn}
The Lie crossed module $( \tilde{G} , G, \t, \a)$ defined above is
called the \emph{\uline{universal cover crossed module}} associated to a
Lie group $G.$ The associated Lie 2-group, denoted by
$\mathcal{G}_{\{1\}},$ is called the \emph{\uline{universal cover 2-group}}
associated to the Lie group $G.$ 
\end{defn}

In fact, the only way to give a smooth covering map a Lie crossed module
structure is the way we have done so above. This follows from the
following Lemma. 

\begin{lem}
\label{lem:surjectivecrossedmodule}
Let $(H, G, \t, \a)$ be a crossed module (not necessarily Lie) with
$\t:H \to G$ a surjective homomorphism. Then $\a$ \emph{is} conjugation
in $H$ by a choice of lift, namely
\be
\a_{g} (h') = h h' h^{-1} , \quad \text{ for all } g \in G, h' \in H
\ee
for some $h$ with $\t(h) = g.$ 
\end{lem}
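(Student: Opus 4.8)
The plan is to observe that this is an immediate consequence of the Peiffer identity (\ref{eq:Pfeiffer}), which is one of the two defining axioms of a crossed module. First I would fix an arbitrary element $g \in G$ together with an arbitrary $h' \in H$; the goal is to produce an $h \in H$ with $\t(h) = g$ satisfying $\a_g(h') = h h' h^{-1}$.

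Since $\t : H \to G$ is surjective by hypothesis, there exists $h \in H$ with $\t(h) = g$. For this $h$, the Peiffer identity (\ref{eq:Pfeiffer}) reads $\a_{\t(h)}(h') = h h' h^{-1}$. Substituting $\t(h) = g$ gives $\a_g(h') = h h' h^{-1}$, which is exactly the claim; and since $h'$ was arbitrary this holds for all $h' \in H$, while $h$ serves as the required lift of $g$.

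The main --- indeed only --- point is that surjectivity of $\t$ lets every $g \in G$ be written as $\t(h)$, after which the first crossed module axiom does all the work, so there is no genuine obstacle here. I would additionally remark that the conclusion in fact holds for \emph{every} lift $h$ of $g$, not merely for some, since the left-hand side $\a_g$ depends only on $g = \t(h)$ whereas the right-hand side $h h' h^{-1}$ equals $\a_{\t(h)}(h')$ for each such lift. This independence-of-lift observation is what makes the statement meaningful: it shows that once $\t$ is onto, the action $\a$ is forced to be conjugation in $H$ by any chosen lift, and hence that the universal cover crossed module constructed above carries the unique crossed-module structure compatible with the covering map $\t : \tilde{G} \to G$.
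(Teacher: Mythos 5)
Your proof is correct and takes essentially the same route as the paper: surjectivity of $\t$ provides a lift $h$ of $g$, and the Peiffer identity (\ref{eq:Pfeiffer}) then identifies $\a_{g} = \a_{\t(h)}$ with conjugation by $h.$ Your closing independence-of-lift remark in fact streamlines the paper's argument, which establishes well-definedness by a separate explicit computation (showing $h h' h^{-1} \left( \tilde{h} h' \tilde{h}^{-1} \right)^{-1} = e$ for two lifts $h, \tilde{h}$ of $g$) before invoking the Peiffer identity, whereas in your version this is automatic since each lift conjugates $h'$ to the common value $\a_{g}(h').$
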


\begin{proof}
First we prove that conjugating by a lift is well-defined. Let
$\tilde{h} \in H$ be another lift with $\t ( \tilde{h} ) = g.$ Then 
\be
\begin{split}
h h' h^{-1} \left( \tilde{h} h' \tilde{h}^{-1} \right)^{-1}
&= h h' h^{-1}  \tilde{h} h'^{-1} \tilde{h}^{-1} \\
	&= \a_{ \t ( h ) } (h' ) \a_{ \t ( \tilde{h} ) } ( h'^{-1} )
\quad \mbox{ by (\ref{eq:Pfeiffer})} \\
	&= \a_{g } (h') \a_{g } (h'^{-1}) \\
	&= \a_{g} ( h' h'^{-1} ) \\
	&= \a_{g} (e) \\
	&= e
\end{split}
\ee
since $\a_{g} : H \to H$ is a homomorphism. The claim that
$\a_{g} (h') = h h' h^{-1}$ for a choice of lift $h$ of $g$ then follows
from the identity (\ref{eq:Pfeiffer}) since $\a_{g} (h')
= \a_{\t(h)}(h') = h h' h^{-1}$ for some $h$ because $\t$ is surjective
and a lift always exists. 
\end{proof}

\begin{lem}
\label{lem:surjectiveLiecrossedmodule}
Let $(H , G, \t, \a)$ be a Lie crossed module with $\t:H \to G$
a smooth covering map. Then $\a$ \emph{is}
conjugation in $H$ by a choice of lift, namely
\be
\a_{g} (h') = h h' h^{-1} , \quad \text{ for all } g \in G, h' \in H
\ee
for some $h$ with $\t(h) = g.$ 
\end{lem}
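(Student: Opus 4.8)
The plan is to reduce this statement directly to Lemma \ref{lem:surjectivecrossedmodule}, of which it is essentially a special case. The key observation is that a smooth covering map is in particular a surjective homomorphism: surjectivity is part of the definition of a smooth covering map (a smooth surjective submersion with the local triviality property), and $\t$ is a group homomorphism because $(H, G, \t, \a)$ is assumed to be a crossed module. Thus, forgetting the smooth structures entirely, $(H, G, \t, \a)$ is a crossed module in the sense of Definition \ref{defn:crossedmodule} whose structure map $\t$ is a surjective homomorphism.

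With this observation in hand, I would simply invoke Lemma \ref{lem:surjectivecrossedmodule}, which asserts precisely that for any crossed module with $\t$ surjective, the action $\a$ is given by conjugation by an arbitrary lift, namely $\a_{g}(h') = h h' h^{-1}$ for any $h \in H$ with $\t(h) = g$, with the result independent of the chosen lift. Since the hypotheses of the present Lemma are strictly stronger than those of Lemma \ref{lem:surjectivecrossedmodule} and the desired conclusion is identical, no further argument is needed.

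Accordingly, there is no genuine obstacle to overcome here. The only point worth recording explicitly is that the Lie and covering hypotheses contribute nothing beyond guaranteeing surjectivity of $\t$; the purely algebraic Lemma \ref{lem:surjectivecrossedmodule} already supplies the conclusion. I would therefore write the proof as a single sentence observing that $\t$ is a surjective homomorphism and applying the previous Lemma.
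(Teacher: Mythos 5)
Your proposal is correct and matches the paper's own proof: both reduce the statement to Lemma \ref{lem:surjectivecrossedmodule} by observing that a smooth covering map is in particular surjective (by the paper's definition, a smooth surjective submersion with the local triviality property), so the purely algebraic lemma applies once the smooth structure is forgotten. No gap; the Lie/covering hypotheses indeed play no role beyond surjectivity.
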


\begin{proof}
The claim holds even if $\t$ is just surjective.
The proof follows from Lemma
\ref{lem:surjectivecrossedmodule} viewing $H$ and $G$ as groups
(ignoring smooth structure) and using the identity
$\a_{g}(h')=\a_{\t(h)}(h')$ for some lift $h$ of $g.$ 
\end{proof}

Given any subgroup $N \le \pi_{1} (G),$ we can construct another Lie
2-group in a similar way but by using a different equivalence relation.
Define 
\be
\label{eq:NcoverG}
\tilde{G}_{N} :=
\{ h : [0,1] \to G \ | \ h(0) = e \text{ and $h$ is continuous} \}
/_{\sim_{N}} , 
\ee
where $h \sim_{N} h'$ if $h(1) = h' (1)$ and
$\left [ \begin{smallmatrix} h \\
\overset{\circ}{\overline{h'}} \end{smallmatrix} \right ] \in N,$ 
where $\overline{h'}$ denotes the reverse path 
and we use a vertical representation for the concatenation of paths in
this context
\be
\begin{smallmatrix} h \\ \overset{\circ}{\overline{h'}}
\end{smallmatrix}  (t) := 
\begin{cases}
h ( 2t) &\mbox{ for }  0 \le t \le \frac{1}{2} \\
h' (2-2t) & \mbox{ for } \frac{1}{2} \le t \le 1 
\end{cases}
.
\ee

\begin{defn}
An equivalence class of paths under the $\sim_{N}$ equivalence relation
in equation (\ref{eq:NcoverG}) will be denoted by $[ h ]_{N}$ or
$[ t \mapsto h(t) ]_{N}$ and will be called an \emph{\uline{$N$-class}}. 
\end{defn}

\begin{prop}
\label{prop:Ncovercrossedmodule}
Let $G$ be a connected Lie group, $N \le \pi_{1} (G)$ a subgroup, and
$\tilde{G}_{N}$ as in (\ref{eq:NcoverG}). Then for $[h]_{N} \in
\tilde{G}_{N},$ the function $\t : \tilde{G}_{N} \to G$ given by 
\be
\t \left( [h]_{N} \right ) := h(1), 
\ee
with $h$ a choice of a representative of $[h]_{N},$ is a well-defined
homomorphism. Furthermore, $\tilde{G}_{N}$ has a unique smooth structure
so that $\t$ is a smooth covering map. Finally,
$(\tilde{G}_{N}, G, \t,\a)$ with $\a : G \to
\mathrm{Aut}(\tilde{G}_{N})$ defined by 
\be
\a_{g} \left( [h]_{N} \right) := [ t \mapsto g h(t) g^{-1} ]_{N}
\ee
is a Lie crossed module. 
\end{prop}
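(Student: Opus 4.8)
The plan is to realize $\tilde{G}_{N}$ as a quotient of the universal cover $\tilde{G}$ by the subgroup $N$, and then to deduce every assertion from the already-established universal cover case together with standard facts about Lie groups and covering spaces. Recall that $\t : \tilde{G} \to G$ has kernel $\ker \t = \{ [h] \in \tilde{G} \ | \ h(1) = e \}$, which is precisely $\pi_{1}(G)$; being a discrete normal subgroup of the connected group $\tilde{G}$, it is central. Hence $N \le \pi_{1}(G)$ is a central, in particular normal, subgroup of $\tilde{G}$, the quotient $\tilde{G}/N$ is a Lie group, the projection $\tilde{G} \to \tilde{G}/N$ is a smooth covering, and $\t$ descends to a homomorphism $\tilde{G}/N \to G$ which is again a covering map, with kernel $\pi_{1}(G)/N$. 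I will show $\tilde{G}_{N} \cong \tilde{G}/N$ as groups, which transports all of this structure to $\tilde{G}_{N}$.

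The one genuinely technical point, and the step I expect to be the main obstacle, is matching the two quotients. By construction $\tilde{G}$ carries the pointwise product, so two classes $[h]$ and $[h']$ determine the same coset of $N$ exactly when the pointwise difference $t \mapsto h(t)^{-1} h'(t)$ (a loop at $e$, since $h(1) = h'(1)$) represents an element of $N \le \ker \t$. The relation $\sim_{N}$ of (\ref{eq:NcoverG}), on the other hand, is phrased using the \emph{concatenation} difference loop $h \circ \overline{h'}$. To reconcile these I would invoke the standard Eckmann--Hilton fact that in a topological group the concatenation product and the pointwise product of loops based at $e$ agree up to homotopy rel endpoints; this shows that the pointwise difference loop and the concatenation difference loop represent the same class in $\pi_{1}(G)$ (up to inversion, which is harmless since $N$ is a subgroup). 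Consequently the coset condition and $h \sim_{N} h'$ are equivalent, giving the identification $\tilde{G}_{N} \cong \tilde{G}/N$.

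With this identification in hand, the first two claims are immediate. The pointwise multiplication descends to $\tilde{G}_{N}$ and makes it a Lie group; $\t([h]_{N}) := h(1)$ is well-defined (since $h \sim_{N} h'$ forces $h(1) = h'(1)$) and is the homomorphism induced by the endpoint map. The unique smooth structure making $\t$ a smooth covering map is obtained exactly as in the universal cover case: $\tilde{G}_{N}$ is the covering of $G$ associated to the subgroup $N \le \pi_{1}(G)$, and uniqueness follows from the same differential-topology input (Theorem 2.13 of \cite{Le}).

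It remains to check that $\a$ is well-defined and that the crossed module identities hold. For well-definedness, note that since $G$ is connected the conjugation $c_{g} : x \mapsto g x g^{-1}$ is homotopic to the identity through basepoint-preserving maps (move $g$ to $e$ along a path), so $(c_{g})_{*}$ is the identity on $\pi_{1}(G)$ and in particular preserves $N$; therefore $h \sim_{N} h'$ implies $g h g^{-1} \sim_{N} g h' g^{-1}$, and $\a_{g}$ is well-defined on $N$-classes. Finally, the two identities (\ref{eq:Pfeiffer}) and (\ref{eq:crossedmodule2}) are proved by the very same computations as in the universal cover proposition: the homotopy $(s,t) \mapsto h((1-s)+st)\, h'(t)\, h((1-s)+st)^{-1}$ establishing $\a_{\t([h])}([h']) = [h][h'][h^{-1}]$ is a homotopy rel endpoints, hence a fortiori an $\sim_{N}$ equivalence, so it proves $\a_{\t([h]_{N})}([h']_{N}) = [h]_{N}[h']_{N}[h^{-1}]_{N}$ verbatim; and $\t(\a_{g}([h]_{N})) = g h(1) g^{-1} = g\, \t([h]_{N})\, g^{-1}$ is immediate. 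This would complete the verification that $(\tilde{G}_{N}, G, \t, \a)$ is a Lie crossed module.
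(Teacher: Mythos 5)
Your proposal is correct in its main thrust but takes a genuinely different route from the paper's proof. The paper works directly on $\tilde{G}_{N}$: it topologizes it as a quotient of a path space, invokes the classification of covering spaces (using that $\pi_{1}(G)$ is abelian, so $N$ is self-conjugate) to see that $\t$ is a covering map, obtains the unique smooth structure from standard differential topology, checks smoothness of multiplication and inversion locally, and handles $\a$ by identifying it with conjugation by local lifts. You instead reduce everything to the already-established universal-cover case by identifying $\tilde{G}_{N}$ with the quotient Lie group $\tilde{G}/N$, using that $\ker \t = \pi_{1}(G)$ is discrete, hence central, in the connected group $\tilde{G}$. This buys you the Lie group structure, the covering property, and well-definedness of the multiplication for free, at the cost of the one technical step you correctly isolate: matching the concatenation relation $\sim_{N}$ with the coset relation. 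Your Eckmann--Hilton argument does close this, with one caveat: the standard statement concerns loops, whereas here you need it for paths $h, h'$ with a common endpoint, namely that the pointwise difference loop $t \mapsto h(t) h'(t)^{-1}$ is homotopic rel endpoints to the concatenation of $h$ with $\overline{h'}$ (up to inversion, which is indeed harmless). The bilinear-square homotopy $(u,v) \mapsto h(u) h'(v)^{-1}$, restricted to suitable edge-paths versus the diagonal, proves exactly this, so the step is sound, but you should state and prove the path version rather than cite the loop version. Your well-definedness argument for $\a$ (inner automorphisms of a connected group act trivially on $\pi_{1}$) is also a clean alternative to the paper's, which instead describes $\a_{g}$ as conjugation by a lift.

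One required item is missing: the paper's definition of a Lie crossed module demands that $\a$ be \emph{smooth}, i.e.\ that the adjoint map $G \times \tilde{G}_{N} \to \tilde{G}_{N}$ is smooth, and your proof never verifies this. It is not a fatal gap within your framework: from the identity $\a_{\t([h]_{N})}([h']_{N}) = [h]_{N}[h']_{N}[h]_{N}^{-1}$ together with surjectivity of $\t$, the map $\a_{g}$ is conjugation by any lift of $g$; hence on a neighborhood $U$ of any $g \in G$ admitting a local section $\varphi : U \to \tilde{G}_{N}$ of the covering $\t$, one has $\a_{g'}(x) = \varphi(g')\, x\, \varphi(g')^{-1},$ which is smooth in $(g',x)$ because $\varphi$ and conjugation in the Lie group $\tilde{G}_{N} \cong \tilde{G}/N$ are smooth. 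This is essentially how the paper concludes, and you should add this verification (the smoothness of $\t$ itself, and the trivial checks that each $\a_{g}$ is an automorphism and $g \mapsto \a_{g}$ a homomorphism, come along for free from your identification).
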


\begin{proof}
$\t$ is well-defined by definition of the equivalence relation
$\sim_{N}.$ $\t$ is a homomorphism because $\t ( [h]_{N} [h']_{N})
= h(1) h'(1) = \t ( [h]_{N} ) \t ( [h']_{N} ).$ $\tilde{G}_{N}$ has a
natural topology coming from the quotient space of a subspace of paths
in $G.$ Because $\pi_{1} (G)$ is abelian, the conjugacy class of $N$ is
$N$ itself. Therefore, by a standard theorem of constructing covering
spaces (see for instance Chapter 3 of \cite{Ma})
$\t : \tilde{G}_{N} \to G$ is a covering map. By another standard result
in differential topology (see Proposition 2.12 of \cite{Le}), there is a
unique smooth structure on $\tilde{G}_{N}$ making $\t$ a smooth covering
map.
By construction, $\tilde{G}_{N}$ has a continuous multiplication making
it a topological group. The only things left to prove is that the
multiplication and inversion maps in $\tilde{G}_{N}$ are smooth. This
can be done locally using the smoothness of multiplication and inversion
in $G$ and the fact that $\t$ is a local diffeomorphism.
Therefore, $\tilde{G}_{N}$ is a Lie group. Since $\t$ is smooth, $\t$ is
a Lie group homomorphism. $\a_{g}$ is a well-defined group homomorphism
for all $g \in G$ because it can be described as conjugation. It is
smooth because for any $g \in G,$ there exists an open neighborhood $U$
around $g,$ a diffeomorphism $\varphi : U \to V,$ with $V$ a component
of $\t^{-1}(U),$ so that $U \ni g' \mapsto \a_{g'}$ coincides with
conjugation by $\varphi(g')$ by the proof of Lemma
\ref{lem:surjectiveLiecrossedmodule}. Since conjugation is smooth for
any Lie group, $\a$ is smooth. Therefore, $(\tilde{G}_{N}, G, \t,\a)$
is a Lie crossed module. 
\end{proof}

Note: We use the same notation $\t$ and $\a$ for the maps instead of
$\t_{N}$ and $\a_{N}$ since we typically fix $N$ in any given context. 

\begin{defn}
\label{defn:covering2groups}
Let $G$ be a Lie group and $N$ a subgroup of $\pi_{1} (G).$ Then
$(\tilde{G}_{N}, G, \t, \a)$ as described in Proposition
\ref{prop:Ncovercrossedmodule} is called the \emph{\uline{$N$-cover crossed
module}} of $G.$ Its associated 2-group is called the
\emph{\uline{$N$-covering 2-group}} and is denoted by
$\mathcal{B}\mathcal{G}_{N}.$ We sometimes abusively say
\emph{\uline{covering crossed module}} or \emph{\uline{covering 2-group}}
without referring to $N$ explicitly. 
\end{defn} 

Let $N \le \pi_{1} (G)$ be a subgroup of the fundamental group of a Lie
group $G.$ We will now construct a 2-category
$\widehat{G\text{-}\mathrm{Tor}_{N}}$ whose underlying 1-category
$\big ( \widehat{G\text{-}\mathrm{Tor}_{N}} \big )_{0,1}$ (see the
beginning of Section \ref{sec:trans2funct}) is $G\text{-}\mathrm{Tor}.$
Although the category $G\text{-}\mathrm{Tor}$ is not a Lie groupoid,
notice that the set of morphisms between any two $G$-torsors is
isomorphic to $G$ and therefore has a unique smooth structure.
Furthermore, the composition is a smooth map and is modeled by the
group multiplication map $G \times G \to G.$ By this we mean that by
choosing basepoints $a,b,$ and $c$ in $G$-Torsors $A, B,$ and $C$
respectively, the composition 
\be
G\text{-}\mathrm{Tor} (B,C) \times G\text{-}\mathrm{Tor} (A,B) \to
G\text{-}\mathrm{Tor} (A,C)
\ee
agrees with the multiplication $G \times G \to G$ under the isomorphisms
specified by the choice of basepoints. Therefore, the composition is
smooth. 
Thus, $G\text{-}\mathrm{Tor}$ is enriched in smooth manifolds. Using
this fact, we can extend $G\text{-}\mathrm{Tor}$ to an interesting
2-category $\widehat{ G\text{-}\mathrm{Tor}_{N} }$ in a non-trivial way.

Let $A$ and $B$ be two $G$-torsors and let $\varphi, \psi : A \to B$ be
two morphisms of $G$-torsors.
We define the set of 2-morphisms from $\varphi$ to $\psi$, drawn as
\be
\xymatrix{
B & & \ar@/_1.5pc/[ll]_{\varphi}="1" \ar@/^1.5pc/[ll]^{\psi}="2"
\ar@{}"1";"2"|(.2){\,}="3" \ar@{}"1";"2"|(.8){\,}="4"
\ar@{=>}"3";"4"^{}  A
}
,
\ee
to be the set of $N$-classes of paths from $\varphi$ to $\psi$ in
$G$-$\mathrm{Tor} (A,B).$ This means the following. 

\begin{defn}
Let $N \le \pi_{1}(G)$ be a subgroup. Two paths $\S : \varphi \to \psi$ 
and $\S' : \varphi \to \psi$ in $G$-$\mathrm{Tor} (A,B),$ drawn as 
\be
\xy 0;/r.15pc/:
(-20,0)*+{B}="B";
(20,0)*+{A}="A";
(-3,9)*+{}="f1";
(-3,-8)*+{}="y1";
(3,9)*+{}="f2";
(3,-8)*+{}="y2";
{\ar@/_1.5pc/_{\varphi} "A";"B"};
{\ar@/^1.5pc/^{\psi} "A";"B"};
{\ar@{=>}@/_0.5pc/ "f1" ; "y1"_{\S}};
{\ar@{=>}@/^0.5pc/ "f2" ; "y2"^{\S'}};
\endxy
,
\ee
are said to be \emph{\uline{$N$-equivalent}} if under the diffeomorphism
defined by 
\be
\label{eq:GtoTorABiso}
\begin{split}
G\text{-}\mathrm{Tor} (A,B) &\to G \\
\varphi & \mapsto e , 
\end{split}
\ee
the homotopy class of the loop $\begin{smallmatrix}  \S \\
\overset{\circ}{\overline{\S'}}\end{smallmatrix} : \varphi \to \varphi,$
which gets sent to an element of $\pi_{1}(G)$ under this diffeomorphism,
is an element of $N.$ The class associated to $\S$ is called an
\emph{\uline{$N$-class of paths}} and is denoted by $[\S]_{N}.$ 
\end{defn}

The choice of diffeomorphism (\ref{eq:GtoTorABiso}) where
$\varphi \mapsto e$ is merely for convenience. In particular, the
element  $\left[\begin{smallmatrix} \S \\
\overset{\circ}{\overline{\S'}} \end{smallmatrix} : \varphi \to
\varphi\right]$ is independent of this diffeomorphism. To see this, if
any other diffeomorphism was chosen, say sending some other morphism
$\varphi ' : A \to B$ to $e \in G,$ then there exists a unique $g \in G$
so that $\varphi \cdot g = \varphi '$ so that $\varphi \mapsto g^{-1}.$
In this case, one gets a loop based at $g^{-1}.$ To get one at $e,$ we
merely multiply by $g$ to obtain a loop based at $e \in G.$ This loop is
exactly the same as $\begin{smallmatrix} \S \\
\overset{\circ}{\overline{\S'}} \end{smallmatrix}$ under the
diffeomorphism defined by $\varphi \mapsto e.$ Therefore, the homotopy
class is independent of the diffeomorphism chosen. 

Vertical composition is defined on representatives as concatenation of
paths.
Horizontal composition can be defined using the $G \times G \to G$
multiplication. More explicitly, for two composable 2-morphisms as in 
\be
\xymatrix{
C & & \ar@/_1.5pc/[ll]_{\varphi'}="5" \ar@/^1.5pc/[ll]^{\psi'}="6"
\ar@{}"5";"6"|(.2){\,}="7" \ar@{}"5";"6"|(.8){\,}="8"
\ar@{=>}"7";"8"^{[\S']_N} B & & \ar@/_1.5pc/[ll]_{\varphi}="1"
\ar@/^1.5pc/[ll]^{\psi}="2" \ar@{}"1";"2"|(.2){\,}="3"
\ar@{}"1";"2"|(.8){\,}="4" \ar@{=>}"3";"4"^{[\S]_N}  A
}
,
\ee
choose representatives of such paths so that $\S : [0,1] \to
G\text{-}\mathrm{Tor} (A,B)$ and 
$\S' : [0,1] \rightarrow G\text{-}\mathrm{Tor}(B,C)$ 
with $\S ( 0 ) = \varphi, \S (1) = \psi, \S' (0) = \varphi',$ and
$\S' (1) = \psi'.$ Define the horizontal composition to be the $N$-class
of the path $\S' \circ \S$ defined by 
\be
s \mapsto (\S' \circ \S) (s) := \S' (s) \circ\S (s) \quad \text{ for } s
\in [ 0,1], 
\ee
where the composition on the right-hand-side is the usual composition of
morphisms in $G\text{-}\mathrm{Tor}.$ We check that horizontal
composition is well-defined. Suppose that $\S \sim_{N} \W$ and $\S'
\sim_{N} \W'.$ We must show that $\S' \circ \S \sim_{N} \W' \circ \W,$
i.e. 
\be
\left [ \begin{matrix}  \S' \circ \S \\
\overset{\circ}{\overline{\W' \circ \W}} \end{matrix} \right ] \in N
\ee
but a representative of this is given by 
\be
\begin{split}
\begin{matrix} \S' \circ \S \\
\overset{\circ}{\overline{\W' \circ \W}} \end{matrix} \;  (s) &= 
\begin{cases}
\S' (2s) \circ \S (2s) & \mbox{ for } 0 \le s \le \frac{1}{2} \\
\W' (2-2s) \circ \W (2-2s) & \mbox{ for } \frac{1}{2}  \le s \le 1
\end{cases}
\\
&= 
\begin{pmatrix} \S' \\ \overset{\circ}{\overline{\W'}} \end{pmatrix}
(s)\circ \begin{pmatrix} \S \\
\overset{\circ}{\overline{\W}} \end{pmatrix} (s) 
\end{split}
\ee
which gives two elements of $N$ (after taking the homotopy class) and
since $N$ is a subgroup the result is also an element of $N.$ A similar
argument is used to show that the interchange law holds. Therefore,
$\widehat{G\text{-}\mathrm{Tor}_{N}}$  defines a strict 2-category. We
summarize this as a definition. 

\begin{defn}
\label{defn:NGTor}
Let $G$ be a Lie group and $N \le \pi_{1} (G)$ a subgroup of the
fundamental group. The 2-category $\widehat{G\text{-}\mathrm{Tor}_{N}}$
has objects and 1-morphisms that of $G\text{-}\mathrm{Tor}.$ The
composition of 1-morphisms is the same as that in
$G\text{-}\mathrm{Tor}.$ The set of 2-morphisms between $G$-torsor
morphisms $\varphi$ and $\psi$ in $G\text{-}\mathrm{Tor} (A,B)$ are
$N$-classes of paths from $\varphi$ to $\psi.$ The vertical composition
of 2-morphisms is concatenation of representative paths. The horizontal
composition of 2-morphisms is the pointwise composition of $G$-torsor
morphisms after choosing representatives. 
\end{defn}

\begin{rmk}
When $N= \pi_{1} (G)$ the 2-categories
$\widehat{G\text{-}\mathrm{Tor}_{N}}$ and
$\widehat{G\text{-}\mathrm{Tor}}$ of \cite{SW4} are equivalent because
there is a unique $\pi_{1}(G)$-class of paths between any two morphisms
of $G$-torsors (since every loop is $\pi_{1} (G)$-equivalent to every
other loop). 
\end{rmk}

We will now start describing the path-curvature 2-functor, the structure
2-groupoid, and prove that it is indeed a transport 2-functor in the
sense of Definition \ref{defn:transport2functor}.

\begin{lem}
\label{lem:pathcurv}
Let $\tra \in \mathrm{Trans}_{\mathcal{B} G}^{1}
(M, G\text{-}\mathrm{Tor})$ be a transport functor and let
$N \le \pi_{1} (G)$ be a subgroup. Let $K_{N} (\tra) : \mathcal{P}_{2}
(M) \to \widehat{G\text{-}\mathrm{Tor}_{N}}$ be the following
assignment. At the level of objects and 1-morphisms $K_{N} (\tra)$
agrees with $\tra : \mathcal{P}_{1} (M) \to G\text{-}\mathrm{Tor}.$ For
every thin bigon $\G : \g \Rightarrow \de$ in $\mathcal{P}_{2} (M),$
choose a representative bigon, also denoted by $\G$, and let
\be
\label{eq:KNtrahomotopy}
K_{N} (\tra) ( \G )
:= \left [ s \mapsto \tra (\G ( \ \cdot \ , s) ) \right ]_{N}, 
\ee
i.e. the $N$-class of the path from $\tra (\g)$ to $\tra (\de)$ going
along $\tra ( \G ( \ \cdot \ , s) )$ as a function of $s \in [0,1].$
The notation means that $\G ( \ \cdot \ , s)$ is a thin path with
respect to the first coordinate for each fixed $s,$ and is depicted as a
one-parameter family of $G$-torsor morphisms  
\be
\xymatrix{
B & & & \ar@/_1.5pc/[lll]_{\varphi}="1" \ar@/^1.5pc/[lll]^{\psi}="2"
\ar@/_1pc/[lll] \ar@/_0.5pc/[lll]|-{\tra ( \G(\ \cdot \ , s) )} \ar[lll]
\ar@/^0.5pc/[lll] \ar@/^1pc/[lll]  A
}
.
\ee
This assignment is well-defined, i.e. the function
$s \mapsto \tra (\G ( \ \cdot \ , s) )$ defines a continuous path and
$K_{N} (\tra) ( \G )$ is independent of the choice of representative
bigon.
\end{lem}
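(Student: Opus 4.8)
The plan is to prove the two assertions of the Lemma separately: first that $s \mapsto \tra(\G(\cdot, s))$ is a continuous path in $G\text{-}\mathrm{Tor}(A, B)$, where $A := \tra(x)$ and $B := \tra(y)$ are the torsors over the common source $x$ and target $y$ of $\g$ and $\de$; and second that the resulting $N$-class is unchanged under a thin homotopy of the representing bigon. Note that for each fixed $s$ the slice $\G(\cdot, s)$ is a genuine path, so its thin homotopy class is a well-defined morphism of $\mathcal{P}_{1}(M)$ and $\tra(\G(\cdot,s))$ is an honest element of $G\text{-}\mathrm{Tor}(A,B)$; the only content of the first assertion is continuity in $s.$

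For continuity I would use that $\tra$ is a transport functor, hence admits a smooth $\pi$-local $i$-trivialization, so that by the differential-cocycle description of Section \ref{sec:diffcocycledata} parallel transport along a path is computed locally by path-ordered exponentials of a smooth connection $1$-form, patched by smooth transition functions at the jumps. Since the bigon $\G : [0,1] \times [0,1] \to M$ is smooth, equation (\ref{eq:bigon}) guarantees that $\G(\cdot, s)$ is a path with sitting instants from $x$ to $y$ for every $s,$ so $\tra(\G(\cdot,s))$ is defined; and by compactness one may cover the family by finitely many trivializing charts and choose a common decomposition, on each piece of which the path-ordered exponential depends smoothly on $s$ (smooth dependence of the transport ODE on parameters). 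Hence $s \mapsto \tra(\G(\cdot,s))$ is smooth, in particular continuous, and defines a path from $\varphi := \tra(\g)$ to $\psi := \tra(\de)$ in $G\text{-}\mathrm{Tor}(A,B).$

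For independence, let $\G$ and $\G'$ be two representatives of the same thin bigon and let $W : [0,1]^{3} \to M$ be a thin homotopy from $\G$ to $\G'$ as in Definition \ref{defn:bigonthinhomotopy}. The key observation is that the two-parameter family $(s,r) \mapsto \tra(W(\cdot, s, r))$ is a homotopy relative to endpoints between the two paths in question. Indeed, the boundary conditions of a thin homotopy of bigons give $W(\cdot, s, r) = \G(\cdot, s)$ for $r$ near $0$ and $W(\cdot, s, r) = \G'(\cdot, s)$ for $r$ near $1,$ so the $r=0$ and $r=1$ edges recover $\tra(\G(\cdot,s))$ and $\tra(\G'(\cdot,s))$; moreover $W(\cdot, s, r) = \g$ for $s$ near $0$ and $W(\cdot, s, r) = \de$ for $s$ near $1,$ independently of $r,$ so the $s=0$ and $s=1$ edges are the constant morphisms $\varphi$ and $\psi.$ Continuity of $(s,r) \mapsto \tra(W(\cdot,s,r))$ follows from the same smoothness argument as above, now applied to the smooth three-parameter map $W.$

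Therefore the loop $\begin{smallmatrix} \S \\ \overset{\circ}{\ov{\S'}} \end{smallmatrix}$ formed from $\S := \big(s \mapsto \tra(\G(\cdot,s))\big)$ and the reverse of $\S' := \big(s \mapsto \tra(\G'(\cdot,s))\big)$ is null-homotopic in $G\text{-}\mathrm{Tor}(A,B),$ hence represents the identity of $\pi_{1}(G)$ under the diffeomorphism (\ref{eq:GtoTorABiso}); since the identity lies in every subgroup $N,$ the paths are $N$-equivalent and $[\S]_{N} = [\S']_{N}.$ (This in fact shows independence even for $N = \{1\},$ which is consistent since the two representatives give genuinely homotopic paths.) I expect the main obstacle to be the continuity claim: making rigorous that parallel transport along a smoothly varying family of paths varies continuously with the family parameters. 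This rests on the smoothness built into the local trivialization of $\tra$ together with the smooth dependence of solutions of the transport ODE on parameters; once it is in hand, everything else is a matter of reading off the boundary conditions of the thin homotopy.
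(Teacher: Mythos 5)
Your proposal is correct and takes essentially the same route as the paper's own proof: continuity (in fact smoothness) of $s \mapsto \tra(\G(\,\cdot\,,s))$ is deduced from the smooth local trivialization underlying the transport functor, and independence of the representative bigon follows by applying $\tra$ slice-wise to a thin homotopy $H$ between the two representatives, producing an ordinary homotopy rel endpoints between the two paths, whose $N$-classes therefore agree for every subgroup $N.$ The only cosmetic difference is that the paper justifies smoothness by citing Theorem 3.12 of \cite{SW1} together with the diffeomorphism $\GTor(\tra(x),\tra(y)) \cong G,$ whereas you argue it via smooth parameter dependence of the locally defined path-ordered exponentials; both rest on the same underlying fact.
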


$K_{N}$ is called the \uline{path-curvature 2-functor} 
associated to $\tra$ and $N \le \pi_{1}(G).$ 

\begin{proof}
The assignment in (\ref{eq:KNtrahomotopy}) is well-defined since
ordinary homotopy is a special case of thin homotopy. More explicitly
first notice that for a given bigon $\G : \g \Rightarrow \de$ the
function $s \mapsto \tra (\G (\ \cdot \ , s ) )$ is smooth because
$\tra$ is a transport functor (this follows for instance from Theorem
3.12 of \cite{SW1} and the fact that $\GTor(\tra(x),\tra(y))$ is
diffeomorphic to $G$). 
Now, suppose that $\G'$ is another representative bigon for the thin
bigon $\G.$ Then there exists a thin homotopy
$H : [0,1]\times[0,1]\times[0,1] \to M$ with $H(t,s,0) = \G(t,s)$ and
$H(t,s,1) = \G'(t,s).$ Thus $(s,r) \mapsto \tra(H(\ \cdot \ , s, r ) )$
is a smooth homotopy from $s \mapsto \tra (\G (\ \cdot \ , s ) )$ to
$s \mapsto \tra (\G' (\ \cdot \ , s ) ),$ which in particular is a
homotopy. Thus $K_{N} (\tra) ( \G )$ is well-defined. 
Similar arguments show that vertical and horizontal compositions are
respected under this assignment. Therefore, $K_{N} (\tra)$ defines a
strict 2-functor. 
\end{proof}

We construct a 2-functor $i_{N} : \mathcal{B} \mathcal{G}_{N} \to
\widehat{G\text{-}\mathrm{Tor}_{N}}$ as follows. By definition, a
2-morphism in  $\mathcal{B} \mathcal{G}_{N}$ is of the form 
\be
\xymatrix{
\bullet & & \ar@/_1.5pc/[ll]_{g}="1" \ar@/^1.5pc/[ll]^{h(1) g}="2"
\ar@{}"1";"2"|(.2){\,}="3" \ar@{}"1";"2"|(.8){\,}="4"
\ar@{=>}"3";"4"|-{([h]_{N},g)}  \bullet
}
\ee
where $[h]_{N}$ is viewed as an $N$-class of a path $h$ in $G$ starting
at the identity $e$ in $G$ and ending at a point written as
$\t ([h]_{N}) \equiv h(1).$ The image of this under $i_{N}$ is defined
to be
\be
\label{eq:iNimage}
\xymatrix{
G & & & \ar@/_2pc/[lll]_{L_{g}}="1" \ar@/^2pc/[lll]^{L_{ h(1) g}}="2"
\ar@{}"1";"2"|(.2){\,}="3" \ar@{}"1";"2"|(.8){\,}="4"
\ar@{=>}"3";"4"|-{[ s \mapsto L_{h(s) g} ]_{N} }  G
}
,
\ee
where $s \mapsto L_{h(s) g}$ is the path in ${G}\text{-}\mathrm{Tor}
(G,G) \cong G$ corresponding to the path $s \mapsto h (s) g$ in $G$
under this isomorphism.  At this point it is not clear why the vertical
composition is respected under $i_{N}.$

\begin{lem}
\label{lem:coveringmultiplication}
Let $(H, G, \t, \a)$ be a covering crossed module with elements of $H$
thought of as certain equivalence classes of paths in $G$ starting at
the identity $e \in G.$ Let $h$ and $h'$ be two representatives of
elements $[h], [h'] \in H.$ Denote the targets of $h$ and $h'$ by $g$
and $g',$ respectively. Then
\be
[h'] [h] = \left [ \begin{smallmatrix} h \\ \overset{\circ}{h' g}
\end{smallmatrix} \right ] , 
\ee
where $(h' g) (t) := h'(t) g$ for all $t \in [0,1],$ and the vertical
composition is the composition of paths starting with the one on top. 
\end{lem}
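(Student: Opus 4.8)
Let $(H, G, \t, \a)$ be a covering crossed module with elements of $H$ thought of as equivalence classes of paths in $G$ starting at $e$. For representatives $h, h'$ of $[h], [h'] \in H$ with targets $g = h(1)$ and $g' = h'(1)$, we must show
\be
[h'][h] = \left[ \begin{smallmatrix} h \\ \overset{\circ}{h'g} \end{smallmatrix} \right],
\ee
where $(h'g)(t) := h'(t)g$ and the vertical composition stacks the top path first.

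The plan is to unravel both sides as actual paths in $G$ (up to the relevant homotopy equivalence) and show they are homotopic rel endpoints, or more precisely that their difference lies in $N$. First I would recall that the multiplication in $\tilde G_N$ (equivalently in $H$) was originally defined by choosing representatives and multiplying them \emph{pointwise}: so $[h'][h]$ is represented by the path $t \mapsto h'(t)h(t)$. Thus the concrete task is to compare the pointwise-product path $t \mapsto h'(t)h(t)$ with the concatenated path $\begin{smallmatrix} h \\ \overset{\circ}{h'g} \end{smallmatrix}$, which traverses $h$ at double speed on $[0,\tfrac12]$ (from $e$ to $g$) and then $t \mapsto h'(2t-1)g$ on $[\tfrac12,1]$ (from $g$ to $g'g$). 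Both paths start at $e$ and I should first verify they end at the same point: the pointwise product ends at $h'(1)h(1) = g'g$, and the concatenation ends at $h'(1)g = g'g$, so the endpoints agree, which is the minimal requirement for the $N$-classes to even be comparable.

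The core step is then to exhibit a homotopy rel endpoints between these two representative paths; since ordinary homotopy rel endpoints guarantees the two loops-of-difference are null in $\pi_1(G)$ and hence trivially in $N$, this suffices to conclude $\sim_N$-equivalence. The natural homotopy is the standard ``reparametrization/interpolation'' argument that shows pointwise multiplication of based paths agrees with concatenation of one path against the other translated to the endpoint — essentially the Eckmann–Hilton-style comparison. Concretely I would write down an explicit $F : [0,1]\times[0,1] \to G$, for instance interpolating between the simultaneous product $h'(t)h(t)$ and the ``do $h$ first, then $h'$'' concatenation by sliding the parameter at which each factor is evaluated; a convenient choice is $F(t,s) = h'\big(\max(0, \tfrac{2t-s}{2-s})\big)\, h\big(\min(1, \tfrac{2t}{2-s})\big)$ or a similarly-staged formula, checking $F(t,0)$ gives one representative and $F(t,1)$ the other, and that $F(0,s)=e$, $F(1,s)=g'g$ for all $s$. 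I would not grind through verifying every boundary value of the explicit formula in the final writeup; I would instead invoke that such a homotopy is the classical comparison between pointwise product and staggered concatenation of based paths.

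The main obstacle I anticipate is purely bookkeeping: getting the explicit interpolating homotopy to respect both endpoints \emph{and} the sitting-instant/speed conventions used for the vertical concatenation $\begin{smallmatrix} h \\ \overset{\circ}{h'g}\end{smallmatrix}$, since the double-speed parametrization on $[0,\tfrac12]$ and $[\tfrac12,1]$ must be matched against the unit-speed pointwise product. A cleaner route, which I would prefer if the explicit formula becomes unwieldy, is to argue structurally: note that in any topological group the map sending a pair of based paths to their pointwise product is homotopic (rel endpoints, through based paths) to their concatenation after right-translating the second factor, because the two constituent loops $t\mapsto h(t)$ (held at $e$ on the right) and $t \mapsto h'(t)g$ commute up to homotopy in the group. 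Since $\a$ is conjugation by a lift (Lemma \ref{lem:surjectiveLiecrossedmodule}), this is consistent with the crossed-module relation \eqref{eq:Pfeiffer}, and the homotopy stays within a single fiber structure so the resulting difference loop is null-homotopic, hence in $N$. Either way, the conclusion is that the two sides represent the same $N$-class, proving the identity.
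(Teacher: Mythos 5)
Your proposal is correct and takes essentially the same approach as the paper: the paper's proof likewise exhibits an explicit homotopy rel endpoints (written piecewise, staggering the evaluation of $h'$ and $h$) between the concatenation $\left[\begin{smallmatrix} h \\ \overset{\circ}{h'g}\end{smallmatrix}\right]$ and the pointwise product $t \mapsto h'(t)h(t)$ representing $[h'][h]$. Your clamped formula $F(t,s) = h'\bigl(\max(0,\tfrac{2t-s}{2-s})\bigr)\,h\bigl(\min(1,\tfrac{2t}{2-s})\bigr)$ does have the required boundary values, so it is a valid instance of the same interpolation idea.
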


\begin{proof}
A homotopy is given by 
\be
(t,s) \mapsto 
\begin{cases}
h' ( st ) h \big ( (2-s) t \big )
& \mbox{ for } 0 \le t \le \frac{1}{2} \\
h' \big ( (2-s) t - 1 + s  \big ) h ( st + 1 - s )
& \mbox{ for } \frac{1}{2} \le t \le 1
\end{cases} 
\ee
with $s=0$ projecting to $\left [ \begin{smallmatrix} h \\
\overset{\circ}{h'  g}  \end{smallmatrix} \right ]$ and $s = 1$
projecting to $[h' h] = [h'] [h].$ 
\end{proof}

We now come to one of our main theorems. 

\begin{thm}
\label{thm:pathcurvature2functor}
The path-curvature 2-functor $K_{N} (\tra)$ defined above is a transport
2-functor with $\mathcal{B} \mathcal{G}_{N}$-structure. 
\end{thm}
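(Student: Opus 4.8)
Theorem \ref{thm:pathcurvature2functor} asserts that the path-curvature 2-functor $K_N(\tra) : \mathcal{P}_2(M) \to \widehat{G\text{-}\mathrm{Tor}_N}$ is a transport 2-functor with $\mathcal{B}\mathcal{G}_N$-structure, in the sense of Definition \ref{defn:transport2functor}. So I need to exhibit a surjective submersion $\pi : Y \to M$ together with a smooth $\pi$-local $i_N$-trivialization $(\triv, t)$ of $K_N(\tra)$.

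**The plan.** The idea is to inherit all the trivialization data from $\tra$ and lift it one categorical level. By the main theorem of \cite{SW1} (the equivalences summarized in (\ref{eq:allequivalences1})), the transport functor $\tra$ admits a surjective submersion $\pi : Y \to M$ and a smooth $\pi$-local $i$-trivialization, where $i : \mathcal{B}G \to G\text{-}\mathrm{Tor}$ is the functor of (\ref{eq:Lg}). First I would take \emph{this same} $\pi : Y \to M$. Over $Y$, the local transport functor $\triv : \mathcal{P}_1(Y) \to \mathcal{B}G$ corresponds, by Section \ref{sec:diffcocycledata}, to a 1-form $A \in \W^1(Y; \un G)$. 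I then define the local 2-functor $\widetilde{\triv} : \mathcal{P}_2(Y) \to \mathcal{B}\mathcal{G}_N$ by keeping $\triv$ on objects and 1-morphisms and, on a thin bigon $\G : \g \Rightarrow \de$ in $Y$, setting $\widetilde{\triv}(\G) := [s \mapsto \triv(\G(\,\cdot\,,s))]_N$, exactly mirroring the global formula (\ref{eq:KNtrahomotopy}). The content of Lemma \ref{lem:pathcurv} (applied to the trivialized transport functor over $Y$) shows this is a well-defined strict smooth 2-functor. The key compatibility is that $i_N \circ \widetilde{\triv} = K_N(\triv_i \circ \pi_*)$ on the nose: on 2-morphisms this is precisely the assertion that $i_N$ sends $([h]_N, g)$ to $[s \mapsto L_{h(s)g}]_N$ in (\ref{eq:iNimage}), which is the path of left-multiplications obtained from the path $\triv(\G(\,\cdot\,,s))$ in $G \cong G\text{-}\mathrm{Tor}(G,G)$.

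**Constructing the trivializing pseudonatural equivalence.** The 1-dimensional trivialization gives a natural isomorphism $t : \pi^* \tra \Rightarrow \triv_i$. Since $K_N(\tra)$ restricted to $\mathcal{P}_1(M)$ is exactly $\tra$, and $\widetilde{\triv}$ restricted to 1-morphisms is exactly $\triv$, the \emph{same} $t$ furnishes the 1-morphism data of the required pseudonatural equivalence $\widetilde t : \pi^* K_N(\tra) \Rightarrow (\widetilde{\triv})_{i_N}$; I would promote $t$ to a pseudonatural transformation of 2-functors whose 2-morphism-components (the naturality 2-cells $\widetilde t(\G)$) are the ones forced by naturality on each bigon. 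Because every modification appearing here will be built from identity 2-cells in $\widehat{G\text{-}\mathrm{Tor}_N}$ at the level of $N$-classes, $\widetilde t$ is an equivalence with weak inverse $\ov{\widetilde t}$ induced by $\ov t$, and the zig-zag modifications $i_{\widetilde t}, j_{\widetilde t}$ of Definition \ref{defn:trivialization2} are trivial. Finally I must verify \emph{smoothness} of the descent object $\mathrm{Ex}^2_\pi(K_N(\tra), \widetilde{\triv}, \widetilde t)$ in the sense of Definition \ref{defn:smoothdescent2}: condition (i) holds since $\widetilde{\triv}$ is smooth; conditions (ii) and (iii) reduce, via the equivalence $\mathcal{F}$ of Section \ref{sec:trans2funct}, to smoothness of the \emph{already-known} 1-dimensional descent data of $\tra$, because the $B$-form of the differential cocycle is forced to be $B = dA + \tfrac12[A,A]$ lying in $\un{\ker\t} = 0$ (as $\t$ is a covering map, $\un\t$ is an isomorphism and the fake curvature vanishes), so no new analytic content is introduced.

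**The main obstacle.** The genuinely delicate step is the smoothness verification and, relatedly, checking that $\widetilde{\triv}$ really is smooth as a map on the 2-morphism space $P^2 Y \to \tilde G_N$. The subtlety is that $\tilde G_N$ is a quotient of a path space and its smooth structure is the one making $\t$ a covering map (Proposition \ref{prop:Ncovercrossedmodule}); I must confirm that $\G \mapsto [s \mapsto \triv(\G(\,\cdot\,,s))]_N$ lands continuously/smoothly in this cover. The right way to dispatch this is to argue locally: since $\t : \tilde G_N \to G$ is a local diffeomorphism and the underlying map $\G \mapsto \triv(\de) = \t(\widetilde{\triv}(\G))$ is smooth into $G$ (this is the smoothness of $\triv$ already in hand), the lift is smooth by the unique-lifting property of covering maps, provided one checks that the homotopy-class assignment is locally constant in the transverse direction — which is exactly the homotopy-invariance established in the proof of Lemma \ref{lem:pathcurv}. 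I would structure the proof so that all higher coherence (compositors, unifiers, the modifications $\psi$ and $f$) are observed to be trivial in $\widehat{G\text{-}\mathrm{Tor}_N}$ because the relevant $N$-classes of paths agree strictly, thereby sidestepping any lengthy coherence computation and isolating the honest work in the one smoothness lemma above.
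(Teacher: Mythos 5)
Your overall architecture matches the paper's proof: reuse the surjective submersion $\pi : Y \to M$ and the smooth $\pi$-local $i$-trivialization $(\triv, t)$ of $\tra$, define the local 2-functor on thin bigons by paths of holonomies, take the trivializing pseudonatural equivalence to be $t$ itself with identity 2-cells, observe that the modifications $\psi_{N}$ and $f_{N}$ are trivial, and reduce smoothness of the descent morphism $g_{N}$ to the already-known smoothness of $\tilde{g}$. However, your definition of the local 2-functor is ill-formed as written: $[s \mapsto \triv(\G(\,\cdot\,,s))]_{N}$ is not an element of $\tilde{G}_{N}$, because elements of $\tilde{G}_{N}$ are $N$-classes of paths in $G$ that \emph{start at the identity}, whereas your path starts at $\triv(\g)$. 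The paper's formula (\ref{eq:triv}) right-translates by $\triv(\g)^{-1}$, setting $\triv_{N}(\G) := \big( [s \mapsto \triv(\G(\,\cdot\,,s))\,\triv(\g)^{-1}]_{N}, \triv(\g) \big) \in \tilde{G}_{N} \rtimes G$. This is not cosmetic: once the normalization is present, functoriality under vertical and horizontal composition is no longer an immediate consequence of Lemma \ref{lem:pathcurv} (which lands in $\widehat{G\text{-}\mathrm{Tor}_{N}}$, where no base-point normalization occurs), and the paper must exhibit explicit homotopies to verify it; your proposal skips this check entirely.

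The more serious gap is the smoothness argument. The unique-lifting property of the covering $\t : \tilde{G}_{N} \to G$ produces, from a \emph{continuous} map into $G$ and a point lift, a unique continuous lift; it cannot certify that your already-given set-theoretic lift $p_{\tilde{G}_{N}} \circ \triv_{N} : P^{2}Y \to \tilde{G}_{N}$ is continuous or smooth --- that is precisely the thing to be proved. Nor is it a ``local constancy'' statement: the lift is not locally constant, since nearby bigons have nearby but distinct endpoint holonomies and hence genuinely distinct elements of $\tilde{G}_{N}$; homotopy invariance only gives well-definedness on thin-homotopy classes, not continuity in a family. The missing mechanism, which occupies the bulk of the paper's proof, is diffeological plot-lifting: for a plot $\varphi : U \to P^{2}Y$, the quotient smooth structure (Example \ref{ex:quotient}) supplies an open cover $\{U_{j}\}$ of $U$ and plots $\varphi_{j} : U_{j} \to BY$ covering $\varphi$; the mapping-space adjunction (Example \ref{ex:maps}) converts these into smooth maps $U_{j} \times [0,1] \to G$, i.e.\ continuous local families of representative paths based at $e$; homotopy invariance then identifies these families with the lift, and only at that point does local triviality of the covering upgrade smoothness of the projection $\t \circ p_{\tilde{G}_{N}} \circ \triv_{N}$ to smoothness of the lift itself. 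Without producing these local smooth families of representatives from the plot, your invocation of covering-space theory has nothing to act on, and the central analytic claim of the theorem remains unproved.
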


\begin{proof}
To prove this, we must provide a $\pi_{N}$-local $i_{N}$-trivialization
of $K_{N} (\tra )$ and show that the associated descent object is
smooth. This will be done in several steps, outlined as follows. 
\begin{enumerate}[i)]
\item
Define $\triv_{N} : \mathcal{P}_{2} (Y) \to \mathcal{B} \mathcal{G}_{N}$
and show it is a smooth strict 2-functor. 
\item
Define a natural equivalence $t_{N} : \pi^{*} K_{N}(\tra) \Rightarrow
i_{N} \circ \triv_{N}.$ 
\item
Explicitly construct the associated descent object $(\triv_{N}, g_{N},
\psi_{N}, f_{N}).$ 
\item
Show that the descent object is smooth. 
\end{enumerate}

\begin{enumerate}[i)]
\item 
To start, $\tra : \mathcal{P}_{1} (M) \to G\text{-}\mathrm{Tor}$ is
assumed to be a transport functor, so there exists a $\pi$-local
$i$-trivialization $(\triv : \mathcal{P}_{1} (Y) \to \mathcal{B} G, t :
\pi_{1}^{*} \triv_{i} \Rightarrow \pi_{2}^{*} \triv_{i} ),$ where $\pi :
Y \to M$ is a surjective submersion, and whose associated descent object
$\mathrm{Ex}^{1}_{\pi} ( \tra, \triv, t )$ is smooth. We first define
$\pi_{N} : Y \to M$ to be $\pi$. Then we define $\triv_{N} :
\mathcal{P}_{2} (Y) \to \mathcal{B} \mathcal{G}_{N}$ by making it agree
with $\triv$ on the 1-category $\mathcal{P}_{1} (Y)$ inside
$\mathcal{P}_{2} (Y).$ For a thin bigon $\G : \g \Rightarrow \de$ in $Y$
we define
\be
\label{eq:triv}
\triv_{N} ( \G ) :=
\Big ( \left [ s \mapsto \triv ( \G ( \ \cdot \ , s ) )
\triv ( \g )^{-1} \right ]_{N} , \triv ( \g ) \Big ) \in \tilde{G}_{N}
\rtimes G
\ee
Note that $\left [ s \mapsto \triv ( \G ( \ \cdot \ , s ) ) \triv
( \g )^{-1} \right ]_{N}$ makes sense as an element of $\tilde{G}_{N}$
because $\tilde{G}_{N}$ is precisely defined to be the set of
$N$-classes of paths in $G$ starting at the identity of $G.$ This is
well-defined because thin homotopy factors through ordinary homotopy
(see Proof of Lemma \ref{lem:pathcurv}). 

We first prove that $\triv_{N}$ as defined is a strict 2-functor. It is
already a strict 2-functor at the level of objects and 1-morphisms. We
first check that vertical composition of bigons goes to vertical
composition of bigons. Consider two bigons $\G : \g \Rightarrow \de$
and%
\footnote{Technically, $\D : \de' \Rightarrow \e$ and there is a thin
homotopy $\S : \de \Rightarrow \de'$ but this means $\triv(\de) = \triv
(\de')$ so the above statement still holds.
}
 $\D : \de \Rightarrow \e.$ Their respected images under the assignment
above gives 
\be
\xymatrix{
\triv ( y ) & & \ar@/_2pc/[ll]_{\triv ( \g )}="1"
\ar[ll]|-{\triv ( \de )}="2" \ar@{}"1";"2"|(.25){\,}="3"
\ar@{}"1";"2"|(.85){\,}="4" \ar@{=>}"3";"4"|-{ \triv_{N} ( \G ) }
\ar@/^2pc/[ll]^{\triv ( \e )}="5" \ar@{}"2";"5"|(.15){\,}="6"
\ar@{}"2";"5"|(.75){\,}="7"
\ar@{=>}"6";"7"|-{\triv_{N} ( \D )} \triv ( x ) 
}
\quad = \quad 
\xymatrix{
\triv ( y ) & & \ar@/_1.5pc/[ll]_{\triv ( \g )}="1"
\ar@/^1.5pc/[ll]^{\triv ( \e )}="2" \ar@{}"1";"2"|(.15){\,}="3"
\ar@{}"1";"2"|(.85){\,}="4" \ar@{=>}"3";"4"|-{\begin{smallmatrix}
\triv_{N} ( \G )\\  \triv_{N} ( \D )
\end{smallmatrix} } \triv ( x )
}
,
\ee
which, after taking the $\tilde{G}_{N}$ component, gives 
\be
\left [ s \mapsto \triv ( \D ( \ \cdot \ , s ) ) \triv ( \de )^{-1}
\triv ( \G ( \ \cdot \ , s ) ) \triv ( \g )^{-1} \right ]_{N} 
\ee
while first composing in $\mathcal{P}_{2} (Y)$ and then applying
$\triv_{N}$ gives%
\footnote{Again, this is technically not correct. One has to use a thin
homotopy $\S : \de \Rightarrow \de'$ but the reader can check that the
proof follows through with a slightly modified homotopy.}
\be
\triv_{N} \left ( \begin{matrix} \G \\ \overset{\circ}{\D} \end{matrix}
\right ) = \left (  \left [ s \mapsto \triv \left ( \begin{matrix} \G \\
\overset{\circ}{\D} \end{matrix} ( \ \cdot \ ,s) \right )
\triv (\g)^{-1} \right ]_{N} , \triv (\g) \right ) .
\ee
A homotopy between these two representatives is 
given by $H (s,r) := $
\be
{\footnotesize
\begin{cases}
\triv ( \G ( \ \cdot \ , (r+1)s ) ) \triv (\g)^{-1}
& \mbox{ for } 0 \le s \le \frac{r}{2} \\
\triv ( \D ( \ \cdot \ , (r+1)s -r ) )
\triv (\de)^{-1} \triv ( \G ( \ \cdot \ , (r+1)s ) ) \triv (\g)^{-1}
& \mbox{ for } \frac{r}{2} \le s \le 1- \frac{r}{2} \\
\triv ( \D ( \ \cdot \ , (r+1)s -r ) )  \triv (\g)^{-1}
& \mbox{ for } 1- \frac{r}{2} \le s \le 1 
\end{cases} 
}
\ee
which indeed satisfies 
\be
H ( s, 0) = \triv ( \D ( \ \cdot \ , s ) ) \triv (\de)^{-1}
\triv ( \G ( \ \cdot \ , s ) ) \triv (\g)^{-1} 
\ee
and 
\be
H (s, 1) = 
\begin{cases}
\triv ( \G ( \ \cdot \ , 2s ) ) \triv (\g)^{-1}
& \mbox{ for } 0 \le s \le \frac{1}{2}  \\
\triv ( \D ( \ \cdot \ , 2s -1 ) )  \triv (\g)^{-1}
& \mbox{ for } 1- \frac{1}{2} \le s \le 1 
\end{cases} 
.
\ee
This proves more than what we needed since all we had to show was that
the two elements are in the same $N$-class. Showing that the two
representatives are homotopic is stronger and implies they're in the
same $N$-class. 

Now consider the horizontal composition of $\G : \g \Rightarrow \de$ and
$\Pi : \a \Rightarrow \b$ written as $\Pi \circ \G : \a \circ \g
\Rightarrow \b \circ \de.$ First composing the thin bigons and then
applying the map $\triv_{N}$ gives 
\be
\triv_{N} ( \Pi \circ \G ) = \left ( \left [ s \mapsto \triv \Big ( (\Pi
\circ \G) ( \ \cdot \ , s ) \Big ) \triv ( \a \circ \g ) ^{-1}
\right ]_{N} , \triv(\a \circ \g) \right ) 
\ee
while first applying the map $\triv$ to each thin bigon and then
multiplying in $\mathcal{B} \mathcal{G}_{N}$ gives 
\be
\begin{split}
&p_{\tilde{G}_{N}} \left ( \triv_{N} ( \Pi ) \triv_{N} ( \G ) \right )\\
&= \left [ s \mapsto \triv ( \Pi ( \ \cdot \ , s ) ) \triv ( \a )^{-1}
\triv ( \a ) \triv ( \G ( \ \cdot \ , s ) ) \triv ( \g )^{-1}
\triv ( \a )^{-1} \right ]_{N} \\
&= \left [ s \mapsto \triv ( \Pi ( \ \cdot \ , s ) ) \triv ( \G ( \
\cdot \ , s ) ) \triv ( \a \circ \g ) ^{-1}  \right ]_{N} \\
&= \left [ s \mapsto \triv \Big ( (\Pi \circ \G) ( \ \cdot \ , s )
\Big ) \triv ( \a \circ \g ) ^{-1} \right ]_{N}
\end{split}
\ee
because for every fixed $s,$ parallel transport of paths is a
homomorphism.
Therefore, $\triv_{N}$ defines a strict 2-functor. 

We now show that $\triv_{N}$ is a smooth 2-functor. We already know
$\triv_{N}$ is smooth at the level of objects and 1-morphisms. We must
therefore show $\triv_{N} : P^2 Y \to \tilde{G}_{N} \rtimes G$ is
smooth. At this point, the reader should review Appendix
\ref{sec:smooth} because we will recall several facts in the proof of
this claim. By Definition \ref{defn:smoothmaps}, $\triv_{N}$ is smooth
if and only if for every plot $\varphi : U \to P^{2} Y,$ the composition
$\triv_{N} \circ \varphi : U \to \tilde{G}_{N} \rtimes G$ is a plot. By
Example \ref{ex:manifold}, $\triv_{N} \circ \varphi$ is a plot if and
only if it is smooth. By Example \ref{ex:product}, $\triv_{N} \circ
\varphi$ is smooth if and only if both projections $p_{G} \circ\triv_{N}
\circ \varphi$ and $p_{\tilde{G}_{N}} \circ \triv_{N} \circ \varphi$ are
smooth. Since we already showed that $p_{G} \circ \triv_{N} \circ
\varphi = \triv \circ s \circ \varphi$ is smooth (here $s$ is the source
of a thin bigon), it remains to show that $p_{\tilde{G}_{N}} \circ
\triv_{N} \circ \varphi$ is smooth. 

For convenience for this proof, set $f:= p_{\tilde{G}_{N}} \circ
\triv_{N}.$ By definition, $f \circ \varphi$ is given by 
\be
U \ni u \mapsto \left [ s \mapsto \triv \big ( \varphi (u) ( \ \cdot \ ,
s ) \big ) \triv \big ( \varphi (u) ( \ \cdot \ , 0)  \big )^{-1}
\right ]_{N},
\ee 
where we've chosen a representative bigon $\varphi (u) : [0,1] \times
[0,1] \to Y,$ fixed $s$ to get a thin path $\varphi (u) ( \ \cdot \ ,
s ),$ and then applied $\triv$ (unfortunately, there is a lot of abuse
of notation to avoid an overabundance of brackets and symbols). The
problem with this is that although we know we can always choose bigons
$\varphi(u),$ these choices need not form a \emph{smooth} family of
bigons in an open neighborhood of $u \in U.$ Therefore, proving
smoothness this way will not work. 

Instead, we use the smooth structures we've defined to \emph{construct}
such a smooth family of bigons. $P^{2} Y$ is the quotient of $BY,$
bigons in $Y,$ under thin homotopy and its smooth structure was defined
as such. Therefore, by Example \ref{ex:quotient}, $\varphi : U \to
P^{2} Y$ is a plot if and only if there exists an open cover
$\{ U_{j} \}_{j \in J}$ of $U$ and plots $\{ \varphi_{j} : U_{j} \to
BY \}_{j \in J}$ such that
\be
\label{eq:liftingthinbigons}
\xy0;/r.15pc/:
(-15,15)*+{BY}="X";
(-15,-15)*+{P^{2} Y}="Y";
(15,15)*+{U_{j}}="j";
(15,-15)*+{U}="U";
{\ar@{^{(}->} "j";"U"};
{\ar_(0.50){\varphi_{j}}"j";"X"};
{\ar^(0.45){\varphi}"U";"Y"};
{\ar"X";"Y"_{q}};
\endxy
\ee
commutes for all $j \in J.$ For the purposes of this proof, $q$ is the
quotient map. 

Now, $BY$ itself is a subspace of the space of smooth squares
$Y^{[0,1]^{2}}$ in $Y.$ Denote the inclusion of $BY$ into
$Y^{[0,1]^{2}}$ by $k.$ By Example \ref{ex:subspace}, $\varphi_{j} :
U_{j} \to BY$ is a plot if and only if $k \circ \varphi_{j} : U_{j} \to
Y^{[0,1]^{2}}$ are plots. By Example \ref{ex:maps}, $k \circ \varphi_{j}
: U_{j} \to Y^{[0,1]^{2}}$ is a plot if and only if the associated
function $\widetilde{k\circ\varphi_{j}} : U_{j} \times [0,1]^{2} \to Y$
defined by $\widetilde{k\circ\varphi_{j}} (u,t,s) :=
\Big(k\big(\varphi_{j}(u)\big)\Big)(t,s)$ is smooth. This gives us our
first desired fact: the plot $\varphi : U \to P^2 Y$ gives a smooth
family of bigons $\varphi_{j} : U_{j} \to BY$ such that $q\circ
\varphi_{j} = \varphi |_{U_{j}}.$ Furthermore, since
$\widetilde{k\circ\varphi_{j}}$ is a smooth map of finite-dimensional
manifolds, it is continuous and therefore the smooth family of bigons
is also continuous. 

By using another adjunction, the smooth map
$\widetilde{k\circ\varphi_{j}}$ can be turned into a plot
$\widehat{k\circ\varphi_{j}} : U_{j} \times [0,1] \to Y^{[0,1]}$ that
factors through paths with sitting instants and is defined by
$\widehat{k\circ\varphi_{j}} (u,s) \big (t\big) :=
\Big(k\big(\varphi_{j}(u)\big)\Big)(t,s).$ Using this, we get a smooth
map $U_{j} \times [0,1] \to G$ given by 
\be
(u,s) \mapsto \triv \left ( \widehat{k\circ\varphi_{j}} (u,s) \right )
\triv \left ( \widehat{k\circ\varphi_{j}} (u,0) \right )^{-1} 
\ee
because $\triv$ is smooth on thin paths (we've taken the thin homotopy
classes of the paths $\widehat{k\circ\varphi_{j}} (u,s)$ and
$\widehat{k\circ\varphi_{j}} (u,0)$ in the arguments of $\triv$). For
each fixed $u \in U_{j} \subset U,$ this gives a path in $G$ starting at
$e$ and the $N$-class of this path coincides with $f (\varphi(u))$ by
commutativity of the diagram in (\ref{eq:liftingthinbigons}). By
continuity (which we proved in the previous paragraph), for each $u$
there exists a (sufficiently small) contractible open neighborhood $V$
of $u$ with $u \in V \subset U_{j}$ together with a (sufficiently small)
contractible open neighborhood $W$ of $f (\varphi(u))$ in
$\tilde{G}_{N}$ such that $f \big(\varphi (V)\big) \subset W$ and $W$
maps diffeomorphically to $\t(W) \subset G$ under the smooth covering
map $\t.$ But we just showed that the projection $\t \circ f \circ
\varphi |_{V} : V \to G$ is smooth and since all neighborhoods are small
and contractible, a lift is uniquely specified, is smooth, and agrees
with $f \circ \varphi |_{V}.$ Therefore, $f \circ \varphi$ is smooth at
the point $u \in U.$ By applying this argument to all plots at all
points, this proves that $f = p_{\tilde{G}_{N}} \circ \triv_{N} : P^2 Y
\to \tilde{G}_{N}$ is smooth. 
 
\item
Our goal now is to define a natural equivalence $t_{N} : \pi^{*}
K_{N}(\tra) \Rightarrow i_{N} \circ \triv_{N}.$ Note that since $\tra$
is a transport functor, we have a natural isomorphism $t : \pi^{*} \tra
\Rightarrow i \circ \triv.$ Therefore, on points $y \in Y,$ i.e. objects
of $\mathcal{P}_{2} (Y),$ define $t_{N} (y) := t (y).$ For $\g \in P^{1}
Y,$ since $t$ was a natural transformation for ordinary functors, the
required diagram already commutes so we set $t_{N} (\g) := \id.$ 

\item 
Because of our definition of $\triv_{N}$ and $t$ and since our target
category is a strict 2-category, the associated descent data will not be
too different from the ordinary transport functor case. Namely, the
modifications $\psi_{N}$ and $f_{N}$ are both trivial, i.e. they are the
identity 2-morphisms on objects.  $g_{N}$ is also completely specified
by $g$ since $t_{N}$ is specified by $t.$ 

\item
As mentioned above, $\triv_{N}$ is smooth. What's left to show is that
$\mathcal{F} (g_{N}) : \mathcal{P}_{1} ( Y^{[2]} ) \rightarrow \L
\widehat{G\text{-}\mathrm{Tor}_{N}}$ is a transport functor with
$\L \mathcal{B} \mathcal{G}_{N}$-structure. First let's explicitly
describe $\L \mathcal{B} \mathcal{G}_{N}$ and $\L
\widehat{G\text{-}\mathrm{Tor}_{N}}.$ The objects of $\L \mathcal{B}
\mathcal{G}_{N}$ are 1-morphisms of $\mathcal{B} \mathcal{G}_{N}$ which
are precisely elements of $G.$ A morphism from $g_{1}$ to $g_{2}$ in $\L
\mathcal{B} \mathcal{G}_{N}$ is a pair of elements $g_{3}$ and $g_{4}$
of $G$ along with an element $h \in H$ fitting into the diagram 
\be
\xy 0;/r.15pc/:
(15,-15)*+{\bullet}="3";
(-15,-15 )*+{\bullet}="4";
(-15,15)*+{\bullet}="2";
(15,15)*+{\bullet}="1";
{\ar_{g_{3}} "1";"2" };
{\ar_{g_{2}} "2";"4" };
{\ar^{g_{1}} "1";"3"};
{\ar^{g_{4}} "3";"4"};
{\ar@{=>}|-{(h, g_{2} g_{3})} "2";"3"};
\endxy
.
\ee
Similarly an object of $\L \widehat{G\text{-}\mathrm{Tor}_{N}}$ is a
pair of objects $P$ and $P'$ in $\widehat{G\text{-}\mathrm{Tor}_{N}}$
and a $G$-equivariant map $P \xrightarrow{f} P'.$ A morphism from
$P \xrightarrow{f} Q$ to $P' \xrightarrow{g} Q'$ in
$\L \widehat{G\text{-}\mathrm{Tor}_{N}}$ is a pair of $G$-equivariant
maps $p : P \to P'$ and $q : Q \to Q'$ along with an $N$-class of a path
$\a : g \circ p \Rightarrow q \circ f$ as in the diagram 
\be
\xy 0;/r.15pc/:
(15,-15)*+{Q}="3";
(-15,-15 )*+{Q'}="4";
(-15,15)*+{P'}="2";
(15,15)*+{P}="1";
{\ar_{p} "1";"2" };
{\ar_{g} "2";"4" };
{\ar^{f} "1";"3"};
{\ar^{q} "3";"4"};
{\ar@{=>}^{\a} "2";"3"};
\endxy
.
\ee
By applying the general definition of $\mathcal{F} (g_{N}),$ we have
\be
\xy0;/r.15pc/:
(-30,0)*+{Y^{[2]} \ni y}="1";
(40,10)*+{i ( \triv ( \pi_{1} (y)))=G}="2";
(40,-10)*+{i ( \triv ( \pi_{2} (y)))=G}="3";
{\ar^{L_{g(y)}} "2";"3"};
{\ar@{|->}^{\mathcal{F} (g_{N}) } "1"+(20,0);(10,0)};
\endxy
\ee
and
\be
P^{1} Y^{[2]} \ni 
\xy0;/r.15pc/:
(-60,0)*+{\Big (y'}="1y";
(-30,0)*+{y\Big )}="1x";
{\ar_{\g} "1x";"1y"};
(30,10)*+{G}="2y";
(35,5)*+{}="2ya";
(30,-10)*+{G}="3y";
(70,10)*+{G}="2x";
(70,-10)*+{G}="3x";
(65,-5)*+{}="3xb";
{\ar^{L_{g(y)}} "2x";"3x"};
{\ar_{L_{g(y')}} "2y";"3y"};
{\ar_{i ( \triv ( \pi_{1} ( \g) ) )} "2x";"2y"};
{\ar^{i ( \triv ( \pi_{2} ( \g) ) )} "3x";"3y"};
{\ar@{=>}^{\id} "2ya";"3xb"};
{\ar@{|->}^{\mathcal{F} (g_{N}) } "1x"+(15,0);(5,0)};
\endxy
.
\ee
Now, since $g$ is part of the smooth descent object for the functor
$\tra,$ there exists a smooth natural isomorphism $\tilde{g} :
\pi_{1}^{*} \triv \Rightarrow \pi_{2}^{*} \triv$ such that $g = \id_{i}
\circ \tilde{g}.$ Using this fact, one can define $\widetilde{g_{N}} :
\pi_{1}^{*} \triv_{N} \Rightarrow \pi_{2}^{*} \triv_{N}$ in an analogous
way to how $g_{N}$ was defined from $g$ but this time using $\tilde{g}.$
Furthermore, $\mathcal{F}(g_{N})$ factors through $\L ( i_{N} )$ via
$\mathcal{F}(g_{N}) = \L ( i_{N} ) \circ \mathcal{F}(\widetilde{g_{N}})$
since $g = \id_{i} \circ \tilde{g}.$ 

Therefore, this defines a global trivialization with the identity
surjective submersion $\id : Y^{[2]} \to Y^{[2]}$ with the
trivialization functor being $\mathcal{F}( \widetilde{g_{N}} ) :
\mathcal{P}_{1} ( Y^{[2]} ) \to \L \mathcal{B} \mathcal{G}_{N}.$ This
functor is smooth since $\tilde{g}$ is smooth. Furthermore, the descent
object associated to this transport functor is trivial because of the
global trivialization. Thus $\mathcal{F}( g_{N} )$ defines a transport
functor. 
\end{enumerate}

Thus $K_{N} (\tra)$ defines a transport 2-functor with $\mathcal{B}
\mathcal{G}_{N}$ structure. 
\end{proof}

\begin{defn}
\label{defn:pathcurvature2functor}
Let $\tra : \mathcal{P}_{1} (M) \to G\text{-}\mathrm{Tor}$ be a
transport functor over $M$ with $\mathcal{B} G$ structure and values in
$G\text{-}\mathrm{Tor}$ and let $N \le \pi_{1} (G)$ be a subgroup. Then
the transport 2-functor $K_{N} (\tra) : \mathcal{P}_{2} (M) \to
\widehat{G\text{-}\mathrm{Tor}_{N}}$ defined by 
\be
\xymatrix{ y & & \ar@/_1pc/[ll]_{\g}="1"  \ar@/^1pc/[ll]^{\de}="2"
\ar@{}"1";"2"|(.20){\,}="3" \ar@{}"1";"2"|(.80){\,}="4"
\ar@{=>}"3";"4"^{ \G }  x } 
\mapsto 
\xymatrix{ \tra (y) & & & \ar@/_1.5pc/[lll]_{\tra(\g)}="1"
\ar@/^1.5pc/[lll]^{\tra(\de)}="2"   \ar@{}"1";"2"|(.20){\,}="3"
\ar@{}"1";"2"|(.80){\,}="4"
\ar@{=>}"3";"4"|-{ [s\mapsto\tra (\G ( \ \cdot \ ,s) ]_{N} } \tra(x)}
\ee
is called the \emph{\uline{path-curvature transport 2-functor}} associated
to $\tra$ and $N.$ 
\end{defn}

More can be said, although we will not prove the details since the
proofs are simple. The above construction is functorial. Namely, for
any morphism of parallel transport functors $h : \tra \Rightarrow \tra'$
with $\mathcal{B} G$-structure with values in $G$-$\mathrm{Tor},$ there
is a corresponding 1-morphism of parallel transport 2-functors $h_{N} :
K_{N} ( \tra ) \Rightarrow K_{N} (\tra' )$ with $\mathcal{B}
\mathcal{G}_{N}$-structure with values in
$\widehat{G\text{-}\mathrm{Tor}_{N}}.$ By viewing
$\mathrm{Trans}^{1}_{\mathcal{B}G} (M, G\text{-}\mathrm{Tor})$ as a
2-category whose 2-morphisms are all identities, this defines a
2-functor 
\be
K_{N} : \mathrm{Trans}^{1}_{\mathcal{B}G} (M, G\text{-}\mathrm{Tor}) \to
\mathrm{Trans}^{2}_{\mathcal{B}\mathcal{G}_{N}}
(M, \widehat{G\text{-}\mathrm{Tor}_{N}}). 
\ee
In fact, in the above proof, in steps i) and ii), we have also outlined
the definition of a 2-functor (see equation (\ref{eq:triv}) and
surrounding text)
\be
K_{N}^{\mathrm{Triv}} : \mathrm{Triv}^{1}_{\pi} ( i )^{\infty}
\to \mathrm{Triv}^{2}_{\pi} ( i_{N} )^{\infty} 
\ee
given by the assignment 
\be
(\tra, \triv, t ) \mapsto (K_{N} (\tra), \triv_{N} , t_{N} := t )
\ee
on objects (see Definitions \ref{defn:trivialization1} and
\ref{defn:trivialization2}) and
\be
\a \mapsto \a_{N} := \a
\ee 
on morphisms (see Definitions \ref{defn:trivialization1mor} and
\ref{defn:trivialization2mor}). In these two assignments, we are
viewing a natural transformation as a pseudonatural transformation by
assigning the identity 2-morphism to every 1-morphism. 

In steps iii) and iv) we have also outlined the definition of a
2-functor 
\be
K_{N}^{\mathfrak{Des}} : \mathfrak{Des}^{1}_{\pi} ( i )^{\infty}
\to \mathfrak{Des}^{2}_{\pi} ( i_{N} )^{\infty} 
\ee
given by the assignment 
\be
(\triv, g ) \mapsto (\triv_{N} , g_{N} := g, \psi_{N} := 1, f_{N} := 1 )
\ee
on objects (see Definitions \ref{defn:descent1} and \ref{defn:descent2})
and 
\be
h \mapsto (h_{N} := h, \e_{N} := 1)
\ee 
on morphisms (see Definitions \ref{defn:descent1mor} and
\ref{defn:descent2mor}). 

By definition, both squares in the diagram
\be
\label{eq:commutingpathcurvature1}
\xy 0;/r.15pc/:
(-80,-15)*+{\mathfrak{Des}^{2} (i_{N} )^{\infty}}="6";
(-80,15)*+{\mathfrak{Des}^{1} (i )^{\infty}}="5";
(30,-15)*+{\mathrm{Trans}^{2}_{\mathcal{B}\mathcal{G}_{N}}
(M, \widehat{G\text{-}\mathrm{Tor}_{N}})}="3";
(-30,-15 )*+{ \mathrm{Triv}^{2} ( i_{N} )^{\infty} }="4";
(-30,15)*+{\mathrm{Triv}^{1} ( i )^{\infty}}="2";
(30,15)*+{\mathrm{Trans}^{1}_{\mathcal{B}G}
(M, G\text{-}\mathrm{Tor}) }="1";
{\ar_(0.6){c} "1";"2" };
{\ar^{K_{N}} "1";"3"};
{\ar|-{K_{N}^{\mathrm{Triv}}} "2";"4"};
{\ar_(0.6){c} "3";"4" };
{\ar_{\mathrm{Ex}^{1}} "2";"5" };
{\ar_{K_{N}^{\mathfrak{Des}}} "5";"6"};
{\ar_{\mathrm{Ex}^{2}} "4";"6" };
\endxy
\ee
commute (on the nose). 

The path-curvature 2-functor associated to a transport functor is
\emph{flat}. To explain this, we first define a modified version of the
thin path 2-groupoid. 

\begin{defn}
\label{defn:fundamental2groupoid}
Let $X$ be a smooth manifold. If one drops condition ii) from Definition
\ref{defn:path2groupoid}, then one obtains a 2-groupoid $\Pi_{2} (X)$
that has points of $X$ as objects, thin paths for 1-morphisms, and
(ordinary) homotopy classes of bigons for 2-morphisms. 
\end{defn}

\cite{SW4} call this 2-groupoid the \emph{\uline{fundamental 2-groupoid}}.
Although we prefer to use that terminology for the usual fundamental
2-groupoid (whose 1-morphisms are \emph{also} ordinary homotopy classes
of paths), we use this terminology for the purposes of this paper to
avoid confusion. 

\begin{defn}
A transport 2-functor $F : \mathcal{P}_{2} (M) \to T$ with
$\mathrm{Gr}$-structure is said to be \emph{\uline{flat}} if it factors
through the fundamental 2-groupoid $\Pi_{2} (M).$  
\end{defn}

The curvature 2-functor $K ( \tra ) \equiv K_{\pi_{1}(G)} ( \tra)$
introduced in \cite{SW4} is completely determined on bigons by the
boundary of the bigon. It is therefore obviously flat, but it is even
more restrictive than just that. Not only does it not depend on the
homotopy class of the bigon, it doesn't depend on the bigon at all. On
the other hand, the path-curvature 2-functor $K_{N} ( \tra )$ introduced
here depends on the homotopy class of the bigon.

\begin{cor}
\label{cor:pcflat}
The path-curvature 2-functor $K_{N} (\tra)$ is flat. 
\end{cor}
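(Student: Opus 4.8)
The plan is to show that the assignment $K_N(\tra)$ on bigons depends only on the homotopy class of each bigon, since this is exactly the condition required for factoring through $\Pi_2(M)$ (Definition \ref{defn:fundamental2groupoid}). By Definition \ref{defn:path2groupoid}, the 2-groupoid $\Pi_2(M)$ differs from $\mathcal{P}_2(M)$ precisely by dropping the thin-homotopy (rank $< 2$) condition on bigons and using ordinary homotopy classes instead. So to prove flatness I must exhibit a 2-functor $\Pi_2(M) \to \widehat{G\text{-}\mathrm{Tor}_N}$ through which $K_N(\tra)$ factors via the canonical projection $\mathcal{P}_2(M) \to \Pi_2(M)$.

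\textbf{First}, I would recall the formula for $K_N(\tra)$ on a thin bigon $\G : \g \Rightarrow \de$ from equation (\ref{eq:KNtrahomotopy}), namely
\be
K_N(\tra)(\G) = \left[ s \mapsto \tra(\G(\ \cdot\ , s)) \right]_N,
\ee
the $N$-class of the path of $G$-torsor morphisms traced out by $\tra$ along the family $\G(\ \cdot\ , s)$. The essential observation is already contained in the proof of Lemma \ref{lem:pathcurv}: the well-definedness argument there shows that if $\G$ and $\G'$ are two \emph{thinly} homotopic bigons, then $s \mapsto \tra(\G(\ \cdot\ , s))$ and $s \mapsto \tra(\G'(\ \cdot\ , s))$ are ordinarily homotopic paths in $G\text{-}\mathrm{Tor}(A,B)$, hence represent the same $N$-class. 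The key point for flatness is that this argument used \emph{only} the existence of an ordinary homotopy $H$ between the representative bigons, never the rank condition. Therefore the same argument applies verbatim when $\G$ and $\G'$ are merely ordinarily homotopic (rank $\le 2$) rather than thinly homotopic.

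\textbf{Next}, I would make this precise: given any ordinary homotopy $H : [0,1]^3 \to M$ with $H(t,s,0) = \G(t,s)$ and $H(t,s,1) = \G'(t,s)$, the map $(s,r) \mapsto \tra(H(\ \cdot\ , s, r))$ is a continuous (indeed smooth, since $\tra$ is a transport functor and $\GTor(\tra(x),\tra(y)) \cong G$) homotopy of paths, so $[s \mapsto \tra(\G(\ \cdot\ , s))]_N = [s \mapsto \tra(\G'(\ \cdot\ , s))]_N$. This shows $K_N(\tra)$ descends to a well-defined assignment on ordinary homotopy classes of bigons, i.e.\ on the 2-morphisms of $\Pi_2(M)$. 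Since $K_N(\tra)$ already agrees with $\tra$ on objects and 1-morphisms, and these are unchanged between $\mathcal{P}_2(M)$ and $\Pi_2(M)$, the induced assignment $\Pi_2(M) \to \widehat{G\text{-}\mathrm{Tor}_N}$ respects all compositions by exactly the same computations (vertical and horizontal) carried out in the proof of Lemma \ref{lem:pathcurv} and in Theorem \ref{thm:pathcurvature2functor}. Hence it is a 2-functor, and $K_N(\tra)$ factors through it, which is the definition of flatness.

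\textbf{The main obstacle} is essentially bookkeeping rather than conceptual: one must verify that the factored assignment on $\Pi_2(M)$ genuinely respects the compositions, i.e.\ that the homotopy-invariance I have established is compatible with the vertical and horizontal composition of homotopy classes of bigons. But since the compositions are defined on representatives and $K_N(\tra)$ was already shown to be a strict 2-functor on $\mathcal{P}_2(M)$ respecting these compositions, descending to homotopy classes introduces no new difficulty — the compositions of $\Pi_2(M)$ are just the images of those of $\mathcal{P}_2(M)$ under the projection. I expect the whole proof to be short, citing the well-definedness argument of Lemma \ref{lem:pathcurv} and observing that it never invoked the rank-$<2$ hypothesis.
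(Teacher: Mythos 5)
Your proposal is correct and takes essentially the same approach as the paper: the paper's proof likewise observes that the homotopy-invariance argument from Lemma \ref{lem:pathcurv} --- namely that $(s,r) \mapsto \tra(H(\,\cdot\,,s,r))$ is a homotopy between the paths of holonomies, so their $N$-classes agree --- never uses the rank condition, and simply reruns it with a smooth (ordinary) homotopy $H$ between the two bigons. The only detail the paper adds, which you should also note, is that by compactness one chooses $H$ to have sitting instants around its boundary so that $\tra(H(\,\cdot\,,s,r))$ is actually defined for every $(s,r)$.
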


\begin{proof}
Let $\G$ and $\G'$ be two bigons that are smoothly homotopic (as opposed
to just thinly homotopic).
Let $H : [0,1]^{3} \to Y$ be a smooth homotopy from $\G$ to $\G'$ so
that $H(t,s,0) = \G(t,s)$ and $H(t,s,1) = \G'(t,s).$ By compactness of
$[0,1]^{3},$ one can choose $H$ so that it has sitting instants around
its boundary so that $\tra( H(\ \cdot \ , s, r) )$ is well-defined for
each $s, r \in [0,1].$ Then 
\be
\begin{split}
[0,1]\times[0,1] &\to G \\
(s,r) &\mapsto \tra ( H ( \ \cdot \ , s, r ) )
\end{split}
\ee
is a smooth homotopy from the path $s\mapsto\tra (\G( \ \cdot \ ,s) )$
to the path $s \mapsto \tra ( \G'( \ \cdot \ , s ) ).$ Therefore, since
$N$-classes of paths is a quotient of the universal cover $\tilde{G},$
the $N$-classes of these paths agree.
\end{proof}

\subsection{A description in terms of differential form data}
\label{sec:general}

In this section, we prove several important and useful facts. The first
theorem says that \emph{locally} transport functors whose structure
2-group is a covering 2-group can be described in terms of the
path-curvature 2-functor. The second part of this section contains a
discussion about the relationship between the path-curvature 2-functor
specifically and its differential cocycle data. As before, let
$\pi : Y \to M$ denote a surjective submersion, $G$ a connected Lie
group, $N \le \pi_{1} (G)$ a subgroup, and $\t : \tilde{G}_{N} \to G$
the cover of $G$ defined by $N.$ It is important to note that
$\un \t : \underline{\tilde{G}_{N}} \to \un G,$ the induced map of Lie
algebras, is an isomorphism of Lie algebras because $\t$ is a local
diffeomorphism. Denote the 2-group associated to the Lie crossed module
$(\tilde{G}_{N}, G, \t, \a)$ by $\mathcal{B} \mathcal{G}_{N}.$ 

First, we define a 2-functor $K_{N}^{Z} : Z^{1}_{\pi} (G) \to
Z^{2}_{\pi} ( \mathcal{G}_{N} )$ by
\be
\begin{split}
(A,g) &\mapsto \left( \left( A, B := \un \t^{-1} \left(dA
+ \frac{1}{2}[A,A]\right) \right), (g,\varphi := 1), (\psi := 1, f := 1)
\right) \\
h &\mapsto (h, \varphi := 0)
\end{split}
\ee
on objects and morphisms, respectively.

Second, notice that specifically for the path-curvature 2-functor
$K_{N} (\tra),$ and particularly its associated descent object
$K_{N}^{\mathfrak{Des}} ( \tra ),$ the analysis in Section
\ref{sec:2cocycle} gives the following differential cocycle data
associated to $K_{N} (\tra).$ The assignment on thin paths induces a
1-form $A$ with values in $\un G$ 
since the functor $K_{N} (\tra)$ agrees precisely with $\tra$ on thin
paths. On thin bigons, the assignment induces a 2-form $B$ with values
in 
$\underline{\tilde{G}_{N}}$ 
satisfying $dA + \frac{1}{2} [ A, A ] = \un \t (B).$ Since $\un \t$ is
an isomorphism, $B$ is determined by this condition and is given by
$B = \un \t^{-1} \left (  dA + \frac{1}{2} [ A, A ]  \right ).$
Therefore, the associated differential cocycle data to the
path-curvature 2-functor $K_{N} (\tra)$ is 
\be
{\small
\begin{split}
\mathcal{D} ( K_{N}^{\mathfrak{Des}} ( \triv, g ) )
= \left ( A, B:= \un \t^{-1} \left ( d A + \frac{1}{2} [A,A] \right ),g,
\varphi = 0, f = 1, \psi = 1 \right ).
\end{split}
}
\ee
Therefore, the two descriptions agree showing that the diagram
\be
\label{eq:thediagram}
\xy 0;/r.15pc/:
(-120,-15)*+{Z^{2}_{\pi} (\mathcal{G}_{N} )}="8";
(-120,15)*+{Z^{1}_{\pi} (G)}="7";
(-80,-15)*+{\mathfrak{Des}^{2}_{\pi} (i_{N} )^{\infty}}="6";
(-80,15)*+{\mathfrak{Des}^{1}_{\pi} (i )^{\infty}}="5";
{\ar_{\mathcal{D}} "5";"7" };
{\ar "7";"8"_{K_{N}^{Z}}};
{\ar "5";"6"^{K_{N}^{\mathfrak{Des}}}};
{\ar^{\mathcal{D}} "6";"8" };
\endxy
\ee
commutes. This analysis is actually a bit more general as the following
theorem shows. 

\begin{thm}
\label{thm:main}
Let $X$ be a smooth manifold and  $F_{N} : \mathcal{P}_{2} (X) \to
\mathcal{B} \mathcal{G}_{N}$ be \emph{any} smooth strict 2-functor. Then
there exists a unique smooth functor $F : \mathcal{P}_{1} (X) \to
\mathcal{B} G$ such that $F_{N} = K_{N} ( F).$ 
\end{thm}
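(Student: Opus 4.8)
The plan is to take $F$ to be the restriction of $F_N$ to the underlying thin path groupoid, and to verify the required identity through the differential cocycle dictionary of Section \ref{sec:2cocycle}. The $1$-morphisms of $\mathcal{B}\mathcal{G}_N$ are precisely the elements of $G$, forming the one-object groupoid $\mathcal{B}G$, while the objects and $1$-morphisms of $\mathcal{P}_2(X)$ are those of $\mathcal{P}_1(X)$. Since $F_N$ is a strict $2$-functor, its action on objects and $1$-morphisms restricts to an ordinary functor $F := (F_N)|_{\mathcal{P}_1(X)} : \mathcal{P}_1(X) \to \mathcal{B}G$, which is smooth because $F_N$ is. This $F$ is forced: by Definition \ref{defn:pathcurvature2functor}, $K_N$ leaves the action on objects and thin paths unchanged, so any $F'$ with $K_N(F') = F_N$ satisfies $F' = (K_N(F'))|_{\mathcal{P}_1(X)} = (F_N)|_{\mathcal{P}_1(X)} = F$. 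This settles uniqueness at once.

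For existence it remains to show $K_N(F) = F_N$ on thin bigons, the two $2$-functors already agreeing on objects and $1$-morphisms. Both $K_N(F)$ and $F_N$ are smooth strict $2$-functors $\mathcal{P}_2(X) \to \mathcal{B}\mathcal{G}_N$ — the former by the argument establishing smoothness of $\triv_N$ in the proof of Theorem \ref{thm:pathcurvature2functor}, applied verbatim with $X$ in place of $Y$ and the global $F$ in place of the local $\triv$. Hence, by the strict inverse pair $(\mathcal{D}_X, \mathcal{P}_X)$ of \eqref{eq:integratedifferentiate2} (Theorem 2.21 of \cite{SW2}), it suffices to check that they carry the same differential data $(A,B) \in Z^2_X(\mathcal{G}_N)^\infty$. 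The $1$-form $A \in \W^1(X; \un G)$ is extracted from the restriction to $1$-morphisms, and since both $2$-functors restrict to the \emph{same} $F$ there, they produce the same $A$.

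The crux is the $2$-form $B$. Here I would invoke the source-target matching condition \eqref{eq:B}, valid for \emph{any} smooth strict $2$-functor into a $2$-group, which gives $\un\t(B) = dA + \frac{1}{2}[A,A]$ for both $K_N(F)$ and $F_N$. Because $\t : \tilde{G}_N \to G$ is a covering map, the induced map $\un\t : \underline{\tilde{G}_N} \to \un G$ is a Lie algebra isomorphism (as recorded at the start of Section \ref{sec:general}), so $B = \un\t^{-1}\left(dA + \frac{1}{2}[A,A]\right)$ is \emph{uniquely} determined by $A$. Consequently $K_N(F)$ and $F_N$ have identical differential data, and applying $\mathcal{P}_X$ yields $K_N(F) = F_N$, completing the proof of existence.

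The main obstacle I anticipate is not any single computation but the bookkeeping needed to make the global differentiation-integration equivalence genuinely applicable: one must confirm that $K_N(F)$ really is a smooth strict $2$-functor defined on all of $\mathcal{P}_2(X)$, and that the $1$-form obtained by differentiating $F_N$ restricted to $1$-morphisms is exactly the $1$-form of $F$. Once these points are secured, the isomorphism $\un\t$ does all the work, collapsing the freedom in $B$ so that the single datum $A$ — equivalently $F$ itself — determines the entire $2$-functor.
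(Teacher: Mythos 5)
Your proof is correct and follows essentially the same route as the paper's: both define $F$ as the restriction of $F_N$ to $\mathcal{P}_1(X)$, then use the strict inverse pair $(\mathcal{D}_X,\mathcal{P}_X)$ of Theorem 2.21 of \cite{SW2} together with the fact that $\un\t$ is a Lie algebra isomorphism to conclude that the $2$-form $B$ is forced by $A$, so $K_N(F)$ and $F_N$ share the same differential data and hence coincide. Your treatment is in fact slightly more explicit than the paper's on two points it leaves tacit — the uniqueness argument and the verification that $K_N(F)$ is itself a smooth strict $2$-functor before $\mathcal{D}_X$ may be applied to it.
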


\begin{proof}
The functor $\mathcal{D}_{X} : \mathrm{Funct}^{\infty}
(\mathcal{P}_{2} (X), \mathcal{B} \mathcal{G}_{N} ) \to Z^{2}_{X}
(\mathcal{G}_{N})^{\infty}$ (defined around
(\ref{eq:integratedifferentiate2}) in Section \ref{sec:2cocycle})
produces $(A \in \W^{1} (X; \un G), B \in \W^{2} (X;
\underline{\tilde{G}_{N}}) )$ satisfying $dA + \frac{1}{2} [ A, A ]
= \un \t (B).$ Since $\un \t : \underline{\tilde{G}_{N}} \to \un G$ is
an isomorphism, $B = \un \t^{-1} \left (  dA + \frac{1}{2} [ A, A ]
\right ).$ Restricting $F_{N}$ to $\mathcal{P}_{1} (X)$ gives a unique
$F : \mathcal{P}_{1} (X) \to \mathcal{B} G$ that satisfies
$\mathcal{D}_{X} ( F ) = A.$ By the the same token, we have
$\mathcal{D}_{X} ( K_{N} (  F ) ) = \left ( A, \un \t^{-1}
\left ( d A + \frac{1}{2} [ A , A ] \right ) \right )$ which coincides
with $\mathcal{D}_{X} (F_{N} ).$ Since  $\mathcal{P}_{X} : Z^{2}_{X}
(\mathcal{G}_{N})^{\infty} \rightarrow
 \mathrm{Funct}^{\infty} (\mathcal{P}_{2}
(X), \mathcal{B} \mathcal{G}_{N} )$ is a strict inverse to
$\mathcal{D}_{X}$ by Theorem 2.21 of \cite{SW2}, we conclude that
$F_{N} = K_{N} (F).$ 
\end{proof}

This theorem implies the following interesting and simple explicit
formula for local 2-holonomy for transport 2-functors with covering
2-groups as their structure 2-groupoids. This is another one of our
main results.

\begin{cor}
\label{cor:main}
The formula for local parallel transport for any bigon under \emph{any}
smooth 2-functor $F_{N} : \mathcal{P}_{2} (X) \to \mathcal{B}
\mathcal{G}_{N}$ is given by the formula 
\be
F_{N} \left ( \xymatrix{ y & & \ar@/_1pc/[ll]_{\g}="1"
\ar@/^1pc/[ll]^{\de}="2"   \ar@{}"1";"2"|(.20){\,}="3"
\ar@{}"1";"2"|(.80){\,}="4" \ar@{=>}"3";"4"^{ \G }  x } \right ) =
\xymatrix{ \bullet & & & \ar@/_2pc/[lll]_{F (\g)}="1"
\ar@/^2pc/[lll]^{F (\de)}="2"   \ar@{}"1";"2"|(.20){\,}="3"
\ar@{}"1";"2"|(.85){\,}="4"
\ar@{=>}"3";"4"|-(0.425){ [s\mapsto F (\G ( \ \cdot \ ,s)
F (\g)^{-1} ]_{N} } \bullet  }  , 
\ee
where $F$ is the 2-functor $F_{N}$ restricted to 1-morphisms. 
\end{cor}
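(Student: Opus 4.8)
The statement of Corollary \ref{cor:main} is essentially a direct unpacking of Theorem \ref{thm:main} together with the definition of the path-curvature 2-functor (Definition \ref{defn:pathcurvature2functor} and Lemma \ref{lem:pathcurv}). The strategy is to invoke Theorem \ref{thm:main} to realize the arbitrary smooth 2-functor $F_N$ as $K_N(F)$, where $F$ is the restriction of $F_N$ to 1-morphisms, and then simply read off the formula for $K_N(F)$ on a bigon from its definition. The content of the corollary is therefore \emph{not} a new computation but a translation of the abstract uniqueness result into the explicit surface-transport formula.

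First I would apply Theorem \ref{thm:main} directly. Given any smooth strict 2-functor $F_N : \mathcal{P}_2(X) \to \mathcal{B}\mathcal{G}_N$, that theorem produces a unique smooth functor $F : \mathcal{P}_1(X) \to \mathcal{B} G$ with $F_N = K_N(F)$, and moreover identifies $F$ as the restriction of $F_N$ to the 1-morphisms (this is exactly how $F$ is constructed in the proof of Theorem \ref{thm:main}, via $\mathcal{D}_X(F) = A$). So the $F$ appearing in the corollary is forced to be $F_N|_{\mathcal{P}_1(X)}$, which matches the hypothesis in the corollary statement.

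Second, I would evaluate $K_N(F)$ on an arbitrary thin bigon $\G : \g \Rightarrow \de$ using the explicit formula for the path-curvature 2-functor. In the local (trivialized) setting, the structure is described by the 2-functor $\triv_N$ built in step i) of the proof of Theorem \ref{thm:pathcurvature2functor}, whose formula on a bigon (equation \eqref{eq:triv}) is precisely
\be
\triv_N(\G) = \Big( \big[ s \mapsto \triv(\G(\ \cdot\ , s)) \, \triv(\g)^{-1} \big]_N, \, \triv(\g) \Big).
\ee
Renaming $\triv$ to $F$ (since in the globally-defined statement of Corollary \ref{cor:main} the ambient surjective submersion can be taken to be the identity, so that the local data \emph{is} the global functor $F$), this is exactly the $H$-component $[s \mapsto F(\G(\ \cdot\ , s)) F(\g)^{-1}]_N$ appearing as the 2-morphism label in the target diagram, with source $F(\g)$ and target $F(\de)$. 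The well-definedness of this $N$-class (independence of the representative bigon) is already established in Lemma \ref{lem:pathcurv}, so nothing new is needed there.

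There is no serious obstacle here; the only point requiring a word of care is the bookkeeping between the global formulation (where $X$ replaces $M$ and the 2-functor lands in $\mathcal{B}\mathcal{G}_N$ rather than in $\widehat{G\text{-}\mathrm{Tor}_N}$) and the trivialized description of Definition \ref{defn:pathcurvature2functor}. One must note that Theorem \ref{thm:main} is stated for functors into the one-object 2-group $\mathcal{B}\mathcal{G}_N$ directly, so the value on a bigon is genuinely an element $([h]_N, g) \in \tilde{G}_N \rtimes G$ with source $g = F(\g)$ and $H$-projection $[h]_N$, and the diagram in the corollary displays exactly this data with the single object $\bullet$. Thus the proof reduces to: quote Theorem \ref{thm:main} for $F_N = K_N(F)$, then substitute the defining formula \eqref{eq:KNtrahomotopy}/\eqref{eq:triv} for $K_N$ on a bigon and compare the source, target, and $H$-component with the claimed diagram.
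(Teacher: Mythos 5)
Your proposal is correct and matches the paper's own (implicit) derivation: the paper presents Corollary \ref{cor:main} as an immediate consequence of Theorem \ref{thm:main}, with the displayed formula being exactly the defining formula (\ref{eq:triv}) for the group-valued path-curvature construction $K_{N}$ applied to $F = F_{N}|_{\mathcal{P}_{1}(X)}$. Your bookkeeping remarks (that Theorem \ref{thm:main} already forces $F$ to be the restriction to 1-morphisms, and that the $\mathcal{B}\mathcal{G}_{N}$-valued case is the trivialized formula with the identity submersion) are precisely the details the paper leaves to the reader.
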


Finally, by Corollary 4.9 of \cite{SW1}, Theorem 2.21 of \cite{SW2}, and
Proposition 4.1.3 of \cite{SW4}, the functors $\mathcal{P}$ in each row
of
\be
\label{eq:thediagramP}
\xy 0;/r.15pc/:
(-120,-15)*+{Z^{2}_{\pi} (\mathcal{G}_{N})^{\infty}}="8";
(-120,15)*+{Z^{1}_{\pi} (G)^{\infty}}="7";
(-80,-15)*+{\mathfrak{Des}^{2}_{\pi} (i_{N} )^{\infty}}="6";
(-80,15)*+{\mathfrak{Des}^{1}_{\pi} (i )^{\infty}}="5";
{\ar^(0.45){\mathcal{P}} "7";"5" };
{\ar "7";"8"_{K_{N}^{Z}}};
{\ar "5";"6"^{K_{N}^{\mathfrak{Des}}}};
{\ar_(0.45){\mathcal{P}} "8";"6" };
\endxy
\ee
are (weak) inverses to $\mathcal{D}$ so this diagram commutes weakly.

\section{Examples and magnetic monopoles}
\label{sec:examples}
As briefly mentioned above, the path-curvature transport 2-functor is
motivated by constructions in physics. In 1931, Dirac \cite{Di} studied
the charge of a magnetic monopole in $\R^3$ and found it to be quantized
and proportional to $\int_{S^2} R,$ where $S^2$ is a sphere enclosing
the magnetic monopole and $R$ is the curvature of the $U(1)$ bundle with
connection over $\R^3 \setminus \{ * \}$ where $\{ * \} \subset \R^3$ is
the location of the monopole. Of course, the language of bundles and
connections was not around at the time, but the ingredients were there.
Because $R$ is well-defined globally, the integral $\int_{S^2} R$ is
unambiguously defined. Furthermore, it is a topological invariant in the
sense that it only depends on the homotopy class of the sphere in
$\R^3 \setminus \{ *\}.$ However, for a non-abelian principal $G$-bundle
with connection, $R$ is not globally defined so it was not clear how to
define the magnetic charge. In \cite{WY}, \cite{HS}, and \cite{GNO} the
authors define the charge of a magnetic monopole in terms of a magnetic
flux through a sphere by calculating the holonomy along a family of
loops as in Figure \ref{fig:sphere}. This defines a loop at the identity
of the group. Taking the homotopy class of this loop was the definition
of the magnetic charge in the physics literature. \cite{GNO} was closer
to defining this flux as a double-path-ordered integral, but stopped
short and used other means to analyze it. 

We want to point out here that it is not obvious that the methods
described in the literature make sense. For instance, is it necessary
to begin with the constant loop? What should this loop have anything to
do with a magnetic flux, which was defined in the abelian case to be
$\int_{S^2} R.$ Is the resulting quantity gauge invariant? What does
gauge invariance even mean? And how does one know that these concepts
are even correct? 

As we show in this section, the path-curvature transport 2-functor
introduced in the previous section describes magnetic flux in terms of
surface holonomy. Furthermore, since this magnetic flux is defined using
surface holonomy, for which we have \emph{proven} gauge covariance in
Section \ref{sec:2-holonomy} (specifically Theorem
\ref{thm:invariancesurfaceholonomy}), we can meaningfully ask if the
magnetic flux is a gauge invariant quantity. This would be the case if
it is invariant under $\a$-conjugation.  We review the interesting cases
considered in the physics literature, those of $U(1)$ monopoles, $SO(3)$
monopoles, and $SU(n)/Z(n)$ for all $n.$ We also consider the cases
$U(n)$ for all $n.$ For all of these examples, we take the subgroup
$N \le \pi_{1} (G)$ to be $N = \{ 1 \},$ the trivial subgroup of
$\pi_{1}(G).$ This case is interesting in its own right as the examples
will illustrate. 

We do this in two ways. We first start with a transport functor,
described in terms of its differential cocycle data, and use the methods
of Section \ref{sec:reconstruction2} and Section \ref{sec:2-holonomy} to
reconstruct a transport functor with group-valued holonomies. We then
construct the path-curvature 2-functor and compute surface holonomy. The
other method we use, which is equivalent by  Theorem \ref{thm:main} and
Corollary \ref{cor:main}, is to use the surface-ordered integral of
equation (\ref{eq:surfaceintegral}) from \cite{SW2} and the definition
of the differential cocycle data of the path-curvature 2-functor
discussed in Section \ref{sec:general}. This is unnecessary due to
Corollary \ref{cor:main} but we do it anyway for the reader's
convenience. In the process, we must choose weak inverses $s^{\pi} :
\mathcal{P}_{2} (M) \to \mathcal{P}^{\pi}_{2} (M)$ to the projections
$p^{\pi} : \mathcal{P}^{\pi}_{2} (M) \to \mathcal{P}_{2} (M)$ associated
to some surjective submersion $\pi : Y \to M.$ We will define the
2-functor $s^{\pi}$ for the paths and bigons of interest to us (rather
than defining it for all paths and bigons) in the case of the first
example of $U(1)$ monopoles. We then use the same 2-functor $s^{\pi}$
for all other examples. 

For the following discussions, we will be using the following
conventions depicted in Figure \ref{fig:sphericalcoordinates} for
describing coordinates on the sphere. 

\begin{figure}[h]
\centering
  \begin{picture}(0,0)
  \put(-6,70){$\hat{x}$}
  \put(3,20){$\hat{y}$}
  \put(73,148){$\hat{z}$}
  \put(80,115){$\theta$}
  \put(65,57){$\phi$}
\end{picture}
    \includegraphics[width=0.32\textwidth]{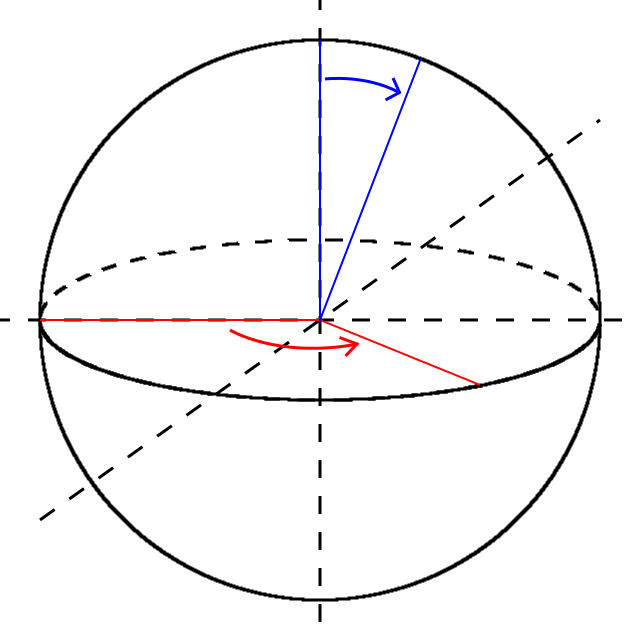}
    \vspace{-3mm}
    \caption{The azimuthal angle $\phi$ is drawn in red and extends from
the $x$ axis (pointing to the left) and goes counterclockwise in the
$xy$-plane. The zenith angle $\theta$ is drawn in blue and extends from
the $z$ axis (pointing vertically) towards the $xy$-plane.}
    \label{fig:sphericalcoordinates} 
\end{figure}

\subsection{Abelian $U(1)$ monopoles}
First, we will give an explicit example coming from abelian magnetic
monopoles. Let $P[n] \to S^2$ be the principal $U(1)$-bundle described
by the following local trivialization. Denote the northern and southern
hemispheres by $U_{N}$ and $U_{S},$ respectively. We assume that $U_{N}$
extends a little bit to the southern hemisphere so that
$U_{NS} \ne \varnothing$ (and similarly for $U_{S}$ to the northern
hemisphere). Let $Y := U_{N} \coprod U_{S}$ and $\pi : Y \to S^2$ be the
projection. Let $s_{N} : U_{N} \to Y$ and $s_{S} : U_{S} \to Y$ be the
obvious sections. Define the transition function $g_{NS} : U_{NS} \simeq
S^{1} \to U(1)$ along the equator to be 
\be
S^1 \ni \phi \mapsto g_{NS} (\phi) := e^{ i n \phi }, 
\ee
where $\phi$ is the aziumuthal angle and $n$ is an integer. Equip this
bundle with a connection $A_{N} \in \W^{1} (U_{N};\underline{U(1)})$ and
 $A_{S} \in \W^{1} (U_{S};\underline{U(1)})$ given by 
\be
A_{N} = \frac{ n }{2i} ( 1 - \cos \theta ) d \phi \aand A_{S}
= - \frac{ n }{2i} ( 1 + \cos \theta ) d \phi.
\ee
These forms satisfy the property
\be
A_{N} = g_{NS} A_{S} g_{NS}^{-1} - dg_{NS} g_{NS}^{-1} 
\ee
on $U_{NS}$ so that $g_{NS}, A_{N}, $ and $A_{S}$ are the local
differential cocycle data of a principal $U(1)$-bundle with connection.
Since $i : \mathcal{B} U(1) \to U(1)\text{-}\mathrm{Tor}$ is an
equivalence of categories, this differential cocycle data corresponds to
a global transport functor (recall (\ref{eq:allequivalences1})). 

We now consider the path-curvature 2-functor where $N = \{ 1 \} \le
\pi_{1} (S^{1}) \cong \Z$ so that the associated covering 2-group is
$(\R , U(1), \t , \a)$ with $\t : \R \to U(1)$ the  universal covering
map defined by $\phi \mapsto e^{2\pi i \phi}.$
The functor $\mathcal{P} : Z^{1}_{\pi} (G)^{\infty} \to
\mathfrak{Des}^{1}_{\pi} ( i)$ sends the differential cocycle object
$(g,A)$ to $\triv : \mathcal{P}_{1} (U_{N} \coprod U_{S} ) \to
\mathcal{B} G$ defined by the path-ordered exponential and the natural
transformation $g : \pi^{*}_{1} ( \triv_{i} ) \Rightarrow
\pi^{*}_{2} ( \triv_{i} )$ defined on components $\phi \in S^1$ by
$i ( g_{NS} ( \phi ) ).$ We partially define $s^{\pi} : \mathcal{P}_{2}
(S^2) \to \mathcal{P}^{\pi}_{2} (S^2)$ as follows. We first make the
choice  
\be
s^{\pi} ( x ) := 
\begin{cases}
s_{N}(x)  & \mbox{ if } x \in U_{N} \\
s_{S}(x) & \mbox{ if } x \in S^2 \setminus U_{N}  
\end{cases}
\ee 
for objects. We'll be a little sloppy now and define a lift of thin
paths and thin bigons on representatives of thin homotopy classes. We
only lift paths, labelled as $\g_{\theta},$ of the form depicted in
Figure  \ref{fig:hoolaloopmid}. 
\begin{figure}[h]
\centering
\begin{picture}(0,0)
\put(8,53){$\bullet$}
\put(57,20){$\g_{\q}$}
\end{picture}
    \includegraphics[width=0.25\textwidth]{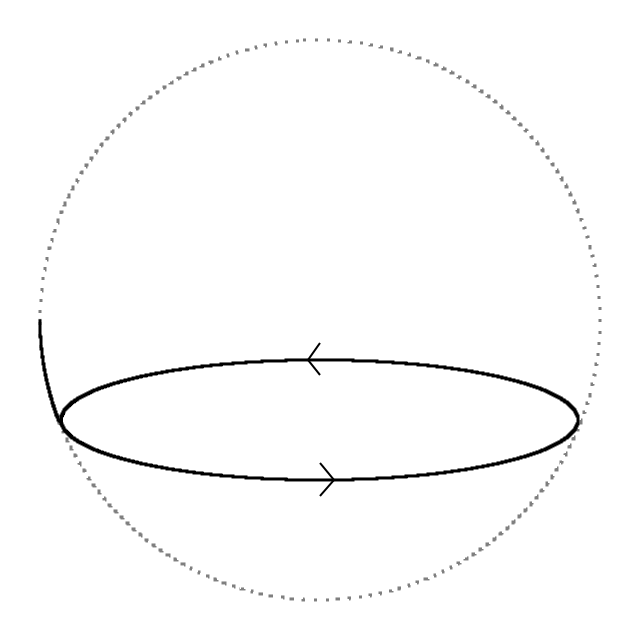}
    \vspace{-3mm}
    \caption{A loop on the sphere is made to always start at the equator
at the point $\bullet.$ In this figure, the loop is drawn for some
$\theta$ in the range $\frac{\pi}{2} < \theta < \pi.$}
    \label{fig:hoolaloopmid}
\end{figure} 
The reason for this is because we will consider a sequence of such loops
starting at the constant loop at the point $\bullet$ on the equator (so
that $s^{\pi}(\bullet) = (\bullet, N)$) enclosing the sphere going from
$U_{N}$ to $U_{S}$ and finally ending on the constant loop at the point
$\bullet$ as depicted in Figure \ref{fig:spherelaminateflip}. 
\begin{figure}[h]
\centering
\begin{picture}(0,0)
\put(8,54){$\bullet$}
\end{picture}
    \includegraphics[width=0.25\textwidth]{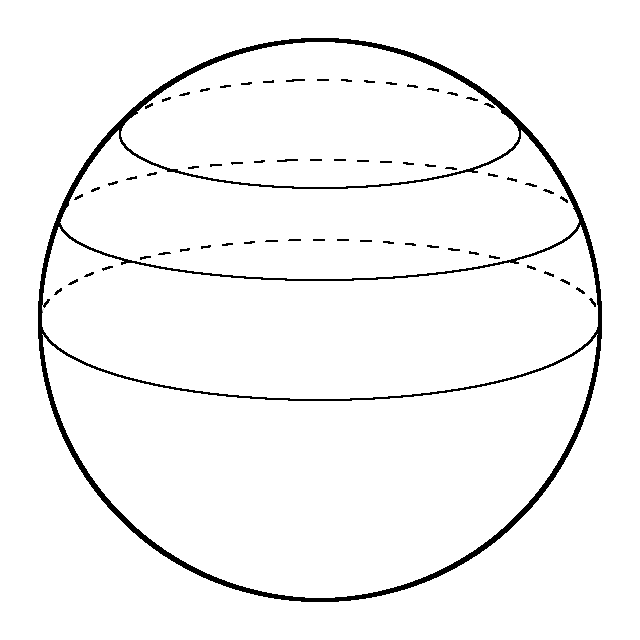}
    \vspace{-3mm}
    \caption{Loops along the $\phi$ direction on the sphere of constant
$\theta$ are drawn for $\theta = \frac{\pi}{2}$ and two intermediate
values in the range $0 < \theta < \frac{\pi}{2}.$ However, each loop is
made to start at the point $\bullet$ so that the sphere is thought of as
a bigon $S^2 : \id_{\bullet} \Rightarrow \id_{\bullet}.$}
    \label{fig:spherelaminateflip}
\end{figure} 
Therefore, we define the assignment on these loops to be 
\be
s^{\pi} ( \g_{\theta} ) 
:= 
\begin{cases}
s_{N *} (\g_{\theta}) &\mbox{ if } 0 \le \theta \le \frac{\pi}{2} \\
\a_{NS} (\bullet) * s_{S *} (\g_{\theta} ) * \a_{SN} (\bullet)
& \mbox{ if } \frac{\pi}{2} < \theta \le \pi \\
\end{cases} 
.
\ee
We now define the lift on two bigons. The first bigon $\S_{N}$ is given
by 
\be
[0,2\pi] \times [0, \pi/2] \ni (\phi, \theta)
\mapsto \S_{N} (\phi, \theta) := \g_{\theta} (\phi) 
\ee
and is a bigon $\id_{\bullet} \Rightarrow \g_{\pi/2}$ which lands in
$U_{N}$ and covers the northern hemisphere. We send this bigon to
$s^{\pi} ( \S_{N} ) := s_{N *} (\S_{N} )$ in
$\mathcal{P}_{2}^{\pi} (S^2)$ because our prescription
(\ref{eq:bigonlift}) says 
\be
s^{\pi} ( \S_{N} ) \quad := \quad 
\xy0;/r.15pc/:
(-70,0)*+{s^{\pi}(\bullet)}="left";
(30,0)*+{s^{\pi}(\bullet)}="right";
(-40,0)*+{s_{N}(\bullet) }="d";
(0,0)*+{s_{N} ( \bullet ) }="c";
{\ar@/_1.5pc/"c";"d"_{s_{N *} ( \id_{\bullet} )}};
{\ar@/^1.5pc/"c";"d"^{s_{N *} ( \g_{\pi/2} ) }};
(-20,10)*+{}="f";
(-20,-10)*+{}="g";
{\ar@{=>}"f";"g"|-{s_{N *} ( \S_{N} ) }};
{\ar@/_4pc/"right";"left"_{s^{\pi}(\id_{\bullet} )}};
{\ar@/^4pc/"right";"left"^{s^{\pi}(\g_{\pi/2})}};
{\ar"right";"c"_{\id}};
{\ar"left";"d"^{\id}};
{\ar@{=>}(-20,25);(-20,16)^{\id}};
{\ar@{=>}(-20,-17);(-20,-26)^{\id}};
\endxy
\ee
We do a similar thing for the bigon $\S_{S}$ given by 
\be
[0,2\pi] \times (\pi/2, \pi] \ni (\phi, \theta)
\mapsto \S_{S} (\phi, \theta) := \g_{\theta} (\phi) 
\ee
which is a bigon $\g_{\pi/2} \Rightarrow \id_{\bullet}$ that lands in
$U_{S}.$ This is a bigon covering the southern hemisphere. However, our
boundary data need to match up so that we'll be able to compose in
$\mathcal{P}_{2}^{\pi} (S^2).$ Again, following (\ref{eq:bigonlift})),
this is given by 
\be
s^{\pi} ( \S_{S} ) \quad := \quad 
\xy0;/r.15pc/:
(-70,0)*+{s^{\pi}(\bullet)}="left";
(30,0)*+{s^{\pi}(\bullet)}="right";
(-40,0)*+{s_{S} (\bullet) }="d";
(0,0)*+{s_{S} ( \bullet ) }="c";
{\ar@/_1.5pc/"c";"d"_{s_{S *} ( \g_{\pi/2} )}};
{\ar@/^1.5pc/"c";"d"^{s_{S *} ( \id_{\bullet} ) }};
(-20,10)*+{}="f";
(-20,-10)*+{}="g";
{\ar@{=>}"f";"g"|-{s_{S *} ( \S_{S} ) }};
{\ar@/_4pc/"right";"left"_{s^{\pi}(\g_{\pi/2} )}};
{\ar@/^4pc/"right";"left"^{s^{\pi}(\id_{\bullet})}};
{\ar"right";"c"_{\a_{SN}(\bullet)}};
{\ar"left";"d"^{\a_{SN}(\bullet)}};
{\ar@{=>}(-20,25);(-20,16)^{!}};
{\ar@{=>}(-20,-16);(-20,-25)^{!}};
\endxy
,
\ee
where the $!$ signifies the unique 2-isomorphisms from Lemma
\ref{lem:unique}. 
For the full bigon $\S : \id_{\bullet} \Rightarrow \id_{\bullet}$
depicting the full sphere as the composition $\begin{smallmatrix}
\S_{N} \\ \overset{\circ}{\S_{S}} \end{smallmatrix} : \id_{\bullet}
\Rightarrow \g_{\pi/2} \Rightarrow \id_{\bullet},$ we break it up into
the two pieces defined above and compose vertically. The result of this
is 
\be
\xy0;/r.15pc/:
(-70,0)*+{s^{\pi}(\bullet)}="left";
(70,0)*+{s^{\pi}(\bullet)}="right";
(-30,0)*+{s_{S} ( \bullet ) }="d";
(30,0)*+{s_{S} ( \bullet ) }="c";
{\ar@/_1pc/"c";"d"_{s_{S *} ( \g_{\pi/2} )}};
{\ar@/^3pc/"c";"d"^{s_{S *} ( \id_{\bullet} ) }};
(0,5)*+{}="f";
(0,-15)*+{}="g";
{\ar@{=>}"f"+(0,0.5);"g"+(0,-3)|-{s_{S *} ( \S_{S} ) }};
{\ar@/_7pc/"right";"left"_{s^{\pi}(\id_{\bullet} ) := s_{N *}
(\id_{\bullet} ) }};
{\ar@/_4pc/"right";"left"_{s^{\pi}( \g_{\pi/2} ) := s_{N *}
( \g_{\pi/2} ) }};
{\ar@/^6pc/"right";"left"^{s^{\pi}(c_{ \bullet } ) := s_{N *}
(\id_{\bullet} )}};
{\ar"right";"c"_{\a_{SN}(\bullet)}};
{\ar"left";"d"^{\a_{SN}(\bullet)}};
{\ar@{=>}(0,45);(0,33)|-{s^{\pi} ( \S_{N} ) := s_{N *} (\S_{N}) } };
{\ar@{=>}(0,-27);(0,-38.5)^{!}};
{\ar@{=>}(0,25);(0,14)^{!}};
\endxy
,
\ee
We rescale our angle $\theta$ to $s = \frac{\theta}{\pi}$ to be
consistent with our earlier notation. Going from
$Z^{2}_{\pi}(\mathcal{B}\mathcal{G}_{\{1\}})^{\infty}$ to
$\mathrm{Trans}^{2}_{\mathcal{B} \mathcal{G}_{\{1\}}}
(M,\widehat{G\text{-}\mathrm{Tor}_{ \{1\} }})$ from above to define the
global transport functor applied to the sphere, we obtain the following
diagram in $\widehat{G\text{-}\mathrm{Tor}_{ \{1\} }}$ 
\be
\label{eq:U1bigon}
\xy0;/r.15pc/:
(-70,0)*+{G}="left";
(70,0)*+{G}="right";
(-30,0)*+{G}="d";
(30,0)*+{G}="c";
{\ar@/_1pc/"c";"d"_{L_{\triv ( \g_{\pi/2} ) } }};
{\ar@/^3pc/"c";"d"^{\id_{G} }};
(0,5)*+{}="f";
(0,-15)*+{}="g";
{\ar@{=>}"f";"g"+(0,-2)|-{[ s
\mapsto L_{\triv ( \S_{S} ( \ \cdot \ , s ) ) } ] }};
{\ar@/_7pc/"right";"left"_{\id_{G}}};
{\ar@/_4pc/"right";"left"_{L_{\triv ( \g_{\pi/2} ) } }};
{\ar@/^6pc/"right";"left"^{\id_{G}}};
{\ar"right";"c"_{\id_{G}}};
{\ar"left";"d"^{\id_{G}} };
{\ar@{=>}(0,45);(0,34)|-{[ s
\mapsto L_{\triv ( \S_{N} ( \ \cdot \ , s ) ) } ] } };
{\ar@{=>}(0,-26);(0,-38)|-{\id_{\id_{G}} }};
{\ar@{=>}(0,25);(0,14)|-{\id_{L_{\triv ( \g_{\pi/2} )} }} };
\endxy
\ee
since $i_{\{1\}} \circ \triv (y) = G$ for all $y$ and so on for paths
and bigons (see the definition of $R_{(\triv, g, \psi, f)}$ in Section
\ref{sec:reconstruction2}) and $g_{NS} (\phi = 0) \equiv g_{NS}
(\bullet) = 1.$ Furthermore, $g_{NS}$ on paths is the identity since
$g_{NS}$ came from a natural transformation of ordinary functors between
ordinary categories. With these simplifications, the composition in
(\ref{eq:U1bigon}) is given by 
\be
\left [ s \mapsto 
	\begin{cases} 
	L_{\triv ( \S_{N} ( \ \cdot \ , 2s ) ) }
& \mbox{ for } 0 \le s \le \frac{1}{2} \\
	L_{\triv ( \S_{S} ( \ \cdot \ , 2s -1 ) ) }
& \mbox{ for } \frac{1}{2} \le s \le  1 
	\end{cases} 
\right ] 
\ee
which reduces to a computation on the group level. Therefore, all we
have to do is compute the homotopy class of the path 
\be
s \mapsto 
	\begin{cases} 
	\triv ( \S_{N} ( \ \cdot \ , 2s ) )  & \mbox{ for } 0 \le s \le
\frac{1}{2} \\
	\triv ( \S_{S} ( \ \cdot \ , 2s -1 ) )  & \mbox{ for }
\frac{1}{2} \le s \le  1 
	\end{cases} 
\ee
in the group $U(1)$ thanks to Lemma \ref{lem:coveringmultiplication}.
This is easily calculable
\be 
\begin{split}
\triv ( \S_{N} ( \ \cdot \ , 2s ) ) &= \triv \left ( \S_{N}
\left ( \ \cdot \ , 2 \frac{\theta}{\pi} \right )  \right ) \\
	&= e^{\frac{n}{2i}\int_{0}^{2\pi} (1-\cos\theta) d \phi } \\
	&= e^{ - i n \pi ( 1 - \cos \theta ) } 
\end{split}
\ee
since the paths going along $\theta$ do not contribute to the parallel
transport since the connection form only has a $d \phi$ contribution.
Similarly, 
\be
\triv ( \S_{S} ( \ \cdot \ , 2s -1 ) ) = \triv \left ( \S_{S}
\left ( \ \cdot \ , 2 \frac{\theta}{\pi} - 1 \right )  \right )
= e^{ i n \pi ( 1 + \cos \theta ) }  . 
\ee
As a sanity check, notice that 
\be
e^{ - i n \pi ( 1 - \cos \frac{\pi}{2} ) }  = e^{ - i n \pi}
= e^{ i n \pi } = e^{ i n \pi ( 1 + \cos \frac{\pi}{2} ) } 
\ee
showing that the matching condition (so that our path is continuous) is
satisfied. This matching condition was the one used, for instance, in
\cite{WY} (see equation (47)). 

Notice that $1 - \cos \theta$ is a monotonically increasing function of
$\theta$ for $0 \le \theta \le \frac{\pi}{2}$ starting at $0$ when
$\theta = 0$ and ending at $1$ when $\theta = \frac{\pi}{2}.$ Therefore,
$e^{ - i n \pi ( 1 - \cos \theta ) }$ winds around the circle starting
at $1$ and ending at $e^{- i n \pi } = (-1)^{n}$ winding around
monotonically $ \frac{n}{2}$ times clockwise if $n$ is positive and
counterclockwise otherwise. Now, the functon $1 + \cos \theta$ is a
monotonically decreasing function of $\theta$ for $\frac{\pi}{2} \le
\theta \le \pi$ starting at $(-1)^{n}$ when $\theta = 0$ and ending at
$1$ when $\theta = \pi.$  Therefore, $e^{ i n \pi ( 1 + \cos \theta ) }$
winds around the circle starting at $e^{i n \pi } = (-1)^{n}$ and ending
at $1$ winding around monotonically $ \frac{n}{2}$ times clockwise if
$n$ is positive and counterclockwise otherwise. In other words, the loop
goes a total of $n$ times around clockwise if $n$ is positive and $n$
times counterclockwise if $n$ is negative and the 2-holonomy along $S^2$
is given by (using the notation of Definition \ref{defn:gaugeinv2hol})
\be
\mathrm{hol}^{[n]} (S^2) = -n.
\ee
If we wanted to, we could have also computed this using differential
forms and the formula for 2-transport (\ref{eq:surfaceintegral}) of
Schreiber and Waldorf \cite{SW2} locally and pasted the group elements
together vertically as above. Of course, by the equivalence between
local smooth functors and differential forms, our formula in terms of
ordinary holonomy bypasses the rather (a-priori) complicated surface
holonomy formula (\ref{eq:surfaceintegral}) due to Corollary
\ref{cor:main}. It'll actually turn out that the surface holonomy
formula (\ref{eq:surfaceintegral}) is not so complicated in this
particular case due to our choice of bigon representing the sphere and
the differential forms representing the connection. We will subsequently
do this analysis strictly in terms of the differential forms associated
to the path-curvature 2-functor discussed in Section \ref{sec:general}. 

The curvature is given by 
\be
R_{N} = \frac{n}{2i} \sin \theta d \theta \wedge d \phi \in \W^2
(U_{N} ; \underline{U(1)})
\ee
and similarly for $R_{S} \in \W^2 (U_{S} ; \underline{U(1)}).$
Therefore, the connection 2-form is given by 
\be
B_{N} = \un \t^{-1} ( R_{N} ) = \frac{1}{2 \pi i} R_{N}
= - \frac{n}{4\pi} \sin \theta d \theta \wedge d \phi 
\ee
and similarly for $B_{S}.$ The 1-form $\mathcal{A}_{\S_{N}}$ (see
equation (\ref{eq:Asigma})) is given by 
\be
(\mathcal{A}_{\S_{N}} )_{\theta} \left ( \frac{d}{d \theta} \right )
= - \int_{0}^{2\pi} d\phi \; B_{(\theta, \phi)}
\left ( \frac{\p}{\p \theta} ,\frac{\p}{\p \phi} \right )
= \frac{n}{2} \sin \theta
\ee
and the 2-transport along $\S_{N}$ is given by 
\be
\begin{split}
k_{A,B} ( \S_{N} ) &= \mathcal{P} \exp \left \{ - \int_{0}^{\pi/2}
d\theta \; ( \mathcal{A}_{\S_{N}} )_{\theta} \left ( \frac{d}{d \theta}
\right )  \right \} \\
	&= - \int_{0}^{\pi/2} d\theta \; \frac{n}{2}  \sin \theta  
\end{split}
\ee
because the exponential map $\un \R \to \R$ is the identity. The
2-transport along $\S_{S}$ is done similarly and is given by 
\be
k_{A,B} ( \S_{S} ) = -\int_{\pi/2}^{\pi} d\theta \;
\frac{n}{2} \sin \theta .
\ee
Vertically composing these results yields 
\be
\begin{split}
k_{A,B} (\S_{S} ) + k_{A,B} (\S_{N})
&= - \int_{\pi/2}^{\pi} d\theta  \; \frac{n}{2} \sin \theta
- \int_{0}^{\pi/2} d\theta  \; \frac{n}{2} \sin \theta \\
	&= - \int_{0}^{\pi} d\theta  \; \frac{n}{2} \sin \theta   \\
	&= - n  
\end{split}
\ee
because the group operation in $\R$ is addition. Therefore, the result
obtained in terms of the path-curvature 2-functor in terms of homotopy
classes of paths in $G$ agrees with the double path-ordered exponential
formula (\ref{eq:surfaceintegral}) of Schreiber and Waldorf \cite{SW2}
from the differential cocycle data, which is what we expect due to
Corollary \ref{cor:main}.

\subsection{SO(3) monopoles}
Now we will give examples for non-abelian magnetic monopoles. The first
example will be similar to the abelian case since we will consider  the
following principal $SO(3)$ bundle over $S^2$ defined by the two open
sets $U_{N}$ and $U_{S}$ with transition function  ${g_{NS} : U_{NS}
\simeq S^1 \to SO(3)}$ to be 
\be
g_{NS} (\phi ) := e^{ - \phi J_{3} }
\ee
where 
\be
J_{1} :=  
\begin{pmatrix}
0 & 0 & 0 \\
0 & 0 & -1 \\
0 & 1 & 0  
\end{pmatrix}
, \quad 
J_{2} := 
\begin{pmatrix}
0 & 0 & 1 \\
0 & 0 & 0 \\
-1 & 0 & 0  
\end{pmatrix}
, \quad \& \quad 
J_{3} := 
\begin{pmatrix}
0 & -1 & 0 \\
1 & 0 & 0 \\
0 & 0 & 0  
\end{pmatrix}
\ee
form a set of generators for the Lie algebra $\underline{SO(3)}.$ 
One can give explicit connection forms $A_{N}$ and $A_{S}$ on $U_{N}$
and $U_{S}$ respectively as follows
\be
A_{N} :=  \frac{J_{3}}{2}  ( 1 - \cos \theta ) d \phi  \aand A_{S} := - 
\frac{J_{3}}{2} ( 1 + \cos \theta ) d \phi .
\ee
These define local curvature 2-forms $R_{N}$ and $R_{S}.$  Indeed, the
gauge transformation defined above shows that 
\be
\begin{split}
g_{NS} A_{S} g_{NS}^{-1} - dg_{NS} g_{NS}^{-1} &= A_{S} + J_{3} d\phi \\
	&= -\frac{J_{3}}{2} ( 1 + \cos \theta ) d \phi + J_{3} d \phi \\
	&=   \frac{J_{3}}{2}  ( 1 - \cos \theta ) d \phi \\
	&= A_{N}
\end{split}
\ee
because all elements commute. The curvature 2-form is given by 
\be
R_{N} = d A_{N} + \frac{1}{2} [ A_{N}, A_{N} ] = \frac{J_{3}}{2}
\sin \theta \; d\theta \wedge d \phi 
\ee
again because the elements commute. Since $R_{N} = R_{S}$ on $U_{NS},$
this defines a $\underline{SO(3)}$-valued closed 2-form on $S^2.$ 
Let $\t : SU(2) \to SO(3)$ be the double cover map so that $N = \{1 \}
\le \pi_{1} (SO (3) ) \cong \Z_{2}.$ Recall that the induced map on the
level of Lie algebras $\un \t : \underline{SU(2)} \to \underline{SO(3)}$
is an isomorphism and is given by 
\be
\un \t \left ( \frac{1}{2i} \s_{i} \right ) = J_{i},
\ee
where the $\s_{i}$ are the Pauli matrices 
\be
\s_1 =
\begin{pmatrix}
0 & 1 \\
1 & 0
\end{pmatrix}
, \qquad
\s_2 =
\begin{pmatrix}
0 & -i \\
i & 0  
\end{pmatrix}
, \aand 
\s_3 = 
\begin{pmatrix}
1 & 0 \\
0 & -1  
\end{pmatrix}.
\ee
As in the general case, define $B_{N} := \un \t^{-1} ( R_{N} ) $ and
$B_{S} := \un \t^{-1} ( R_{S} ),$ or explicitly 
\be
B = \frac{\s_{3} }{4i} \sin \theta \; d\theta \wedge d \phi 
\ee
since $B_{N} = B_{S}$ on $U_{NS}.$ By our analysis in Section
\ref{sec:general}, this defines the differential cocycle data of the
path-curvature 2-functor. We will compute the 2-holonomy in two
different ways. We will follow the same procedure as in the $U(1)$ case
and compute 2-holonomy in terms of homotopy classes of paths and then we
will use formula (\ref{eq:surfaceintegral}). 

To help us with the first task, we first recall how $SU(2)$ the way
described above in terms of the Pauli spin matrices is isomorphic to the
universal cover of $SO(3)$ described in terms of homotopy classes of
paths starting at the identity in $SO(3).$ An isomorphism
$\widetilde{SO(3)} \cong SU(2)$ from the universal cover of $SO(3)$ to
$SU(2)$ can be given by using the universal property and the fact that
$SU(2)$ is simply connected. Given any path $\g : [0,1] \to SO(3)$
starting at $\g(0) = \mathbb{I}_{3},$ the $3\times 3$ identity matrix,
there exists a unique lift $\tilde{\g} : [ 0,1] \to SU(2)$ starting at
$\tilde{\g} (0) = \mathbb{I}_{2}$ and such that the diagram 
\be
\xy0;/r.25pc/:
(10,7.5)*+{SU(2)}="SU2";
(-10,-7.5)*+{[0,1]}="01";
(10,-7.5)*+{SO(3)}="SO3";
{\ar@{-->}"01";"SU2"^{\tilde{\g}}};
{\ar"01";"SO3"_{\g}};
{\ar"SU2";"SO3"}; 
\endxy
\ee
commutes. In this way, we can define a map 
\be
\begin{split}
\widetilde{SO(3)} &\to SU(2) \\
[\g] &\mapsto \tilde{\g} (1) 
.
\end{split}
\ee
By using the universal property one more time, one can show that this
map is well-defined. Finally, it is a smooth diffeomorphism of covering
spaces. 

We can now check what the value of the path-curvature transport
2-functor is on the sphere by doing the same computations as above but
using the new $\underline{SO(3)}$-valued differential forms. The result
for the bigon describing the northern hemisphere is given by 
\be
\begin{split}
\triv ( \S_{N} ( \ \cdot \ , 2s ) ) &= \triv \left ( \S_{N}
\left ( \ \cdot \ , 2 \frac{\theta}{\pi} \right )  \right ) \\
	&= e^{\frac{J_{3}}{2} \int_{0}^{2\pi} (1-\cos\theta)d\phi } \\
	&= e^{ \pi J_{3}  ( 1 - \cos \theta ) } 
\end{split}
\ee
since the paths going along $\theta$ do not contribute to the parallel
transport since the connection form only has a $d \phi$ contribution.
The path-ordered exponential is reduced to an ordinary exponential of an
integral because only $J_{3}$ is involved and $J_{3}$ commutes with
itself. Similarly, the southern hemisphere gives
\be
\triv ( \S_{S} ( \ \cdot \ , 2s -1 ) ) = \triv \left ( \S_{S}  \left ( \
\cdot \ , 2 \frac{\theta}{\pi} - 1 \right )  \right )
= e^{ - \pi J_{3} ( 1 + \cos \theta ) } . 
\ee
Again, as a sanity check we show that the boundary values match up
between the two hemispheres along the equator: 
\be
e^{ \pi J_{3} (1-\cos\frac{\pi}{2} ) } = e^{ \pi J_{3} }
= - \mathbb{I}_{3} =  e^{ - \pi J_{3} }
= e^{ - \pi J_{3} ( 1 + \cos \frac{\pi}{2} ) } .
\ee
Now we can compute the homotopy class of the path as $\theta$ ranges
from $0$ to $\pi.$ Using similar arguments, namely that
$1 - \cos \theta$ is a monotonically increasing function of $\theta$ for
$\theta$ between $0$ and $\frac{\pi}{2},$ we see that this defines a
nontrivial loop in $SO(3)$ at the identity which agrees with our
previous calculation. Therefore, the 2-holonomy along the sphere is 
\be
\mathrm{hol} ( S^2 ) = - \mathbb{I}_{2}. 
\ee

Now we will use the differential cocycle data and integrate using
formula (\ref{eq:surfaceintegral}).  First, we compute
$\mathcal{A}_{\S_{N}}$ for the northern hemisphere bigon. Because only
$\s_{3}$ is involved in the computation, everything commutes and
conjugation is trivial. Therefore, 
\be
\left ( \mathcal{A}_{\S_{N}} \right )_{\theta} \left ( \frac{d}{d\theta}
\right ) = - \int_{0}^{2\pi} d\phi \; B_{(\theta, \phi)}
\left ( \frac{\p}{\p \theta} , \frac{\p}{\p \phi} \right )
= - \frac{ \pi \s_{3} }{ 2i }  \sin \theta 
\ee
and the 2-transport along $\S_{N}$ is given by 
\be
\begin{split}
k_{A,B} (\S_{N}) &=\mathcal{P}\exp\left\{-\int_{\theta=0}^{\theta=\pi/2}
\left ( \mathcal{A}_{\S_{N}} \right )_{\theta} \left ( \frac{d}{d\theta}
\right ) \right \} \\
	&= \exp \left \{ \int_{\theta = 0}^{\theta = \pi/2}
\frac{ \pi \s_{3} }{ 2i }  \sin \theta  \right \} .
\end{split}
\ee
The 2-transport along $\S_{S}$ is done similarly and is given by 
\be
k_{A,B} (\S_{S})=\exp\left \{\int_{\theta=\pi/2}^{\theta = \pi}
\frac{ \pi \s_{3} }{ 2i }  \sin \theta  \right \}
\ee
Vertically composing these results yields 
\be
k_{A,B} (\S_{S})  k_{A,B} (\S_{N})
= \exp \left \{ \int_{\theta = 0}^{\theta = \pi}
\frac{ \pi \s_{3} }{ 2i }  \sin \theta  \right \} = e^{ \pi i \s_{3}}
= - \mathbb{I}_{2\times 2}
\ee
because again every term commutes. We will discuss what these group
elements mean after we finish a few more examples.

\subsection{$SU(n)/Z(n)$ monopoles}

Another collection of non-abelian examples arise from the Lie group
$SU(n).$ The center of $SU(n)$ is $Z(n)$ where, in the fundamental
representation, elements in $Z(n)$ are of the form 
\be
\exp \left \{ \frac{ 2 \pi k i }{ n } \right \} \mathbb{I}_{n},
\ee
where $k \in  \{ 0, 1, \dots, n -1 \}$ and $\mathbb{I}_{n}$ is the
$n\times n$ unit matrix.
$SU(n)/Z(n)$ is a Lie group with fundamental group
$\pi_{1} ( SU(n)/Z(n) )$ isomorphic to $Z(n).$ To see this, recall that
the universal cover $\widetilde{SU(n)/Z(n)}$ constructed via paths in
$SU(n)/Z(n)$ and modding out by homotopy is naturally isomorphic to
$SU(n),$ which is simply connected, by the universal property of
universal covers. The isomorphism preserves the fibers over the identity
in $SU(n)/Z(n)$ and  restricts to the isomorphism between
$\pi_1 (SU(n)/Z(n))$ and $Z(n).$  The previous example was the special
case $n =2.$ 

The equivalence relation on $SU(n) / Z(n)$ says that two elements $A$
and $B$ of $SU(n)$ are equivalent if there exists a
$k \in \{ 0, 1, \dots, n -1 \}$ such that 
\be
A B^{-1} = \exp \left \{ \frac{2\pi k i}{n} \right \} \mathbb{I}_{n}. 
\ee
We denote the elements of equivalence classes with square brackets such
as $[A].$ 

The possible $SU(n) / Z(n)$ principal bundles over the sphere are
determined by the clutching function along the equator, which is a
homotopy class of a loop $S^{1} \to SU(n) / Z(n)$ which by the
isomorphism above is precisely an element of $Z(n).$ The quotient map is
written as $\t : SU(n) \to SU(n) / Z(n)$ and is a covering map of Lie
groups. Therefore, it defines a Lie 2-group. 

Let's first consider the case for $n=3,$ which is relevant in the theory
of quarks and gluons (see Section 1.4 of \cite{HS}). We fix
$k \in \{0,1,2 \}.$ Define $X$ to be the element in the Lie algebra of
$SU(3)$ to be 
\be
X := \frac{i}{3} \begin{pmatrix} 1 & 0 & 0 \\ 0 & 1 & 0 \\ 0 & 0 & -2
\end{pmatrix}. 
\ee
The exponential of this matrix is unitary. We define transition
functions by 
\be
g_{NS} ( \phi ) := \exp \left \{ - k \un \t ( X) \right \}
= [\exp \{ - k \phi X \} ] = \left [\begin{pmatrix}
e^{-\frac{k \phi i}{3} } & 0 & 0 \\ 0 & e^{-\frac{k \phi i}{3} } & 0 \\
0 & 0 &  e^{ \frac{2k \phi i}{3} }\end{pmatrix} \right ] . 
\ee
The element $X$ is a scalar multiple of the Gell-Mann matrix $\l_{8}.$
Note we have 
\be
g_{NS} \left ( 0 \right ) = g_{NS} \left ( 2 \pi \right )
= g_{NS} \left ( 4 \pi \right )  = [ \mathbb{I}_{3} ] \in SU(3) / Z(3). 
\ee
The transition function defines a map $\phi \mapsto g_{NS} ( \phi )$
whose homotopy class determines a principal $SU(3)/Z(3)$ bundle
characterized by the integer $k \in  \{ 0, 1, 2 \}.$ 

We define a connection on this bundle analogously to the $SO(3)$ case by
setting 
\be
A_{N} := \frac{k \un \t (X) }{2} (1-\cos \theta ) d\phi \aand
A_{S} := -  \frac{k \un \t (X) }{2} ( 1 + \cos \theta ) d \phi .
\ee
A similar computation shows that this collection of 1-forms is
consistent with the transition function. The connection 2-form is
similarly given by 
\be
B_{N} =  \frac{kX}{2} \sin \theta \; d\theta \wedge d \phi 
\ee
and likewise for $B_{S}.$ This defines an
$\underline{SU(3)/Z(3)}$-valued closed 2-form on $S^2.$ 

Again, we can do the computation for the 2-holonomy in the two ways
described earlier. The first case is done by computing the homotopy
class of the path of holonomies using the definition of the
path-curvature 2-functor of Definition \ref{defn:pathcurvature2functor}.
The second way is via the differential forms associated to the
path-curvature 2-functor described in Section \ref{sec:general} and
equation (\ref{eq:surfaceintegral}). The computation is completely
analogous to the previous two examples. 

For the first case, we have
\be 
\begin{split}
\mathrm{hol}^{k} ( S^2 ) &= \left [ \theta \mapsto 
\begin{cases} 
\left [e^{\frac{k}{2} X \int_{0}^{2\pi} (1-\cos\theta) d\phi } \right ]
 & \mbox{ if } 0 \le \theta \le \frac{\pi}{2} \\
\left [ e^{ - \frac{k}{2} X \int_{0}^{2\pi} ( 1 + \cos \theta ) d \phi }
\right ] & \mbox{ if } \frac{\pi}{2} \le \theta \le \pi
\end{cases}
\right ] \\
	&= \left [ \theta \mapsto 
\begin{cases} 
\left [ e^{ k \pi X ( 1 - \cos \theta ) } \right ]
& \mbox{ if } 0 \le \theta \le \frac{\pi}{2} \\
\left [ e^{ - k \pi  X ( 1 + \cos \theta ) } \right ]
& \mbox{ if } \frac{\pi}{2} \le \theta \le \pi
\end{cases}
\right ] \\
&= e^{ \frac{2\pi i k}{ 3} } \mathbb{I}_{3}. 
\end{split}
\ee

As for the computation in terms of differential forms, also by analogous
computations to previous cases, 
\be
\left ( \mathcal{A}_{\S_{N} } \right )_{\theta}
\left ( \frac{d }{d \theta} \right )
= - \int_{0}^{2\pi} d \phi \frac{k X}{2} \sin \theta
= - k \pi X \sin \theta
\ee
and likewise for $\left ( \mathcal{A}_{\S_{S} } \right )_{\theta}
\left ( \frac{d }{d \theta} \right ).$ Also 
\be
k_{A,B} (\S_{N} )
=\exp\left \{\int_{0}^{\pi/2} k \pi X \sin \theta \; d \theta \right \} 
\ee
and finally the 2-holonomy along the sphere is 
\be
\mathrm{hol}^{[k]} ( S^2 ) = k_{A,B} (\S_{S} ) k_{A,B} (\S_{N} )
= \exp \{ 2 \pi k X \} = e^{ \frac{ 2\pi i k }{ 3 } } \mathbb{I}_{3}. 
\ee
For the general case of $SU(n),$ by using the matrix 
\be
X := \frac{i}{n} 
\begin{pmatrix} 
1 & & & & \\
 & 1 & & & \\
 & & \ddots & & \\
 & & & 1 & \\
 & & & & 1-n
 \end{pmatrix} 
\ee
the formulas for the transition function, connection 1-forms, and
connection 2-forms are all the same with this new $X$ replacing the old
one. Completely analogous computations lead to a 2-holonomy along the
sphere given by 
\be
\mathrm{hol}^{[k]} (S^2) = e^{ \frac{ 2\pi i k }{ n } } \mathbb{I}_{n}, 
\ee
where $k \in \{ 0, 1, \dots, n-1 \}.$ The result is the magnetic charge
of a magnetic monopole computed as a non-abelian flux in $SU(n)/Z(n)$
gauge theories.

\subsection{$U(n)$ monopoles}

We now discuss yet another collection of examples generalizing the
$U(1)$ case. 
Consider the group $U(n)$ of unitary $n \times n$ matrices. The Lie
algebra, $\underline{U(n)}$ consists of Hermitian matrices. The
universal cover of $U(n)$ is $SU(n) \times \R.$ The covering map
${\t : SU(n) \times \R \to U(n)}$ is defined by $\t ( A, t )
:= A e^{2\pi i t}.$ The image of $\t$ is clearly a $U(1)$ subgroup of
$U(n).$ The fiber of this covering map is given by the kernel which is 
\be
\begin{split}
\ker \t &= \left \{ ( A, t ) \ \Big | \ 
A = e^{- 2 \pi i t} \text{ and } \det A = e^{- 2 \pi i n t} = 1 \iff t =
\frac{k}{n}, \; k \in \Z \right \}  \\
	&= \left \{ \left ( e^{ \frac{ 2 \pi i k }{n } } \mathbb{I}_{n}
, \frac{k}{n} \right ) \ \Big  | \ k \in \Z \right \} \\
	&\cong \Z . 
\end{split}
\ee
Consider the Lie algebra element along this real line 
\be
X := ( 0_{n}, 1 ), 
\ee
where $0_{n}$ is the $n\times n$ zero matrix. Then its image in
$\underline{U(n)}$ under $\un \t$ is 
\be
\un \t (X) = 2 \pi i \mathbb{I}_{n} . 
\ee
With this, for every integer $k,$  we define the transition function,
connection 1-forms, and connection 2-forms completely analogously to the
previous examples (specifically the $\R \to U(1)$ example), namely 
\be
g_{NS} (\phi ) = e^{ i k \phi } \mathbb{I}_{n} ,
\ee
\be
A_{N} = \frac{k}{2i} (1-\cos \theta) \mathbb{I}_{n} d\phi \aand
A_{S} = - \frac{k}{2i} (1 + \cos \theta ) \mathbb{I}_{n}  d \phi , 
\ee
and 
\be
B = \un \t^{-1} \left (  \frac{k}{2i} \sin \theta \; \mathbb{I}_{n} \; d
\theta \wedge d \phi \right ) = - \frac{k}{4\pi} \sin \theta \; ( 0_{n},
1 ) \; d \theta \wedge d \phi . 
\ee
In terms of the path of holonomies via the path-curvature 2-functor, the
surface holonomy is 
\be
\begin{split}
\mathrm{hol}^{[k]} ( S^2 ) &= \left [ \theta \mapsto 
\begin{cases} 
e^{ \frac{k}{2i} \mathbb{I}_{n} \int_{0}^{2\pi} (1-\cos\theta) d \phi }
& \mbox{ if } 0 \le \theta \le \frac{\pi}{2} \\
e^{ - \frac{k}{2i} \mathbb{I}_{n}  \int_{0}^{2\pi} ( 1 + \cos \theta ) d
\phi } & \mbox{ if } \frac{\pi}{2} \le \theta \le \pi
\end{cases}
\right ] \\
	&= \left [ \theta \mapsto 
\begin{cases} 
e^{ \frac{k\pi}{i} \mathbb{I}_{n}  ( 1 - \cos \theta ) }
& \mbox{ if } 0 \le \theta \le \frac{\pi}{2} \\
e^{ - \frac{k\pi}{i} \mathbb{I}_{n}  ( 1 + \cos \theta ) } & \mbox{ if }
\frac{\pi}{2} \le \theta \le \pi
\end{cases}
\right ] \\
&= - k \in \Z. 
\end{split}
\ee

If we want to compute the surface holonomy in terms of formula
(\ref{eq:surfaceintegral}), we first compute
\be
\left ( \mathcal{A}_{\S_{N}} \right )_{\theta} \left ( \frac{d}{d\theta}
\right ) = \int_{0}^{2\pi} d \phi \; \frac{k}{4\pi} \sin \theta \; (
0_{n}, 1 ) = \frac{k}{2} \sin \theta \; ( 0_{n}, 1 )
\ee
so that we get 
\be
k_{A,B} ( \S_{N} )
= \mathcal{P} \exp \left \{ - \int_{0}^{\pi/2} d \theta \;  \frac{k}{2}
\sin \theta \; ( 0_{n}, 1 ) \right \} = \left ( \mathbb{I}_{n},
- \int_{0}^{\pi/2} d \theta \;  \frac{k}{2} \sin \theta \right ) 
\ee
and the 2-holonomy along the sphere is 
\be
\begin{split}
\mathrm{hol}^{k} ( S^2 ) &= k_{A,B} ( \S_{S}  ) k_{A,B} ( \S_{N}  ) \\
	&=  \left ( \mathbb{I}_{n}, - \int_{\pi/2}^{\pi} d \theta \;
\frac{k}{2} \sin \theta \right ) \left ( \mathbb{I}_{n},
- \int_{0}^{\pi/2} d \theta \;  \frac{k}{2} \sin \theta \right ) \\
	&= \left ( \mathbb{I}_{n}, - \int_{0}^{\pi} d \theta \;
\frac{k}{2} \sin \theta \right ) \\
	&= \left ( \mathbb{I}_{n}, - k \right ). 
\end{split}
\ee

\subsection{Magnetic flux is a gauge-invariant quantity}
\label{sec:magneticflux}
In this section we state a theorem that is trivial to prove in the
formalism presented above but gives an interesting physical
interpretation. 
As mentioned earlier, the definition of the magnetic flux in the
literature \cite{HS} is given as the homotopy class of a loop of
holonomies. However, it was not known \cite{GNO} how to define it as a
surface-ordered integral except in the abelian case. The constructions
in this paper use the theory of transport 2-functors as models for
2-bundles with 2-connections to describe this loop of holonomies in
terms of a transport 2-functor. The equivalence between this description
and the definition in terms of surface holonomy is made precise. This
motivates the following definition. 

\begin{defn}
\label{defn:magflux}
Let $P \to M$ be a principal $G$-bundle with connection over $M$ and
denote the associated transport functor by $\tra.$ Let $\S : S^2 \to M$
be the map of a smooth sphere in $M.$ Let $N \le \pi_{1}(G)$ be a
subgroup, $\tilde{G}_{N} \to G$ the associated $N$-cover, $\mathcal{B}
\mathcal{G}_{N}$ the associated Lie 2-group, and $K_{N}(\tra)$ the
associated path-curvature transport 2-functor. The 2-holonomy
$\mathrm{hol}^{[K_{N}(\tra)]} ( \S )$ is the \emph{\uline{magnetic flux of
any magnetic monopole enclosed by $\S$ associated to $\tra$ and $N.$}}
\end{defn}

All the previous examples relied on choices for the open cover, paths
and bigons used to describe the sphere, and choices of lifts of paths
and bigons. It is not immediately clear that the surface holonomy
computed is independent of these choices. Theorems
\ref{thm:invariancesurfaceholonomy} and \ref{thm:main} give us two
important results, the first of which tells us the magnetic flux is
indeed independent of these choices. 

\begin{cor}
\label{cor:gaugeinvariantflux}
Under the assumptions of Definition \ref{defn:magflux}, the magnetic
flux is a gauge-invariant quantity (in terms of the notation of
Definition \ref{defn:Inv})
\be
\mathrm{hol}^{[K_{N}(\tra)]} ( \S ) \in \mathrm{Inv}(\a).
\ee
\end{cor}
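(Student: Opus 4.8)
The plan is to show that the magnetic flux $\mathrm{hol}^{[K_{N}(\tra)]}(\S)$ lands in $\mathrm{Inv}(\a)$ by combining two facts already established: first, that the flux is valued in $\ker\t$, and second, that $\ker\t$ consists of $\a$-invariant elements in the covering crossed module. The starting observation is the \emph{source-target matching condition} from Section \ref{sec:2cocycle}, which says that for any bigon $\S : \g \Rightarrow \de$ we have $\t(p_H(F(\S))) F(\g) = F(\de)$. When $\S$ is a \emph{thin marked sphere}, its source and target paths coincide (indeed $s(\S)=t(\S)=\g$ by Definition \ref{defn:thinmarkedsphere}), so the matching condition forces $\t(\mathrm{hol}^{F}_{\scripty{t}}(\S)) = e$, i.e. $\mathrm{hol}^{F}_{\scripty{t}}(\S) \in \ker\t$. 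This is the key structural input and it holds for 2-holonomy along any sphere regardless of the 2-group.

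Next I would specialize to the covering crossed module $(\tilde{G}_{N}, G, \t, \a)$. For this particular crossed module, $\a$ \emph{is} conjugation by a lift: this is precisely the content of Lemma \ref{lem:surjectiveLiecrossedmodule}, since $\t : \tilde{G}_{N}\to G$ is a smooth covering map and hence surjective. Now let $k \in \ker\t$ and let $g \in G$ be arbitrary. Choose any lift $h \in \tilde{G}_{N}$ with $\t(h) = g$; then
\be
\a_{g}(k) = h k h^{-1}.
\ee
By Lemma \ref{lem:central}, $\ker\t$ is a central subgroup of $\tilde{G}_{N}$, so $h k h^{-1} = k$. Therefore $\a_{g}(k) = k$ for every $g \in G$, which is exactly the statement $k \in \mathrm{Inv}(\a)$ (Definition \ref{defn:Inv}). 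Combining this with the previous paragraph, since the flux lies in $\ker\t$, it is fixed under the entire $\a$ action.

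Finally I would connect this to the statement as phrased. Because $\mathrm{hol}^{[K_{N}(\tra)]}(\S) = q(\mathrm{hol}^{K_{N}(\tra)}_{\scripty{t}}(\mathfrak{m}(\S)))$ with $q : H \to H/\a$ the quotient map, the fact that the representative lies in $\mathrm{Inv}(\a)$ means its $\a$-conjugacy class is a single element; in the abusive notation of Definition \ref{defn:gaugeinv2hol} we identify $\mathrm{hol}^{[K_{N}(\tra)]}(\S)$ with that element, which lies in $\mathrm{Inv}(\a) \subseteq \ker\t$. Gauge invariance in the strong sense then follows from Theorem \ref{thm:invariancesurfaceholonomy}: the flux is already independent of the marking $\mathfrak{m}$, the equivalence class of $F$, and the transport extraction $\scripty{t}$, and now we see that no nontrivial $\a$-conjugation can act on it. I expect no serious obstacle here; the only point requiring care is verifying that the source-target matching condition applies to the reconstructed $\scripty{t}_F$-holonomy on $\mathfrak{S}^2 M$ rather than just to a globally smooth 2-functor, but this is guaranteed because $\scripty{t}_F$ is itself a transport 2-functor with $\mathcal{B}\mathcal{G}_{N}$-structure (by the reconstruction procedure of Section \ref{sec:reconstruction2}) and hence satisfies the same functorial identities.
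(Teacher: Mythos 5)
Your proposal is correct and follows essentially the same route as the paper's own proof: the source-target matching condition places the flux in $\ker\t$, Lemma \ref{lem:surjectiveLiecrossedmodule} identifies $\a$-conjugation with conjugation by a lift for covering crossed modules, and centrality of $\ker\t$ (Lemma \ref{lem:central}) then makes the $\a$-action trivial, with Theorem \ref{thm:invariancesurfaceholonomy} supplying well-definedness up to $\a$-conjugation. The extra care you take in checking that the matching condition applies to the reconstructed $\scripty{t}_{F}$-holonomy and in unwinding the quotient notation of Definition \ref{defn:gaugeinv2hol} is a harmless elaboration of the same argument, not a different approach.
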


\begin{proof}
Choose a marking for the thin sphere as a thin bigon $\S : \g
\Rightarrow \g$ from a thin loop to itself. Then $K_{N}(\tra) (\S) \in
\ker \t$ by the source-target matching condition (recall comment
preceding (\ref{eq:B})). By Theorem \ref{thm:invariancesurfaceholonomy},
2-holonomy along a sphere for \emph{any} gauge 2-group is well-defined
up to $\a$-conjugation. But $\a$-conjugation for
\emph{covering 2-groups} agrees with ordinary conjugation by a lift
by Lemma \ref{lem:surjectiveLiecrossedmodule}. Therefore, the
$\a$-conjugation action restricted to $G \times \ker \t$ is trivial
because $\ker \t$ is a central subgroup of $\tilde{G}_{N}$ by Lemma
\ref{lem:central}. 
\end{proof}

A corollary of this and Theorem \ref{thm:main} is the following which
relates the magnetic flux to a surface integral of the magnetic field.
This is more of a physics corollary than a math corollary. 

\begin{cor}
The magnetic flux (Definition \ref{defn:magflux}) can be computed as a
surface integral by using (\ref{eq:surfaceintegral}) locally. This
surface integral, which lands in the covering group, is the analogue of
$\int_{S^2} R$ where in electromagnetism $R$ is the electromagnetic
field strength due to the local potential $A.$ 
\end{cor}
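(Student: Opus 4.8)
The statement to prove is the last corollary, which asserts that the magnetic flux (Definition \ref{defn:magflux}) can be computed as a surface integral using equation (\ref{eq:surfaceintegral}) locally, and that this surface integral is the non-abelian analogue of $\int_{S^2} R.$

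Let me think about what needs to be shown here and how I would structure the proof.

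---

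The plan is to observe that this corollary is essentially a repackaging of Theorem \ref{thm:main} and Corollary \ref{cor:main} together with the explicit identification of the differential cocycle data of the path-curvature $2$-functor worked out in Section \ref{sec:general}. Recall that the magnetic flux of Definition \ref{defn:magflux} is by definition the gauge-invariant $2$-holonomy $\mathrm{hol}^{[K_{N}(\tra)]}(\S)$ of the path-curvature $2$-functor along $\S.$ Since $K_{N}(\tra)$ is a genuine transport $2$-functor (Theorem \ref{thm:pathcurvature2functor}) with covering $2$-group structure $\mathcal{B}\mathcal{G}_{N},$ its surface holonomy can be computed by \emph{either} of the two equivalent procedures established in the paper: the path-of-holonomies formula of Corollary \ref{cor:main}, or the surface-ordered integral (\ref{eq:surfaceintegral}) of Schreiber and Waldorf applied to the differential cocycle data $(A, B)$ extracted from $K_{N}(\tra).$

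First I would invoke Section \ref{sec:general}, where it was shown that the differential cocycle data associated to the path-curvature $2$-functor is exactly
\[
\mathcal{D}(K_{N}^{\mathfrak{Des}}(\triv, g)) = \Big(A,\ B = \un\t^{-1}\big(dA + \tfrac{1}{2}[A,A]\big),\ g,\ \varphi = 0,\ f = 1,\ \psi = 1\Big),
\]
so that the local $2$-form governing surface transport is $B = \un\t^{-1}(R),$ where $R = dA + \tfrac{1}{2}[A,A]$ is the ordinary curvature of the principal $G$-bundle determined by $\tra.$ Because $\un\t$ is a Lie algebra isomorphism (as $\t$ is a covering map, hence a local diffeomorphism), $B$ is literally the curvature form transported isomorphically into the Lie algebra of the covering group. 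Then I would apply formula (\ref{eq:surfaceintegral}) locally on each piece of the chosen bigon decomposition of $\S$ (as was carried out explicitly in each of the four monopole examples), integrating $B$ over the surface and pasting the resulting covering-group elements together by vertical composition. Corollary \ref{cor:main} guarantees that the result agrees with the path-of-holonomies computation, so the two descriptions coincide and the surface integral genuinely computes the flux.

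The interpretive content — that this surface integral is the analogue of $\int_{S^2} R$ — follows directly: in the abelian $U(1)$ case worked out in Section \ref{sec:examples}, the covering map $\t : \R \to U(1)$ has $\un\t$ the identity (after the standard normalization), so $B = \un\t^{-1}(R) = \frac{1}{2\pi i}R$ and (\ref{eq:surfaceintegral}) reduces to an ordinary integral of the field strength, recovering exactly $\int_{S^2} R$ up to the normalization of the covering. In the non-abelian case, where $R$ fails to be globally defined, the functorial surface-transport formula supplies the correct globally meaningful substitute, landing in the covering group $\tilde{G}_{N}$ rather than in $G.$ I do not expect any genuine obstacle here, since all the hard analytic and categorical work is already contained in Theorem \ref{thm:main}, Corollary \ref{cor:main}, and the cocycle identification of Section \ref{sec:general}; the only subtlety worth flagging explicitly is that $B$ is \emph{determined} by $A$ through the isomorphism $\un\t^{-1},$ so that the surface integral is forced to reproduce the flux and no additional freedom enters. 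The proof is therefore essentially a citation-and-assembly argument, and the main step is simply to state clearly that the equality of the two computational routes is precisely the content of Corollary \ref{cor:main}.
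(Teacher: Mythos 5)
Your proposal is correct and follows essentially the same route the paper intends: the paper states this corollary without a written proof, presenting it as an immediate consequence of Theorem \ref{thm:main} (hence Corollary \ref{cor:main}) together with the identification $B = \un\t^{-1}\big(dA + \tfrac{1}{2}[A,A]\big)$ of the differential cocycle data from Section \ref{sec:general}, which is exactly the citation-and-assembly argument you give. Your added observation that $\un\t$ being an isomorphism forces $B$ to be the curvature transported into the covering Lie algebra, with the $U(1)$ example confirming the normalization, matches the computations already carried out in Section \ref{sec:examples}.
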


Therefore, the surface holonomies of transport 2-functors give a
mathematically rigorous explanation for the topological quantum number
(the magnetic charge) associated to magnetic monopoles for gauge
theories with any structure/gauge group in the language of magnetic
flux. It is topological in the sense that it only depends on the
homotopy class of the sphere by Corollary \ref{cor:pcflat}. Furthermore,
it expresses this quantity as a group element in the center of the
universal cover of the gauge group. We emphasize that no Higgs field was
introduced to do these computations. This therefore gives a rigorous
mathematical result first mentioned by Goddard, Nuyts, and Olive at the
end of Section 2 of their paper \cite{GNO} by using the notion of
transport 2-functors introduced by Schreiber and Waldorf in \cite{SW4}
to describe magnetic flux generalizing the notion from the theory of
electromagnetism to non-abelian gauge theories.

\appendix 
\addcontentsline{toc}{section}{Appendices}

\section{Smooth spaces}
\label{sec:smooth}
We will briefly state important definitions and smooth structures
needed in this paper. The category of finite-dimensional manifolds is
not suitable for our purposes, nor is the category of certain
infinite-dimensional manifolds. This section reviews diffeological
spaces, which constitute one candidate for a notion of smooth spaces. 
For a review of smooth spaces that also compares several other
candidates, please refer to \cite{BH}. 

\begin{defn}
A \emph{\uline{smooth space}} is a set $X$ together with a collection of
\emph{\uline{plots}} $\{ \varphi : U \to X \},$ called its
\emph{\uline{smooth structure}}, where each $U$ is an open set in some
$\R^{n}$ ($n$ can vary) satisfying the following conditions. 
\begin{enumerate}[i)]
\item
If $\varphi : U \to X$ is a plot and $\theta : V \to U,$ where $V$ is an
open set of some $\R^{m},$ is a smooth map, then $\varphi \circ \theta :
V \to X$ is a plot.
\item
Every map $\R^{0} \to X$ is a plot.
\item
Let $\varphi : U \to X$ be a function and let $\{ U_{j} \}_{j \in I}$ be
a collection of open sets covering $U$ with $i_{j} : U_{j} \to U$
denoting the inclusion. Then if $\varphi \circ i_{j}   : U_{j} \to X$ is
a plot for all $j \in I,$ then $\varphi : U \to X$ is a plot. 
\end{enumerate}
\end{defn}

\begin{defn}
\label{defn:smoothmaps}
A function $f : X \to Y$ between two smooth spaces is \emph{\uline{smooth}}
if for every plot $\varphi : U \to X$ of $X,$ $f \circ \varphi : U \to
Y$ is a plot of $Y.$
\end{defn}

\begin{ex}
\label{ex:manifold}
Let $M$ be a smooth manifold. The \emph{\uline{manifold smooth structure}}
has as its collection of plots all infinitely differentiable functions
$\varphi : U \to M$ for various open sets $U$ in Euclidean space. $M$
with this collection of plots forms a smooth space.  With this smooth
structure, for any two manifolds $M$ and $N,$ a function $M \to N$ is
smooth if and only if it is differentiable in the usual sense. 
\end{ex}

\begin{ex}
\label{ex:subspace}
Let $A$ be a subset of a smooth space $X$ and denote the inclusion by
$i : A \hookrightarrow X.$  The \emph{\uline{subspace smooth structure}} on
$A$ has as its collection of plots all functions $\varphi : U \to A$
such that $i \circ \varphi : U \to X$ are plots of $X.$ With this smooth
structure, the inclusion $i : A \to X$ is smooth. 
\end{ex}

\begin{ex}
\label{ex:quotient}
Let $X$ be a smooth space, $\sim$ an equivalence relation on $X,$ and
$q : X \to X/_{\sim}$ the quotient map. The \emph{\uline{quotient smooth
structure}} on $X/_{\sim}$ has as its collection of plots all functions
$\varphi : U \to X/_{\sim}$ such that there exists an open cover
$\{ U_{j} \}_{j \in J} $ along with plots $\varphi_{j} : U_{j} \to X$
for $X$ such that 
\be
\xy0;/r.15pc/:
(-15,15)*+{X}="X";
(-15,-15)*+{X/_{\sim}}="Y";
(15,15)*+{U_{j}}="j";
(15,-15)*+{U}="U";
{\ar@{^{(}->} "j";"U"};
{\ar_{\varphi_{j}}"j";"X"};
{\ar^{\varphi}"U";"Y"};
{\ar"X";"Y"};
\endxy
\ee
commutes for all $j \in J.$ With this smooth structure, the quotient map
$q : X \to X/_{\sim}$ is smooth. 
\end{ex}

\begin{ex}
\label{ex:product}
Let $X$ and $Y$ be smooth spaces. The \emph{\uline{product smooth
structure}} on $X \times Y$ has as its collection of plots all functions
$\varphi : U \to X \times Y$ such that $\pi_{X} \circ\varphi : U \to X$
and $\pi_{Y} \circ \varphi : U \to Y$ are both plots of $X$ and $Y,$
respectively. Here $\pi_{X} : X \times Y \to X$ and $\pi_{Y} : X \times
Y \to Y$ are the projection maps and are smooth with respect to this
smooth structure. 
\end{ex}

\begin{ex}
\label{ex:maps}
Let $X$ and $Y$ be two smooth spaces. The \emph{\uline{mapping smooth
structure}} on the set of functions $Y^{X}$ of $X$ into $Y$ is defined
as follows. A function $\varphi : U \to Y^{X}$ is a plot if and only if
the associated function $\tilde{\varphi} : U \times X \to Y,$ defined by
$\tilde{\varphi} (u,x) := \varphi ( u ) ( x ),$ is smooth. With this
smooth structure and the smooth structure on a product, the adjunction
$Z^{X\times Y} \cong (Z^{Y})^{X}$ is an isomorphism in the category of
smooth spaces for all $X,Y,Z.$ 
\end{ex}

\section*{Index of (frequently used) notation}

\begin{longtable}{c|c|c|c}
\hline
Notation & Name/description & Location & Page \\
\hline
$G$ & a Lie group & Def \ref{defn:BG} & \pageref{defn:BG} \\
$\mathcal{B}G$ & a one-object groupoid & Def \ref{defn:BG}
& \pageref{defn:BG} \\
$\mathrm{Gr}$ & Lie groupoid/2-groupoid
& Def \ref{defn:2-space}/\ref{defn:Lie2groupoid}
& \pageref{defn:2-space}/\pageref{defn:Lie2groupoid} \\
$\GTor$ & the category of $G$-torsors & Def \ref{defn:GTor}
& \pageref{defn:GTor} \\
$PX$ & paths with sitting instants in $X$ & Def \ref{defn:pwsi}
& \pageref{defn:pwsi} \\
$BX$ & bigons in $X$ & Def \ref{defn:thinhomotopy}
& \pageref{defn:thinhomotopy} \\
$P^{1}X$ & smooth space of
thin paths in $X$ & Def \ref{defn:thinhomotopy}
&\pageref{defn:thinhomotopy} \\
$\mathcal{P}_{1}(X)$ & thin path groupoid of $X$
& Def \ref{defn:thinpathgroupoid} &\pageref{defn:thinpathgroupoid} \\
$L_{g}$ & left multiplication by $g$ & Eqn (\ref{eq:Lg})
& \pageref{eq:Lg} \\
$T$ & ``target'' category/2-category & Def \ref{defn:trivialization1}
& \pageref{defn:trivialization1} \\
$i : \mathrm{Gr} \to T$ & realization of structure
groupoid in $T$ & Def \ref{defn:trivialization1}
& \pageref{defn:trivialization1} \\
$\pi : Y \to M$ & a surjective submersion
& Def \ref{defn:trivialization1} & \pageref{defn:trivialization1} \\
$\triv$ & local trivialization functor & Def \ref{defn:trivialization1}
& \pageref{defn:trivialization1} \\
$\triv_{i}$ & $\triv_{i} := i \circ \triv$
& Def \ref{defn:trivialization1} & \pageref{defn:trivialization1} \\
$\mathrm{Triv}^{1}_{\pi} (i)$ & category of
$\pi$-local $i$-trivializations
& Def \ref{defn:trivialization1}, \ref{defn:trivialization1mor}
& \pageref{defn:trivialization1}\\
$Y^{[n]}$ & $n$-fold fiber product 
of $\pi : Y \to M$ & Eqn (\ref{eq:Yn}) & \pageref{eq:Yn} \\
$\mathfrak{Des}^{1}_{\pi}(i)$ & descent category
& Def \ref{defn:descent1}, \ref{defn:descent1mor}
& \pageref{defn:descent1} \\
$\mathrm{Ex}^{1}_{\pi}$ & extraction functor
& After Def \ref{defn:descent1mor} & \pageref{defn:descent1mor} \\
$\mathfrak{Des}^{1}_{\pi}(i)^{\infty}$ & smooth descent category
& After Def \ref{defn:smoothdescent1} & \pageref{defn:smoothdescent1} \\
$\mathrm{Triv}^{1}_{\pi} (i)^{\infty}$
& category of smooth $\pi$-local $i$-trivializations
& After Def \ref{defn:smoothtrivialization1}
& \pageref{defn:smoothtrivialization1} \\
$\mathrm{Trans}^{1}_{\mathrm{Gr}}(M,T)$
& category of transport functors
& After Def \ref{defn:transportfunctor1}
& \pageref{defn:transportfunctor1} \\
$\mathcal{P}^{\pi}_{1}(M)$ & \v Cech path groupoid of $M$
& Def \ref{defn:Cechpath1} & \pageref{defn:Cechpath1} \\
$p^{\pi}$ & canonical projection
& Eqn (\ref{eq:pathlifting})/Lem \ref{lem:unique}
& \pageref{eq:pathlifting}/\pageref{lem:unique} \\
$s^{\pi}$ & weak inverse to $p^{\pi}$
& Eqn (\ref{eq:pathlifting})/Lem \ref{lem:unique}
& \pageref{eq:pathlifting}/\pageref{lem:unique} \\
$\mathrm{Rec}^{1}_{\pi}$ & Reconstruction functor
& Eqn (\ref{eq:Rec1pi}) & \pageref{eq:Rec1pi} \\
$\un G$ & Lie algebra of $G$ & Sec \ref{sec:diffcocycledata}
& \pageref{sec:diffcocycledata} \\
$k_{A}$ & path transport & Eqn (\ref{eq:kA}) & \pageref{eq:kA} \\
$\mathcal{P} \exp$ & path-ordered exponential
& Eqn (\ref{eq:pathexp}) & \pageref{eq:pathexp} \\
$Z_{\pi}^{1} (G)^{\infty}$
& \begin{tabular}{c}category of differential \\
cocycles subordinate to $\pi$\end{tabular}
& Def \ref{defn:diffcocyclespi} & \pageref{defn:diffcocyclespi} \\
$\mathrm{Rec}^{1}$ \& $\mathrm{Ex}^{1}$
& limit of $\mathrm{Rec}^{1}_{\pi}$ \& $\mathrm{Ex}^{1}_{\pi}$ over
$\pi$ & Eqn (\ref{eq:allequivalences1})
& \pageref{eq:allequivalences1} \\
$v$ \& $c$ & forgets trivialization 
\& its weak inverse & Eqn (\ref{eq:allequivalences1})
& \pageref{eq:allequivalences1} \\
$\scripty{t}$ & group-valued
transport extraction
& Def \ref{defn:groupvaluedholonomyfunctor}/%
\ref{defn:groupvalued2holonomyfunctor}
& \pageref{defn:groupvaluedholonomyfunctor}/%
\pageref{defn:groupvalued2holonomyfunctor} \\
$\mathfrak{L}^{1} M$ & thin marked loop space of $M$
& Eqn (\ref{eq:thinmarkedloopspace})
& \pageref{eq:thinmarkedloopspace} \\
$\mathrm{hol}_{\scripty{t}}^{F}$
& $\scripty{t}\;$-holonomy of 
a transport functor $F$
& Def \ref{defn:tcsholonomyF}/\ref{defn:t2holonomy}
& \pageref{defn:tcsholonomyF}/\pageref{defn:t2holonomy} \\
$\mathfrak{m}$ & thin loop/sphere markings
& Def \ref{defn:loopmarking}/\ref{defn:spheremarking}
& \pageref{defn:loopmarking}/\pageref{defn:spheremarking} \\
$G / \mathrm{Inn}(G)$ & conjugacy classes in $G$
& Before Thm \ref{thm:gaugeinvarianthol}
& \pageref{thm:gaugeinvarianthol} \\
$\mathrm{hol}^{[F]}$ & gauge-invariant
holonomy/2-holonomy
& Def \ref{defn:gaugeinvhol}/\ref{defn:gaugeinv2hol}
& \pageref{defn:gaugeinvhol}/\pageref{defn:gaugeinv2hol} \\
$(H, G , \t, \a)$ & crossed module & Def \ref{defn:crossedmodule}
& \pageref{defn:crossedmodule} \\
$\mathcal{P}_{2} (X)$ & path 2-groupoid of $X$
& Def \ref{defn:path2groupoid} & \pageref{defn:path2groupoid} \\
$P^{2} X$
& \begin{tabular}{c}smooth space of thin \\bigons in $X$\end{tabular}
& Def \ref{defn:path2groupoid} & \pageref{defn:path2groupoid} \\
$\mathrm{Triv}^{2}_{\pi} (i)$ & 2-category of
$\pi$-local $i$-trivializations
& After Def \ref{defn:trivialization2}
& \pageref{defn:trivialization2} \\
$\mathfrak{Des}^{2}_{\pi} (i)$ & descent 2-category
& Def \ref{defn:descent2}--\ref{defn:descent2mor2}
& \pageref{defn:descent2} \\
$\mathrm{Ex}^{2}_{\pi}$ & extraction 2-functor
& After Def \ref{defn:descent2mor2} & \pageref{defn:descent2mor2} \\
$\mathfrak{Des}^{2}_{\pi} (i)^{\infty}$ & smooth descent 2-category
& Def \ref{defn:smoothdescent2} & \pageref{defn:smoothdescent2} \\
$\mathrm{Triv}^{2}_{\pi} (i)^{\infty}$ & \begin{tabular}{c}2-category of
smooth \\$\pi$-local $i$-trivializations\end{tabular}
& Def \ref{defn:smoothpilocalitrivialization2}
& \pageref{defn:smoothpilocalitrivialization2} \\
$\mathrm{Trans}^{2}_{\mathrm{Gr}} (M,T)$
& 2-category of transport 2-functors
& Def \ref{defn:transport2functor} & \pageref{defn:transport2functor} \\
$\mathcal{P}_{2}^{\pi} (M)$ & \begin{tabular}{c}\v Cech path 2-groupoid
of $M$ \\ subordinate to $\pi : Y \to M$ \end{tabular}
& Def \ref{defn:Cechpath2groupoid} & \pageref{defn:Cechpath2groupoid} \\
$\mathrm{Rec}^{2}_{\pi}$ & Reconstruction 2-functor
& Eqn (\ref{eq:rec2functor}) & \pageref{eq:rec2functor} \\
$(\un H, \un G, \un \t , \un \a)$ & differential Lie crossed module
& Sec \ref{sec:2cocycle} & \pageref{sec:2cocycle} \\
$\mathcal{A}_{\S}$ & \begin{tabular}{c}$\un H$-valued 1-form used \\
for surface transport \end{tabular} & Eqn (\ref{eq:Asigma})
& \pageref{eq:Asigma} \\
$k_{A,B}$ & surface transport & Eqn (\ref{eq:surfaceintegral})
& \pageref{eq:surfaceintegral} \\
$Z_{\pi}^{2} (\mathfrak{G})^{\infty}$
& \begin{tabular}{c}2-category of differential \\
cocycles subordinate to $\pi$\end{tabular}
& Def \ref{defn:diffcocyclespi2} & \pageref{defn:diffcocyclespi2} \\
$\mathfrak{S}^2 M$ & thin marked sphere space
& Def \ref{defn:thinmarkedsphere} & \pageref{defn:thinmarkedsphere} \\
$S^{2} M$ & thin free sphere space & Def \ref{defn:thinsphere}
& \pageref{defn:thinsphere} \\
$H / \a$ & $\a$-conjugacy classes in $H$ & Def \ref{defn:alphaconj}
& \pageref{defn:alphaconj} \\
$\mathrm{Inv}(\a)$ & $\a$-fixed points & Def \ref{defn:Inv}
& \pageref{defn:Inv} \\
$\tilde{G}$ & universal cover of $G$ & Eqn (\ref{eq:universalcoverG})
& \pageref{eq:universalcoverG} \\
$\tilde{G}_{N}$ & $N$-cover of $G$ & Eqn (\ref{eq:NcoverG})
& \pageref{eq:NcoverG} \\
$\mathcal{G}_{N}$ & $N$-cover 2-group & Def \ref{defn:covering2groups}
& \pageref{defn:covering2groups} \\
$\widehat{G\text{-}\mathrm{Tor}_{N}}$ & modified $\GTor$
& Def \ref{defn:NGTor} & \pageref{defn:NGTor} \\
$K_{N} (\tra)$ & path-curvature 2-functor
& Def \ref{defn:pathcurvature2functor}
& \pageref{defn:pathcurvature2functor} \\
$i_{N}$ & structure map for $K_{N}(\tra)$ & Eqn (\ref{eq:iNimage})
& \pageref{eq:iNimage} \\
$\triv_{N}$ & trivialization data for $K_{N}(\tra)$
& Eqn (\ref{eq:triv}) & \pageref{eq:triv} \\
$\Pi_{2} (M)$ & fundamental 2-groupoid of $M$
& Def \ref{defn:fundamental2groupoid}
& \pageref{defn:fundamental2groupoid} 
\end{longtable}



\begin{thebibliography}{1}

\bibitem [Ba91]{Ba} J. W. Barrett. ``Holonomy 
and Path Structures in General
Relativity and Yang-Mills theory.'' International Journal of Theoretical
Physics. Vol. 30, No. 9. (1991) 1171-1215. 

\bibitem [B\'e67]{Be} Jean B\'enabou. 
``Introduction to Bicategories.'' Reports
of the Midwest Category Seminar, Lecture Notes in Mathematics Volume 47,
1967, pp 1-77. 

\bibitem [BaHo11]{BH} John C. Baez 
and Alexander E. Hoffnung. ``Convenient
Categories for Smooth Spaces.'' Trans. Amer. Math. Soc. \textbf{363}
(2011), 5789-5825.

\bibitem [BaHu11]{BaHu} John C. Baez and 
John Huerta. ``An Invitation to Higher
Gauge Theory.'' General Relativity and Gravitation 43 (2011), 2335-2392.

\bibitem [BaLa04]{BL} John C. Baez and 
Aaron Lauda. ``Higher Dimensional Algebra V:
2-Groups.'' Theory and Applications of Categories 12 (2004), 423-491. 

\bibitem [BaSc04]{BS} John C. Baez 
and Urs Schreiber. ``Higher Gauge Theory:
2-Connections on 2-Bundles,'' 2004. Available at
\url{http://arxiv.org/abs/hep-th/0412325}.

\bibitem [BrMe05]{BM} Lawrence Breen and 
William Messing. ``Differential Geometry
of Gerbes.'' Advances in Mathematics 198 (2005), 732-846.

\bibitem [CaPi94]{CP} 
A. Caetano and R. F. Picken. ``Axiomatic Definition of
Holonomy.'' International Journal of Mathematics Volume 05, No. 6 (1994)
835--848. 

\bibitem [Ch75]{Ch} N. H. Christ, ``Theory of 
Magnetic Monopoles with
Non-Abelian Gauge Symmetry.'' Phys. Rev. Letters 34 (1975) 355.

\bibitem [Di31]{Di} P. A. M. Dirac. 
``Quantised Singularities in the
Electromagnetic Field.'' Proc. R. Soc. Lond. A (1931) 133. 

\bibitem [DuFo04]{DF} David S. Dummit and 
Richard M. Foote. {\em Abstract
Algebra}, Third Edition. John Wiley and Sons, Inc. 2004. 

\bibitem [HoTs93]{HS} Chan Hong-Mo 
and Tsou Sheung Tsun. {\em Some Elementary
Gauge Theory Concepts}, World Scientific, 1993. 

\bibitem [Ga88]{Ga} Krzysztof Gawedzki. 
``Topological Actions in
Two-Dimensional Quantum Field Theories.'' 
Nonperturbative Quantum Field
Theory (1988) pp 101-141. 

\bibitem [GaRe02]{GaRe} Krzysztof Gawedzki and 
Nuno Reis. ``WZW branes and
gerbes.''  Rev. Math. Phys. 14 (2002) 1281-1334. 

\bibitem [GoNuOl77]{GNO} P. Goddard, J. Nuyts, 
and D. Olive. ``Gauge Theories and
Magnetic Charge.'' Nuclear Physics B125 (1977) 1-28. 

\bibitem [Le03]{Le} John M. Lee. 
{\em Introduction to Smooth Manifolds},
Graduate Texts in Mathematics, Springer, 2003. 

\bibitem [MaPi11]{MP} J. F. Martins 
and R. Picken. ``Surface Holonomy for
Non-Abelian 2-Bundles via Double Groupoids.''
\emph{Advances in Mathematics} Volume 226, Issue 4, 1 March 2011, Pages
3309-3366. 

\bibitem [Ma99]{Ma} J. P. May. 
{\em A Concise Course in Algebraic Topology},
Chicago Lectures in Mathematics Series, University Of Chicago Press,
1999. 

\bibitem [St99]{St} Norman Steenrod. 
{\em The Topology of Fibre Bundles.}
Princeton University Press (1999). 

\bibitem [ScWa09]{SW1} Urs Schreiber 
and Konrad Waldorf. ``Parallel Transport and
Functors.'' Homotopy Relat. Struct. 4, 187-244 (2009).

\bibitem [ScWa11]{SW2} Urs Schreiber 
and Konrad Waldorf. ``Smooth Functors vs.
Differential Forms.'' Homology, Homotopy Appl., 13(1):143-203, 2011.

\bibitem [ScWa]{SW3} Urs Schreiber 
and Konrad Waldorf. ``Local Theory for
Transport 2-functors.'' Available at
{\url{http://arxiv.org/abs/1303.4663}}. 

\bibitem [ScWa13]{SW4} 
Urs Schreiber and Konrad Waldorf. ``Connections on
non-abelian Gerbes and their Holonomy.'' Theory Appl. Categ., Vol. 28,
2013, No. 17, pp 476-540. 

\bibitem [Te86]{Te} Claudio Teitelboim. 
``Gauge Invariance for Extended
Objects.'' Physics Letters, Vol. 167B, number 1, 30 January (1986)
63-68. 

\bibitem [Wo11]{Wo} Christoph Wockel. 
``Principal bundles and their gauge
2-groups.'' Forum Math., 23:565-610, 2011.

\bibitem [WuYa75]{WY} Tai Tsun Wu and 
Chen Ning Yang. ``Concept of nonintegrable
phase factors and global formulation of gauge fields.'' Physical Review
D, volume 12, number 12, 1975. 

\end{thebibliography}
\end{document}